\tikzset{>=latex}
\newcommand{\N}{\mathbb N}
\newcommand\G{\mathcal{G}}
\newcommand{\DLOG}{\textbf{DLOGSPACE}}
\newcommand{\PREDU}{\textsf{U-PRED}}
\newcommand{\PREDB}{\textsf{B-PRED}}
\newcommand{\PREDC}{\textsf{PRED-CHG}}
\newcommand{\PRED}{\textsf{PRED}}
\newcommand{\NC}{\textbf{NC}}
\DeclareMathOperator{\lcm}{lcm}
\newtheorem{theorem}{Theorem}
\newtheorem{corollary}{Corollary}
\newtheorem{lemma}[theorem]{Lemma}
\newtheorem{proposition}{Proposition}
\newtheorem*{problem}{Problem}
\newtheorem{definition}[theorem]{Definition}
\newtheorem{remark}{Remark}
\newcommand{\problemtitle}[1]{\gdef\@problemtitle{#1}}
\newcommand{\probleminput}[1]{\gdef\@probleminput{#1}}
\newcommand{\problemquestion}[1]{\gdef\@problemquestion{#1}}
\newcounter{ProblemCounter}
	\par\addvspace{.5\baselineskip}
	\par\addvspace{.5\baselineskip}
\title{On Symmetry versus Asynchronism: at the Edge of Universality in Automata Networks.  \thanks{This research was partially supported by French ANR project FANs ANR-18-CE40-0002 (G.T., M.R.W.) and ECOS project C19E02 (G.T., M.R.W.), ANID via PFCHA/DOCTORADO NACIONAL/2018 - 21180910 + PIA AFB 170001 (M.R.W).}}
\author{ Mart\'in R\'ios Wilson \\
Departamento de Ingenier\'ia Matem\'atica, Universidad de Chile, Santiago, Chile\\ 
Aix Marseille Univ, Universit\'e de Toulon, CNRS, LIS, Marseille, France.\\
	\texttt{mrios@dim.uchile.cl} \\
	\And
	Guillaume Theyssier \\
Aix-Marseille Universit\'e, CNRS, I2M (UMR 7373), Marseille, France.\\
	\texttt{guillaume.theyssier@cnrs.fr} \\
}
\date{}
\begin{document}

\maketitle
\begin{abstract}
  An automata network (AN) is a finite graph where each node holds a state from a finite alphabet and is equipped with a local map defining the evolution of the state of the node depending on its neighbors. The  global dynamics of the network is then induced by an {update scheme} describing which nodes are updated at each time step. We study how update schemes can compensate the limitations coming from symmetric local interactions. Our approach is based on intrinsic simulations and universality and we study both dynamical and computational complexity. By considering several families of {concrete symmetric AN} under several different update schemes, we explore the edge of universality in this two-dimensional landscape. On the way, we develop a proof technique based on an operation of glueing of networks, which allows to produce complex orbits in large networks from compatible pseudo-orbits in small networks.
\end{abstract} 
\newpage
\section{Introduction}

Automata networks introduced in the 40s \cite{McCulloch_1943} are both a family of dynamical systems frequently used in the modeling of biological networks \cite{Thomas_1973,KAUFFMAN_1969} and a computational model \cite{GR15,ChM14,goles1997reaction,WR79ii,WR79i}.
An automata network is a (finite) graph where each node holds a state from a finite set $Q$ and is equipped with a local transition map that determines how the state of the node evolves depending on the states of neighboring nodes.
Alternatively, they can be described (in the deterministic case) through a global map ${F:Q^V\rightarrow Q^V}$ that defines the collective evolution of all nodes of the network.
However, this global map hides two fundamental aspects at the heart of automata network literature \cite{DemongeotS20,Gadouleau_2019,chatain2018most,ARS17b,GN12,Aracena_2009,Robert_1969}: the interaction graph (knowing on which nodes effectively depends the behavior of a given node) and the update schedule (knowing in which order and with which degree of synchrony are local transition maps of each node applied).
The major influence of these two aspects on the dynamics of automata networks is clear, but the detailed understanding remains largely open.

The initial motivation of this paper lies in the striking interplay established in some cases between the symmetry of the local interactions and the synchrony of the update schedule: on one hand, a seminal result \cite{GolesO80,PaperGoles} shows that symmetric threshold networks under fully synchronous updates cannot have periodic orbits of period more than 2, and that they have polynomially bounded transients; this implies the existence of a polynomial time algorithm to predict the future of a node from any given initial configuration.
On the other hand, majority networks under partially asynchronous updates (precisely block-sequential update modes) were shown to have super-polynomial periodic orbits and a PSPACE-complete prediction problem \cite{GolesMST16,ChM14}. A similar result was obtained recently for conjunctive networks under a more general update mode called 'firing memory' \cite{goles2020firing}.
Is is well-known that threshold networks with synchronous updates but without the symmetry constraint are as capable as automata networks in general (essentially because they can embed any monotone Boolean circuit).
Thus, the results above can be interpreted as the power of some asynchronous updates to break symmetries in the local rules in such a way that dynamical and computational complexity is recovered. 

The purpose of this paper is to further study the capabilities of asynchronous update schemes to compensate for the limitation coming from constrained symmetric local interactions.
Our approach is based in considering a hierarchy of symmetric local interactions and a hierarchy of update mode which allow us to precisely analyze how the dynamical and computational complexity varies depending on both dimensions.
We are particularly interested in cases where the full richness of automata networks is achieved both dynamically and computationally, or more precisely, the cases where any automata network can be closely simulated, a property that we call universality and that we formalize below.
The main question we ask (and partially answer) is where lies the edge of universality in this bi-dimensional landscape (local symmetry vs. asynchronism).

\paragraph{Detailed framework} We aim at making our approach general and modular, while giving concrete and relevant examples that where considered in literature. Our framework can be decomposed as follows:
\begin{itemize}
\item the family of automata network we consider, called \emph{concrete symmetric automata network} (CSAN), are described by labeled undirected graphs whose labels (both on vertices and edges) describe the local interaction maps; most of the results obtained as an application of our framework are on signed symmetric conjunctive Boolean networks and three subfamilies defined by local sign constraints: all positive (conjunctive networks), locally positive (one neighbor at least has a positive interaction), and no constraint on signs. This latter family has been much studied already (see \textit{e.g.} \cite{ARS17b,De_Schutter_1999}).
\item we consider four types of update modes: parallel, block-sequential and general periodic ones (well-known in the literature) and a much less studied one that we call \emph{local clocks}, which was already considered in the setting of asynchronous cellular automata \cite{Cornforth_2005} and which is close to the recently introduced block-parallel mode \cite{DemongeotS20}; we view all these update modes as particular local mechanisms of activation of nodes based on hidden local finite memory acting like clocks; more precisely, we formalize everything inside deterministic automata network under parallel update mode whose projection on a sub-component of states gives exactly the desired asynchronous behavior; 
\item the key aspect in the above choices is that considering a CSAN family X under an update mode Y is actually formalized as a new CSAN family Z, because the way update modes are coded is directly translated into constraints in the local interaction maps; our study therefore amounts to analyze the complexity of particular CSAN families;
\item our complexity analysis focus on two aspects: dynamical complexity (transient, cycles, etc) and computational complexity (witnessed by various long term or short term prediction problems); we use the notion of intrinsic simulation in order to hit two targets with one bullet: with a proof that A simulates a previously analyzed B we show complexity lower bounds on A both in dynamical and in the computational sense; in particular we have a clear formal notion of universality for a family of automata networks that implies maximal complexity in both aspects; the interest of our approach is its modularity, and the fact that  it is not a priori limited to a small set of benchmark problems or properties: universality results can be used as a black box to then prove new corollaries on the complexity of other decision problems or other dynamical aspects; we stress that knowing that property/aspect X is complex for some family F of automata networks do not generally imply that property/aspect Y is also complex for family F, even in the case where Y is hard for automata networks in general (we actually give concrete examples of this below).
\end{itemize}

\paragraph{Our contributions}  This paper makes two kinds of contributions: it partially answers the main question addressed above, but it also establishes a new formalism and a general proof technique to obtain simulation and universality results suitable for automata network families with symmetric interactions. 
  \begin{itemize}
  \item our main contribution is a detailed study of the trade-off between all local constraints and update modes of our framework described above (Section~\ref{sec:concretesimulationresults}); what we obtain is a series of separation results and markers at the ``edge of universality'' showing that the interplay between local interactions and update modes is rich; the following table give a synthetic view of some of our results:
    \begin{center}
      \begin{tabular}{c|c|c|c|c}
        \diagbox{sign constraint}{update mode}&parallel&block sequential&local clocks &periodic\\
        \hline
        all positive&BPA & BPA & BPA & SPA\\
        locally positive& BPA & BPA & SU & SU\\
        free& BPA & SU & SU & SU \\
      \end{tabular}
    \end{center}
    where BPA means \emph{bounded period attractors} (there is a bound on the period of all attractors), SPA means \emph{super-polynomial attractors} (attractors can be constructed whose period is super-polynomial is the number of nodes of the network) and SU means \emph{strong universality}; strong universality (Definition~\ref{def:universal}) implies the existence of exponential attractors (Theorem~\ref{them:univ-rich-dynamics}) as well as maximal complexity for short term and long term prediction problems (Corollary~\ref{cor:universality}).
  \item our second contribution is a method and proof technique for building complex networks, which is key in obtaining the above results. Indeed, the classical approach to obtain complex networks (both computationally and dynamically) is to design individual building blocs or gadgets that have a specific input/output behavior (e.g. some Boolean operators) and connect them, from outputs to inputs, in such a way that the desired global behavior is achieved by composition (e.g. a Boolean circuit); when the local interactions are symmetric or constrained, it is generally impossible to proceed like this because connecting the output of some gadget to the input of another one is making a two-way link that can induce feedback compromising the behavior of the first gadget. Of course this problem can be dealt with in particular cases (it was done in \cite{goles2020firing,GolesMST16}), but our approach is generic: we replace the connection between gadgets by a glueing operation of two networks on a common part identified in both (Definition~\ref{def:glueing}), and the fundamental objects that compose through this glueing operation are pseudo-orbits (Lemma~\ref{lem:pseudo-orbit-glueing}) and not directly input/output relations. At the end, we obtain a proof method that allows to show strong universality (therefore both dynamical and computational complexity) by just exhibiting a finite set of gadgets and pseudo-orbits verifying suitable compatibility conditions (Definition~\ref{def:coherent-gadgets}).
  \item besides the two main contributions above, we also develop a complete formalization of (intrinsic) simulation between families of automata networks and (intrinsic) universality. This kind of approach is well-known for other models (see \textit{e.g.} \cite{bulk2}), and informally or indirectly present in various contributions on automata networks (see \textup{e.g.} \cite{goles1997reaction}). We however believe that our efforts of formalization clarify important aspects: the choice of a concrete representation when considering decision problems with automata networks as input (see Section~\ref{sec:representation}), the existence of several natural definitions of universality (Definition~\ref{def:universal}) with different implication on dynamical complexity (Theorem~\ref{them:univ-rich-dynamics}), the fact that two widely used prediction problems are actually incomparable (one can be hard while the other is easy in some family of automata networks, and reciprocally, see Theorem~\ref{theo:orthogonalpredictions}), and the fact that some concrete families can exhibit dynamical complexity while failing to be universal (theorems~\ref{theo:non-polyn-cycl} and \ref{theo:non-poly-transient} and \ref{theo:transient-nonuniversal}); to put it shortly, we show that a simulation/universality approach is better than taking individual decision problems or dynamical features as benchmarks. 
  \item finally, the local clocks update mode we consider was not studied theoretically before in automata networks (as far as we know), and  proves to be useful as an intermediate one lying between block sequential and general periodic modes. More generally, we believe that our unified approach which consists in viewing various update modes as local mechanisms of activation using some finite local information is natural and deserves further developments. For instance, it captures the 'firing memory' mode of \cite{goles2020firing} and more general modes can be proposed on this principle. 
  \end{itemize}

\paragraph{Organization of the paper} We start in Section~\ref{sec:automataandfamilies} by giving basic definitions about automata networks and families, including our hierarchy of concrete symmetric automata networks. We then introduce our hierarchy of update schemes in Section~\ref{sec:updateschemes}, including its formalization as asynchronous extensions. In Section~\ref{sec:simuniv}, we detail our notions of instrinsic simulations between individual automata networks, then between families. From there, we introduce the notion of intrinsic universality and we study their consequences both in terms of dynamics and computational complexity. In Section~\ref{sec:gadgetsandglueing}, we formalize our toolbox based on glueing, $\G$-networks and gadgets. We also study various families of $\G$-networks, including the canonical universal ones that will serve as a base to establish universality results later on. In Section~\ref{sec:concretesimulationresults}, we show our main results on local interaction rules versus update modes classification that leads to the table presented above.  Finally, we conclude by discussing some research perspectives in Section~\ref{sec:perspectives}.
  
\section{Automata networks and families}
\label{sec:automataandfamilies}

A \emph{graph} is a pair $G = (V,E)$  where $V$ and $E$ are finite sets satisfying $E \subseteq V \times V.$ We will call $V$ the set of \emph{nodes} and the set $E$ of \emph{edges}. We call $|V|$ the \emph{order} of $G$ and we usually identify this quantity by the letter $n$. Usually, as $E$ and $V$ are finite sets we will implicitly assume that there exists an ordering of the vertices in $V$ from $1$ to $n$ (or from $0$ to $n-1$). Sometimes we will denote the latter set as $[n].$ If $G = (V,E)$ and $V' \subseteq V, E' \subseteq E$ we say that $G'$ is a \emph{subgraph} of $G.$ We call a graph $P=(V,E)$ of the form $V= \{v_{1},\hdots, v_{n}\}$ $E = \{(v_{1}v_{2}),\hdots,(v_{n-1},v_{n})\}$  a \emph{path graph}, or simply a \emph{path}. We often refer to a path by simply denoting its sequence of vertices $\{v_{1},\hdots,v_{n}\}$. We denote the \emph{length} of a path by its number of edges. Whenever $P = (V= \{v_{1},\hdots, v_{n}\},E = \{(v_{1}v_{2}),\hdots,(v_{n-1},v_{n})\}$ is a path we call the graph in which we add the edge $\{v_{n},v_{1}\}$
a \emph{cycle graph} or simply a \emph{cycle} and we call it $C$ where $C = P + \{v_{n},v_{1}\}.$ Analogously, a cycle  is denoted usually by a sequence of nodes and its length is also given by the amount of edges (or vertices) in the cycle. Depending of the length of $C$ we call it a $k$-cycle when $k$ is its length. A non-empty graph is called \emph{connected} if any pair of two vertices $u,v$ are linked by some path. Given any non-empty graph, a maximal connected subgraph is called a connected component.

We call \emph{directed graph} a pair $G = (V,E)$ together with two functions $\text{init}:E \to V$ and $\text{ter}:E \to V$ where each edge $e \in E$ is said to be directed from $\text{init}(e)$ to $\text{ter}(e)$ and we write $e=(u,v)$ whenever $\text{init}(e) = u$ and $\text{ter}(e) = v.$ There is also a natural extension of the definition of paths, cycles and connectivity for directed graphs in the obvious way. We say a directed graph is strongly connected if there is a directed path between any two nodes. A strongly connected component of a directed graph $G = (V,E)$  is a maximal strongly connected subgraph.

Given a (non-directed) graph $G=(V,E)$ and two vertices $u,v$ we say that $u$ and $v$ are neighbors if $(u,v) \in E$.  Remark that abusing notations, an edge $(u,v)$ is also denoted by $uv$. Let $v \in V,$ we call $N_{G}(v) = \{u \in V: uv \in E\}$ (or simply $N(v)$ when the context is clear)  the set of neighbors (or \emph{neighborhood}) of $v$ and $\delta(G)_v = |N_{G}(v)|$ to the \emph{degree} of $v$. Observe that if $G'=(V',E')$ is a subgraph of $G$ and $v \in V'$, we can also denote by $N_{G'}(v)$  the set of its neighbors in $G'$ and the degree of $v$ in $G'$ as $\delta(G')_{v} =|N_{G'}(v)|.$  In addition, we define the \emph{closed neighborhood} of $v$ as the set $N[v] = N(v) \cup \{v\}$ and we use the following notation $\Delta(G) = \max \limits_{v \in V} \delta_v$  for the \emph{maximum degree} of $G$. Additionally, given $v \in V$, we will denote by $E_{v}$ to its set of \emph{incident edges}, i.e., $E_{v} = \{e \in E: e=uv\}.$ We will use the letter $n$ to denote the order of $G$, i.e. $n = |V|$.  Also, if $G$ is a graph whose sets of nodes and edges are not specified, we use the notation $V(G)$ and $E(G)$ for the set of vertices and the set of edges of $G$ respectively. In the case of a directed graph $G = (V,E)$ we define for a node $v \in V$ the set of its \emph{in-neighbors} by $N^{-}(v) = \{ u \in V: (u,v) \in E\}$ and its \emph{out-neighbors} as $N^{+}(v) = \{ u \in V: (v,u) \in E\}.$ We have also in this context the indegree of $v$ given by $ \delta^{-} = |N^{-}(v)|$ and its \emph{outdegree} given by $ \delta^{+} = |N^{+}(v)|$

During the most part of of the text, and unless explicitly stated otherwise,  every graph $G$ will be assumed to be connected and undirected.
We start by stating the following basic definitions, notations and properties that we will be using in the next sections. In general, $Q$ and $V$ will denote finite sets representing the alphabet and the set of  nodes respectively. We define $\Sigma(Q)$ as the set of all possible permutations over alphabet $Q$. We call an \textit{abstract automata network} any function $F:Q^V \to Q^V$. Note that $F$ induces a dynamics in $Q^V$ and thus we can see $(Q^V,F)$ as dynamical system.
In this regard, we recall some classical definitions.  We call a \emph{configuration} to any element $x \in Q^{V}.$ If $S \subseteq V$ we define the restriction of a configuration $x$ to $V$ as the function $x|_{S} \in Q^{S}$ such that $(x|_{S})_{v} = x_{v}$ for all $v \in S$. In particular, if $S = \{v\},$ we write $x_{v}.$
 
Given an initial configuration $x \in Q^V$,  we define the \textit{orbit} of $x$ as  the sequence $\mathcal{O}(x) = (F^t(x))_{t\geq 0}$. 
We define the set of \emph{limit configurations} or \emph{recurrent configurations} of $F$ as $L(F) = \bigcap_{t\geq 0}F^t(Q^V)$.  Observe that since $Q$ is finite and $F$ is deterministic, each orbit is eventually periodic, i.e. for each $x \in Q^{V}$ there exist some $\tau, p \in \N$ such that $F^{\tau+p}(x) = F^{\tau}(x)$ for all $x \in Q^{V}$. Note that if $x$ is a limit configuration then, its orbit is periodic. In addition, any configuration $x \in Q^{V}$ eventually reaches a limit configuration in finite time. We denote the set of orbits corresponding to periodic configurations as $\text{Att}(F)=\{\mathcal{O}(x): x \in L(F)\}$ and we call it the set of $\emph{attractors}$ of $F.$
We define the \emph{global period} or simply the \emph{period} of $\overline{x} \in \text{Att}(F)$  by $p(\overline{x}) = \min \{p \in \N : \overline{x}(p) = \overline{x}(0)\}$. If $p(\overline{x}) = 1$ we say that $\overline{x}$ is a \emph{fixed point} and otherwise, we say that $\overline{x}$ is a \emph{limit cycle}.

Given a node $v$, its behavior ${x\mapsto F(x)_v}$ might depend or not on another node $u$.
This dependencies can be captured by a graph structure which plays an important role in the theory of automata networks (see \cite{Gadouleau_2019} for a review of known results on this aspect).
This motivates the following definitions.
\begin{definition}
 Let $F: Q^{V} \to Q^{V}$ be an abstract automata network and $G = (V,E)$ a directed graph. We say $G$ is a \emph{communication graph} of $F$ if for all $v \in V$ there exist $D \subseteq N^{-}_v$ and some function $f_v:  Q^{D} \to Q$ such that $F(x)_v = f_v(x|_{D}).$ The \emph{interaction graph} of $F$ is its minimal communication graph.
\end{definition}
Note that by minimality, for any node $v$ and any in-neighbor $u$ of $v$ in the interaction graph of some $F$, then the next state at node $v$ effectively depends on the actual state at node $u$.
More precisely, there is some configuration $c\in Q^V$ and some $q\in Q$ with ${q\neq c_u}$ such that ${F(c)_v\neq F(c')_v}$ where $c'$ is the configuration $c$ where the state of node $u$ is changed to $q$.
This notion of effective dependency is sometimes taken as a definition of edges of the interaction graph.

From now on, for an abstract automata network $F$ and some communication graph $G$ of $F$ we use the notation $\mathcal{A} = (G,F)$. In addition, by abuse of notation. we also call $\mathcal{A}$ an  abstract automata network. We define a set of automata networks or a \emph{abstract family} of automata networks on some alphabet $Q$ as a set $\mathcal{F} \subseteq \bigcup \limits_{n \in \N} \{F:Q^{V} \to Q^{V}:  V\subseteq [n]\}.$
Note that the latter definition provides a general framework of study as it allows us to analyze an automata network as an abstract dynamical system.
However, as we are going to be working also with a computational complexity framework, it is necessary to be more precise in how we represent them.
In this regard, one possible slant is to start defining an automata network from a communication graph.
One of the main definition used all along this paper is that of \emph{concrete symmetric automata network}.
Roughly, they are non-directed labeled graph $G$ (both on nodes and edges) that represent an automata network.
They are \emph{concrete} because the labeled graph is a natural concrete representation upon which we can formalize decision problems and develop a computational complexity analysis.
They are \emph{symmetric} in two ways: first their communication graph is non-directed, meaning that an influence of node $u$ on node $v$ implies an influence of node $v$ on node $u$; second, the behavior of a given node is blind to the ordering of its neighbors in the communication graph, and it can only differentiate its dependence on neighbors when the labels of corresponding edges differ. 

\begin{definition}\label{def:csan}
  Given a non-directed graph $G = (V,E)$, a vertex label map $ \lambda: V \to ( Q\times 2^Q\to Q)$ and an edge label map $\rho: E \to \Sigma(Q)$, we define the tuple $\mathcal{A} = (G,\lambda,\rho)$ and we call it a  \textit{concrete symmetric automata network} (CSAN) associated to the graph $G$.
  A \emph{family of concrete symmetric automata networks} (CSAN family) $\mathcal{F}$ is given by an alphabet $Q$, a set of local labeling constraints ${\mathcal C\subseteq \Lambda\times R}$ where ${\Lambda = \{\lambda : Q\times 2^Q\to Q\}}$ is the set of possible vertex labels and ${R = 2^{\Sigma(Q)}}$ is the set of possible neighboring edge labels.
  We say a CSAN ${(G,\lambda,\rho)}$ belongs to ${\mathcal{F}}$ if for any vertex $v$ of $G$ with incident edges $E_v$ it holds ${(\lambda(v),\rho(E_v))\in \mathcal C}$.
\end{definition}

Note that the labeling constraints defining a CSAN family are local.
In particular, the communication graph structure is a priori free.
This aspect will play an important role later when building arbitrarily complex objects by composition of simple building blocks inside a CSAN family.

Let us now define the abstract automata network associated to a CSAN, by describing the semantics of labels defined above.
Intuitively, labels on edges are state modifiers, and labels on nodes give a map that describes how the node changes depending on the set of sates appearing in the neighborhood, after application of state modifiers.
We use the following notation: given $k \geq 1$,  $\sigma = (\sigma_1, \hdots, \sigma_k) \in \Sigma(Q)^k$ and $x \in Q^k$ we note
$x_{\sigma} = \{\sigma_1(x_1), \hdots, \sigma_k(x_k)\}$.
In addition, given $x \in  Q^n$ we define the restriction of $x$  to some subset $U \subseteq V$ as the partial configuration $x|_U \in Q^{|U|}$ such that $(x|_U)_u = x_u$ for all $u \in U.$ 
\begin{definition}\label{def:globalcsan}
  Given a CSAN $(G,\lambda,\rho)$, its associated global map $F: Q^V \to Q^V$ is defined as follows.
  For all node $v \in V$ and  for all $x\in Q^n$:
  \begin{equation*}
    F(x)_v = \lambda_v (x_v,(x|_{N(v)})_{\rho^v}),
  \end{equation*}
  where $N(i) = \{u_1,\hdots, u_{\delta_v}\}$ is the neighborhood of $v$ and $\rho^v = (\rho(v,u_1), \hdots, \rho(v,u_{\delta_u})).$
\end{definition}

Note that if $(G,\lambda,\rho)$ is a concrete automata network and $F$ its global rule then, $F$ is an abstract automata network with interaction graph included in $G$.

\paragraph{Our core set of CSAN families.}
Now we present some examples of families of automata networks that we study in this paper.
They differ in the set of allowed labels and their degree of local symmetry.

\begin{definition}\label{def:conjfamily}
  Let $Q =\{0,1\}$.
  The family of \emph{signed conjunctive automata networks (SCN)} is the set of CSAN ${(G,\lambda,\rho)}$ where for each node $v$ we have $\lambda_v(q,X) = \min X = \bigwedge X$ and, for each edge $e$, $\rho_e$ is either the identity map or the map  ${x\mapsto 1-x}$.
  
  The family of \emph{locally positive conjunctive networks (LPCN)} is the set of signed conjunctive automata networks ${(G,\lambda,\rho)}$ where we require, in addition, that for each node $v$ there is at least on edge $e$ incident to $v$ such that $\rho_e$ is the identity.
  
  Finally, the family of \emph{globally positive conjunctive networks (GPCN)} is the set of signed conjunctive automata networks where, for all edge $e$, $\rho_e$ is the identity.
\end{definition}

\begin{definition}\label{def:minmaxfamily}
	Let $Q$ be a totally ordered set. The family of \emph{min-max automata networks} over $Q$ is the set of CSAN ${(G,\lambda,\rho)}$ such that for each edge $e$, $\rho_e$ is the identity map and, for each node $v$, $\lambda_v(q,X)= \max X$ or $\lambda_v(q,X) = \min X$.
\end{definition}

\begin{remark}
  If  $Q = \{0,1\}$) then, $\mathcal{F}$ is the class of AND-OR networks, i.e., $F(x)_i = \bigwedge \limits_{j \in N(i)} x_j $ or $F(x)_i = \bigvee \limits_{j \in N(i)} x_j $ 
\end{remark}

      

\section{Update schemes}
\label{sec:updateschemes}

Through its global rule $F$, an automata network  $\mathcal{A}$ defines a dynamics over $Q^V$ by the subsequent iterations of $F$ .  This is the most natural way to define a dynamics from an automata network and its usually said that in this case the dynamics follows a $\textit{parallel}$ update scheme. The name comes from the fact that, at each time step, each node in the network updates its state according to its local function \emph{at the same time}. Nevertheless, starting from the network structure of $\mathcal{A}$,  one can also induce a dynamical system over $Q^V$ by considering other ways of updating in which not all the nodes are updated at the same time. Observe that, generally speaking, this latter notion demands some sort of temporal information in the nodes that determines which nodes have to be updated at a particular time step. We use the following general definition of \emph{update scheme}.

\begin{definition}
	Consider an abstract automata network  $F : Q^V \to Q^V$.  An update scheme is a sequence $\mu: \N \to 2^V$.   Given an update scheme $\mu$ and a natural number $k \in \N$ we call an intermediate step of the dynamics given by $\mu$ to the function $F^{\mu(k)}$ defined by $F^{\mu(k)}(x)_i  = \begin{cases}
	F(x)_i & \text{If } i \in \mu(k), \\
	x_i & \text{ otherwise.}
	\end{cases}$
	\sloppy We define an orbit given by $\mu$ starting from some $x \in Q^{V}$ as the sequence $\mathcal{O}_{\mu,F}(x) = (x, F^{\mu(0)}(x), F^{\mu(1)}(F^{\mu(0)}(x)),  F^{\mu(2)}(F^{\mu(1)}(F^{\mu(0)}(x))), \hdots)$
\end{definition}

\subsection{Periodic update schemes}
One of the most studied types of update schemes are the \textit{periodic update schemes}, \textit{i.e.} modes where map $\mu$ is periodic.
This class contains well-studied particular cases, for instance: parallel update scheme, in which all the nodes of the networks are updated at the same time (see Figure \ref{fig:syncsequpdate})
and also the block sequential update schemes in which each node is updated once every $p$ steps (but not necessarily all at the same time).
In addition, we explore a new class of update schemes which contains all the rest that it is called \textit{local clocks}. In this class, each node $v$ is updated once every $p_v$ steps but the frequency of update $p_v$ might depend on the node. This scheme can be seen intuitively as follows which justify the name: each node possesses an internal clock that ticks periodically and triggers an update of the node.
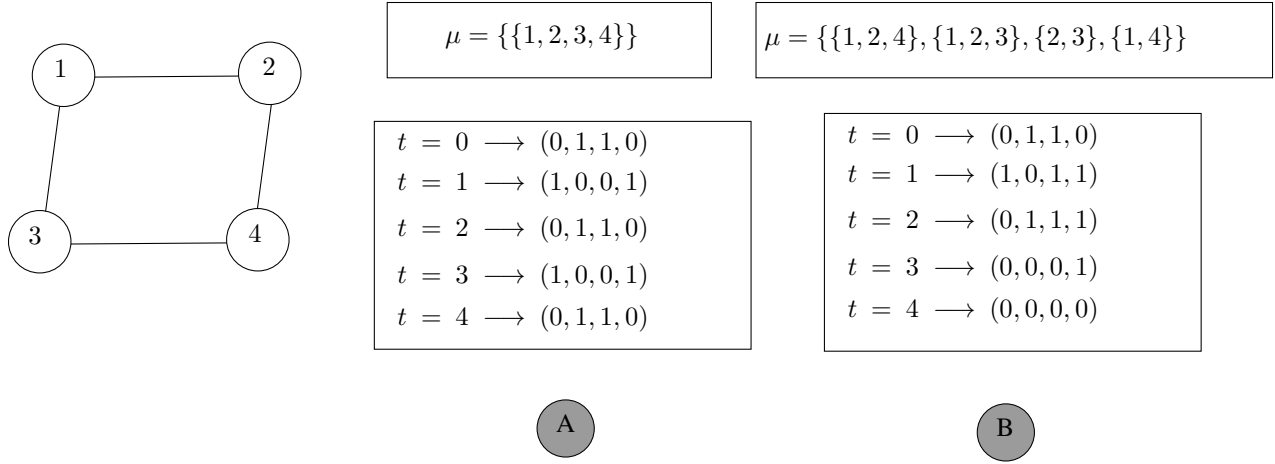
\begin{figure}
\centering
\begin{tikzpicture}[x=0.75pt,y=0.75pt,yscale=-1,xscale=1]

\draw    (42.75,85.75) -- (149.5,85) ;
\draw    (42.75,85.75) -- (31.5,170) ;
\draw    (31.5,170) -- (138.25,169.25) ;
\draw    (149.5,85) -- (138.25,169.25) ;
\draw  [fill={rgb, 255:red, 255; green, 255; blue, 255 }  ,fill opacity=1 ] (27,84.75) .. controls (27,76.05) and (34.05,69) .. (42.75,69) .. controls (51.45,69) and (58.5,76.05) .. (58.5,84.75) .. controls (58.5,93.45) and (51.45,100.5) .. (42.75,100.5) .. controls (34.05,100.5) and (27,93.45) .. (27,84.75) -- cycle ;
\draw  [fill={rgb, 255:red, 255; green, 255; blue, 255 }  ,fill opacity=1 ] (131,83.75) .. controls (131,75.05) and (138.05,68) .. (146.75,68) .. controls (155.45,68) and (162.5,75.05) .. (162.5,83.75) .. controls (162.5,92.45) and (155.45,99.5) .. (146.75,99.5) .. controls (138.05,99.5) and (131,92.45) .. (131,83.75) -- cycle ;
\draw  [fill={rgb, 255:red, 255; green, 255; blue, 255 }  ,fill opacity=1 ] (125,167.75) .. controls (125,159.05) and (132.05,152) .. (140.75,152) .. controls (149.45,152) and (156.5,159.05) .. (156.5,167.75) .. controls (156.5,176.45) and (149.45,183.5) .. (140.75,183.5) .. controls (132.05,183.5) and (125,176.45) .. (125,167.75) -- cycle ;
\draw  [fill={rgb, 255:red, 255; green, 255; blue, 255 }  ,fill opacity=1 ] (15,168.75) .. controls (15,160.05) and (22.05,153) .. (30.75,153) .. controls (39.45,153) and (46.5,160.05) .. (46.5,168.75) .. controls (46.5,177.45) and (39.45,184.5) .. (30.75,184.5) .. controls (22.05,184.5) and (15,177.45) .. (15,168.75) -- cycle ;

\draw   (206,48) -- (369.5,48) -- (369.5,86) -- (206,86) -- cycle ;
\draw   (199.5,108) -- (389.5,108) -- (389.5,223) -- (199.5,223) -- cycle ;

\draw   (392,48) -- (652.5,48) -- (652.5,86) -- (392,86) -- cycle ;
\draw   (426.5,104) -- (616.5,104) -- (616.5,224) -- (426.5,224) -- cycle ;

\draw (210,111.4) node [anchor=north west][inner sep=0.75pt]    {$t\ =\ 0\ \longrightarrow \ ( 0,1,1,0) \ $};
\draw (210,131.4) node [anchor=north west][inner sep=0.75pt]    {$t\ =\ 1\ \longrightarrow \ ( 1,0,0,1) \ $};
\draw (210,154.4) node [anchor=north west][inner sep=0.75pt]    {$t\ =\ 2\ \longrightarrow \ ( 0,1,1,0) \ $};
\draw (210,178.4) node [anchor=north west][inner sep=0.75pt]    {$t\ =\ 3\ \longrightarrow \ ( 1,0,0,1)$};
\draw (210,199.4) node [anchor=north west][inner sep=0.75pt]    {$t\ =\ 4\ \longrightarrow \ ( 0,1,1,0) \ $};
\draw (234,57.4) node [anchor=north west][inner sep=0.75pt]    {$\mu =\{\{1,2,3,4\}\}$};
\draw (36.75,75.9) node [anchor=north west][inner sep=0.75pt]    {$1$};
\draw (141.75,74.9) node [anchor=north west][inner sep=0.75pt]    {$2$};
\draw (134.75,158.9) node [anchor=north west][inner sep=0.75pt]    {$4$};
\draw (23.75,159.9) node [anchor=north west][inner sep=0.75pt]    {$3$};
\draw  [fill={rgb, 255:red, 155; green, 155; blue, 155 }  ,fill opacity=1 ]  (296, 263) circle [x radius= 14.42, y radius= 14.42]   ;
\draw (290,254.4) node [anchor=north west][inner sep=0.75pt]    {$\text{A}$};
\draw  [fill={rgb, 255:red, 155; green, 155; blue, 155 }  ,fill opacity=1 ]  (518, 265) circle [x radius= 14.42, y radius= 14.42]   ;
\draw (512,256.4) node [anchor=north west][inner sep=0.75pt]    {$\text{B}$};
\draw (395,58.4) node [anchor=north west][inner sep=0.75pt]    {$\mu =\{\{1,2,4\} ,\{1,2,3\} ,\{2,3\} ,\{1,4\}\}$};
\draw (437,195.4) node [anchor=north west][inner sep=0.75pt]    {$t\ =\ 4\ \longrightarrow \ ( 0,0,0,0) \ $};
\draw (437,174.4) node [anchor=north west][inner sep=0.75pt]    {$t\ =\ 3\ \longrightarrow \ ( 0,0,0,1)$};
\draw (437,150.4) node [anchor=north west][inner sep=0.75pt]    {$t\ =\ 2\ \longrightarrow \ ( 0,1,1,1) \ $};
\draw (437,127.4) node [anchor=north west][inner sep=0.75pt]    {$t\ =\ 1\ \longrightarrow \ ( 1,0,1,1) \ $};
\draw (437,107.4) node [anchor=north west][inner sep=0.75pt]    {$t\ =\ 0\ \longrightarrow \ ( 0,1,1,0) \ $};

\end{tikzpicture}

\caption{Synchronous update scheme and general periodic update scheme for the same conjunctive automata network. Local function is given by the minimum (AND function) over the set of states of neighbors for each node. A) Synchronous or parallel update scheme. In this case $\mu$ has period $1$ and all nodes are updates simultaneously. Observe that dynamics exhibits an attractor of period $2$ B) General periodic update scheme over a conjunctive network. In this case $\mu$ has period $4$ and dynamics reach a fixed point after $4$ time steps. Observe that there is no restriction on how many times a node is updated. For example, $1$ is updated $3$ times every $4$ time steps but $4$ is updated only $2$ times every $4$ time steps.}

\label{fig:syncsequpdate}

\end{figure}
\begin{definition}
  We say that an update scheme $\mu$ is a \emph{periodic update scheme} if there exists $p \in \N$ such that $\mu(n+p) =\mu(n)$ for all $n \in \N$. Moreover, we say that $\mu$ is
  \begin{itemize}
  \item a \emph{block sequential scheme} if there are subsets (called blocks) ${B_0,\ldots,B_{p-1}\subseteq V}$ forming a partition of $V$ such that ${\mu(n) = B_{n\bmod p}}$,
  \item a \emph{local clocks scheme} if for each ${v\in V}$ there is a local period $\tau_v\in\N$ and a shift ${0\leq \delta_v < \tau_v}$ such that ${v\in\mu(n)\iff \delta_v = n\bmod \tau_v}$.
  \end{itemize}
\end{definition}
For a concrete example on how these update schemes work, see Figure \ref{fig:blocklocalupdate} in which different dynamics for a simple conjunctive network under block sequential and local clocks update schemes are shown.
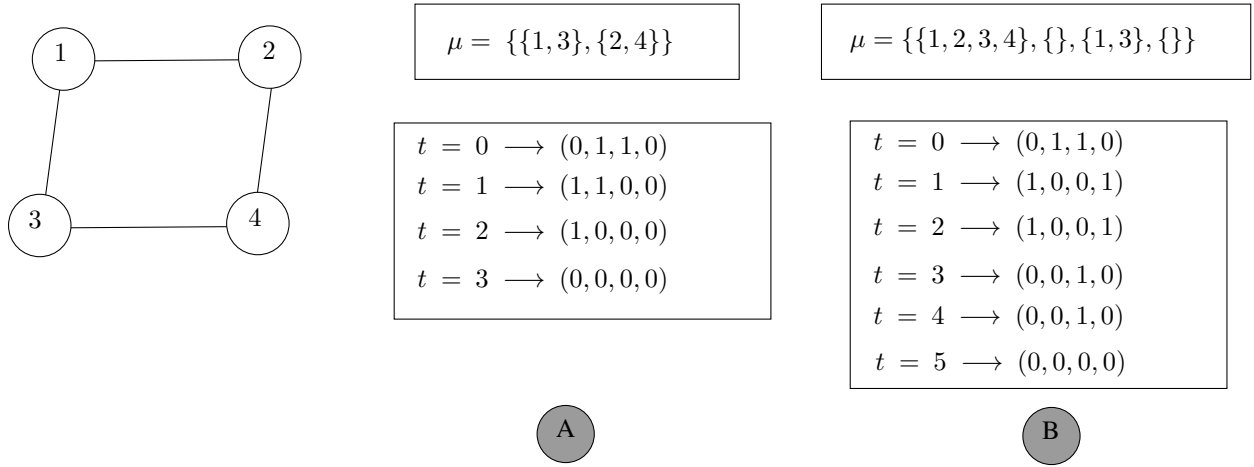
\begin{figure}[h]
\centering
\begin{tikzpicture}[x=0.75pt,y=0.75pt,yscale=-1,xscale=1]

\draw    (38.75,66.75) -- (145.5,66) ;
\draw    (38.75,66.75) -- (27.5,151) ;
\draw    (27.5,151) -- (134.25,150.25) ;
\draw    (145.5,66) -- (134.25,150.25) ;
\draw  [fill={rgb, 255:red, 255; green, 255; blue, 255 }  ,fill opacity=1 ] (23,65.75) .. controls (23,57.05) and (30.05,50) .. (38.75,50) .. controls (47.45,50) and (54.5,57.05) .. (54.5,65.75) .. controls (54.5,74.45) and (47.45,81.5) .. (38.75,81.5) .. controls (30.05,81.5) and (23,74.45) .. (23,65.75) -- cycle ;
\draw  [fill={rgb, 255:red, 255; green, 255; blue, 255 }  ,fill opacity=1 ] (127,64.75) .. controls (127,56.05) and (134.05,49) .. (142.75,49) .. controls (151.45,49) and (158.5,56.05) .. (158.5,64.75) .. controls (158.5,73.45) and (151.45,80.5) .. (142.75,80.5) .. controls (134.05,80.5) and (127,73.45) .. (127,64.75) -- cycle ;
\draw  [fill={rgb, 255:red, 255; green, 255; blue, 255 }  ,fill opacity=1 ] (121,148.75) .. controls (121,140.05) and (128.05,133) .. (136.75,133) .. controls (145.45,133) and (152.5,140.05) .. (152.5,148.75) .. controls (152.5,157.45) and (145.45,164.5) .. (136.75,164.5) .. controls (128.05,164.5) and (121,157.45) .. (121,148.75) -- cycle ;
\draw  [fill={rgb, 255:red, 255; green, 255; blue, 255 }  ,fill opacity=1 ] (11,149.75) .. controls (11,141.05) and (18.05,134) .. (26.75,134) .. controls (35.45,134) and (42.5,141.05) .. (42.5,149.75) .. controls (42.5,158.45) and (35.45,165.5) .. (26.75,165.5) .. controls (18.05,165.5) and (11,158.45) .. (11,149.75) -- cycle ;
\draw   (216,38) -- (379.5,38) -- (379.5,76) -- (216,76) -- cycle ;
\draw   (205.5,98) -- (395.5,98) -- (395.5,197) -- (205.5,197) -- cycle ;
\draw   (421,38) -- (637.5,38) -- (637.5,76) -- (421,76) -- cycle ;
\draw   (435.5,97) -- (625.5,97) -- (625.5,232) -- (435.5,232) -- cycle ;

\draw (32.75,56.9) node [anchor=north west][inner sep=0.75pt]    {$1$};
\draw (137.75,55.9) node [anchor=north west][inner sep=0.75pt]    {$2$};
\draw (130.75,139.9) node [anchor=north west][inner sep=0.75pt]    {$4$};
\draw (19.75,140.9) node [anchor=north west][inner sep=0.75pt]    {$3$};
\draw (216,102.4) node [anchor=north west][inner sep=0.75pt]    {$t\ =\ 0\ \longrightarrow \ ( 0,1,1,0) \ $};
\draw (216,122.4) node [anchor=north west][inner sep=0.75pt]    {$t\ =\ 1\ \longrightarrow \ ( 1,1,0,0) \ $};
\draw (216,145.4) node [anchor=north west][inner sep=0.75pt]    {$t\ =\ 2\ \longrightarrow \ ( 1,0,0,0) \ $};
\draw (216,169.4) node [anchor=north west][inner sep=0.75pt]    {$t\ =\ 3\ \longrightarrow \ ( 0,0,0,0)$};
\draw (231,49.4) node [anchor=north west][inner sep=0.75pt]    {$\mu =\ \{\{1,3\} ,\{2,4\}\}$};
\draw (446,100.4) node [anchor=north west][inner sep=0.75pt]    {$t\ =\ 0\ \longrightarrow \ ( 0,1,1,0) \ $};
\draw (446,120.4) node [anchor=north west][inner sep=0.75pt]    {$t\ =\ 1\ \longrightarrow \ ( 1,0,0,1) \ $};
\draw (446,143.4) node [anchor=north west][inner sep=0.75pt]    {$t\ =\ 2\ \longrightarrow \ ( 1,0,0,1) \ $};
\draw (446,167.4) node [anchor=north west][inner sep=0.75pt]    {$t\ =\ 3\ \longrightarrow \ ( 0,0,1,0)$};
\draw (446,188.4) node [anchor=north west][inner sep=0.75pt]    {$t\ =\ 4\ \longrightarrow \ ( 0,0,1,0) \ $};
\draw (424+10,48.4) node [anchor=north west][inner sep=0.75pt]    {$\mu =\{\{1,2,3,4\} ,\{\} ,\{1,3\} ,\{\}\}$};
\draw (447,211.4) node [anchor=north west][inner sep=0.75pt]    {$t\ =\ 5\ \longrightarrow \ ( 0,0,0,0) \ $};
\draw  [fill={rgb, 255:red, 155; green, 155; blue, 155 }  ,fill opacity=1 ]  (292, 255) circle [x radius= 14.42, y radius= 14.42]   ;
\draw (286,246.4) node [anchor=north west][inner sep=0.75pt]    {$\text{A}$};
\draw  [fill={rgb, 255:red, 155; green, 155; blue, 155 }  ,fill opacity=1 ]  (537, 256) circle [x radius= 14.42, y radius= 14.42]   ;
\draw (531,247.4) node [anchor=north west][inner sep=0.75pt]    {$\text{B}$};

\end{tikzpicture}

\caption{Block sequential and local clocks update schemes over a simple conjunctive network. Local functions are given by the minimum (AND) over the states of the neighbors of each node. A) Block sequential update scheme. In this case function $\mu$ is defined by two blocks: $\{1,3\}$ and $\{2,4\}$. Dynamics reach a fixed point after $3$ time steps. B) Local clocks update scheme. In this case each node has an internal clock with different period. Nodes $1$ and $3$ are updated every two steps ($\tau_{1} = \tau_{3} = 2$) and nodes $2$ and $4$ are updated every $4$ time steps (i.e. $\tau_{2}=\tau_{4}=4$). Shift parameter is $0$ for all nodes $\delta_{1}=\delta_{2}=\delta_{3}=\delta_{4}=0$. Dynamics reach a fixed point after $3$ time steps.}

\label{fig:blocklocalupdate}
\end{figure}

Block sequential and local clocks schemes are clearly periodic schemes. Moreover, any block sequential scheme given by ${B_0,\ldots,B_{p-1}\subseteq V}$  is a local clocks scheme given by $\tau_v = p$ and $\delta_v=i\iff v\in B_i$ for all $v\in V$. As already said, block sequential schemes can thus be seen as local clocks schemes where all nodes share the same update frequency. General periodic update schemes allows different time intervals between two consecutive updates of a node, which local clocks schemes obviously can't do (see Figure \ref{fig:syncsequpdate}, B).  We will see later the tremendous consequences that such subtle differences in time intervals between updates at each node can have. For now let us just make the formal observation that the inclusions between these families of update schedules are strict when focusing on the sets of maps $\mu$.

\begin{remark} \label{rem:blockpar}
  A so-called \emph{block-parallel} scheme has also been considered more recently \cite{DemongeotS20} which is defined by a set of list of nodes ${L_i = (v_{i,j})_{0\leq j<p_i}}$ (for ${1\leq i\leq k}$) forming a partition (\textit{i.e.} such that $v_{i,j}$ are all distinct and ${\cup_{i,j}v_{i,j}=V}$) to which is associated the map ${\mu}$ such that ${v_{i,j}\in\mu(n)\iff j=n\bmod p_i}$. It is a particular case of our definition of local clocks scheme above with the additional constraints that the size of the set ${\mu(n)}$ of updated nodes is constant with $n$ (see Figure \ref{fig:blockparupdate} for an example). We note that, conversely, any local clocks scheme on a given networked can be simulated by a block-parallel scheme by artificially adding disconnected nodes that do nothing but satisfy the constraint of ${\mu(n)}$ being of constant size. 
\end{remark}

\begin{figure}[h]
\centering	

\begin{tikzpicture}[x=0.75pt,y=0.75pt,yscale=-1,xscale=1]

\draw    (88.75,74.75) -- (195.5,74) ;
\draw    (88.75,74.75) -- (77.5,159) ;
\draw    (77.5,159) -- (184.25,158.25) ;
\draw    (195.5,74) -- (184.25,158.25) ;
\draw  [fill={rgb, 255:red, 255; green, 255; blue, 255 }  ,fill opacity=1 ] (73,73.75) .. controls (73,65.05) and (80.05,58) .. (88.75,58) .. controls (97.45,58) and (104.5,65.05) .. (104.5,73.75) .. controls (104.5,82.45) and (97.45,89.5) .. (88.75,89.5) .. controls (80.05,89.5) and (73,82.45) .. (73,73.75) -- cycle ;
\draw  [fill={rgb, 255:red, 255; green, 255; blue, 255 }  ,fill opacity=1 ] (177,72.75) .. controls (177,64.05) and (184.05,57) .. (192.75,57) .. controls (201.45,57) and (208.5,64.05) .. (208.5,72.75) .. controls (208.5,81.45) and (201.45,88.5) .. (192.75,88.5) .. controls (184.05,88.5) and (177,81.45) .. (177,72.75) -- cycle ;
\draw  [fill={rgb, 255:red, 255; green, 255; blue, 255 }  ,fill opacity=1 ] (171,156.75) .. controls (171,148.05) and (178.05,141) .. (186.75,141) .. controls (195.45,141) and (202.5,148.05) .. (202.5,156.75) .. controls (202.5,165.45) and (195.45,172.5) .. (186.75,172.5) .. controls (178.05,172.5) and (171,165.45) .. (171,156.75) -- cycle ;
\draw  [fill={rgb, 255:red, 255; green, 255; blue, 255 }  ,fill opacity=1 ] (61,157.75) .. controls (61,149.05) and (68.05,142) .. (76.75,142) .. controls (85.45,142) and (92.5,149.05) .. (92.5,157.75) .. controls (92.5,166.45) and (85.45,173.5) .. (76.75,173.5) .. controls (68.05,173.5) and (61,166.45) .. (61,157.75) -- cycle ;
\draw   (273,43) -- (533.5,43) -- (533.5,81) -- (273,81) -- cycle ;
\draw   (307.5,99) -- (497.5,99) -- (497.5,198) -- (307.5,198) -- cycle ;

\draw (82.75,64.9) node [anchor=north west][inner sep=0.75pt]    {$1$};
\draw (187.75,63.9) node [anchor=north west][inner sep=0.75pt]    {$2$};
\draw (180.75,147.9) node [anchor=north west][inner sep=0.75pt]    {$4$};
\draw (69.75,148.9) node [anchor=north west][inner sep=0.75pt]    {$3$};
\draw (318+10,102.4+3) node [anchor=north west][inner sep=0.75pt]    {$t\ =\ 0\ \longrightarrow \ ( 0,1,1,0) \ $};
\draw (318+10,122.4+3) node [anchor=north west][inner sep=0.75pt]    {$t\ =\ 1\ \longrightarrow \ ( 1,0,1,1) \ $};
\draw (318+10,145.4+3) node [anchor=north west][inner sep=0.75pt]    {$t\ =\ 2\ \longrightarrow \ ( 0,0,1,0) \ $};
\draw (318+10,169.4+3) node [anchor=north west][inner sep=0.75pt]    {$t\ =\ 3\ \longrightarrow \ ( 0,0,0,0)$};
\draw (319+10,53.4) node [anchor=north west][inner sep=0.75pt]    {$\mu =\{\{1,2,4\} ,\{1,3,4\}\}$};

\end{tikzpicture}
\caption{Block parallel update scheme defined over a conjunctive network. Updating list is given by $L = \{(1),(2,3),(4)\}.$ Observe that a constant amount of nodes (equal to the length of $L$, i.e., $3$) is updated at each time step. Dynamics reach a fixed point after $3$ time-steps.}
\label{fig:blockparupdate}
\end{figure}
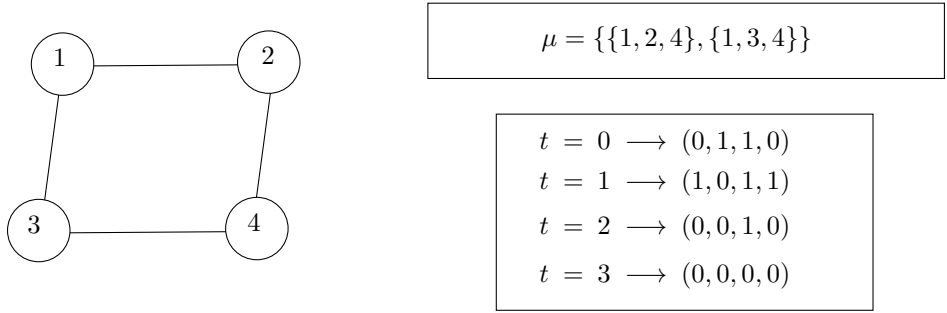

\subsection{Projections and asynchronous extensions}

Now we present a dynamical formalism that allow us to include all the update schemes presented above and possibly other ones into one formalism.
We remark that in all periodic schemes, a given node can take the decision to update or not by simply keeping track of the current value of time modulo the period.
The key observation is that, when we add the knowledge of time modulo the period at each node as a new component of states, the whole system becomes deterministic.
In fact, we are recovering the original dynamics of some automata network with alphabet $Q$ under a periodic update scheme by projecting a specific deterministic automata network with alphabet ${Q'\times Q}$ onto $Q$.  

Let $F:Q^V \to Q^V$ be an abstract automata network.
We define its \textit{asynchronous} version as an automata network that in every time-step (non-deterministically) choose if a node $i$ should be updated or if it will be stay in the same state.
More precisely a \textit{asynchronous} version of $F$ is a non-deterministic function $F^*: Q^V \to (\mathcal{P}(Q))^V$ such that $F^*(x)_i = \{x_i,F(x)_i\}$. Note that, analogously to the deterministic case one can define an orbit starting from $x$ of $F^*$ as a sequence of states $\mathcal{O}_{F^*}(x)= x^0=x,x^1,x^2,\hdots,x^t,\hdots, \in Q^n$ such that $x^s_{i}\in F(x^{s-1})_i$ for $i\in V$ and $s\geq1$. Note also that, given $x \in Q^V$ and an orbit  $\mathcal{O}_{F^*}(x)$ we can see $\mathcal{O}_{F^*}(x)$ as a particular realization of certain update scheme $\mu$. More precisely, there exist an update scheme $\mu$ (which is defined in the obvious way i.e. by updating the corresponding nodes in every time step according to points in $\mathcal{O}_{F^*}(x)$)  such that for every $x^s \in \mathcal{O}_{F^*}(x)$ we have $x^s = (\mathcal{O}_{F,\mu}(x))^{s}.$ In addition, we have that for each update scheme $\mu$ there exist an orbit of $F^*$ which coincides with its dynamics in every time step.
Thus, we could work with $F^*$ in order to globally study all possible update schemes.
However, we are interested in specific update schemes and we would like to continue working in a deterministic framework in order to keep things simple (in particular the notion of simulation that we define later).

In order to achieve this task, we introduce the following notion of asynchronous extension which is a way to produce the dynamics of different update schemes through projection.

\begin{definition}
  \label{def:asyncextension}
  Let $Q$ be a finite alphabet and $Q' = Q \times R$ where $R$ is finite. Let $F:Q^V \to Q^V$ and $F':Q'^V \to Q'^V$ two abstract automata networks. We say that $F$ is a projection system of $F'$ if for $x \in Q'^V$ and $F'$ is an asynchronous extension of $F$
\begin{equation*}
\overline\pi(F'(x)) \in F^{*}(\overline\pi(x))
\end{equation*}
where $\overline\pi$ is the node-wise extension of the projection $\pi: Q' \to Q$ such that $\pi(q,r) = q$ for all $q' = (q,r) \in Q'.$
\end{definition}


We show hereunder that the dynamics associated to any of the previously presented periodic update schemes can be described as an asynchronous extension over a product alphabet. To simplify notations we sometimes identify ${(A\times B\times\cdots)^V}$ with ${A^V\times B^V\times\cdots}$.

\begin{definition}[block sequential extension]
	Let $F:Q^V\to Q^V$ be an abstract automata network. Let $b \leq n$ and let $Q' = Q \times \{0,\hdots,b-1\}$.  We define the \textit{block sequential extension} of $F$ with $b$ blocs as the automata network ${F':(Q')^V\to (Q')^V}$ such that  for all $x = (x_Q,x_b) \in Q'^V$ and all ${v\in V}$:
	\begin{equation*}
	F'(x)_v = \begin{cases}
(F(x_Q)_v, (x_b)_v - 1 \mod b) &\text{ if } (x_b)_v = 0, \\
((x_Q)_v, (x_b)_v - 1 \mod b) & \text{ else.}
\end{cases}
	\end{equation*}
\end{definition}

\begin{definition}[local clocks extension]
		Let $F:Q^V\to Q^V$ be an abstract automata network. Let $c \in \mathbb{N}$ and let $Q' = Q \times \{0,\hdots, c-1\}\times \{1,\hdots, c\}$.  We define the \textit{local clocks} extension of $F$ with clock length $c$ as the automata network ${F':(Q')^V\to (Q')^V}$ such that for all $x = (x_Q,x_c,x_m) \in Q'^V$ and all ${v\in V}$:
	\begin{equation*}
	F'(x)_v = \begin{cases}
	(F(x_Q)_v, (\psi_{(x_m)_v}[(x_c)_v]) + 1 \mod (x_m)_v,(x_m)_v) & \text{ if }(x_c)_v = 0, \\
	((x_Q)_v, (\psi_{(x_m)_v}[(x_c)_v])  + 1 \mod (x_m)_v,(x_m)_v) & \text{ else.} 
	\end{cases}
	\end{equation*}
	 $\psi_{m}(r): \{0,\hdots, c-1\} \to  \{0,\hdots, c-1\}$ is such that $\psi_m(r) = \begin{cases}
	r & \text{ if } r\leq m-1, \\
	m -1 & \text{ else.}
	\end{cases}$
\end{definition}

\begin{definition}[periodic extension]
  Let $F:Q^V\to Q^V$ be an abstract automata network. Let $p \in \mathbb{N}$ and let $Q' = Q \times \{0,\hdots, p-1\}\times 2^{\{0,\hdots,p-1\}}$.  We define the \textit{periodic extension} of $F$ with period  length $p$ as the automata network ${F':(Q')^V\to (Q')^V}$ such that for all $x = (x_Q,x_p,x_s) \in Q'^V$ and all ${v\in V}$:
	\begin{equation*}
	F'(x)_v = \begin{cases}
	(F(x_Q)_v,(x_p)_v + 1 \mod p,(x_s)_v) &\text{ if } (x_p)_v \in (x_s)_v, \\
	((x_Q)_v, (x_p)_v + 1 \mod p,(x_s)_v) & \text{ else.}
	\end{cases}
	\end{equation*}
\end{definition}


\begin{remark}
Observe that, given an abstract automata network $F:Q^{V} \to Q^{V}$ and an asynchronous extension $F': (Q \times R )^{V} \mapsto (Q \times R)^{V}$ of the type previously defined i.e. a block sequential extension, a local clocks extension or a periodic extension, both $F'$ and $F^{\mu}$ (where $\mu$ is some of the latter update schemes) describe the same dynamics. In fact, let us illustrate this fact by analyzing the case of the block sequential extension (the other cases are analogous). Let $b \geq 1$ and $\mu_{b} = (I_{0},\hdots, I_{b-1}).$ Note that $(I_{0},\hdots, I_{b-1})$ is an ordered partition of $V$. On one hand, we consider an arbitrary initial condition $x \in Q^{V}$ and the correspondent block sequential orbit $\mathcal{O}_{\mu,F}(x).$ On the other hand, we consider a block extenstion $F^{'}: (Q \times \{0,\hdots,b-1\})^{V} \mapsto (Q \times \{0,\hdots,b-1\})^{V}$ and an initial condition $z \in (Q \times \{0,\hdots,b-1\})^{V}$ given by $z_{v} = (x_{v},y_{v})$ where $y_{v} = k$ if and only if $v \in I_{k}$ for $1\leq k \leq b.$ By the definition of $F^{'}$, we have that $\overline{\pi}(F'(z)) = (\mathcal{O}_{\mu,F}(x))^{1}$ since the only nodes $v$ in which $F$ is applied (in the first coordinate) are the ones such that $v \in I_{0}$.  In addition, we have that for each $v$ in $V$, $\overline{\pi_{2}}(F^{'}(x))_{v} = y_{v} - 1 \mod b,$ where $\pi_{2}$ is the node-wise extension of the proyection $\pi_{2}: Q \times \{0,\hdots,b-1\} \mapsto \{0,\hdots,b-1\}.$  Thus, we have $\overline{\pi}(F'(F'(x))) =  (\mathcal{O}_{\mu,F}(x))^{2}.$ Iteratively, we deduce $\overline{\pi}(F'^{t}(x)) = (\mathcal{O}_{\mu,F}(x))^{t}$ for each $t \geq 1.$

Conversely, let us choose an arbitrary initial condition $x =(x_{Q},x_{b}) \in (Q \times \{0,\hdots,b-1\})^{V}.$ We define the ordered partition $I_{0} \hdots, I_{b-1}$ given by $v \in I_{k}$ if and only if $(x_{b})_{v} = k$ for $0 \leq k \leq b-1.$ Then, the second coordinate of the initial condition $x_{b}$ induces a block sequential update scheme $\mu_{x_{b}}$ which is defined by the latter ordered partition.

\end{remark}
The previous definitions are formalized for every abstract automata network. We now focus on CSAN families where the extensions are also CSAN as show in the following lemma.

\begin{lemma}
  Let $F$ be a CSAN, then any block sequential extension (resp. local clocks extension, resp. periodic extension) of $F$ is a CSAN. Moreover, for any CSAN family ${\mathcal{F}}$ and any fixed $b$, the set of block sequential extensions with $b$ blocs of networks of $\mathcal{F}$ is again a CSAN family. The same holds for local clocks and periodic extensions.
\end{lemma}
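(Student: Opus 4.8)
The plan is to prove both assertions together by exhibiting, for a given CSAN $(G,\lambda,\rho)$ over $Q$, an explicit CSAN structure over the product alphabet $Q'=Q\times M$ that realizes the relevant extension $F'$ on the \emph{same} underlying graph $G$; here $M$ is the finite memory set of the extension ($M=\{0,\dots,b-1\}$ for block sequential, $M=\{0,\dots,c-1\}\times\{1,\dots,c\}$ for local clocks, and $M=\{0,\dots,p-1\}\times 2^{\{0,\dots,p-1\}}$ for periodic). For the edge labels I keep $G$ and set, for each edge $e$, the permutation $\rho'_e\in\Sigma(Q')$ defined by $\rho'_e(q,m)=(\rho_e(q),m)$; this is a permutation of $Q'$ precisely because $\rho_e$ permutes $Q$ and $\rho'_e$ acts as the identity on the memory coordinate. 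For the vertex labels I define $\lambda'_v$ to do two independent things, mirroring how the three extensions are written: it updates the memory coordinate of $v$ using \emph{only} the memory coordinate of $v$'s own state (a node-local transition), and it updates the $Q$-coordinate by either applying $\lambda_v$ or leaving it unchanged, the choice being gated again by $v$'s own memory coordinate.

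The crux of the first assertion is that the CSAN semantics of Definition~\ref{def:globalcsan} give a node only its own state together with the \emph{set} $(x|_{N(v)})_{\rho^v}$ of permuted neighbour states, blind to multiplicity beyond set membership and to ordering. The key observation is that the data needed to evaluate $\lambda_v$ on the $Q$-layer, namely the subset $\{\rho(v,u_i)((x_Q)_{u_i})\}\subseteq Q$, is recovered from the subset $(x|_{N(v)})_{(\rho')^v}=\{(\rho(v,u_i)((x_Q)_{u_i}),(x_M)_{u_i})\}\subseteq Q'$ by the coordinate projection $\pi:Q'\to Q$, $\pi(q,m)=q$, applied set-wise. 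Indeed $\pi$ commutes with the extended permutations, $\pi(\rho'_e(q,m))=\rho_e(\pi(q,m))$, and collapsing duplicates in $Q'$ then projecting yields the same subset of $Q$ as projecting directly. Accordingly I define $\lambda'_v\bigl((q,m),S\bigr)$ to feed $\lambda_v(q,\pi(S))$ into the $Q$-layer whenever the memory coordinate $m$ signals an update, together with the node-local memory transition, and then verify term by term (a short case analysis on the memory coordinate) that the induced global map equals $F'$. This makes each extension a genuine CSAN.

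For the ``moreover'' part I package the construction as a map $\mathrm{ext}$ on local data: it sends a vertex label $\lambda\in\Lambda$ to $\mathrm{ext}(\lambda)=\lambda'$ as above (depending only on $\lambda$ and the fixed parameter $b$, $c$, or $p$) and a set of edge permutations $S\subseteq\Sigma(Q)$ to $\mathrm{ext}(S)=\{\rho'_\sigma:\sigma\in S\}$, where $\rho'_\sigma(q,m)=(\sigma(q),m)$. Since neither $\mathrm{ext}(\lambda)$ nor $\mathrm{ext}(S)$ refers to $G$, these are bona fide local labels, and one reads off directly that the extension of $(G,\lambda,\rho)$ carries vertex label $\mathrm{ext}(\lambda(v))$ and incident-edge-label set $\mathrm{ext}(\rho(E_v))$ at every vertex $v$. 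I then define the target constraint set as $\mathcal C'=\{(\mathrm{ext}(\lambda),\mathrm{ext}(S)):(\lambda,S)\in\mathcal C\}$ and claim that the set of $b$-block extensions of networks of $\mathcal F$ is exactly the CSAN family over $Q'$ defined by $\mathcal C'$.

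The two inclusions are then what remains. The inclusion ``every extension lies in the family'' is immediate from the definition of $\mathcal C'$ applied vertex by vertex. The reverse inclusion is where the only genuine subtlety sits and is what I expect to be the main obstacle: from a CSAN $(G,\lambda',\rho')$ satisfying $\mathcal C'$ at every vertex I must \emph{reconstruct} a network of $\mathcal F$ whose extension it is. For this I use that $\mathrm{ext}$ is injective on both coordinates, hence invertible on its image: a vertex label $\lambda$ is recovered from $\mathrm{ext}(\lambda)$ by $\lambda(q,X)=\pi\bigl(\mathrm{ext}(\lambda)((q,0),X\times\{0\})\bigr)$, and a permutation $\sigma$ is recovered from $\rho'_\sigma$ by restriction to a fixed memory value. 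Since $\mathcal C'$ forces every incident permutation $\rho'_e$ into the image of $\sigma\mapsto\rho'_\sigma$, I can set $\rho_e=\mathrm{ext}^{-1}(\rho'_e)$ edgewise and $\lambda(v)=\mathrm{ext}^{-1}(\lambda'(v))$ vertexwise, check with injectivity that $(\lambda(v),\rho(E_v))\in\mathcal C$ at each $v$, and conclude that $(G,\lambda,\rho)\in\mathcal F$ extends to $(G,\lambda',\rho')$. Finally, the local clocks and periodic cases are structurally identical to the block sequential one: only the finite memory set $M$ and its node-local transition change, whereas the edge-permutation extension, the projection argument, and the injectivity of $\mathrm{ext}$ are verbatim the same. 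I would therefore write the block sequential case in full and indicate that the remaining two follow mutatis mutandis.
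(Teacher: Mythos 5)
Your proof is correct and follows essentially the same route as the paper's: you observe that the memory component evolves as a function of the node's own memory alone, that it alone gates whether the $Q$-component is updated via $\lambda_v$ applied to the projected neighbour set, and that the resulting vertex/edge labels over $Q\times M$ are determined by purely local data, so the constraint set transfers. The paper's own proof states these points in two sentences and declares the family statement ``clear''; your only addition is to spell out the converse inclusion (recovering a network of $\mathcal F$ from a CSAN satisfying $\mathcal C'$ via injectivity of $\mathrm{ext}$), which is a welcome but not essentially different elaboration of the same argument.
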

\begin{proof}
  In each case, the definition of the extension $F'$ with alphabet $Q'=Q\times R$ is such that the action of $F'$ on the $R$ component is purely local (the new value of the $R$ component of a node evolves as a function of the old value of this $R$ component) and the value of the $R$ component at a node determines alone if the $Q$ component should be updated according to $F$ or left unchanged. Therefore clearly $F'$ is a CSAN if $F$ is.

  In the context of a CSAN family $\mathcal{F}$, the CSAN definition of $F'$ involves only local constraints coming from $F\in\mathcal{F}$ and the action on the $R$-component is the same at each node. So the second part of the lemma is clear. 
\end{proof}

\begin{remark}
  This approach by asynchronous extensions can also capture non-periodic update schemes. For instance \cite{glrs20} studies an update scheme for Boolean networks called firing memory which uses local delays at each node and, in addition, makes the delay mechanism depend on the state of the current configuration at the node. Firing memory schemes can be captured as an asynchronous extension in such a way that the above lemma for the CSAN case still works. 
\end{remark}

\newcommand{\updateschemefamily}[3]{{#1}^{\textsc{#2},#3}}
\newcommand{\blocfamily}[2]{\updateschemefamily{#1}{block}{#2}}
\newcommand{\clockfamily}[2]{\updateschemefamily{#1}{clock}{#2}}
\newcommand{\periodfamily}[2]{\updateschemefamily{#1}{per}{#2}}

To sum up, our formalism allows to treat variations in the update scheme as a change in the CSAN family considered. Given a CSAN family $\mathcal{F}$ and integers $b,c,p$, we introduce the following notations:
\begin{itemize}
\item ${\blocfamily{\mathcal{F}}{b}}$ is the CSAN family of all block sequential extensions of networks from $\mathcal{F}$ with $b$ blocks,
\item ${\clockfamily{\mathcal{F}}{c}}$ is the CSAN family of all local clocks sequential extensions of networks from $\mathcal{F}$ with clock length $c$,
\item ${\periodfamily{\mathcal{F}}{p}}$ is the CSAN family of all periodic extensions of networks from $\mathcal{F}$ with period $p$.
\end{itemize}

\section{Simulation and universality}
\label{sec:simuniv}

In this section we introduce a key tool used in this paper: simulations.
The goal is to easily prove computational or dynamical complexity of some family of automata networks by showing it can simulate some well-known reference family where the complexity analysis is already established.
It can be thought as a complexity or dynamical reduction.
Simulations of various kinds are often implicitly used in proofs of dynamical or computational hardness.
We are going instead to explicitly define a notion of simulation and establish hardness results as corollaries of simulation results later in the paper.
To be more precise, we will first define a notion of simulation between individual automata networks, and then extend it to a notion of simulation between families.
This latter notion, which is the one we are really interested in requires more care if we want to use it as a notion of reduction for computational complexity.
We introduce all the useful concepts progressively in the next subsections.

\subsection{Simulation between individual automata networks}

At the core of our formalism is the following definition of simulation where an automata network $F$ is simulated by an automata network $G$ with a constant time slowdown and using blocs of nodes in $G$ to represent nodes in $F$.
Our definition is rather strict and requires in particular an injective encoding of configurations of $F$ into configurations of $G$.
We are not aware of a published work with this exact same formal definition, but close variants certainly exist and it is a direct adaptation to finite automata networks of a classical definition of simulation for cellular automata \cite{bulk2}.

\begin{definition}\label{def:bloc-simu}
  Let ${F:Q_F^{V_F}\rightarrow Q_F^{V_F}}$ and ${G:Q_G^{V_G}\rightarrow Q_G^{V_G}}$ be abstract automata networks.
  A \emph{block embedding} of $Q_F^{V_F}$ into $Q_G^{V_G}$ is a collection of blocs ${D_i\subseteq V_G}$ for each ${i\in V_F}$ which forms a partition of ${V_G}$ together with a collection of patterns ${p_{i,q}\in Q_G^{D_i}}$ for each ${i\in V_F}$ and each ${q\in Q_F}$ such that ${p_{i,q}=p_{i,q'}}$ implies ${q=q'}$.
  This defines an injective map ${\phi:Q_F^{V_F}\rightarrow Q_G^{V_G}}$ by ${\phi(x)_{D_i} = p_{i,x_i}}$ for each $i\in V_F$.
  We say that $G$ simulates $F$ via block embedding $\phi$ if there is a time constant $T$ such that the following holds on ${Q_F^{V_F}}$: 
  \[\phi\circ F = G^T\circ\phi.\]
\end{definition}

See Figure  \ref{fig:bsimulation} for a scheme of block simulation. In the following, when useful we represent a block embedding as the list of blocs together with the list of patterns. The size of this representation is linear in the number of nodes (for fixed  alphabet). 


\begin{figure}

\centering
\begin{tikzpicture}[x=0.55pt,y=0.55pt,yscale=-1,xscale=1]

\draw  [fill={rgb, 255:red, 255; green, 255; blue, 255 }  ,fill opacity=1 ] (552,188) -- (637.5,188) -- (637.5,242) -- (552,242) -- cycle ;
\draw  [fill={rgb, 255:red, 255; green, 255; blue, 255 }  ,fill opacity=1 ] (524.25,25) -- (617.5,25) -- (617.5,80.75) -- (524.25,80.75) -- cycle ;
\draw  [fill={rgb, 255:red, 255; green, 255; blue, 255 }  ,fill opacity=1 ] (288,73) -- (373.5,73) -- (373.5,122) -- (288,122) -- cycle ;
\draw  [fill={rgb, 255:red, 255; green, 255; blue, 255 }  ,fill opacity=1 ] (362.5,163) -- (454.5,163) -- (454.5,212) -- (362.5,212) -- cycle ;
\draw  [dash pattern={on 4.5pt off 4.5pt}] (43.5,40) .. controls (63.5,30) and (167,56) .. (147,76) .. controls (127,96) and (198.5,164) .. (218.5,194) .. controls (238.5,224) and (35.5,199) .. (15.5,169) .. controls (-4.5,139) and (23.5,50) .. (43.5,40) -- cycle ;
\draw  [fill={rgb, 255:red, 0; green, 0; blue, 0 }  ,fill opacity=1 ] (18,102.75) .. controls (18,98.47) and (21.47,95) .. (25.75,95) .. controls (30.03,95) and (33.5,98.47) .. (33.5,102.75) .. controls (33.5,107.03) and (30.03,110.5) .. (25.75,110.5) .. controls (21.47,110.5) and (18,107.03) .. (18,102.75) -- cycle ;
\draw  [fill={rgb, 255:red, 0; green, 0; blue, 0 }  ,fill opacity=1 ] (64,146.75) .. controls (64,142.47) and (67.47,139) .. (71.75,139) .. controls (76.03,139) and (79.5,142.47) .. (79.5,146.75) .. controls (79.5,151.03) and (76.03,154.5) .. (71.75,154.5) .. controls (67.47,154.5) and (64,151.03) .. (64,146.75) -- cycle ;
\draw  [fill={rgb, 255:red, 0; green, 0; blue, 0 }  ,fill opacity=1 ] (119,97.75) .. controls (119,93.47) and (122.47,90) .. (126.75,90) .. controls (131.03,90) and (134.5,93.47) .. (134.5,97.75) .. controls (134.5,102.03) and (131.03,105.5) .. (126.75,105.5) .. controls (122.47,105.5) and (119,102.03) .. (119,97.75) -- cycle ;
\draw  [fill={rgb, 255:red, 0; green, 0; blue, 0 }  ,fill opacity=1 ] (162,170.75) .. controls (162,166.47) and (165.47,163) .. (169.75,163) .. controls (174.03,163) and (177.5,166.47) .. (177.5,170.75) .. controls (177.5,175.03) and (174.03,178.5) .. (169.75,178.5) .. controls (165.47,178.5) and (162,175.03) .. (162,170.75) -- cycle ;
\draw    (25.75,102.75) -- (71.75,146.75) ;
\draw    (71.75,146.75) -- (126.75,97.75) ;
\draw    (126.75,97.75) -- (169.75,170.75) ;
\draw    (71.75,146.75) -- (169.75,170.75) ;
\draw    (25.75,102.75) -- (126.75,97.75) ;

\draw [color={rgb, 255:red, 74; green, 144; blue, 226 }  ,draw opacity=1 ][line width=2.25]    (541.75,71.75) -- (571.5,200) ;
\draw [color={rgb, 255:red, 74; green, 144; blue, 226 }  ,draw opacity=1 ][line width=2.25]    (441.75,184.75) -- (571.5,200) ;
\draw [color={rgb, 255:red, 74; green, 144; blue, 226 }  ,draw opacity=1 ][line width=2.25]    (441.75,184.75) -- (541.75,71.75) ;
\draw [color={rgb, 255:red, 74; green, 144; blue, 226 }  ,draw opacity=1 ][line width=2.25]    (354.5,95) -- (541.75,71.75) ;
\draw [color={rgb, 255:red, 74; green, 144; blue, 226 }  ,draw opacity=1 ][line width=2.25]    (354.5,95) -- (441.75,184.75) ;
\draw    (324.25,113.25) -- (352.25,94.25) ;
\draw    (303.25,92.25) -- (352.25,94.25) ;
\draw    (303.25,92.25) -- (324.25,113.25) ;
\draw  [fill={rgb, 255:red, 155; green, 155; blue, 155 }  ,fill opacity=1 ] (297,92.25) .. controls (297,88.8) and (299.8,86) .. (303.25,86) .. controls (306.7,86) and (309.5,88.8) .. (309.5,92.25) .. controls (309.5,95.7) and (306.7,98.5) .. (303.25,98.5) .. controls (299.8,98.5) and (297,95.7) .. (297,92.25) -- cycle ;
\draw    (411.25,201.25) -- (441.75,184.75) ;
\draw    (381.25,173.25) -- (439.25,182.25) ;
\draw  [fill={rgb, 255:red, 155; green, 155; blue, 155 }  ,fill opacity=1 ] (435.5,184.75) .. controls (435.5,181.3) and (438.3,178.5) .. (441.75,178.5) .. controls (445.2,178.5) and (448,181.3) .. (448,184.75) .. controls (448,188.2) and (445.2,191) .. (441.75,191) .. controls (438.3,191) and (435.5,188.2) .. (435.5,184.75) -- cycle ;
\draw    (381.25,173.25) -- (376.5,190) ;
\draw    (568.25,199.25) -- (612.82,201.25) ;
\draw    (571.5,200) -- (585,225.75) ;
\draw  [fill={rgb, 255:red, 155; green, 155; blue, 155 }  ,fill opacity=1 ] (565.25,200) .. controls (565.25,196.55) and (568.05,193.75) .. (571.5,193.75) .. controls (574.95,193.75) and (577.75,196.55) .. (577.75,200) .. controls (577.75,203.45) and (574.95,206.25) .. (571.5,206.25) .. controls (568.05,206.25) and (565.25,203.45) .. (565.25,200) -- cycle ;
\draw    (573.25,56.25) -- (601.25,37.25) ;
\draw    (541.75,71.75) -- (571.5,57) ;
\draw  [fill={rgb, 255:red, 155; green, 155; blue, 155 }  ,fill opacity=1 ] (535.5,71.75) .. controls (535.5,68.3) and (538.3,65.5) .. (541.75,65.5) .. controls (545.2,65.5) and (548,68.3) .. (548,71.75) .. controls (548,75.2) and (545.2,78) .. (541.75,78) .. controls (538.3,78) and (535.5,75.2) .. (535.5,71.75) -- cycle ;
\draw    (376.5,196.25) -- (411.25,201.25) ;
\draw  [fill={rgb, 255:red, 155; green, 155; blue, 155 }  ,fill opacity=1 ] (370.25,196.25) .. controls (370.25,192.8) and (373.05,190) .. (376.5,190) .. controls (379.95,190) and (382.75,192.8) .. (382.75,196.25) .. controls (382.75,199.7) and (379.95,202.5) .. (376.5,202.5) .. controls (373.05,202.5) and (370.25,199.7) .. (370.25,196.25) -- cycle ;
\draw  [fill={rgb, 255:red, 155; green, 155; blue, 155 }  ,fill opacity=1 ] (405,201.25) .. controls (405,197.8) and (407.8,195) .. (411.25,195) .. controls (414.7,195) and (417.5,197.8) .. (417.5,201.25) .. controls (417.5,204.7) and (414.7,207.5) .. (411.25,207.5) .. controls (407.8,207.5) and (405,204.7) .. (405,201.25) -- cycle ;
\draw    (585.25,230.25) -- (613.25,230.25) ;
\draw  [fill={rgb, 255:red, 155; green, 155; blue, 155 }  ,fill opacity=1 ] (579,230.25) .. controls (579,226.8) and (581.8,224) .. (585.25,224) .. controls (588.7,224) and (591.5,226.8) .. (591.5,230.25) .. controls (591.5,233.7) and (588.7,236.5) .. (585.25,236.5) .. controls (581.8,236.5) and (579,233.7) .. (579,230.25) -- cycle ;
\draw  [fill={rgb, 255:red, 155; green, 155; blue, 155 }  ,fill opacity=1 ] (607,230.25) .. controls (607,226.8) and (609.8,224) .. (613.25,224) .. controls (616.7,224) and (619.5,226.8) .. (619.5,230.25) .. controls (619.5,233.7) and (616.7,236.5) .. (613.25,236.5) .. controls (609.8,236.5) and (607,233.7) .. (607,230.25) -- cycle ;
\draw    (573.25,56.25) -- (612.82,201.25) ;
\draw  [fill={rgb, 255:red, 155; green, 155; blue, 155 }  ,fill opacity=1 ] (567,56.25) .. controls (567,52.8) and (569.8,50) .. (573.25,50) .. controls (576.7,50) and (579.5,52.8) .. (579.5,56.25) .. controls (579.5,59.7) and (576.7,62.5) .. (573.25,62.5) .. controls (569.8,62.5) and (567,59.7) .. (567,56.25) -- cycle ;
\draw  [fill={rgb, 255:red, 155; green, 155; blue, 155 }  ,fill opacity=1 ] (607.13,201.25) .. controls (607.13,197.8) and (609.68,195) .. (612.82,195) .. controls (615.96,195) and (618.5,197.8) .. (618.5,201.25) .. controls (618.5,204.7) and (615.96,207.5) .. (612.82,207.5) .. controls (609.68,207.5) and (607.13,204.7) .. (607.13,201.25) -- cycle ;
\draw  [dash pattern={on 4.5pt off 4.5pt}] (217,32) .. controls (237,22) and (676.5,-5) .. (656.5,15) .. controls (636.5,35) and (639.5,207) .. (659.5,237) .. controls (679.5,267) and (419.5,291) .. (399.5,261) .. controls (379.5,231) and (197,42) .. (217,32) -- cycle ;
\draw  [fill={rgb, 255:red, 255; green, 255; blue, 255 }  ,fill opacity=1 ] (223,149.5) -- (256.9,149.5) -- (256.9,143) -- (279.5,156) -- (256.9,169) -- (256.9,162.5) -- (223,162.5) -- cycle ;
\draw  [fill={rgb, 255:red, 155; green, 155; blue, 155 }  ,fill opacity=1 ] (348.25,95) .. controls (348.25,91.55) and (351.05,88.75) .. (354.5,88.75) .. controls (357.95,88.75) and (360.75,91.55) .. (360.75,95) .. controls (360.75,98.45) and (357.95,101.25) .. (354.5,101.25) .. controls (351.05,101.25) and (348.25,98.45) .. (348.25,95) -- cycle ;
\draw  [fill={rgb, 255:red, 155; green, 155; blue, 155 }  ,fill opacity=1 ] (595,37.25) .. controls (595,33.8) and (597.8,31) .. (601.25,31) .. controls (604.7,31) and (607.5,33.8) .. (607.5,37.25) .. controls (607.5,40.7) and (604.7,43.5) .. (601.25,43.5) .. controls (597.8,43.5) and (595,40.7) .. (595,37.25) -- cycle ;
\draw    (324.25,113.25) -- (381.25,173.25) ;
\draw  [fill={rgb, 255:red, 155; green, 155; blue, 155 }  ,fill opacity=1 ] (318,113.25) .. controls (318,109.8) and (320.8,107) .. (324.25,107) .. controls (327.7,107) and (330.5,109.8) .. (330.5,113.25) .. controls (330.5,116.7) and (327.7,119.5) .. (324.25,119.5) .. controls (320.8,119.5) and (318,116.7) .. (318,113.25) -- cycle ;
\draw  [fill={rgb, 255:red, 155; green, 155; blue, 155 }  ,fill opacity=1 ] (375,173.25) .. controls (375,169.8) and (377.8,167) .. (381.25,167) .. controls (384.7,167) and (387.5,169.8) .. (387.5,173.25) .. controls (387.5,176.7) and (384.7,179.5) .. (381.25,179.5) .. controls (377.8,179.5) and (375,176.7) .. (375,173.25) -- cycle ;

\draw (539,29.4) node [anchor=north west][inner sep=0.75pt]    {$B_{i}$};
\draw (113,61.4+7) node [anchor=north west][inner sep=0.75pt]    {$v_{i}$};
\draw (94,171.4) node [anchor=north west][inner sep=0.75pt]    {$F$};
\draw (475,243.4) node [anchor=north west][inner sep=0.75pt]    {$G$};

\end{tikzpicture}

\caption{Scheme of one-to-one block simulation. In this case, network $F$ is simulated by $G$. Each node in $F$ is assigned to a block in $G$ and state coding is injective. Observe that blocks are connected (one edge in the original graph may be represented by a path in the communication graph of $G$) according to connections between nodes in the original network $F$. This connections are represented by blue lines}
\label{fig:bsimulation}
\end{figure}
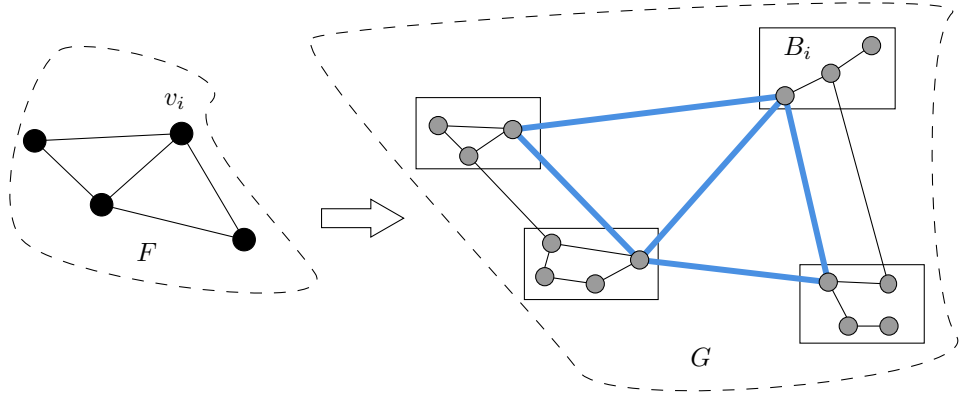

\begin{remark}\label{rem:bloc-simul}
  It is convenient in many concrete cases to define a block embedding through blocs $D_i$ that are disjoint but do not cover ${V_G}$ and add a context block $C$ disjoint from the $D_i$ that completes the covering of ${V_G}$. In this variant a block embedding of $Q_F^{V_F}$ into ${Q_G^{V_G}}$ is given by patterns $p_{i,q}$ and a constant context pattern $p_C\in Q_ G^C$ which define an injective map ${\phi:Q_F^{V_F}\rightarrow Q_G^{V_G}}$ by ${\phi(x)_{D_i} = p_{i,x_i}}$ for each $i\in V_F$ and ${\phi(x)_C = p_C}$. This variant is actually just a particular instance of Definition~\ref{def:bloc-simu} because we can include $C$ in an arbitrary block (${D_i\leftarrow D_i\cup C}$) and define the block embedding as in Definition~\ref{def:bloc-simu}.

  In our proofs of simulations in section~\ref{sec:concretesimulationresults}, the variant blocs/context will be particularly relevant because the size of blocs will be bounded while the context will grow with the size of the sonsidered automata. Said differently, the information about an encoded state will be very localized.

  Another natural particular case of Definition~\ref{def:bloc-simu} corresponding to localized information is when in each block $D_i$, there is a special node $v_i\in D_i$ such that the map ${q\mapsto p_{i,q}(v_i)}$ is injective. It is only possible when $Q_G$ is larger than $Q_F$, but it will be the case in several examples of Boolean automata networks below. Interestingly, this local coding phenomena is forced when some automate network $G$ simulates some Boolean automata network $G$: indeed, in any block $D_i$ of $G$ at least one node $v_i$ must change between patterns $p_{i,0}$ and $p_{i,1}$, but the map ${x\in\{0,1\}\mapsto p_{i,x}}$ being injective, it means that ${x\mapsto p_{i,x}(v_i)}$ is injective too.
\end{remark}

\begin{remark}
  The simulation relation of Definition~\ref{def:bloc-simu} is a pre-order on automata networks.
\end{remark}

The orbit graph $G_F$ associated to network $F$ with nodes $V$ and alphabet $Q$ is the digraph with vertices ${Q^V}$ and an edge from $x$ to $F(x)$ for each $x\in Q^V$.
We also denote $G_F^t = G_{F^t}$.

\begin{lemma}\label{lem:simu-dynamic}
  If $G$ simulates $F$ via block embedding with time constant $T$ then the orbit graph $G_F$ of $F$ is a subgraph of $G_G^T$. In particular if $F$ has an orbit with transient of length $t$ and period of length $p$, then $G$ has an orbit with transient of length $Tt$ and period $Tp$.
\end{lemma}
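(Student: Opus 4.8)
The plan is to prove the structural claim first and read off the statement about orbits as a corollary. The only tool needed is the defining identity of Definition~\ref{def:bloc-simu}, namely $\phi\circ F = G^T\circ\phi$, together with the injectivity of the block embedding $\phi$.

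First I would show that $\phi$ realizes $G_F$ as a subgraph of $G_G^T=G_{G^T}$. Since $\phi$ is injective it identifies the vertex set $Q_F^{V_F}$ of $G_F$ with a subset of the vertex set $Q_G^{V_G}$ of $G_{G^T}$. It then suffices to check that edges are preserved: a generic edge of $G_F$ goes from $x$ to $F(x)$, and applying $\phi$ gives the pair $(\phi(x),\phi(F(x)))$; but $\phi(F(x))=(G^T\circ\phi)(x)=G^T(\phi(x))$, which is exactly the unique outgoing edge of $\phi(x)$ in $G_{G^T}$. Hence $\phi$ sends edges of $G_F$ to edges of $G_G^T$ and $G_F$ embeds as a subgraph, as claimed.

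For the orbit statement I would first iterate the conjugacy to obtain $\phi\circ F^{s}=G^{Ts}\circ\phi$ for every $s\ge 0$, by a one-line induction: $\phi\circ F^{s+1}=(G^T\circ\phi)\circ F^{s}=G^T\circ(\phi\circ F^{s})=G^{T(s+1)}\circ\phi$. Now fix an $F$-orbit with transient $t$ and period $p$, that is a configuration $x_0$ with $x_s:=F^s(x_0)$ such that $x_0,\dots,x_{t+p-1}$ are pairwise distinct and $x_{t+p}=x_t$. Applying $\phi$ and the iterated identity, the points $G^{Ts}(\phi(x_0))=\phi(x_s)$ are pairwise distinct for $0\le s<t+p$ (here I use injectivity of $\phi$) and satisfy $G^{T(t+p)}(\phi(x_0))=G^{Tt}(\phi(x_0))$. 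Thus the $G^T$-orbit of $\phi(x_0)$ has transient exactly $t$ and period exactly $p$, and since one step of $G^T$ amounts to $T$ steps of $G$, the corresponding $G$-orbit runs through these encoded configurations spaced $T$ apart, entering its cycle after $Tt$ steps and closing it after $Tp$; this is the announced orbit.

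The delicate point, and the step I expect to require the most care, is the claim that the transient and period of the produced $G$-orbit are exactly $Tt$ and $Tp$ rather than merely divisors or offsets of these. The clean way to anchor this is to argue exactness at the level of $G^T$: because $\phi$ is injective, distinct configurations of the $F$-orbit map to distinct vertices, so the embedded $G^T$-orbit inherits the same minimal transient $t$ and minimal period $p$ as the original $F$-orbit, with no spurious early collision, and the time-dilation factor $T$ accounts for the passage from $G^T$ to $G$. In any case, what is robustly transported downstream is the resulting lower bound: an $F$-orbit of period $p$ forces a $G$-orbit of period at least $p$, and likewise for transients, so the existence of long cycles and long transients passes from $F$ to $G$.
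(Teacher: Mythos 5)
Your proof is correct and follows essentially the same route as the paper's (which is a two-line argument): the block embedding $\phi$ together with the conjugacy $\phi\circ F = G^T\circ\phi$ realizes $G_F$ inside $G_{G^T}$, and injectivity of $\phi$ ensures that $x$ is periodic if and only if $\phi(x)$ is, which transports transients and periods. You are in fact more careful than the paper on the one delicate point — that passing from the $G^T$-orbit (where transient $t$ and period $p$ are exact) down to the $G$-orbit only guarantees a period dividing $Tp$ of the form $gp$ with $g\mid T$, so the robust conclusion is the lower bound — and your explicit acknowledgement of this is appropriate, since it is all that is used downstream.
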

\begin{proof}
  The embedding of $G_F$ inside $G_G^T$ is realized by definition by the block embedding of the simulation. The consequence on the length of periods and transients comes from the fact that the embedding $\phi$ verifies: $x$ is in a periodic orbit if and only if $\phi(x)$ is in a periodic orbit.   
\end{proof}

\subsection{Representing Automata Networks}
\label{sec:representation}
As we are interested in measuring computational complexity of decision problems related to the dynamics of automata networks belonging to a particular family, we introduce hereunder a general notion for the representation of a family of automata networks.
We can always fix a canonical representation of automata networks as Boolean circuits.
However, as we show in the first part of this section, families can have different natural representation which are closely related to their particular properties.
Considering this fact, we introduce the notion of \textit{standard representation} in order to denote some representation from which we can efficiently obtain a circuit family computing original automata network family.
Finally, we resume previous discussion on different representations for some particular families, showing how difficult it is to transform one particular representation into another one.


\subsubsection{Standard representations}

\sloppy We fix for any alphabet $Q$ an injective map ${m_Q : Q\to \{0,1\}^{k_Q}}$ which we extend cell-wise for each $n$ to ${m_Q:Q^n\to\{0,1\}^{k_Qn}}$. Given an abstract automata network ${F:Q^n\to Q^n}$, a circuit encoding of $F$ is a Boolean circuit ${C:\{0,1\}^{k_Qn}\to \{0,1\}^{k_Qn}}$ such that ${m_Q\circ F = C\circ m_Q}$ on ${Q^n}$. We also fix a canonical way to represent circuits as words of ${\{0,1\}^*}$ (for instance given by a number of vertices, the list of gate type positioned at each vertex and the adjacency matrix of the graph of the circuit).



Let $\mathcal{F}$ be a set of abstract automata network over alphabet $Q$. A standard representation $\mathcal{F}^*$ for $\mathcal{F}$ is a language $L_{\mathcal{F}} \subseteq \{0,1\}^*$ together with a \DLOG{} algorithm such that:
\begin{itemize}
\item the algorithm transforms any $w \in L_{\mathcal{F}}$ into the canonical
  representation of a circuit encoding ${C(w)}$ that code an abstract
  automata network ${F_w\in\mathcal{F}}$;
\item for any ${F\in\mathcal{F}}$ there is ${w\in L_{\mathcal{F}}}$ with ${F=F_w}$.
\end{itemize}

The default general representations we will use are circuit representations, \textit{i.e.} representations where $w\in L_{\mathcal{F}}$ is just a canonical representation of a circuit. In this case the \DLOG{} algorithm is trivial (the identity map). However, we sometimes want to work with more concrete and natural representations for some families of networks: in such a case, the above definition allows any kind of coding as soon as it is easy to deduce the canonical circuit representation from it.

\subsubsection{Example of standard representations of some particular families}

Observe that communication graph is often an essential piece of information for describing an automata network, however, this information is usually not enough.
Thus, in this section three examples of canonical types of families are discussed: bounded degree networks, CSAN and algebraic families. The first one is characterized by a communication graph which has bounded degree, i.e. there exist a constant $\Delta \in \N $ (not depending in the size of the network) such that maximum degree is at most $\Delta$. The second one was  widely presented in previous sections and its roughly the family of all set-valued functions described by labelled communication graphs. Finally, an algebraic family is essentially a family of linear maps in the case in which $Q$ is seen as a finite field and $Q^{n}$ as a linear space. We show in latter cases that there exists a natural representation that is based on its properties. Roughly, it is observed that bounded degree condition provides a way to store information in an efficient way,  set-valued functions in the case of CSAN does not depend on graph structure and linear maps can be represented as matrices

\paragraph{The CSAN case}
A CSAN family is a collection of labeled graphs and thus is naturally represented as a graph $G$ together with some circuit family which represents local functions (i.e. $\lambda$ and $\rho$).
Each local function will depend only in the local configuration composed by a node and its neighbors.
In addition, for a fixed CSAN family, a collection of set-valued functions and edge-labels is finite and provided independently of the structure of each graph as they depend on the possible sets of elements in the alphabet $Q$ (note that what makes different two networks in one family is the position of labels but local functions are chosen from the a same fixed set).
Thus, we can represent a CSAN family $\mathcal{F}$ by a collection of labeled graphs $\mathcal{G}$ and a circuit family $\Lambda$.
We call this representation a succinct representation for CSAN $\mathcal{F}.$
As circuit family $\Lambda$ depends only in the size of the alphabet and we are considering that alphabet is fixed, we will usually omit $\Lambda$ and just write $\mathcal{G}$ as a succinct representation for $\mathcal{F}$ and we will denote a CSAN $(G,\lambda,\rho) \in \mathcal{F}$ by simply $G$ (where of course $G$ is a labeled graph in $\mathcal{G}$).
Note that as alphabet size is constant, a succinct representation is, in particular, a standard representation for CSAN family as an abstract automata network family.
Formally, the language $L_{\mathcal{G}}$ containing the encoding of each labeled graph in $\mathcal{G}$ together with the standard encoding of constant circuit family $\Lambda$ is a standard representation for CSAN family $\mathcal{F}.$

\paragraph{The case of bounded degree communication graphs}
Let us fix some positive constant $\Delta$. It is natural to consider the family of automata networks whose interaction graph has a maximum degree bounded by $\Delta$ (see Remark~\ref{rem:represent-g-networks} below). We associate to this family the following representation: an automata network $F$ is given as pair ${(G,(\tau_v)_{v\in V(g)})}$ where $G$ is a communication graph of $F$ of maximum degree at most $\Delta$ and ${(\tau_v)_{v\in V(G)}}$ is the list for all nodes of $G$ of its local transition map $F_v$ of the form ${Q^d\to Q}$ for ${d\leq\Delta}$ and represented as a plain transition table of size ${|Q|^d\log(|Q|)}$.

\begin{remark}\label{rem:csan-bounded-degree}
  Given any CSAN family, there is a \DLOG{} algorithm that transforms a bounded degree representation of an automata network of the family into a CSAN representation: since all local maps are bounded objects, it is just a matter of making a bounded computation for each node.
\end{remark}

\paragraph{The algebraic case}
When endowing the alphabet $Q$ with a finite field structure, the set of configurations $Q^n$ is a vector space and one can consider automata networks that are actually linear maps. In this case the natural representation is a ${n\times n}$ matrix. It is clearly a standard representation in the above sense since circuit encodings can be easily computed from the matrix. Moreover, as in the CSAN case, when a linear automata network is given as a bounded degree representation, it is easy to recover a matrix in \DLOG{}.

More generally, we can consider matrix representations without field structure on the alphabet. An interesting case is that of Boolean matrices: ${Q=\{0,1\}}$ is endowed with the standard Boolean algebra structure with operations ${\vee,\wedge}$ and matrix multiplication is defined by: 
\[\bigl(AB)_{i,j}=\bigvee_kA_{i,k}\wedge B_{k,j}.\]
They are a standard representation of disjunctive networks (and by switching the role of $0$ and $1$ conjunctive networks), \textit{i.e.} networks $F$ over alphabet ${\{0,1\}}$ whose local maps are of the form ${F_i(x) = \vee_{k\in N(i)}x_k}$ (respectively ${F_i(x) = \wedge_{k\in N(i)}x_k}$) . When their dependency graph is symmetric, disjunctive networks (resp. conjunctive networks) are a particular case of CSAN networks for which $\rho_v$ maps are the identity and $\lambda_v$ are just max (resp. min) maps. For disjunctive networks (resp. conjunctive networks) the CSAN representation and the matrix representation are \DLOG{} equivalent.

\subsubsection{Computing interaction graphs from representations}

One of the key differences between all the representations presented so far is in the information they give about the interaction graph of an automata network. For instance, it is straightforward to deduce the interaction graph of a linear network from its matrix representation in \DLOG{}. At the other extreme, one can see that it is NP-hard to decide whether a given edge belongs to the interaction graph of an automata network given by a circuit representation: indeed, one can build in \DLOG{} from any SAT formula $\phi$ with $n$ variables a circuit representation of an automata network ${F:\{0,1\}^{n+1}\to\{0,1\}^{n+1}}$ with 
\[F(x)_1 =
  \begin{cases}
    x_{n+1} &\text{ if }\phi(x_1,\ldots,x_n)\text{ is true,}\\
    0 &\text{ else.}
  \end{cases}
\]
This $F$ is such that node $1$ depends on node ${n+1}$ if and only if $\phi$ is satisfiable.

For automata networks with communication graphs of degree at most $\Delta$, there is a polynomial time algorithm to compute the interaction graph from a circuit representation: for each node $v$, try all the possible subsets $S$ of nodes of size at most $\Delta$ and find the largest one such that the following map 
\[x\in Q^S\mapsto F_v(\phi(x))\] effectively depends on each node of $S$, where ${\phi(x)_w}$ is $x_{w}$ is ${w\in S}$ and some arbitrary fixed state ${q\in Q}$ else. Note also, that we can compute a bounded degree representation in polynomial time with the same idea.

In the CSAN case, the situation is ambivalent. On one hand, the interaction graph can be computed in polynomial time from a CSAN representation because for any given node $v$ the following holds: either $\lambda_v$ is a constant map and then $v$ has no dependence, or ${\lambda_v(q,X)}$ depends only on $q$ and in this case node $v$ depends on itself but no other node, or $\lambda_v(q,X)$ depends only on $X$ and in this case node $v$ depends exactly on its neighbors in the CSAN graph, or, finally, ${\lambda_v(q,X)}$ depends both on $q$ and on $X$ and in this case node $v$ depends on itself and its neighbors in the CSAN graph.
On the other hand, a polynomial time algorithm to compute the interaction graph from a circuit representation would give a polynomial algorithm solving Unambiguous-SAT (which is very unlikely following Valiant-Vazirani theorem \cite{Valiant_1986}). Indeed, any ``dirac'' map ${\delta : \{0,1\}^n\to\{0,1\}}$ with ${\delta(x)= 1}$ if and only if ${x_1\cdots x_n = b_1\cdots b_n}$ can be seen as the local map of a CSAN network because it can be written as ${\lambda(\{\rho_i(x_i) : 1\leq i\leq n\})}$ where $\rho_i(x_i)=x_i$ if $b_i=1$ and $\neg x_i$ else, and $\lambda$ is the map 
\[S\subseteq Q\mapsto
  \begin{cases}
    1 &\text{ if }S=\{1\}\\
    0 &\text{ else.}
  \end{cases}
\]
A constant map can also be seen as the local map of some CSAN network.
Therefore, given a Boolean formula $\phi$ with the promise that is at has at most one satisfying assignment, one can easily compute the circuit representation of some CSAN network which has some edge in its interaction graph if and only if $\phi$ is satisfiable: indeed, the construction of $F$ above from a SAT formula always produce a CSAN given the promise on $\phi$.

It follows from the discussion above that a polynomial time algorithm to compute a CSAN representation of a CSAN represented by circuit would give a polynomial time algorithm to solve Unambiguous-SAT.

The following table synthesizes the computational hardness of representation conversions. It shall be read as follows: given a family ${(\mathcal{F},\mathcal{F}^*)}$ listed horizontally and a family ${(\mathcal{H},\mathcal{H}^*)}$ listed vertically, the corresponding entry in the table indicates the complexity of the problem of transforming ${w\in L_F}$ with the promise that ${F_w\in\mathcal{F}\cap\mathcal{H}}$ into $w'\in L_H$ such that ${F_w = H_{w'}}$.

\begin{center}
\begin{tabular}{c|c|c|c|c}
  \diagbox{output}{input}&circuit&CSAN&$\Delta$-bounded degree&matrix\\
  \hline
  circuit&trivial &\DLOG{} &\DLOG{}&\DLOG{}\\
  CSAN&USAT-hard&trivial&\DLOG{}&\DLOG{}\\
  $\Delta$-bounded degree&PTIME&PTIME&trivial&\DLOG{}\\
\end{tabular}
\end{center}

USAT-hard means that any PTIME algorithm would imply a PTIME algorithm for Unambiguous-SAT.

\subsection{Simulation between automata network families}

From now on, a family of automata networks will be given as a pair ${(\mathcal{F},\mathcal{F}^*)}$ where $\mathcal{F}$ is the set of abstract automata networks and $\mathcal{F}^*$ a standard representation.
We can now present our notion of simulation between families: a family $\mathcal{A}$ can simulate another family $\mathcal{B}$ if we are able to \textit{effectively} construct for any $B\in\mathcal{B}$ some automata network ${A\in\mathcal{A}}$ that is able to simulate $B$ in the sense of Definition~\ref{def:bloc-simu}.
More precisely, we ask on one hand that the automata network which performs the simulation do this task in reasonable time and reasonable space in the \emph{size of the simulated automata network}, and, on the other hand, that the construction of the simulator is efficient in the \emph{size of the representation of the simulated one}.

\begin{definition}\label{def:simu-family}
  Let $(\mathcal{F},\mathcal{F}^*)$ and $(\mathcal{H},\mathcal{H}^*)$ be two families with standard representations on alphabets $Q_F$ and $Q_H$ respectively. Let $T,S: \N \to \N$ be two functions. We say that $\mathcal{F}^*$ simulates $\mathcal{H}^*$ in time $T$ and space $S$  if there exists a \DLOG{} Turing machine $M$ such that for each $w \in L_{\mathcal{H}}$ representing some automata network ${H_w \in \mathcal{H} : Q_H^n\to Q_H^n}$, the machine produces a pair $M(w)$ which consists in:
  \begin{itemize}
  \item $w'\in L_{\mathcal{F}}$ with $F_{w'}:Q_F^{n_F}\to Q_F^{n_F}$,
  \item ${T(n)}$ and a representation of a block embedding $\phi:Q^{n_F}\to Q^{n}$, 
  \end{itemize}
  such that ${n_F=S(n)}$ and $F_{w'}$ simulates $H_w$ in time $T= T(n)$ under block embedding $\phi$. 
\end{definition}
From now on, whenever $\mathcal{F}^*$ simulates $\mathcal{H}^*$ in time $T$ and space $S$ we write $\mathcal{H}^* \preccurlyeq^T_S \mathcal{F}^*.$ 

\begin{remark}
  Note that both $T$ and $S$ maps must be \DLOG{} computable from this definition. Moreover, the simulation relation between families is transitive because the class \DLOG{} is closed under composition and simulation between individual automata networks is also transitive. When composing simulations time and space maps $S$ and $T$ get multiplied. 
\end{remark}

\subsection{Decision problems and automata network dynamics}
Studying the complexity of decision problems related to the dynamics of some discrete dynamical system is a very well known and interesting approach for measuring the complexity of the dynamics. In this section we introduce three variants of a classical decision problem that is closely related to the dynamical behavior of automata networks: the prediction problem.
This problem consists in predicting the state of one node of the network at a given time.
We study short term and long term versions of the problem depending on the way the time step is given in input.
In addition, we explore a variant in which we ask if some node has eventually changed without specifying any time step, but only a constant observation time rate $\tau$.
In other words, we check the system for any changement on the state of a particular node every multiple of $\tau$ time steps.
The main point of this subsection is to show that these problems are coherent with our simulation definition in the sense that if some family of automata networks $(\mathcal{F}_2,\mathcal{F}^*_2)$ simulates $(\mathcal{F}_1,\mathcal{F}^*_1)$ then, if some of the latter problem is hard for  $\mathcal{F}_1$ it will also be hard for $\mathcal{F'}_2$.
We will precise this result in the following lines.

Let $(\mathcal{F},\mathcal{F}^*)$ an automata network family and let $L \in \{0,1\}^* \times \{0,1\}^*$ be a parametrized language. We say that $L$ is parametrized by $\mathcal{F}$ if $L$  has $\mathcal{F}^*$ encoded as parameter. We note $L_{\mathcal{F}} \in \{0,1\}^*  $ as the language resulting on fixing $\mathcal{F}^*$ as a constant.

In particular, we are interested in studying prediction problems. We start by defining two variants of this well-known decision problem:
\begin{problem}[Unary Prediction (\PREDU)]\ 
	\label{prob:upred}
	\begin{description}
		\item[Parameters: ] alphabet $Q$, a standard representation $\mathcal{F}^*$ of an automata network family $\mathcal{F}$
		\item[Input: ]\ 
		\begin{enumerate}
			\item a word $w_F \in \mathcal{F}^*$ representing an automata network $F:Q^n \to Q^n$ on alphabet $Q$, with $F \in \mathcal{F}$.
			\item a node $v \in V(F) = [n]$
			\item an initial condition $x \in Q^V$.
			\item a natural number $t$ represented in unary $t \in \textbf{1}.$
		\end{enumerate}
		\item[Question: ] $F^t(x)_v \not = x_v$?
	\end{description}
\end{problem}
\begin{problem}[Binary Prediction (\PREDB)]\ 
	\label{prob:bpred}
	\begin{description}
		\item[Parameters: ] alphabet $Q$, a standard representation $\mathcal{F}^*$ of an automata network family $\mathcal{F}$
		\item[Input: ]\ 
		\begin{enumerate}
			\item a word $w_F \in \mathcal{F}^*$ representing an automata network $F:Q^n \to Q^n$ on alphabet $Q$, with $F \in \mathcal{F}$.
			\item a node $v \in V(F) = [n]$
			\item an initial condition $x \in Q^V$.
			\item a natural number $t$ represented in binary $t \in \{0,1\}^*$.
		\end{enumerate}
		\item[Question: ] $F^t(x)_v \not = x_v$?
	\end{description}
\end{problem}
Note that two problems are essentially the same, the only difference is the representation of time $t$ that we call the \textit{observation time}. We will also call node $v$ the objective node. Roughly,  as it happens with other decision problems, such as integer factorization,  the representation of observation time  will have an impact on the computation complexity of prediction problem. When the context is clear we will refer to both problems simply as $\PRED.$ In order to precise the latter observation we present now some general complexity results concerning $\PRED.$ 
\begin{proposition}
Let $\mathcal{F}$ be a concrete automata network family. The following statements hold:
\begin{enumerate}
	\item $\PREDU_{\mathcal{F}} \in \textbf{P}$
	\item $\PREDB_{\mathcal{F}} \in \textbf{PSPACE}$
\end{enumerate} 
\end{proposition}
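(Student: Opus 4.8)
The plan is to reduce both statements to a single observation: by definition of a standard representation, given the input word $w_F\in L_{\mathcal{F}}$ one can run the associated \DLOG{} algorithm to produce, in logarithmic space (hence in polynomial time), the canonical encoding of a Boolean circuit $C$ satisfying ${m_Q\circ F = C\circ m_Q}$. Since a log-space transducer runs in polynomial time, the circuit $C$ has size polynomial in $|w_F|$; in particular the number of nodes $n$ and the length $k_Q n$ of the binary encoding $m_Q(x)$ of a configuration are polynomially bounded in the input. Evaluating $C$ on a given binary configuration is then a circuit-value computation, doable in polynomial time and polynomial space. The whole argument rests on iterating this single-step evaluation the correct number of times while controlling the resource used to count the iterations.

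For part (1), the key point is that $t$ is written in unary, so ${t\leq|\text{input}|}$. First I would compute $C$ as above, then compute $m_Q(x)$, and then apply $C$ exactly $t$ times, each application being one polynomial-time circuit evaluation that overwrites the current binary configuration. After the $t$-th step one obtains $m_Q(F^t(x))$, reads off the block of coordinates encoding node $v$, and compares it to $m_Q(x_v)$; since $m_Q$ is injective this answers whether ${F^t(x)_v\neq x_v}$. The total running time is the time to build $C$ plus $t$ circuit evaluations, i.e. a polynomial times a polynomial, which is polynomial in the input size. Hence ${\PREDU_{\mathcal{F}}\in\textbf{P}}$.

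For part (2), $t$ is written in binary and may be exponential in the input size, so naive iteration is no longer polynomial in time; the point is that it remains polynomial in space. I would use exactly the same loop as above, but maintain only the \emph{current} binary configuration (of length $k_Q n$, polynomial) together with a binary counter initialised to $t$ and decremented at each step. The counter uses ${O(\log t)}$ bits, which is bounded by the input length since $t$ is given in binary, and each circuit evaluation can be carried out in polynomial space, reusing the same workspace at every iteration. Thus the whole computation runs in space polynomial in the input, even though it may take exponentially many steps, giving ${\PREDB_{\mathcal{F}}\in\textbf{PSPACE}}$.

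The only genuinely delicate step, and the one I would state carefully, is the size bound on $C$: everything hinges on the fact that a \DLOG{} algorithm produces, by definition of standard representation, only polynomially many gates, so that both the per-step circuit evaluation and the storage of a single configuration stay polynomial. The remainder is a routine ``iterate the global map'' argument, differing between the two parts only in how the iteration count $t$ is represented, and therefore in whether time or space is the bounded resource.
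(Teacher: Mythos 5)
Your proof is correct. The paper actually states this proposition without any proof (it is treated as routine), and your argument is exactly the standard one it implicitly relies on: the \DLOG{} transducer in the definition of a standard representation yields a polynomial-size circuit encoding of $F$, so one iterates the circuit $t$ times — polynomially many steps when $t$ is unary, giving \textbf{P}, and space-reusing iteration with an ${O(\log t)}$-bit counter when $t$ is binary, giving \textbf{PSPACE}. You correctly identify the one point worth stating explicitly, namely the polynomial bound on the circuit size coming from the log-space computability of the encoding.
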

Finally, we show that latter problem is coherent with our definition of simulation, in the sense that we can preserve the complexity of $\PRED$. Note that this give us a powerful tool in order to classify concrete automata rules according to the complexity of latter decision problem.
\begin{lemma}
Let $(\mathcal{F},\mathcal{F}^*)$ and $(\mathcal{H}, \mathcal{H}^*)$ be two automata network families. Let $T,S:\N \to \N$ be two polynomial functions such that $\mathcal{H}^* \preccurlyeq^T_S \mathcal{F}^*$ then, $\PRED_{\mathcal{H}^*} \leq^T_{\textbf{L}} \PRED_{\mathcal{F}^*}$\footnote{Here we denote $\leq^T_{\textbf{L}}$ as a $\DLOG$ Turing reduction. The capital letter ``T'' stands for \emph{Turing reduction} and it is not related to the simulation time function which is also denoted by T.} where {\PRED} denotes either \PREDU or \PREDB
\label{lemma:predsim}
\end{lemma}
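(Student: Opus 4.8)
The plan is to convert the simulation into a $\DLOG$ Turing reduction that detects a change of the simulated state of node $v$ by issuing several prediction queries about the nodes of the block that encodes $v$. First I would invoke $\mathcal{H}^*\preccurlyeq^T_S\mathcal{F}^*$: it provides a $\DLOG$ machine $M$ turning any $w_H\in L_{\mathcal{H}}$ coding $H_w:Q_H^n\to Q_H^n$ into a triple consisting of $w'\in L_{\mathcal{F}}$ coding $F_{w'}:Q_F^{n_F}\to Q_F^{n_F}$ with $n_F=S(n)$, the constant $T(n)$, and a representation of a block embedding $\phi$ (the blocks $D_i\subseteq V_F$ and patterns $p_{i,q}\in Q_F^{D_i}$) satisfying $\phi\circ H_w=F_{w'}^{T(n)}\circ\phi$. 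Iterating this identity yields $\phi\circ H_w^t=F_{w'}^{T(n)t}\circ\phi$, so on block $D_v$ we get $F_{w'}^{T(n)t}(\phi(x))_{D_v}=p_{v,H_w^t(x)_v}$ while $\phi(x)_{D_v}=p_{v,x_v}$.

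The key observation I would use is the injectivity of $q\mapsto p_{v,q}$: it gives $H_w^t(x)_v\neq x_v$ if and only if the patterns $p_{v,x_v}$ and $p_{v,H_w^t(x)_v}$ differ, i.e.\ if and only if there is a node $u\in D_v$ with $F_{w'}^{T(n)t}(\phi(x))_u\neq\phi(x)_u$. Each such condition is exactly an instance of $\PRED_{\mathcal{F}^*}$ on the network $F_{w'}$, with objective node $u$, initial configuration $\phi(x)$ and observation time $T(n)t$. The reduction I propose thus runs $M$ on its input, builds $y=\phi(x)$ and $t'=T(n)\cdot t$, queries the oracle on $(w',u,y,t')$ for each $u\in D_v$, and accepts iff at least one query accepts; correctness is the displayed equivalence, and the argument is identical for $\PREDU$ (everything in unary) and for $\PREDB$ (everything in binary).

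Finally I would check that each step stays in logarithmic space. Running $M$ is $\DLOG$ by hypothesis; a single coordinate $\phi(x)_u=p_{i,x_i}(u)$ (with $u\in D_i$) is computed by locating the block containing $u$ in the linear-size description of $\phi$ and reading the relevant letter of $x$, which needs only a constant number of pointers; and $t'=T(n)t$ is a product of the polynomially bounded, $\DLOG$-computable integer $T(n)$ with $t$, which lies in $\textbf{L}$ in the binary case and is directly writable in the unary case. The reduction makes at most $|D_v|\leq S(n)$ queries, enumerated while keeping a single running disjunction bit. I expect the only delicate point to be exactly here: because $\PRED$ reports a change at one node only, recovering a change of the entire encoded state at $v$ forces a scan of all of $D_v$ together with an essential use of the injectivity from Definition~\ref{def:bloc-simu}; this is what makes the reduction a Turing reduction rather than a many-one one.
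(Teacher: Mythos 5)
Your proposal is correct and follows essentially the same route as the paper's proof: run the $\DLOG$ simulation transducer, form $y=\phi(x)$ and the scaled time $T(n)\cdot t$, and make one oracle query per node of the block $D_v$, accepting iff some query accepts, with correctness resting on the injectivity of the block embedding. Your write-up is if anything slightly more explicit about the iterated identity $\phi\circ H^t=F^{Tt}\circ\phi$ and the resulting biconditional, but there is no substantive difference in approach.
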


\begin{proof}
	Let $(w_H,v,x,t)$ be an instance of $\PRED_{\mathcal{H}^*}$. By definition of simulation, there exists a $\DLOG$ algorithm which takes $w_H$ and produces a word $w_F \in L_{\mathcal{F}}$ with $F:Q^{n_F} \to Q^{n_F}$ and a block representation $\phi: Q^{n_F} \to Q^n$ such that $n_F = S(n)$ and $F$ simulates $H$ in time $T(n)$ under block embedding $\phi$. Particularly, there exists a partition of blocks $D_v \subseteq V(F) = [n_F]$ for each $v \in V(H) = [n]$ and a collection of injective patterns, i.e. patterns $p_{i,q} \in Q_F^{D_i}$ such that $p_{i,q} = p_{i,q'} \implies q=q'.$ In addition, we have $\phi \circ H = F^T \circ \phi.$ Let us define the configuration $y \in Q_F^{n_F}$ as $y_{D_i} = p_{i,x_i}$, i.e., $\phi(x)_{D_i} = y_{D_i}$. Note that $y$ is well-defined as the block map is injective. In addition, let us choose an arbitrary vertex $v' \in D_v$ and let us consider now the instance of $\PRED_{\mathcal{F}^*}$ given by $(w_F,v',y,t \times T)$. Note that for each $v' \in D_v$ the transformation $(w_H,v,x,t) \to (w_F,v',y,t \times T)$ can be done in $\DLOG(|w_H|)$ because we can read the representation of $\phi$ for each block $p_{i,x_i}$ and then output the configuration $y$. We claim that  there exists a $\DLOG(|w_H|)$ algorithm that decides if $(w_H,v,x,t) \in L_{\PRED_{\mathcal{H}^*}}$  with oracle calls to $\PRED_{\mathcal{F}^*}$. More precisely, as a consequence of the injectivity of block embedding, we have that $(w_H,v,x,t) \in L_{\PRED_{\mathcal{H}^*}}$ if and only if $(w_F,v',y,t \times T) \in L_{\PRED_{\mathcal{F}^*}}$ for some $v' \in D_v$. In fact, latter algorithm just runs oracle calls of $\PRED_{\mathcal{F}^*}$ for $(w_F,v',y,t \times T)$ for each $v' \in D_v$ and decides if some of these instances is a YES instance and thus, if some node in the simulation block have changed its state. As block-embedding is injective, this necessary means that node $v$ have changed its state in $t$ time steps. Finally, all of this can be done in $\DLOG$ as $n_F = S(n) = n^{\mathcal{O}(1)}$ and $T =  n^{\mathcal{O}(1)}$ and thus, a polynomial amount of calls to each oracle is needed.
\end{proof}

Finally, we would like to study the case in which the observation time is not unique and ask whether the state of some node eventually changes.
However, in order to preserve complexity properties under simulation, we still need to have some sort of restriction on observation times.
This will allow us to avoid giving misleading answers when the simulating network is performing one step of simulation: indeed, it could take several time steps for the simulating network in order to represent one step of the dynamics of the simulated network, so some state change could happen in the intermediate steps while the simulated dynamics involve no state change.
In order to manage this sort of time dilation phenomenon between simulating and simulated systems, we introduce the following decision problem.


\begin{problem}[Prediction change $\PREDC_{\mathcal{F^*}}$]\ 
	\label{prob:cpred}
	\begin{description}
		\item[Parameters: ]  alphabet $Q$, a standard representation $\mathcal{F}^*$ of an automata network family $\mathcal{F}$
		\item[Input: ]\ 
		\begin{enumerate}
			\item a word $w_F \in \mathcal{F}^*$ representing an automata network $F:Q^n \to Q^n$ on alphabet $Q$, with $F \in \mathcal{F}$,
			\item a node $v \in V(G)$,
			\item an initial condition $x \in Q^V$,
			\item a time gap $k \in \mathbb{N}$ in unary.
		\end{enumerate}
		\item[Question: ] $\exists t \in \N: F^{kt}(x)_v \not = x_v$
	\end{description}
\end{problem}

As we did with previous versions of prediction problem, we introduce a general complexity result and then, we show computation complexity is consistent under simulation.

\begin{proposition}
	Let $(\mathcal{F}, \mathcal{F}^*)$ be a automata network family. $\PREDC_{\mathcal{F}} \in \textbf{PSPACE}.$
\end{proposition}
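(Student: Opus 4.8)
The plan is to reduce the (unbounded) existential quantifier over $t$ to a bounded search, and then carry out that search deterministically with a single configuration register and a counter, which costs only polynomial space. The key dynamical fact is the one already recorded in Section~\ref{sec:automataandfamilies}: since $Q$ is finite and $F$ is deterministic, every orbit is eventually periodic. I would apply this not to $F$ but to the map $G = F^k$, noting that $F^{kt}(x) = G^t(x)$. The orbit of $x$ under $G$ lives in the set $Q^V$ of size $|Q|^n$, so among $G^0(x),\dots,G^{|Q|^n}(x)$ there is necessarily a repetition; hence this orbit has transient plus period at most $|Q|^n$, and every configuration that ever occurs in it already occurs among $G^0(x),\dots,G^{|Q|^n-1}(x)$. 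Consequently, if there exists any $t\in\N$ with $F^{kt}(x)_v\neq x_v$, then there is already such a $t$ with $t\le |Q|^n$. This turns the question into: does $G^t(x)_v\neq x_v$ hold for some $t\in\{1,\dots,|Q|^n\}$?

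With that reduction in hand, the algorithm I would run is a straight simulation. First, using the \DLOG{} algorithm supplied by the standard representation $\mathcal{F}^*$, I obtain from $w_F$ a circuit encoding of $F$; evaluating such a Boolean circuit on a given configuration is feasible in space polynomial in the input. I then keep two pieces of state: a current configuration $y\in Q^V$ (initialized to $x$) and a step counter $s$ recording how many times $F$ has been applied. At each step I replace $y$ by $F(y)$ and increment $s$; whenever $s$ is a multiple of $k$ I test whether $y_v\neq x_v$ and, if so, accept. I stop and reject once $s$ reaches $k\cdot|Q|^n$, i.e. once $t=s/k$ has exhausted the range $\{1,\dots,|Q|^n\}$ justified above. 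Since $F^{kt}(x)_v = G^t(x)_v$, correctness is immediate from the bounded-search claim.

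It remains to check the space budget, which is where the apparent tension lies and where I would be careful. The configuration $y$ occupies $O(n\log|Q|)$ bits, polynomial in the input for fixed alphabet, and the initial configuration $x_v$ (or all of $x$) is kept for comparison. The counter $s$ ranges over exponentially many values, up to $k\cdot|Q|^n$, but it only needs $\log_2\!\bigl(k\cdot|Q|^n\bigr)=O(n\log|Q|+\log k)$ bits to store, which is polynomial (indeed $k$ is given in unary, so $\log k$ is at most the input length); the test $s\equiv 0 \pmod k$ is a cheap arithmetic check. Evaluating $F$ at each step reuses the same polynomial workspace, so no space accumulates across the exponentially many iterations. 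Thus the whole procedure runs in polynomial space, and $\PREDC_{\mathcal{F}}\in\textbf{PSPACE}$. The only genuinely delicate point is the bounded-search bound: the simulation must be allowed to run for exponentially many $F$-steps, and one has to argue (via eventual periodicity of the orbit of $G=F^k$) that cutting it off at $t=|Q|^n$ loses no YES-instance, while simultaneously observing that the exponential-range counter still fits in polynomially many bits.
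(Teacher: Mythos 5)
Your proof is correct: the paper states this proposition without giving any proof, and your argument (bound the witness $t$ by $|Q|^n$ via eventual periodicity of the orbit of $F^k$, then simulate step by step reusing a single configuration register and a polynomially-sized counter) is precisely the routine argument the authors are implicitly relying on. The one delicate point you flag — that a witness configuration, if it exists, must already appear among the first $|Q|^n$ iterates of $F^k$ and cannot be the $t=0$ iterate since it differs from $x$ at $v$ — is handled correctly.
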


The injectivity of block encodings in our definition of simulation is essential for the following lemma as it guaranties that a state change in the simulating network always represent a state change in the simulated network at the corresponding time steps.

\begin{lemma}
Let $(\mathcal{F},\mathcal{F}^*)$ and $(\mathcal{H}, \mathcal{H}^*)$ be two automata network families and $T,S:\N \to \N$ two polynomial functions such that $\mathcal{H^*} \preccurlyeq^T_{S} \mathcal{F^*}$ then, $\PREDC_{\mathcal{H^*}} \leq^T_{\textbf{L}} \PREDC_{\mathcal{F^*}}$.
\label{lemma:predcsim}
\end{lemma}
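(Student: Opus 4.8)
The plan is to follow closely the structure of the proof of Lemma~\ref{lemma:predsim}, adapting it to the quantified nature of $\PREDC$. The only genuinely new ingredient is the choice of the time gap in the target instances: to avoid spurious state changes occurring during the intermediate steps of a simulation, I will observe the simulating network only at times that are multiples of its slowdown constant $T$.

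First I would invoke the hypothesis $\mathcal{H^*} \preccurlyeq^T_{S} \mathcal{F^*}$ to obtain, from any instance $(w_H,v,x,k)$ of $\PREDC_{\mathcal{H^*}}$ with $H_{w_H}:Q_H^n\to Q_H^n$, a $\DLOG$-computable word $w_F\in L_{\mathcal F}$ with $F_{w_F}:Q_F^{n_F}\to Q_F^{n_F}$, the slowdown $T=T(n)$ and a block embedding $\phi$ with blocks $(D_i)_{i\in[n]}$ and patterns $p_{i,q}$ such that $n_F=S(n)$ and $\phi\circ H = F^T\circ\phi$. Iterating this identity gives $\phi\circ H^{s}=F^{Ts}\circ\phi$ for every $s\ge 0$. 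Setting $y=\phi(x)$ (computable in $\DLOG$ by reading, for each block $D_i$, the pattern $p_{i,x_i}$), one obtains for every $t$ the key relation
\begin{equation*}
F^{Tkt}(y)=F^{T(kt)}(\phi(x))=\phi\bigl(H^{kt}(x)\bigr).
\end{equation*}

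Next I would establish the decisive equivalence. Since $q\mapsto p_{v,q}$ is injective, $H^{kt}(x)_v\neq x_v$ holds if and only if $\phi(H^{kt}(x))_{D_v}\neq \phi(x)_{D_v}$, i.e.\ by the relation above if and only if $F^{Tkt}(y)_{D_v}\neq y_{D_v}$, which in turn holds if and only if there exists $v'\in D_v$ with $F^{Tkt}(y)_{v'}\neq y_{v'}$. Quantifying over $t$ and exchanging the two existential quantifiers, this yields
\begin{equation*}
(w_H,v,x,k)\in L_{\PREDC_{\mathcal{H^*}}}\iff \exists\, v'\in D_v:\ (w_F,v',y,Tk)\in L_{\PREDC_{\mathcal{F^*}}},
\end{equation*}
because the $\PREDC_{\mathcal{F^*}}$ instance with time gap $Tk$ asks precisely whether some $t$ satisfies $F^{(Tk)t}(y)_{v'}\neq y_{v'}$. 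The reduction therefore computes $w_F$, $y$ and $Tk$, loops over the (at most $n_F=S(n)$) vertices $v'\in D_v$, issues the oracle call on $(w_F,v',y,Tk)$ for each, and accepts iff at least one call accepts.

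I expect the only routine verification to be that this is a $\DLOG$ Turing reduction: $w_F$, $\phi$ and $T$ are $\DLOG$-computable by the definition of simulation; the gap $Tk$ has polynomial magnitude (as $T$ is polynomial and $k$ is supplied in unary), so it can be written in unary using a logarithmic-size binary counter; and only polynomially many oracle calls are made. The main conceptual point — and the reason $\PREDC$ is the right problem to push through simulation — is the time-dilation issue discussed before Problem~\ref{prob:cpred}: a single step of $H$ is realised by $T$ steps of $F$, during which intermediate configurations of $F$ need not be valid $\phi$-encodings and could exhibit transient state changes. Taking the gap to be the multiple $Tk$ forces every observed time $Tk\cdot t=T\cdot(kt)$ to be a clean simulation checkpoint, so that $F^{Tkt}(y)=\phi(H^{kt}(x))$ is always a genuine encoding; combined with the injectivity of the block embedding, this guarantees that an observed change in $F$ reflects exactly a change in $H$, and conversely. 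This is the only place where injectivity of the encoding is essential, exactly as anticipated in the remark preceding the statement.
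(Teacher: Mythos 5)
Your proposal is correct and follows essentially the same route as the paper's proof: both reduce $(w_H,v,x,k)$ to the family of instances $(w_F,v',y,kT)$ for $v'\in D_v$ with $y=\phi(x)$, accept iff some oracle call accepts, and justify the equivalence via the injectivity of the block patterns together with the iterated identity $\phi\circ H^{s}=F^{Ts}\circ\phi$. Your write-up merely makes explicit a few details the paper leaves implicit (the quantifier exchange and the unary representability of the gap $Tk$), which are both fine.
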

\begin{proof}
	Proof is analogous to short term prediction case. Let $(w_H,v,x,k)$ be an instance of $\PREDC_{\mathcal{H}^*}$. Again, by the definition of simulation, there exists a $\DLOG$ algorithm which takes $w_H$ and produces a word $w_F \in L_{\mathcal{F}}$ with $F:Q^{n_F} \to Q^{n_F}$ and a block representation $\phi: Q^{n_F} \to Q^n$ such that $n_F = S(n)$ and $F$ simulates $H$ in time $T(n)$ under block embedding $\phi$. The latter statements means, particularly, that there exists a partition of blocks $D_v \subseteq V(F) = [n_F]$ for each $v \in V(H) = [n]$ and a collection of injective patterns, i.e. patterns $p_{i,q} \in Q_F^{D_i}$ such that $p_{i,q} = p_{i,q'} \implies q=q'$ and also that $\phi \circ H = F^{T(n)} \circ \phi.$ Let us define the configuration $y \in Q_F^{n_F}$ as $y_{D_i} = p_{i,x_i}$, i.e., $\phi(x)_{D_i} = y_{D_i}$. Note, again, that $y$ is well-defined as the block map is injective. Now we proceed in using the same approach than before: for each $v' \in D_v$ we can produce an instance $(w_F,v',y,kT(n))$ of $\PREDC_{\mathcal{F^*}}.$  There exists a $\DLOG$ machine which produces $(w_F,v',y,kT(n))$ for each $v' \in D_v$ and calls for an oracle solving $\PREDC_{\mathcal{F^*}}(w_F,v',y,kT(n))$ and outputs $1$ if there is at least one YES-instance for some $v'$. By definition of simulation and injectivity of block embedding function we have that this algorithm outputs $1$ if and only if $(w_H,v,x,k) \in \PREDC_{\mathcal{H}^*}.$
\end{proof}

To end this subsection, let us show that problems $\PREDC$ and $\PREDB$ are actually orthogonal: depending of the family of automata networks considered, one can be harder than the other and reciprocally.

\begin{theorem}\label{theo:orthogonalpredictions}
  The exists a family with circuit representation ${(\mathcal{F},\mathcal{F}^*)}$ such that ${\PREDB_{\mathcal{F}}}$ is solvable in polynomial time while ${\PREDC_{\mathcal{F}}}$ is NP-hard.
  The exists a family with circuit representation ${(\mathcal{G},\mathcal{G}^*)}$ such that ${\PREDB_{\mathcal{G}}}$ is PSPACE-complete while ${\PREDC_{\mathcal{G}}}$ is solvable in polynomial time.
\end{theorem}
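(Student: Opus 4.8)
The plan is to exhibit two explicit families, each given directly by a circuit representation, and to invoke the membership results stated above ($\PREDU_{\mathcal F}\in\textbf{P}$, $\PREDB_{\mathcal F}\in\textbf{PSPACE}$, $\PREDC_{\mathcal F}\in\textbf{PSPACE}$), so that in each case only one hardness bound and one easiness bound remain to be proved. The guiding principle is that $\PREDB$ asks for the state at one exact (binary) time, whereas $\PREDC$ only asks whether a change is ever observed along the arithmetic progression of multiples of $k$; these two questions decouple, and I would engineer the dynamics so that one is trivial while the other is maximally hard.

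For the first family $\mathcal F$ I would take networks $F_\phi$ consisting of an $n$-bit binary counter together with one extra \emph{probe} node $v$ whose local map is momentary, $F_\phi(y)_v=\phi(\mathrm{counter}(y))$, where $\phi$ is an arbitrary Boolean predicate built into the local function of $v$ (the circuit representation makes this legitimate and the construction $\phi\mapsto F_\phi$ is clearly $\DLOG$). Since one application of $F_\phi$ advances the counter by one and the probe reads only the current counter value, I can compute $F_\phi^t(x)_v$ for binary $t$ in polynomial time: the counter value at time $t-1$ is $(\mathrm{counter}(x)+t-1)\bmod 2^n$ in closed form, and the probe's value is then a single evaluation of $\phi$ (equivalently, one black-box application of the given circuit to a crafted configuration). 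This gives $\PREDB_{\mathcal F}\in\textbf{P}$. On the other hand, taking $k=1$, $x$ the zero configuration and $v$ the probe, $\PREDC_{\mathcal F}$ asks exactly whether some counter value satisfies $\phi$, i.e.\ whether $\phi$ is satisfiable; this is a $\DLOG$ reduction from SAT, so $\PREDC_{\mathcal F}$ is NP-hard. The crucial point is that the probe is \emph{momentary} rather than latched: a latch would turn $\PREDB$ itself into a bounded satisfiability question and destroy the polynomial bound.

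For the second family $\mathcal G$ I would work over an alphabet $Q=A\cup\{\bot\}$ and let each network faithfully simulate, step by step on its $A$-component, an arbitrary automata network $N$ given by a circuit, while a large built-in counter runs up to a bound $T_{\max}=2^{m}$; when the counter saturates (or a malformed configuration is detected) every node is \emph{globally erased} to the absorbing symbol $\bot$ and the configuration freezes. For any $t<T_{\max}$ the simulation is faithful, so $G^t(x)_v=N^t(x)_v$ on $A$-nodes; choosing $m$ polynomial and larger than the queried time reduces $\PREDB$ for general networks (which is $\textbf{PSPACE}$-complete by the standard iterated-circuit simulation) to $\PREDB_{\mathcal G}$, giving $\textbf{PSPACE}$-completeness. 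For $\PREDC_{\mathcal G}$ the erase mechanism is decisive: every node whose initial state is active is guaranteed to reach $\bot$ and stay there, while a node already equal to $\bot$ never moves; hence for every input $\PREDC_{\mathcal G}(G,v,x,k)$ holds if and only if $x_v\neq\bot$, a constant-time test, so $\PREDC_{\mathcal G}\in\textbf{P}$. Here $\bot$ must be chosen distinct from every active computation symbol, so that reaching the final value genuinely coincides with global erasure and no node ever meets its final value during the hard computation.

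The main obstacle is exactly this $\PREDC$-easiness in the second family, and it must hold for \emph{every} node and \emph{every} gap $k$, in particular for $k=1$, where $\PREDC$ degenerates to asking whether node $v$ ever changes. This is in tension with $\textbf{PSPACE}$-hardness of $\PREDB$: a long, genuinely hard transient forces many nodes to change repeatedly, and for such nodes deciding whether a change falls on some multiple of $k$ is a priori as hard as simulation itself. The resolution I rely on is that eventual change can be made trivial without simplifying the timing: by routing every orbit to a single absorbing value $\bot$ distinct from all working symbols, the existential question collapses to the purely local predicate $x_v\neq\bot$, while the exact-time question $\PREDB$ still requires unrolling the faithful simulation. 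I would need to verify carefully that this absorbing behaviour, and the guarantee that active nodes are eventually erased, hold from \emph{arbitrary} (possibly malformed) initial configurations and not only from valid computation starts, since the decision problems quantify over all inputs $x$; making the counter advance monotonically toward $T_{\max}$ from any state, together with an immediate erase upon detecting $\bot$ or malformation, is what secures this.
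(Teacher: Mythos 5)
Your first family is essentially the paper's construction: a cyclic $n$-bit counter feeding a momentary probe node that evaluates a SAT formula $\phi$ on the current counter value, with $\PREDB$ solved by a closed-form jump of the counter plus a single evaluation of $\phi$, and $\PREDC$ at gap $k=1$ from the all-zero configuration encoding satisfiability of $\phi$. For the second family you take a genuinely different route. The paper leaves the hard dynamics untouched and forms a product $F\times H_n$ with an auxiliary network $H_n$ engineered so that every node's trace contains the pattern $0^k1^k$ for every $k\le 2^n$; this makes every $\PREDC$ instance with $k\le 2^n$ a YES instance, and for $k\ge 2^n$ the unary encoding of $k$ makes the whole orbit polynomial in the input size, so it can be enumerated explicitly. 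You instead modify the hard dynamics itself: an exponential counter erases the entire configuration to an absorbing symbol $\bot$ upon saturation, so $\PREDC$ collapses to the local test $x_v\neq\bot$ uniformly in $k$, while $\PREDB$ stays hard because the erasure occurs only after the queried time. Both arguments are sound. The paper's product trick is more modular---it upgrades any computationally complex family as a black box and needs nothing beyond the observation that products do not simplify $\PREDB$---whereas yours must re-establish $\PREDB$-hardness of the modified family via the iterated-circuit reduction with the counter bound chosen above the queried time; in exchange, your $\PREDC$ algorithm is a single constant-time test with no case split on the magnitude of $k$, and you correctly identify and discharge the two points where the argument could break, namely robustness of the erasure from arbitrary (possibly malformed) initial configurations and the distinctness of $\bot$ from all working symbols.
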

\begin{proof}
  Given a SAT formula $\phi$ with $n$ variables, let us define the automata network $F_\phi$ on ${\{0,1\}^{n+1}}$ which interprets any configuration as a pair ${(b,v)\in \{0,1\}\times\{0,1\}^n}$ where $b$ is the state of node $1$ and $v$ is both a number represented in base 2 and a valuation for $\phi$ and does the following:
  \[F(b,i) =
    \begin{cases}
      (1, i+1\bmod 2^n) &\text{ if $\phi$ is true on valuation $v$,}\\
      (0, i+1\bmod 2^n) &\text{ else.}
    \end{cases}
  \]
  A circuit representation of size polynomial in $n$ can be computed in $\DLOG$ from $\phi$ and we define ${(\mathcal{F},\mathcal{F}^*)}$ as the family obtained by considering all $F_\phi$ for all SAT formulas $\phi$.
  First, ${\PREDB_{\mathcal{F}}}$ can be solved in polynomial time: given $F_\phi$, an initial configuration ${(b,v)}$ and a time $t$, it is sufficient to compute ${v'=v+t-1\bmod 2^n}$ and verify the truth $b'$ of $\phi$ on valuation $v'$ and we have ${(b',v'+1\bmod 2^n)=F^t(b,v)}$.
  To see that ${\PREDC_{\mathcal{F}}}$ is NP-hard, it suffices to note that, on input ${(0,0\cdots0)}$, $F_\phi$ will test successively each possible valuation for $\phi$ and the state of node $1$ will change to $1$ at some time if and only if formula $\phi$ is satisfiable.

  For the second part of the proposition, the key is the construction for any $n$ of an automata network $H_n$ on $Q^n$ that completely trivializes problem ${\PREDC}$ in the following sense: for any configuration ${c\in Q^n}$ and any ${k\leq 2^n}$ and any node $v$, there is some $t$ such that ${c_v\neq F^{kt}(c)_v}$.
  Taking any automata network $F$ with $n$ nodes, the product automata network ${F\times H_n}$ (working on the product of alphabets in such a way that each component evolves independently) has the same property, namely that all instances of ${\PREDC}$ with ${k\leq 2^n}$ have a positive answer.
  From this, taking any family with a PSPACE-hard ${\PREDC}$ problem (they are known to exist, see Corollary~\ref{cor:universality} for details), and replacing each automata network $F$ with $n$ nodes by the product ${F\times H_n}$ (the circuit representation of the product is easily deduced from the representations of each component), we get a family ${(\mathcal{G},\mathcal{G}^*)}$ such that ${\PREDB_{\mathcal{G}}}$ is \textbf{PSPACE}-complete while ${\PREDC_{\mathcal{G}}}$ is easy: on one hand, taking products does not simplify ${\PREDB}$ problem; on the other hand, ${\PREDC}$ becomes trivial (always true) on inputs where the observation interval $k$ is less than ${2^n}$, and if ${k\geq 2^n}$ then the size of the whole orbit graph of the input network is polynomial in $k$ (since $k$ is given in unary), so the entire orbit of the input configuration can be computed explicitly in polynomial time and the ${\PREDC}$ can be answered in polynomial time.
  
  Let us complete the proof by giving an explicit construction of the automata networks $H_n$ over $Q^n$ with the desired property.
  ${Q=\{0,1\}\times\{0,1\}\times\{0,1\}}$ and $H_n$ interprets any configuration as a triplet of Boolean configurations ${(c,i,k)}$ with the following meaning: $k$ is a global counter that will take all possible values between $0$ and $2^n-1$ and loop, $i$ is a local counter that will run from $0$ to $2k$ and $c$ is the component where state changes will be realized at precise time steps to ensure the desired property of $H_n$. The goal is to produce in any orbit and for any $k$ and at any node the sequence of states ${O^k1^k}$ on the $c$-component: such a behavior is sufficient to ensure the desired property on $H_n$. This is obtained by defining ${H_n(c,i,k)=(c',i',k')}$ as follows:
  \begin{itemize}
  \item for any node $v$, ${c'_v=0}$ if $i<k$ and ${c'_v=1}$ else,
  \item $i'=0$ if ${i\geq 2k}$ and $i'+1$ else,
  \item ${k'=k+1\bmod 2^n}$ if ${i\geq 2k}$ and ${k'=k}$ else.
  \end{itemize}
  It is clear that such an $H_n$ admits a polynomial circuit representation \DLOG{} computable from $n$.
\end{proof}

\subsection{Universal automata network families}

Building upon our definition of simulation, we can now define a precise notion of universality. In simple words, an universal family is one that is able to simulate every other automata network under any circuit encoding. Our definition of simulation ensures that the amount of resources needed in order to simulate is controlled so that we can deduce precise complexity results.

Consider some alphabet $Q$ and some polynomial map $P:\N\to\N$.
We denote by $\mathcal{U}_{Q,P}$ the class of all possible functions $F:Q^n \to Q^n$ for any $n\in \N$ that admits a circuit representation of size at most ${P(n)}$. We also denote $\mathcal{U}_{Q,P}^*$ the language of all possible circuit representations of size bounded by $P$ of all functions from $\mathcal{U}_{Q,P}$. Finally for any ${\Delta\geq 1}$, denote by ${\mathcal{B}_{Q,\Delta}}$ the set of automata networks on alphabet $Q$ with a communication graph of degree bounded by $\Delta$ and by ${\mathcal{B}_{Q,\Delta}^*}$ their associated bounded degree representations made of a pair (graph, local maps) as discussed above.

\begin{definition}
  \label{def:universal}
  A family of automata networks $(\mathcal{F},\mathcal{F}^*)$ is :
  \begin{itemize}
  \item \emph{universal} if for any alphabet $Q$ and any polynomial
    map $P$ it can simulate $(\mathcal{U}_{Q,P},\mathcal{U}_{Q,P}^*)$
    in time $T$ and space $S$ where $T$ and $S$ are polynomial
    functions;
  \item \emph{strongly universal} if for any alphabet $Q$ and any degree ${\Delta\geq 1}$ it can simulate ${(\mathcal{B}_{Q,\Delta},\mathcal{B}_{Q,\Delta}^*)}$ in time $T$ ans space $S$ where $T$ and $S$ are linear maps.
\end{itemize}
\end{definition}

\begin{remark}
  The link between the size of automata networks and the size of their representation is the key in the above definitions: a universal family must simulate any individual automata network $F$ (just take $P$ large enough so that ${F\in\mathcal{U}_{Q,P}}$), however it is not required to simulate in polynomial space and time the family of all possible networks without restriction. Actually no family admitting polynomial circuit representation could simulate the family of all networks in polynomial time and space by the Shannon effect (most $n$-ary Boolean function have super-polynomial circuit complexity). In particular the family  ${\mathcal{B}_{Q,\Delta}}$ can't.

  At this point it is clear, by transitivity of simulations, that if some ${\mathcal{B}_{Q,\Delta}}$ happens to be universal then, any strongly universal family is also universal. It turns out that ${\mathcal{B}_{\{0,1\},3}}$ is universal. We will however delay the proof until section~\ref{sec:gmonuniv} below where we prove a more precise result which happens to be very useful to get universality result in concrete families. 
  
  Finally, observe that the fact that $S$ and $T$ are polynomial or linear maps implies that they are computable in $\DLOG$ which is coherent with the reductions presented in Lemma \ref{lemma:predsim} and Lemma \ref{lemma:predcsim}.
\end{remark}


Now, we introduce an important corollary of universality regarding complexity. Roughly speaking, a universal family exhibits all the complexity in terms of dynamical behaviour and computational complexity of prediction problems. Concerning computational complexity, let us introduce the following definition.

\begin{definition}
  We say that an automata network family $\mathcal{F}$ is computationally complex if the following conditions hold:
  \begin{enumerate}
  \item $\PREDU_{\mathcal{F}}$ is $\textbf{P}$-hard.
  \item $\PREDB_{\mathcal{F}}$ is $\textbf{PSPACE}$-hard.
  \item $\PREDC_{\mathcal{F}}$ is $\textbf{PSPACE}$-hard.
  \end{enumerate}
\end{definition}

As a direct consequence of universality, we have the following result where there is no difference between strong or standard universality.

\begin{corollary}\label{cor:universality}
Let $\mathcal{F}$ a (strongly) universal automata network family then $\mathcal{F}$ computationally complex.
\end{corollary}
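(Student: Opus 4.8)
The plan is to reduce each of the three hardness conditions to the corresponding hardness for a known-complex family, and then transport that hardness along a simulation using the machinery already built. The key observation is that universality (either flavor) provides a simulation of a reference family whose prediction problems are provably hard, and Lemmas~\ref{lemma:predsim} and \ref{lemma:predcsim} guarantee that simulation preserves the hardness of $\PREDU$, $\PREDB$ and $\PREDC$ respectively. So the corollary will follow in three steps: (i) identify a concrete reference family with the right hardness for each of the three problems; (ii) invoke universality to get a polynomial-time, polynomial-space simulation of that reference family by $\mathcal{F}$; (iii) apply the appropriate simulation-to-reduction lemma to conclude that the problem is hard for $\mathcal{F}$.

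For the reference family I would use the canonical universal bounded-degree family ${\mathcal{B}_{\{0,1\},3}}$, which is asserted in the earlier remark to be universal, together with the fact that for such a universal family the three prediction problems attain maximal complexity. Concretely, one can encode the computation of an arbitrary Turing machine into an automata network of bounded degree whose node-state evolution reflects the machine's tape: a deterministic space-bounded computation gives $\textbf{PSPACE}$-hardness of $\PREDB$ and $\PREDC$ (the observation time in binary, or the unbounded existential over multiples of a fixed gap, let the network run through exponentially many steps), while the classical $\textbf{P}$-hardness of the circuit value / iterated-map problem gives $\textbf{P}$-hardness of $\PREDU$ (here the time $t$ is unary, so only polynomially many steps are observed, matching the $\textbf{P}$ upper bound). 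First I would state these hardness facts for the reference family as known (citing the standard constructions), since the point of the corollary is the transfer, not the base case.

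The transfer itself is then immediate. Since $\mathcal{F}$ is (strongly) universal, there are polynomial functions $T,S$ with ${\mathcal{R}^* \preccurlyeq^T_S \mathcal{F}^*}$ for the chosen reference family $\mathcal{R}$. By Lemma~\ref{lemma:predsim} we obtain ${\PRED_{\mathcal{R}^*}\leq^T_{\textbf{L}}\PRED_{\mathcal{F}^*}}$ for both $\PREDU$ and $\PREDB$, and by Lemma~\ref{lemma:predcsim} we obtain ${\PREDC_{\mathcal{R}^*}\leq^T_{\textbf{L}}\PREDC_{\mathcal{F}^*}}$. A $\DLOG$ (hence polynomial-time) reduction composed with the hardness of the source problem yields hardness of the target problem, so all three conditions in the definition of computationally complex hold. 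Note that the remark preceding the corollary already flags that strong and standard universality coincide for this conclusion; the reason is that strong universality simulates the universal family ${\mathcal{B}_{\{0,1\},3}}$ with linear maps, and standard universality simulates any $\mathcal{U}_{Q,P}$ with polynomial maps, and either way the reference hardness family is covered with polynomial $T,S$.

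The main obstacle is really a matter of bookkeeping rather than a deep difficulty: one must be careful that the reductions in Lemmas~\ref{lemma:predsim} and~\ref{lemma:predcsim} require $T$ and $S$ to be polynomial so that only polynomially many oracle calls (one per node $v'$ in the simulating block $D_v$, over a network of size $S(n)=n^{\mathcal{O}(1)}$ at scaled time $tT(n)$) are issued and the whole reduction stays in $\DLOG$ — this is exactly guaranteed by the polynomial-map requirement in Definition~\ref{def:universal}. The only genuinely substantive input is the existence of the reference family with the stated hardness, and since the remark points forward to ``Corollary~\ref{cor:universality} for details,'' I would be careful to avoid circularity: the base hardness should come from the independent universality of ${\mathcal{B}_{\{0,1\},3}}$ established in Section~\ref{sec:gmonuniv}, not from the corollary being proved. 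With that settled, the proof is a short composition of a simulation with the two preservation lemmas.
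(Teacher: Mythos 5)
Your overall strategy is exactly the paper's: establish that some bounded-degree family $\mathcal{B}_{Q,\Delta}$ is computationally complex via a Turing-machine embedding, then transfer that hardness to $\mathcal{F}$ through the simulation granted by (strong) universality together with Lemmas~\ref{lemma:predsim} and~\ref{lemma:predcsim}. That skeleton is correct and is what the paper does.

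The one genuine weak point is your base case. You fix the reference family to be $\mathcal{B}_{\{0,1\},3}$ and assert its computational complexity from ``standard constructions'', but the direct embedding of a space-bounded Turing machine into a cellular automaton produces a network whose alphabet depends on the machine's tape alphabet and state set, whereas $\mathcal{B}_{Q,\Delta}$ has a \emph{fixed} alphabet. To obtain hardness for a fixed pair $(Q,\Delta)$ the construction must be routed through an intrinsically universal cellular automaton with those parameters, and the known examples force either $\Delta\geq 5$ with $|Q|\geq 2$ (Banks) or $\Delta\geq 3$ with $|Q|\geq 4$ (Ollinger--Richard); neither covers $(\{0,1\},3)$. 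Your proposed fallback --- deriving the hardness of $\mathcal{B}_{\{0,1\},3}$ from its own universality --- is circular, since universality only yields hardness via the very corollary being proved. The repair is immediate: Definition~\ref{def:universal} quantifies over all $Q$ and $\Delta$ (and, in the non-strong case, over all $\mathcal{U}_{Q,P}$, which contains every bounded-degree family up to a \DLOG{} change of representation), so one simply takes $Q$ and $\Delta$ large enough for one of the intrinsically universal automata to fit, which is exactly how the paper phrases its base case (``$\mathcal{B}_{Q,\Delta}$ is computationally complex for large enough $Q$ and $\Delta$''). With that substitution your argument coincides with the paper's proof.
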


\begin{proof}
  We observe that ${\mathcal{B}_{Q,\Delta}}$ is computationally complex for large enough $Q$ and $\Delta$, which concludes the proof by definition of (strong) universality and Lemmas~\ref{lemma:predsim} and~\ref{lemma:predcsim}. First, any Turing machine working in bounded space can be directly embedded into a cellular automaton on a periodic configuration which is a particular case of automata network on a bounded degree communication graph (for the {\PREDC} variant we can always add a witness node that changes only when the Turing machine accepts for instance). This direct embedding is such that one step of the automata network correspond to one step of the Turing machine and one node of the network corresponds to one cell of the Turing tape. However, the alphabet of the automata network depends on the tape alphabet and the state set of the Turing machine. To obtain the desired result we need to fix the target alphabet, while allowing more time and/or more space. Such simulations of any Turing machine by fixed alphabet cellular automata with linear space/time distortion are known since a long time \cite{lindgren90}, but a modern formulation would be as follows: if there exists an intrinsically universal cellular automaton \cite{bulk2} with states set $Q$ and neighborhood size $\Delta$ (whatever the dimension), then ${\mathcal{B}_{Q,\Delta}}$ is computationally complex. The 2D cellular automaton of Banks \cite{banks} is intrinsically universal \cite{surveyOllinger} and has two states and $5$ neighbors, which shows that $\mathcal{B}_{Q,\Delta}$ is computationally complex when $\Delta\geq 5$ and $Q$ is not a singleton. The 1D instrinsically universal cellular automaton of Ollinger-Richard \cite{OllingerRichard4states} has $4$ states and $3$ neighbors so $\mathcal{B}_{Q,\Delta}$ is computationally complex when $\Delta\geq 3$ and $|Q|\geq 4$.
\end{proof}

In addition, reader interested in simulation results which does not involve cellular automata but automata networks having a more general graph structure, we provide following references in which authors have presented alternative simulation schemes by using very well-known boolean network families as threshold networks:
\begin{enumerate}
	\item The family $\Theta$ of threshold networks over the binary alphabet $Q=\{0,1\}$ satisfies that $\PREDU_{\Theta}$ is $\textbf{P}$-hard \cite{goles2014computational}.	
	\item The family $\Theta$ of threshold networks over an alphabet $Q=\{0,1\}$ (equipped with block sequential update scheme) satisfies that $\PREDC_{\Theta}$ is $\textbf{PSPACE}$-hard \cite{goles2016pspace}
\end{enumerate}

We now turn to the dynamical consequences of universality. By definition simulations are particular embeddings of orbit graphs into larger ones, but the parameters of the simulation can involve some distortion.

\begin{definition}
  Fix a map ${\rho:\N\to\N}$, we say that the orbit graph $G_F$ of $F$ with $n$ nodes is $\rho$-succinct if $F$ can be represented by circuits of size at most $\rho(n)$. We say that the orbit graph $G_H$ of $H$ with $m$ nodes embeds $G_F$ with distortion $\delta:\N\to\N$ if ${m\leq\delta(n)}$ and there is $T\leq\delta(n)$ such that $G_F$ is a subgraph of $G_{H^T}$.
\end{definition}

\begin{remark}
  The embedding of orbit graphs with distortion obviously modify the relation between the number of nodes of the automata netwroks and the length of paths or cycles in the orbit graph. In particular, with polynomial distortion $\delta$, if $F$ has $n$ nodes and a cyclic orbit of length ${2^n}$ (hence exponential in the number of nodes) then in $H$ it gives a cyclic orbit of size ${O(\delta(n)2^n)}$ for up to $\delta(n)$ nodes, which does not guarantee an exponential length in the number of nodes in general, but just a super-polynomial one (${n\mapsto 2^{n^\alpha}}$ for some $0<\alpha< 1$).
\end{remark}

To fix ideas, we give examples of orbit graphs of bounded degree automata networks with large components corresponding to periodic orbits or transient.

\begin{proposition}\label{prop:bounded-degree-dynamics}
  There is an alphabet $Q$ such that for any ${n\geq 1}$ there is an automata network ${F_n\in\mathcal{B}_{Q,2}}$ whose orbit graph $G_{F_n}$ has the following properties:
  \begin{itemize}
  \item it contains a cycle $C$ of length at least ${2^n}$;
  \item there is a complete binary tree $T$ with $2^n$ leaves connected to some $v_1\in C$, \textit{i.e.} for all ${v\in T}$ there is a path from $v$ to $v_1$;
  \item there is a node ${v_2\in C}$ with a directed path of length $2^{n}$ pointing towards $v_2$;
  \item it possesses at least ${2^n}$ fixed points.
  \end{itemize}
\end{proposition}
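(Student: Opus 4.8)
The plan is to build $F_n$ as a disjoint union of communication graphs, hence as a direct product $F_n=F^{(1)}\times\cdots\times F^{(4)}$ of four independent bounded-degree gadgets, one tailored to each required feature, and to recover each feature inside $G_{F_n}$ by \emph{slicing}: since the components evolve independently, freezing all but one of them at a fixed point exhibits an induced sub-digraph of $G_{F_n}$ isomorphic to the orbit graph of the remaining component. More generally, any tuple of sub-structures of the individual orbit graphs that are \emph{synchronized in depth} assembles into a sub-structure of $G_{F_n}$, and this is what lets the tree and the path attach to the \emph{same} long cycle. The degree bound will be respected because each gadget is a one-dimensional rule of arity at most $2$ (each cell reading itself and at most one neighbour), $Q$ being the fixed finite union of the gadgets' alphabets; every gadget uses only polynomially many nodes in $n$, which is what makes the $2^n$-features non-trivial.

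Three of the four gadgets are immediate. For the $2^n$ fixed points I take $n$ isolated binary nodes carrying the identity, giving $2^n$ fixed configurations. For the long cycle I take a disjoint union of cycle graphs whose lengths $\ell_1,\dots,\ell_k$ are pairwise coprime with $\prod_j\ell_j\ge 2^n$ and $\sum_j\ell_j=\mathrm{poly}(n)$ (e.g. the first primes), each carrying the rotation $F(x)_i=x_{i-1}$; the configuration with a single $1$ per cycle has period $\mathrm{lcm}(\ell_j)=\prod_j\ell_j\ge 2^n$, and the all-$0$ configuration is a fixed point. For the complete binary tree I take a line of $n+1$ binary cells with the shift $F(x)_i=x_{i+1}$ for $i<n+1$ and $F(x)_{n+1}=0$: a direct computation shows that a configuration reaches $\mathbf 0$ in exactly $d$ steps iff its rightmost $1$ sits at position $d$, that there are $2^{d-1}$ such configurations, and that every configuration with last coordinate $0$ has in-degree $2$; hence the configurations whose rightmost $1$ lies at position $1+j$ (for $0\le j\le n$) form the levels of a complete binary tree of depth $n$ with $2^n$ leaves, rooted at $(1,0,\dots,0)$ and flowing into the fixed point $\mathbf 0$.

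Each feature is then obtained by the slicing principle. The long cycle $C$ of the product is the rotation-cycle with every other component held at a fixed point. To hang the tree on a node $v_1\in C$ I pair the backward path of length $n$ around the rotation-cycle with the depth-$n$ tree of the shift gadget, the remaining components frozen at fixed points: the two backward structures have matching depth, the product map sends each node to its parent, and the branching of the shift gadget produces the $2^n$ leaves, all flowing into $v_1$. The path of length $2^n$ into some $v_2\in C$ is obtained the same way, pairing a backward walk around the rotation-cycle with a \emph{transient of length $2^n$ ending at a fixed point} supplied by the fourth gadget; the product configurations stay pairwise distinct because they differ in the fourth coordinate, so this is a genuine simple path into $C$. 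Finally the $2^n$ fixed points come from the identity gadget, every other gadget owning at least one fixed point so that the product has at least $2^n$ of them.

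The only substantial ingredient is therefore this fourth gadget, and I expect it to be the main obstacle, precisely where the degree bound bites. A purely one-directional arity-$2$ rule on a line cannot work: its endmost cell is autonomous over the fixed alphabet $Q$, hence settles in $O(1)$ steps, after which the whole line settles in $O(n)$ steps, forbidding transients longer than polynomial. The remedy is to close the line into a ring so that information can circulate all the way around. I will realize a binary counter on a ring of $\mathrm{poly}(n)$ cells through a carry signal that circulates, together with a control cell that detects overflow (the carry returning to it after a full revolution with no bit absorbing it) and then emits a freeze signal which circulates once more and immobilizes every cell into a single fixed configuration; each cell reads only itself and its left neighbour on the ring, so the arity stays $2$, and the counting phase runs through all $2^n$ values before freezing, giving a transient of $\Theta(n\,2^n)\ge 2^n$ pairwise distinct configurations. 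Carrying out this ring-counter construction in full and verifying both the exponential transient bound and the arity-$2$ constraint is the technical heart of the argument; the remainder is the routine bookkeeping of the slicing principle and of checking that each gadget owns a fixed point.
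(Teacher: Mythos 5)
Your product-and-slicing framework is sound and genuinely different from the paper's construction: the paper realizes all four features on a \emph{single} line of $n$ nodes by enlarging the alphabet (an odometer on states $\{0,1,2\}$ giving a cycle of length $3\cdot 2^{n-1}$, extra states $\{a,b\}$ whose collapse to $0$ branches into the binary tree, a second odometer copy on states $c_0,c_1,c_2$ that self-destructs to produce the long transient, and two idle states for the fixed points), whereas you take a disjoint union of specialized gadgets and recover each feature by synchronizing depths across the factors. The slicing arguments themselves check out: the depth-matched pairing of the shift-gadget tree with a backward arc of the rotation cycle, and the distinctness of the path configurations via the fourth coordinate, are both correct. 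One caveat: your rotation gadget needs $\Theta(n^2/\log n)$ nodes to make the product of the prime cycle lengths reach $2^n$, so your $F_n$ has polynomially many nodes rather than $n$. This meets the letter of the proposition (which fixes no node count), but it weakens the downstream use in Theorem~\ref{them:univ-rich-dynamics} --- periods and transients \emph{exponential in the number of nodes} --- to merely super-polynomial; the paper's single-line construction keeps the node count at exactly $n$.

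The genuine gap is that the one non-routine ingredient of your plan --- a bounded-degree, fixed-alphabet gadget with a transient of length at least $2^n$ --- is never constructed; you explicitly defer the ring counter, and that deferral is exactly where the content of the proposition lives (your own observation that a one-directional line settles in $O(n)$ steps shows the other three gadgets cannot supply it). The sketch also glosses over a real difficulty of one-directional information flow on the ring: the controller cannot directly detect that a carry was absorbed downstream, so the token must complete a full revolution in a ``spent'' state, the controller must distinguish a spent token from an active one returning after overflow, and the freeze wave must leave the ring in precisely the fixed configuration used to define $C$. All of this can be made to work with a fixed alphabet and arity $2$, but none of it is written, and it is the technical heart by your own admission. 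Note that the paper sidesteps the counter entirely: it reuses the odometer, whose first node already visits its top state only once per period $3\cdot 2^{n-1}$, and adds a rule by which node $1$, upon reaching $c_2$, switches to $0$ and lets $0$ propagate down the line; the exponential transient is thus inherited from the cycle gadget for free. To complete your proof you should either carry out the ring counter in full or replace your fourth gadget by this collapse-on-top-state mechanism.
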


\begin{proof}
  First on a component of states ${\{0,1,2\}\subseteq Q}$ the large cycle $C$ is obtained by the following 'odometer' behavior of $F_n$: if ${x_n\in\{0,1,2\}}$ then ${F_n(x)_n = x_n+1 \bmod 3}$, and if both ${x_i,x_{i+1}\in\{0,1,2\}}$ for ${1\leq i<n}$ then 
  \[F_n(x)_i =
    \begin{cases}
      0 &\text{ if }x_i=2\\
      x_i+1 &\text{ else if }x_{i+1}=2,\\
      x_i &\text{ else.}
    \end{cases}
  \]
  $C$ is realized on ${\{0,1,2\}^n}$ as follows. For $x\in\{0,1,2\}^n$ denote by $S_i$ the sequence ${(F^t(x)_i)_{t\geq 0}}$ for any ${1\leq i\leq n}$. Clearly $S_n$ is periodic of period $012$. $S_{n-1}$ is ultimately periodic of period ${200111}$ (of length $6$) and by a straighforward induction we get that $S_1$ is ultimately periodic of period ${20^{3\cdot 2^{n-2}-1}1^{3\cdot 2^{n-2}}}$ which is of length ${3\cdot 2^{n-1}}$.


  For the tree $T$, just add states ${\{a,b\}\subseteq Q}$ with the following behavior: if $x_1\in\{a,b\}$ then ${F_n(x)_1=0}$ and if ${x_{i}\in\{a,b\}}$ and ${x_{i-1}=0}$ then ${F_n(x)_i=0}$ for ${1<i\leq n}$. In any other case, we set ${F(x)_i=x_i}$ for ${x\in\{0,1,2,a,b\}^n}$ and ${1\leq i\leq n}$.

  Using similar mechanisms as above on additional states ${c_0,c_1,c_2\in Q}$, $F_n$ runs another odometer whose behavior is isomorphic to the behavior of $F_n$ on ${\{0,1,2\}^n}$ through ${i\mapsto c_i}$, but with the following exception: when ${x_1=c_2}$ we set ${F_n(x)_1=0}$ and then state $0$ propagates from node $1$ to node $n$ as in the construction of tree $T$. We thus get a transient behavior of length more than ${3\cdot 2^{n-1}}$ which yields to configuration ${0^n}$, which itself (belongs or) yields to cycle $C$.

  Finally, the fixed points are obtained by adding two more states to the alphabet on which the automata network just acts like the identity map.
\end{proof}

We can now state that any universal family must be dynamically rich in a precise sense.

\begin{theorem}\label{them:univ-rich-dynamics}
  Let $\mathcal{F}$ be an automata networks family.
  \begin{itemize}
  \item If $\mathcal{F}$ is universal then, for any polynomial map
    $\rho$, there is a polynomial distortion $\delta$ such that, any
    $\rho$-succinct orbit graph can be embedded into some
    ${F\in\mathcal{F}}$ with distortion $\delta$. In particular
    $\mathcal{F}$ contains networks with super-polynomial periods and
    transients, and a super-polynomial number of disjoint periodic orbits of period at most polynomial.
  \item If $\mathcal{F}$ is strongly universal then it embeds the orbit graph of any bounded-degree automata network with linear distortion. In particular it contains networks with exponential periods and transients, and an exponential number of disjoint periodic orbits of period at most linear.
\end{itemize}
\end{theorem}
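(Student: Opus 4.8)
The plan is to derive both bullets from a single mechanism: an intrinsic simulation between families upgrades, via Lemma~\ref{lem:simu-dynamic}, to an embedding of orbit graphs whose distortion is controlled by the time and space functions of the simulation. The two bullets differ only in whether those functions are polynomial (universal) or linear (strongly universal), so the skeleton of the argument is shared and I would write it once for the universal case and then transcribe it.

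First I would establish the embedding statement of the universal case. Fix a polynomial $\rho$ and let $G_F$ be a $\rho$-succinct orbit graph, so $F\colon Q^n\to Q^n$ admits a circuit representation of size at most $\rho(n)$; by definition this means $F\in\mathcal{U}_{Q,\rho}$. Applying Definition~\ref{def:universal} with the alphabet $Q$ and the polynomial $P=\rho$, the family $\mathcal{F}$ simulates $(\mathcal{U}_{Q,\rho},\mathcal{U}_{Q,\rho}^*)$ in polynomial time $T$ and polynomial space $S$. Feeding a circuit representation of $F$ to the associated \DLOG{} machine produces some $H\in\mathcal{F}$ with $m=S(n)$ nodes that simulates $F$ in time $T(n)$ under a block embedding $\phi$. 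Lemma~\ref{lem:simu-dynamic} then gives that $G_F$ is a subgraph of $G_{H^{T(n)}}$. Setting $\delta=\max(S,T)$, which is again polynomial, we have $m\leq\delta(n)$ and $T(n)\leq\delta(n)$, so $G_H$ embeds $G_F$ with distortion $\delta$, as required.

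For the ``in particular'' clause I would instantiate this on the networks $F_n$ of Proposition~\ref{prop:bounded-degree-dynamics}. Each $F_n$ lies in $\mathcal{B}_{Q,2}$ over a fixed alphabet, hence its global map is computed by a circuit of size $O(n)$ and $G_{F_n}$ is $\rho$-succinct for a fixed polynomial $\rho$; the previous paragraph then embeds $G_{F_n}$ into some $H_n\in\mathcal{F}$ with $m\leq\delta(n)$ nodes and time constant $T=T(n)$, both polynomial in $n$. The cycle of length at least $2^n$ in $G_{F_n}$ becomes, through the injective block embedding $\phi$, a set of at least $2^n$ distinct configurations lying on a single $H_n$-orbit, whose period is therefore at least $2^n$; since $m$ is polynomial in $n$ this period is super-polynomial in the number of nodes. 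The long transient of $G_{F_n}$ yields, via the transient bound of Lemma~\ref{lem:simu-dynamic}, an $H_n$-orbit with transient at least $2^n$, again super-polynomial. Finally, each of the $2^n$ fixed points $x$ of $F_n$ satisfies $\phi(x)=H_n^{T}(\phi(x))$, so the $2^n$ distinct images $\phi(x)$ are fixed points of $H_n^{T}$, i.e. lie on $H_n$-orbits of period dividing $T$; as a single orbit of period $p\leq T$ carries at most $T$ such configurations, they spread over at least $2^n/T$ disjoint periodic orbits, each of period at most $T$, which is polynomial. This count is still super-polynomial in the number of nodes.

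The second bullet follows verbatim, replacing the use of $(\mathcal{U}_{Q,\rho},\mathcal{U}_{Q,\rho}^*)$ by $(\mathcal{B}_{Q,\Delta},\mathcal{B}_{Q,\Delta}^*)$ (with $\Delta=2$ for the networks of Proposition~\ref{prop:bounded-degree-dynamics}) and invoking the strong-universality clause of Definition~\ref{def:universal}, which provides linear $T$ and $S$. The distortion $\delta=\max(S,T)$ is then linear, so $m=O(n)$; consequently the period $\geq 2^n$ and transient $\geq 2^n$ are exponential in the number of nodes, and the $\geq 2^n/T$ disjoint orbits of period $\leq T=O(n)$ are exponentially many with linear period. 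I expect the only delicate point to be the bookkeeping that turns ``exponential/super-polynomial in $n$'' into the same statement in the number $m$ of nodes of the simulator: this is where one must use that $\delta$ is polynomial (resp. linear), so that $n$ is polynomially (resp. linearly) bounded below in terms of $m$, and that dividing the exponential orbit count by the polynomial (resp. linear) factor $T$ preserves super-polynomiality (resp. exponentiality).
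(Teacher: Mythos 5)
Your proposal is correct and follows essentially the same route as the paper, whose proof simply cites Lemma~\ref{lem:simu-dynamic}, Definition~\ref{def:simu-family} and Proposition~\ref{prop:bounded-degree-dynamics}; you have just written out the details (including the careful counting of disjoint periodic orbits via the $2^n$ fixed points and the $n$-versus-$m$ bookkeeping) that the paper leaves implicit.
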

\begin{proof}
  This is a direct consequence of Lemma~\ref{lem:simu-dynamic}, Definition~\ref{def:simu-family} and Proposition~\ref{prop:bounded-degree-dynamics} above.
\end{proof}



Of course, we do not claim that computational complexity and dynamical richness as stated above are the only meaningful consequences of universality.
To conclude this subsection about universality, let us show that it allows to prove finer results linking the global dynamics with the interaction graph.

In a directed graph, we say a node $v$ belongs to a strongly connected component if there is a directed path from $v$ to $v$.

\begin{corollary}
  Any universal family $\mathcal{F}$ satisfies the following: there is a constant $\alpha$ with ${0<\alpha\leq 1}$ such that for any $m>0$ there is a network $F\in\mathcal{F}$ with $n\geq m$ nodes such that some node $v$ belonging to a strongly connected component of the interaction graph of $F$ and a periodic configuration $x$ such that the trace at $v$ of the orbit of $x$ is of period at least ${2^{n^\alpha}}$.
  \label{coro:trickyuniversal}
\end{corollary}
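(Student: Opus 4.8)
The plan is to exhibit, for every size, a single concrete \emph{Boolean} reference network whose dynamics forces an exponentially long period at one designated node, to transport it into $\mathcal{F}$ by universality, and then to recover the strongly-connected-component condition by a purely combinatorial argument on the interaction graph of the simulator. The reference networks I would use are the binary counters: let $H_\mu$ be the Boolean automata network on $\mu$ nodes realizing $x\mapsto x+1\bmod 2^\mu$, whose most significant bit $w$ follows, along any orbit, the trace $0^{2^{\mu-1}}1^{2^{\mu-1}}$ of minimal period exactly $2^\mu$. Each $H_\mu$ admits a circuit representation of size polynomial in $\mu$, so, fixing one suitable polynomial $P$ once and for all, we have $H_\mu\in\mathcal{U}_{\{0,1\},P}$ for every $\mu$.

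First I would invoke universality (Definition~\ref{def:universal}): since $\mathcal{F}$ simulates $(\mathcal{U}_{\{0,1\},P},\mathcal{U}_{\{0,1\},P}^*)$ in polynomial time $T$ and space $S$, for each $\mu$ there is $F\in\mathcal{F}$ with $n=S(\mu)\geq\mu$ nodes, a time constant $T=T(\mu)$ and a block embedding $\phi$ with $\phi\circ H_\mu=F^T\circ\phi$. Because $H_\mu$ is Boolean, the local coding phenomenon of Remark~\ref{rem:bloc-simul} applies to the block $D_w$ coding the top bit: there is a node $v^\ast\in D_w$ for which $q\mapsto p_{w,q}(v^\ast)$ is injective. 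Fixing any configuration $y$ and setting $z=\phi(y)$ (whose $F$-orbit is periodic, as $H_\mu$ is a permutation), the identity $F^{Tt}(z)_{v^\ast}=p_{w,(H_\mu^t(y))_w}(v^\ast)$ together with injectivity shows that the subsampled trace $t\mapsto F^{Tt}(z)_{v^\ast}$ has minimal period $2^\mu$, which therefore divides the minimal period $P'$ of the full trace of $v^\ast$ on the orbit of $z$; in particular $2^\mu\mid P'$.

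The decisive step is to turn ``some node carries a period divisible by the prime power $2^\mu$'' into membership in a strongly connected component. Here I would use the elementary fact that, along the periodic orbit of $z$, the period $P_a$ of any node $a$ divides $\lcm_{u\in N^-(a)}P_u$, the least common multiple of the trace periods of its in-neighbours in the interaction graph of $F$ (since $F(x)_a$ depends only on those in-neighbours). As $2^\mu$ is a prime power, $2^\mu\mid P_a$ forces $2^\mu\mid P_u$ for at least one in-neighbour $u$ (and $a$ has at least one in-neighbour, otherwise $P_a=1$). Starting from $a_0=v^\ast$ and iterating, I build an infinite backward walk $a_0\leftarrow a_1\leftarrow a_2\leftarrow\cdots$ all of whose nodes have trace period divisible by $2^\mu$; since the vertex set is finite this walk revisits a node, producing a directed closed walk in the interaction graph. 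Any node $v$ on this walk admits a directed path from $v$ to $v$, hence lies in a strongly connected component, while its trace period is a multiple of $2^\mu$, so at least $2^\mu$.

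It remains to read off the constant. Taking $\mu\geq m$ yields a network with $n=S(\mu)\geq m$ nodes; since $S$ is a fixed polynomial, say $S(\mu)\leq\mu^{c}$ for large $\mu$, the choice $\alpha=1/c$ gives $n^{\alpha}\leq\mu$, whence a node in a strongly connected component with trace period $\geq 2^{\mu}\geq 2^{n^{\alpha}}$, as required. The step I expect to need the most care is precisely this interplay of the two ingredients: neither a large global period nor even a single node of long trace period suffices on its own, because a long period can be inherited through a chain of copies by a node sitting in a trivial component, and a large global period may be a Chinese-remainder product of many short node-periods spread across the network; it is the combination of the injective Boolean block-coding, which \emph{localises} an exponential period at one node, with the prime-power backward-walk, which \emph{localises} feedback, that produces a single node that is simultaneously of exponential period and strongly connected.
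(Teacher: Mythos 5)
Your proof is correct, and its skeleton is the one the paper uses: a binary counter as the reference Boolean network, universality to transport it into $\mathcal{F}$, and the injectivity of the block embedding (Remark~\ref{rem:bloc-simul}) to localise an exponential period at a single coding node $v^\ast$ of the block encoding the most significant bit. Where you genuinely diverge is in the step that places a long-period node inside a strongly connected component. The paper exploits the fact that the counter's top bit depends on itself in the simulated network, and argues that this self-dependency lifts to the simulator as paths between \emph{coding nodes of the same block} $D_1$; iterating among these coding nodes yields a cycle, so some coding node of $D_1$ is both long-period and strongly connected. You instead forget the dependency structure of $H_\mu$ entirely and run a purely dynamical backward walk: the lemma that on a periodic orbit $P_a$ divides $\lcm_{u\in N^-(a)}P_u$, combined with the fact that $2^\mu$ is a prime power, propagates divisibility by $2^\mu$ to some in-neighbour at every step, and finiteness closes the walk into a cycle. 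Your route buys something: it sidesteps the paper's implicit claim that the simulated self-dependency must be realised by interaction-graph paths that stay within the coding nodes of $D_1$ (a point the paper treats rather quickly), and it shows more generally that \emph{any} node whose trace period is divisible by a prime power $p^k$ forces a strongly connected node with trace period at least $p^k$ somewhere upstream. The price is the extra $\lcm$ lemma, which is elementary, and the mild loss that the strongly connected witness need no longer be a coding node of the counter's top bit --- which the statement does not require. Both arguments reach the same constant $\alpha$ from the polynomial space bound $S$ in exactly the same way.
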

\begin{proof}
  Consider a Boolean network $F$ with nodes ${V=\{1,\ldots,m\}}$ that do the following on configuration $x\in \{0,1\}^V$: it interprets ${x_1,\ldots,x_m}$ as an number $k$ written in base $2$ where $x_1$ is the most significant bit and produces ${F(x)}$ which represents number ${k+1 \bmod 2^{m}}$. 

  $F$ is such that node $1$ has a trace of exponential period and belongs to a strongly connected component of the interaction graph of $F$ (because it depends on itself).
  Note that $F$ has a circuit representation which is polynomial in $m$, and take $F'\in\mathcal{F}$ of size polynomial in $m$ that simulates $F$ in polynomial time (by universality of family $\mathcal{F}$).
  Taking the notations of Definition~\ref{def:bloc-simu}, we have that each node $v\in D_i$ for each block $D_i$ is such that the map ${q\in\{0,1\}\mapsto p_{i,q}(v)}$ is either constant or bijective (because $F$ has a Boolean alphabet, see Remark~\ref{rem:bloc-simul}).
  In the last case, the value of the node $v\in D_i$ completely codes the value of the corresponding node $i$ in $F$.
  Take any $v\in B_1$ that has this coding property.
  Since node $1$ depends on itself in $F$, there must be a path from $v$ to some node $v'\in D_1$ that is also coding in the interaction graph of $F'$. 
  Then we can also find a path from $v'$ to some coding node in $D_1$.
  Iterating this reasoning we must find a cycle, and in particular we have a coding node in $D_1$ which belongs to some strongly connected component of the interaction graph of $F'$.
  Since this node is coding the values taken by node $1$ of $F$ and since the simulation is in polynomial time and space, we deduce the super-polynomial lower bound on the period of its trace for a well-chosen periodic configuration.   
\end{proof}


\section{Gadgets and glueing}
\label{sec:gadgetsandglueing}

In the same way as Boolean circuits are defined from Boolean gates, many automata network families can be defined by fixing a finite set of local maps $\G$ that we can freely connect together to form a global network, called a $\G$-network.

Such families can be strongly universal as we will see, even for very simple choices of $\G$, which is an obvious motivation to consider them.
In this section, we introduce a general framework to prove simulation results of a $\G$-network family by some arbitrary family that amounts to a finite set of conditions to check.
From this we will derive a framework to certify strong universality of an arbitrary family just by exhibiting a finite set of networks from the family that verify a finite set of conditions.
As already said above, our goal is to analyze automata networks with symmetric communication graph (CSAN families). Our framework is targeted towards such families.

The idea behind is that of building large automata networks from small automata networks in order to mimic the way a $\G$-network is built from local maps in $\G$.
The difficulty, and the main contribution of this section, is to formalize how small building blocks are glued together and what conditions on them guaranty that the large network correctly simulates the corresponding $\G$-networks.

We will now introduce all the concepts used in this framework progressively.

\subsection{$\G$-networks}

Let $Q$ be a fixed alphabet and $\G$ be any set of maps of type ${g:Q^{i(g)}\rightarrow Q^{o(g)}}$ for some ${i(g),o(g)\in\N}$. We say $g$ is \emph{reducible} if it can be written as a disjoint union of two gates, and \emph{irreducible} otherwise. Said differently, if $G$ is the (bipartite) dependency graph of $g$ describing on which inputs effectively depends each output, then $g$ is irreducible if $G$ is weakly connected.

From $\G$ we can define a natural family of networks:  a $\G$-network is an automata network obtained by wiring outputs to inputs of a number of gates from $\G$. To simplify some later results, we add the technical condition that no output of a gate can be wired to one of its inputs (no self-loop condition).

\begin{definition}\label{def:g-network}
  A $\G$-network is an automata network ${F:Q^V\rightarrow Q^V}$ with set of nodes $V$ associated to a collection of gates ${g_1,\ldots, g_n\in\G}$ with the following properties. Let 
  \begin{align*}
    I &=\{(j,k) : 1\leq j\leq n\text{ and }1\leq k\leq i(g_j)\} \text{ and}\\
    O &=\{(j,k) : 1\leq j\leq n\text{ and }1\leq k\leq o(g_j)\}
    \end{align*}
    be respectively the sets of inputs and outputs of the collection of gates ${(g_j)_{1\leq j\leq n}}$.
    We require ${|V|=|I|=|O|}$ and the existence of two bijective maps ${\alpha:I\rightarrow V}$ and ${\beta: V\rightarrow O}$ with the condition that there is no ${(j,k)\in I}$ such that $\beta(\alpha(j,k))=(j,k')$ for some $k'$ (no self-loop condition). For ${v\in V}$ with $\beta(v) = (j,k)$, let ${I_v = \{\alpha(j,1),\ldots,\alpha(j,i(g_j))\}}$ and denote by $g_v$ the map: ${x\in Q^{I_v}\mapsto g_j(\tilde x)_k}$ where ${\tilde x\in Q^{i(g_j)}}$ is defined by ${\tilde{x}_k = x_{\alpha(j,k)}}$.
    Then $F$ is defined as follows: 
  \[F(x)_v = g_v(x_{I_v}).\]
\end{definition}

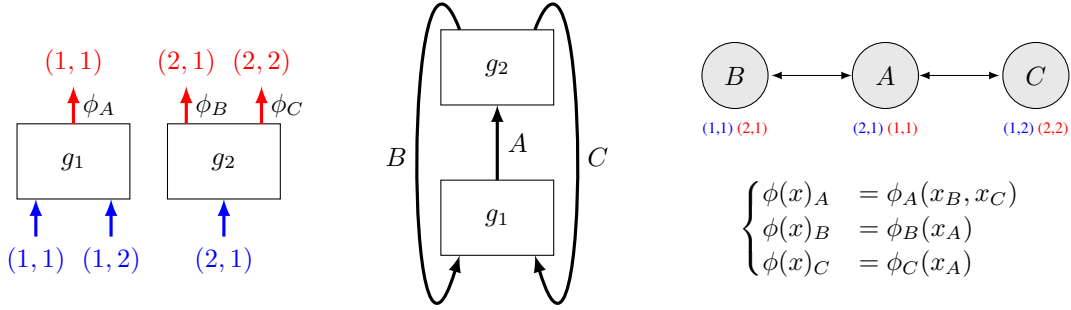
\begin{figure}
  \centering
  \begin{minipage}{.3\linewidth}
    \begin{tikzpicture}[scale=.5]
      \draw (-1.5,1)--(1.5,1)--(1.5,-1)--(-1.5,-1)--cycle;
      \draw (0,0) node {$g_1$};
      \draw[->,red,very thick] (0,1)-- (0,2) node[above] {$(1,1)$};
      \draw[->,blue,very thick] (-1,-2) node[below] {$(1,1)$} -- (-1,-1);
      \draw[->,blue,very thick] (1,-2) node[below] {$(1,2)$} -- (1,-1);
      \draw (0,1.5) node[right] {$\phi_A$};
      \begin{scope}[shift={(4,0)}]
        \draw (-1.5,1)--(1.5,1)--(1.5,-1)--(-1.5,-1)--cycle;
        \draw (0,0) node {$g_2$};
        \draw[->,blue,very thick] (0,-2)  node[below] {$(2,1)$} -- (0,-1);
        \draw[->,red,very thick] (-1,1)  -- (-1,2) node[above] {$(2,1)$};
        \draw[->,red,very thick] (1,1)  -- (1,2) node[above] {$(2,2)$};
        \draw (-1,1.5) node[right] {$\phi_B$};
        \draw (1,1.5) node[right] {$\phi_C$};
      \end{scope}
    \end{tikzpicture}
  \end{minipage}
  \begin{minipage}{.25\linewidth}
    \begin{tikzpicture}[scale=.5]
      \draw (-1.5,1)--(1.5,1)--(1.5,-1)--(-1.5,-1)--cycle;
      \draw (0,0) node {$g_1$};
      \begin{scope}[shift={(0,4)}]
        \draw (-1.5,1)--(1.5,1)--(1.5,-1)--(-1.5,-1)--cycle;
        \draw (0,0) node {$g_2$};
      \end{scope}
      \draw[->,very thick] (0,1)-- node[midway,right] {$A$} (0,3);
      \draw[->,very thick] (-1,5) .. controls (-2.5,9) and (-2.5,-6) .. node[midway,left] {$B$} (-1,-1);
      \draw[->,very thick] (1,5) .. controls (2.5,9) and (2.5,-6) .. node[midway,right] {$C$} (1,-1);
    \end{tikzpicture}
  \end{minipage}
  \begin{minipage}{.3\linewidth}
    \begin{tikzpicture}[shorten >=1pt,node distance=2cm,on grid,auto]
      \tikzstyle{every state}=[fill={rgb:black,1;white,10}]
      
      \node[state]   (q_1)                    {$A$};
      \node[state] (q_2)  [left of=q_1]    {$B$};
      \node[state]           (q_3)  [right of=q_1]    {$C$};
      
      \path[<->]
      (q_1) edge (q_2);
      \path[<->]
      (q_1) edge (q_3);
      \draw (q_1)+(0,-.5) node[below] {\tiny\color{blue}(2,1) \color{red}(1,1)};
      \draw (q_2)+(0,-.5) node[below] {\tiny\color{blue}(1,1) \color{red}(2,1)};
      \draw (q_3)+(0,-.5) node[below] {\tiny\color{blue}(1,2) \color{red}(2,2)};
    \end{tikzpicture}\\
    \[
      \begin{cases}
        \phi(x)_A &= \phi_A(x_B,x_C)\\
        \phi(x)_B &= \phi_B(x_A)\\
        \phi(x)_C &= \phi_C(x_A)
      \end{cases}
    \]
  \end{minipage}
  \caption{On the left a set of maps $\G$ over alphabet $Q$, in the middle an intuitive representation of input/output connections to make a $\G$-network, on the right the corresponding formal $\G$-network ${\phi : Q^3\rightarrow Q^3}$ together with the global map associated to it. The bijections $\alpha$ and $\beta$ from Definition~\ref{def:g-network} are represented in blue and red (respectively).}
  \label{fig:gnetwork}
\end{figure}

\begin{remark}\label{rem:represent-g-networks}
  Once $\G$ is fixed, there is a bound on the degree of dependency graphs of all $\G$-networks. Thus, it is convenient to represent $\G$-networks by the standard representation of bounded degree automata networks (as a pair of a graph and a list of local update maps). Another representation choice following strictly Definition~\ref{def:g-network} consists in giving a list of gates ${g_1,\ldots, g_k\in\G}$, fixing ${V=\{1,\ldots,n\}}$ and give the two bijective maps ${\alpha:I\to V}$ and ${\beta: V\to O}$ describing the connections between gates (maps are given as a simple list of pairs source/image). One can check that these two representations are \DLOG{} equivalent when the gates of $\G$ are irreducible: we can construct the interaction graph and the local maps from the list of gates and maps $\alpha$ and $\beta$ in \DLOG{} (the incoming neighborhood of a node $v$, $I_v$, and its local map $g_v$ are easy to compute as detailed in Definition~\ref{def:g-network}); reciprocally, given the interaction graph $G$ and the list of local maps $(g_v)$, one can recover in \DLOG{} the list of gates and their connections as follows:
  \begin{itemize}
  \item for $v$ from $1$ to $n$ do:
    \begin{itemize}
    \item gather the (finite) incoming neighborhood $N^-(v)$ of $v$ then the (finite) outgoing neighborhood $N^+(N^-(v))$ and iterate this process until it converges (in finite time) to a set $I_v$ of inputs and $O_v$ of outputs with ${v\in O_v}$;
    \item check that all ${v'\in I_v\cup O_v}$ are such that ${v'\geq v}$ otherwise jump to next $v$ in the loop (this guaranties that each gate is generated only once);
    \item since the considered gates are irreducible, $I_v$ and $O_v$ actually correspond to input and output sets of a gate $g\in\G$ that we can recover by finite checks from the local maps of nodes in $O_v$;
    \item output gate $g$ and the pairs source/image to describe $\alpha$ and $\beta$ for nodes in $I_v$ and $O_v$ respectively.
    \end{itemize}
  \end{itemize}
\end{remark}

In the sequel we denote $\Gamma(\mathcal{G})$ the family of all posible $\mathcal{G}$-networks associated to their bounded degree representation.

\subsection{Glueing of automata networks}
In this section we define an operation that allows us to 'glue' two different abstract automata networks on a common part in order to create another one which, roughly, preserve some dynamical properties in the sense that it allows to glue pseudo-orbits of each network to obtain a pseudo-orbit of the glued network.
One might find useful to think about the common part of the two networks as a dowel attaching two pieces of wood: each individual network is a piece of wood with the dowel inserted in it, and the result of the glueing is the attachment of the two pieces with a single dowel (see Figure~\ref{fig:glueingscheme}).

\begin{definition}\label{def:glueing}
  Consider ${F_1:Q^{V_1}\rightarrow Q^{V_1}}$ and ${F_2:Q^{V_2}\rightarrow Q^{V_2}}$  two automata networks with $V_1$ disjoint from $V_2$, $C$ a set disjoint from ${V_1\cup V_2}$, $\varphi_1: C \to V_1$ and $\varphi_2: C \to V_2$ two injective maps with ${\varphi_1(C)\cap\varphi_2(C)=\emptyset}$ and ${C_1,C_2}$ a partition of $C$ in two sets.
  We define \[V'= C\cup (V_1\setminus \varphi_1(C)) \cup (V_2\setminus \varphi_2(C))\] and the map $\alpha : V'\rightarrow V_1\cup V_2$ by 
  \[\alpha(v) =
    \begin{cases}
      v & \text{ if } v\not\in C\\
      \varphi_i(v) & \text{ if } v\in C_i, \text{ for }i=1,2.\\
    \end{cases}
  \]
  We then define the glueing of $F_1$ and $F_2$ over $C$ as the automata network $F' : Q^{V'}\rightarrow Q^{V'}$ where
  \[F'_v = \begin{cases}
      (F_1)_{\alpha(v)}\circ\rho_1 &\text{ if } \alpha(v) \in  V_1, \\
      (F_2)_{\alpha(v)}\circ\rho_2 &\text{ if } \alpha(v) \in  V_2,
    \end{cases}\]
  where ${\rho_i : Q^{V'}\rightarrow Q^{V_i}}$ is defined by 
  \[\rho_i(x)_v =
    \begin{cases}
      x_{\varphi_i^{-1}(v)} &\text{ if } v\in \varphi_i(C),\\
      x_v &\text{ else.}
    \end{cases}
  \]
\end{definition}

\begin{figure}\label{fig:glueingscheme}
	\centering
\begin{tikzpicture}[x=0.75pt,y=0.75pt,yscale=-0.75,xscale=0.75]

\draw   (244,53) .. controls (244,41.95) and (278.14,33) .. (320.25,33) .. controls (362.36,33) and (396.5,41.95) .. (396.5,53) .. controls (396.5,64.05) and (362.36,73) .. (320.25,73) .. controls (278.14,73) and (244,64.05) .. (244,53) -- cycle ;
\draw   (115.5,98) .. controls (120.5,57) and (226.5,100) .. (209.5,124) .. controls (192.5,148) and (441.5,161) .. (235,201) .. controls (28.5,241) and (143,209) .. (79.5,180) .. controls (16,151) and (110.5,139) .. (115.5,98) -- cycle ;
\draw  [dash pattern={on 4.5pt off 4.5pt}] (158,171) .. controls (158,159.95) and (192.14,151) .. (234.25,151) .. controls (276.36,151) and (310.5,159.95) .. (310.5,171) .. controls (310.5,182.05) and (276.36,191) .. (234.25,191) .. controls (192.14,191) and (158,182.05) .. (158,171) -- cycle ;
\draw   (527.49,242.78) .. controls (522.49,283.78) and (416.49,240.78) .. (433.49,216.78) .. controls (450.49,192.78) and (201.49,179.78) .. (407.99,139.78) .. controls (614.49,99.78) and (499.99,131.78) .. (563.49,160.78) .. controls (626.99,189.78) and (532.49,201.78) .. (527.49,242.78) -- cycle ;
\draw  [dash pattern={on 4.5pt off 4.5pt}] (484.99,169.78) .. controls (484.99,180.82) and (450.85,189.78) .. (408.74,189.78) .. controls (366.63,189.78) and (332.49,180.82) .. (332.49,169.78) .. controls (332.49,158.73) and (366.63,149.78) .. (408.74,149.78) .. controls (450.85,149.78) and (484.99,158.73) .. (484.99,169.78) -- cycle ;

\draw   (193.5,285) .. controls (198.5,244) and (304.5,287) .. (287.5,311) .. controls (270.5,335) and (519.5,348) .. (313,388) .. controls (106.5,428) and (221,396) .. (157.5,367) .. controls (94,338) and (188.5,326) .. (193.5,285) -- cycle ;
\draw   (432.49,430.78) .. controls (427.49,471.78) and (321.49,428.78) .. (338.49,404.78) .. controls (355.49,380.78) and (106.49,367.78) .. (312.99,327.78) .. controls (519.49,287.78) and (404.99,319.78) .. (468.49,348.78) .. controls (531.99,377.78) and (437.49,389.78) .. (432.49,430.78) -- cycle ;
\draw   (237,358) .. controls (237,346.95) and (271.14,338) .. (313.25,338) .. controls (355.36,338) and (389.5,346.95) .. (389.5,358) .. controls (389.5,369.05) and (355.36,378) .. (313.25,378) .. controls (271.14,378) and (237,369.05) .. (237,358) -- cycle ;

\draw  [dash pattern={on 4.5pt off 4.5pt}]  (305.5,78) -- (271.01,137.41) ;
\draw [shift={(269.5,140)}, rotate = 300.14] [fill={rgb, 255:red, 0; green, 0; blue, 0 }  ][line width=0.08]  [draw opacity=0] (8.93,-4.29) -- (0,0) -- (8.93,4.29) -- cycle    ;
\draw  [dash pattern={on 4.5pt off 4.5pt}]  (334.5,79) -- (367.99,136.41) ;
\draw [shift={(369.5,139)}, rotate = 239.74] [fill={rgb, 255:red, 0; green, 0; blue, 0 }  ][line width=0.08]  [draw opacity=0] (8.93,-4.29) -- (0,0) -- (8.93,4.29) -- cycle    ;
\draw    (317.5,187) -- (317.5,254) ;
\draw [shift={(317.5,257)}, rotate = 270] [fill={rgb, 255:red, 0; green, 0; blue, 0 }  ][line width=0.08]  [draw opacity=0] (8.93,-4.29) -- (0,0) -- (8.93,4.29) -- cycle    ;

\draw (312,42.4) node [anchor=north west][inner sep=0.75pt]    {$C$};
\draw (228,160.4) node [anchor=north west][inner sep=0.75pt]    {$\varphi _{1}( C)$};
\draw (383,159.4) node [anchor=north west][inner sep=0.75pt]    {$\varphi _{2}( C)$};
\draw (114,132.4) node [anchor=north west][inner sep=0.75pt]    {$V_{1}$};
\draw (518,169.4) node [anchor=north west][inner sep=0.75pt]    {$V_{2}$};
\draw (244,87.4) node [anchor=north west][inner sep=0.75pt]    {$\varphi _{1}$};
\draw (373,88.4) node [anchor=north west][inner sep=0.75pt]    {$\varphi _{2}$};
\draw (305,347.4) node [anchor=north west][inner sep=0.75pt]    {$C$};
\draw (311,284.4) node [anchor=north west][inner sep=0.75pt]    {$V'$};

\end{tikzpicture}
\caption{Scheme of a glueing}
\end{figure}
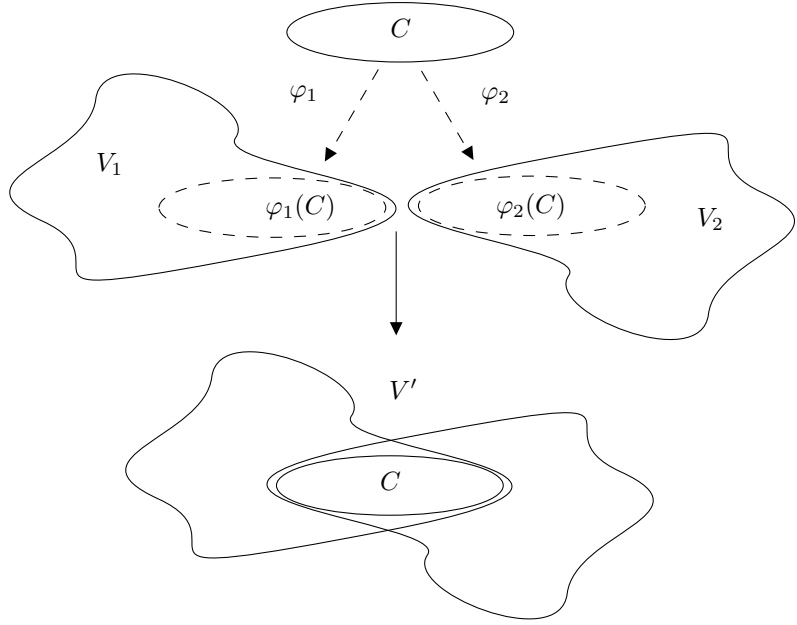

\newcommand\glueing[6]{#1\,{}_{#3}^{#5}\!\!\oplus_{#4}^{#6}\,#2}
When necessary, we will use the notation ${F' = \glueing{F_1}{F_2}{C_1}{C_2}{\phi_1}{\phi_2}}$ to underline the dependency of the glueing operation on its parameters.

Given an automata network ${F:Q^V\rightarrow Q^V}$ and a set ${X\subseteq V}$, we say that a sequence ${(x^t)_{0\leq t\leq T}}$ of configurations from $Q^V$ is a $X$-pseudo-orbit if it respects $F$ as in a normal orbit, except on $X$ where it can be arbitrary, formally: ${x^{t+1}_v = F(x^t)_v}$ for all ${v\in V\setminus X}$ and all ${0\leq t<T}$. The motivation for Definition~\ref{def:glueing} comes from the following lemma.

\begin{lemma}[Pseudo-orbits glueing]\label{lem:pseudo-orbit-glueing}
  Taking the notations of Definition~\ref{def:glueing}, let ${X\subseteq V_1\setminus\varphi_1(C)}$ and ${Y\subseteq V_2\setminus\varphi_2(C)}$ be two (possibly empty) sets. If ${(x^t)_{0\leq t\leq T}}$ is a ${X\cup\varphi_1(C_2)}$-pseudo-orbit for $F_1$ and if ${(y^t)_{0\leq t\leq T}}$ is a ${Y\cup\varphi_2(C_1)}$-pseudo-orbit for $F_2$ and if they verify for all ${0\leq t\leq T}$
  \begin{equation}
    \forall v\in C, x^t_{\varphi_1(v)}=y^t_{\varphi_2(v)},
    \label{eq:sametrace}
  \end{equation} 
  then the sequence ${(z^t)_{0\leq t\leq T}}$ of configurations of $Q^{V'}$ is a ${X\cup Y}$-pseudo-orbit of $F'$, where 
  \[z^t_v =
    \begin{cases}
      x^t_{\alpha(v)} &\text{ if } \alpha(v)\in V_1, \\
      y^t_{\alpha(v)} &\text{ if } \alpha(v)\in V_2.
    \end{cases}
  \]
\end{lemma}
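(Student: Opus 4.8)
The plan is to verify directly the defining property of an $X\cup Y$-pseudo-orbit: for every node $v\in V'\setminus(X\cup Y)$ and every $0\le t<T$, I must show $z^{t+1}_v=F'(z^t)_v$. Since $\alpha$ maps $V'$ into $V_1\cup V_2$ with $V_1,V_2$ disjoint, each such $v$ falls into exactly one of the cases $\alpha(v)\in V_1$ or $\alpha(v)\in V_2$, and these two cases are symmetric under exchanging the indices $1\leftrightarrow 2$, the sets $X\leftrightarrow Y$, and $C_1\leftrightarrow C_2$; so I would treat $\alpha(v)\in V_1$ in detail and invoke symmetry for the other. The strategy in this case is to unfold the definition $F'(z^t)_v=(F_1)_{\alpha(v)}(\rho_1(z^t))$, rewrite it as an honest application of $F_1$ to $x^t$, and then read off $z^{t+1}_v$ from the $F_1$-pseudo-orbit hypothesis.

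The key preliminary step, and the place where the compatibility condition \eqref{eq:sametrace} is used, is the claim that $\rho_1(z^t)=x^t$ and $\rho_2(z^t)=y^t$ as configurations on $V_1$ and $V_2$ respectively. First I would check this on the ``uncrossed'' nodes $u\in V_1\setminus\varphi_1(C)$, where $\rho_1(z^t)_u=z^t_u=x^t_{\alpha(u)}=x^t_u$ directly from $\alpha(u)=u$. The interesting nodes are $u\in\varphi_1(C)$, where $\rho_1(z^t)_u=z^t_{\varphi_1^{-1}(u)}$: writing $w=\varphi_1^{-1}(u)\in C$, if $w\in C_1$ then $z^t_w=x^t_{\varphi_1(w)}=x^t_u$ immediately, whereas if $w\in C_2$ then $z^t_w=y^t_{\varphi_2(w)}$, and it is exactly here that \eqref{eq:sametrace} forces $y^t_{\varphi_2(w)}=x^t_{\varphi_1(w)}=x^t_u$. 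Thus $\rho_1(z^t)=x^t$ on all of $V_1$, so $F'(z^t)_v=(F_1)_{\alpha(v)}(x^t)=F_1(x^t)_{\alpha(v)}$.

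It then remains to match $z^{t+1}_v=x^{t+1}_{\alpha(v)}$ with $F_1(x^t)_{\alpha(v)}$, which holds provided $\alpha(v)\in V_1\setminus(X\cup\varphi_1(C_2))$, i.e. provided $\alpha(v)$ avoids the exceptional set of the $F_1$-pseudo-orbit. I would verify this through the two sub-cases producing $\alpha(v)\in V_1$: if $v\in V_1\setminus\varphi_1(C)$ then $\alpha(v)=v\notin X$ (as $v\notin X\cup Y$) and $v\notin\varphi_1(C_2)$ (since $v\notin\varphi_1(C)$); if $v\in C_1$ then $\alpha(v)=\varphi_1(v)\notin X$ (because $X\subseteq V_1\setminus\varphi_1(C)$ while $\varphi_1(v)\in\varphi_1(C)$) and $\varphi_1(v)\notin\varphi_1(C_2)$ by injectivity of $\varphi_1$ together with disjointness of $C_1,C_2$. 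In both sub-cases the pseudo-orbit hypothesis applies, giving $z^{t+1}_v=x^{t+1}_{\alpha(v)}=F_1(x^t)_{\alpha(v)}=F'(z^t)_v$, and the case $\alpha(v)\in V_2$ is identical after the symmetric relabeling.

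The main obstacle is not any single computation but making the bookkeeping of the exceptional sets line up, and this is precisely where the design of Definition~\ref{def:glueing} pays off: the set $\varphi_1(C_2)$ is simultaneously the set of $V_1$-nodes on which $x^t$ is allowed to be arbitrary and the set of $V_1$-nodes whose value $\rho_1$ imports from the other network. The two effects cancel — the freedom of $x^t$ on $\varphi_1(C_2)$ is harmless because the glued map never evaluates $F_1$ there (those nodes carry a $C_2$-label and are governed by $F_2$), while $\rho_1$ correctly feeds the $F_2$-side value back into $F_1$ thanks to \eqref{eq:sametrace}. Once this alignment is made explicit, the verification reduces to the routine case analysis above.
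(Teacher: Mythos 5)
Your proposal is correct and follows essentially the same route as the paper's proof: reduce to the case $\alpha(v)\in V_1$, establish $\rho_1(z^t)=x^t$ via Equation~\eqref{eq:sametrace}, and then invoke the $X\cup\varphi_1(C_2)$-pseudo-orbit hypothesis after checking $\alpha(v)\notin X\cup\varphi_1(C_2)$. You simply spell out in more detail the two verifications that the paper states without elaboration (the identity $\rho_1(z^t)=x^t$ on $\varphi_1(C)$ and the membership check for $\alpha(v)$), and both are carried out correctly.
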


\begin{proof}
  Take any ${v\in V'\setminus(X\cup Y)}$. Suppose first that ${\alpha(v)\in V_1}$. By definition of $F'$, we have ${F'(z^{t})_v = (F_1)_{\alpha(v)}\circ\rho_1 (z^t)}$ but ${\rho_1(z^t) = x^t}$ (using the Equation~\ref{eq:sametrace} in the hypothesis) so ${F'(z^t)_v=F_1(x^t)_{\alpha(v)}}$. Since ${(x^t)}$ is a ${X\cup\varphi_1(C_2)}$-pseudo-orbit and since ${\alpha(v)\not\in X\cup\varphi_1(C_2)}$, we have \[F_1(x^t)_{\alpha(v)}=x^{t+1}_{\alpha(v)}=z^{t+1}_v.\] We conclude that ${z^{t+1}_v=F'(z^{t})_v}$. 
  By a similar reasoning, we obtain the same conclusion if ${\alpha(v)\in V_2}$.
  We deduce that ${(z^t)}$ is a ${X\cup Y}$-pseudo-orbit of $F'$. 
\end{proof}

In the case of a CSAN family where the transition rules are determined by a labeled non-directed graph, the result of a glueing operation has no reason to belong to the family because the symmetry of the interaction graph might be broken (see Figure~\ref{fig:nonsymglueing}).
\begin{figure}
  \centering
  \begin{tikzpicture}
    \draw (2,0) node {$+$};
    \draw (5,0) node {$=$};
    \tikzstyle{every node}=[fill,shape=circle,inner sep=1pt]
    \coordinate[label = above:$\phi_1(C_1)$] (A) at (1,1);
    \coordinate[label = below:$\phi_1(C_2)$] (B) at (1,-1);
    \draw (0,0) node {} -- (A) node {} -- (B) node {} -- cycle;
    \coordinate[label = above:$\phi_2(C_1)$] (C) at (3,1);
    \coordinate[label = below:$\phi_2(C_2)$] (D) at (3,-1);
    \draw (4,0) node {} -- (C) node {} -- (D) node {} -- cycle;
    \draw (6,0) node {} -- (7,1) node {} -- (7,-1) node {} -- (8,0) node {};
    \draw[blue,-triangle 45] (7,1) -- (8,0);
    \draw[blue,-triangle 45] (7,-1) -- (6,0);
  \end{tikzpicture}
  \caption{Symmetry breaking in interaction graph after a glueing operation. Arrows indicate influence of a node (source) on another (target), edges without arrow indicates bi-directional influence. Here $C$ consists in two nodes only.}
  \label{fig:nonsymglueing}
\end{figure}
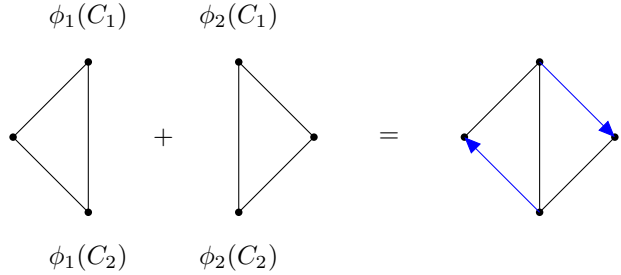
The following lemma gives a sufficient condition in graph theoretical terms for glueing within a concrete family of automata networks. Intuitively, it consists in asking that, in each graph, all the connections of one half of the dowel to the rest of the graph goes through the other half of the dowel. Here the wooden dowel metaphor is particularly relevant: when considering a single piece of wood with the dowel inserted inside, one half of the dowel is 'inside' (touches the piece of wood), the other half is 'outside' (not touching the piece of wood); then, when the two pieces are attached, each position in the wood assembly is locally either like in one piece of wood with the dowel inserted or like in the other one with the dowel inserted.

\begin{lemma}[Glueing for CSAN]
  \label{lem:concreteglueing}
  Let $(G_1,\lambda_1,\rho_1)$ and $(G_2,\lambda_2,\rho_2)$ be two CSAN from the same CSAN family $\mathcal{F}$ where $G_1$ and $G_2$ are disjoint and $F_1$ and $F_2$ are the associated global maps. Taking again the notations of Definition~\ref{def:glueing}, if the following conditions hold
  \begin{itemize}
  \item the labeled graphs induced by $\varphi_1(C)$ and $\varphi_2(C)$ in $G_1$ and $G_2$ are the same (using the identification ${\varphi_1(v)=\varphi_2(v)}$)
  \item ${N_{G_1}(\varphi_1(C_2))\subseteq \varphi_1(C)}$ 
  \item ${N_{G_2}(\varphi_2(C_1))\subseteq \varphi_2(C)}$
  \end{itemize}
  then the glueing ${F' = \glueing{F_1}{F_2}{C_1}{C_2}{\phi_1}{\phi_2}}$ can be defined as the CSAN on graph $G'=(V',E')$ where $V'$ is as in Definition~\ref{def:glueing}
  and each node ${v\in V'}$ has the same label and same labeled neighborhood as either a node of ${(G_1,\lambda_1,\rho_1)}$ or a node of ${(G_2,\lambda_2,\rho_2)}$. In particular $F'$ belongs to ${\mathcal{F}}$.
\end{lemma}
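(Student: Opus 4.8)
The plan is to exhibit an explicit labeled graph $G'=(V',E')$ on the vertex set $V'$ of Definition~\ref{def:glueing}, together with labels $(\lambda',\rho')$, and then to verify three things in turn: that $(G',\lambda',\rho')$ is a well-defined CSAN, that its associated global map coincides with the glueing $F'$, and that it satisfies the local constraints of $\mathcal{F}$. First I would introduce the pullback maps dual to $\alpha$: for $i=1,2$ let $\beta_i:V_i\to V'$ send $w\in\varphi_i(C)$ to $\varphi_i^{-1}(w)\in C$ and fix every $w\in V_i\setminus\varphi_i(C)$. Then $\beta_i$ is injective, one has $\beta_i\circ\alpha=\mathrm{id}$ on the nodes $v$ with $\alpha(v)\in V_i$, and, crucially, $\rho_i(x)_w=x_{\beta_i(w)}$ for every $w\in V_i$ directly from the definition of $\rho_i$. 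I would partition $V'$ into $A_1=V_1\setminus\varphi_1(C)$, $A_2=V_2\setminus\varphi_2(C)$, $C_1$ and $C_2$, and note that $\alpha$ maps $A_1\cup C_1$ into $V_1$ (these nodes will ``inherit from $G_1$'') and $A_2\cup C_2$ into $V_2$ (``inherit from $G_2$'').

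The definition of $G'$ is then local: a node $v$ with $\alpha(v)\in V_i$ is given the vertex label $\lambda'_v=\lambda_{i,\alpha(v)}$, the neighborhood $N_{G'}(v)=\beta_i\bigl(N_{G_i}(\alpha(v))\bigr)$, and the edge labels $\rho'_{\{v,\beta_i(w)\}}=\rho_{i,\{\alpha(v),w\}}$ for $w\in N_{G_i}(\alpha(v))$. The heart of the proof is to check that this prescription is \emph{consistent}, i.e. defines a symmetric and unambiguously labeled graph. Here the hypotheses are used. By contraposition, the condition $N_{G_1}(\varphi_1(C_2))\subseteq\varphi_1(C)$ says there is no $G_1$-edge between $A_1$ and $\varphi_1(C_2)$, and symmetrically $N_{G_2}(\varphi_2(C_1))\subseteq\varphi_2(C)$ forbids $G_2$-edges between $A_2$ and $\varphi_2(C_1)$. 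Pulling back through $\beta_i$, this localizes the neighborhoods as $N_{G'}(A_1)\subseteq A_1\cup C_1$, $N_{G'}(A_2)\subseteq A_2\cup C_2$, $N_{G'}(C_1)\subseteq A_1\cup C$ and $N_{G'}(C_2)\subseteq A_2\cup C$. A short case analysis over unordered pairs of classes then finishes the consistency check: the pairs $A_1/A_2$, $A_1/C_2$ and $A_2/C_1$ are declared non-adjacent from both sides; pairs whose two nodes inherit from the same $G_i$ are symmetric and coherently labeled because $G_i$ is (using injectivity of $\beta_i$); and for the genuine seam $C_1/C_2$, as well as for edges internal to $C$, adjacency and labels agree from the two sides precisely because the first hypothesis asserts that $G_1$ and $G_2$ induce the \emph{same} labeled graph on $C$ under the identification $\varphi_1(v)=\varphi_2(v)$.

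Next I would verify that the global map of $(G',\lambda',\rho')$ is $F'$. Fixing $v$ with $\alpha(v)\in V_1$ (the $V_2$ case being identical), I unfold the CSAN global map of Definition~\ref{def:globalcsan} at $v$ and compare it term by term with $F'_v(x)=(F_1)_{\alpha(v)}(\rho_1(x))$. The self-argument matches since $\rho_1(x)_{\alpha(v)}=x_{\beta_1(\alpha(v))}=x_v$. For the neighbor part, each $w\in N_{G_1}(\alpha(v))$ contributes the modified state $\rho_{1,\{\alpha(v),w\}}(\rho_1(x)_w)=\rho'_{\{v,\beta_1(w)\}}(x_{\beta_1(w)})$, which is exactly the contribution of the $G'$-neighbor $\beta_1(w)$; injectivity of $\beta_1$ guarantees the two resulting sets of modified states coincide, and $\lambda'_v=\lambda_{1,\alpha(v)}$, so the two values are equal. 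Finally, membership in $\mathcal{F}$ is immediate from the construction: every $v\in V'$ carries the same vertex label and the same set of incident edge labels as the node $\alpha(v)$ in $G_1$ or $G_2$, so the local constraint $(\lambda'(v),\rho'(E_v))\in\mathcal{C}$ holds because it holds for $(G_1,\lambda_1,\rho_1)$ and $(G_2,\lambda_2,\rho_2)$.

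The step I expect to be the main obstacle is the consistency verification for $G'$: one must make sure that gluing two separate neighborhood prescriptions never yields a one-sided edge or a doubly-labeled one. This is exactly where the three hypotheses earn their keep — the conditions $N_{G_1}(\varphi_1(C_2))\subseteq\varphi_1(C)$ and $N_{G_2}(\varphi_2(C_1))\subseteq\varphi_2(C)$ eliminate every potential asymmetric edge between the interior of one piece and the ``outer half'' of the dowel sitting in the other, while the equality of the labeled graphs induced on $C$ reconciles the two competing descriptions of the common part, matching the intuition of the wooden-dowel metaphor.
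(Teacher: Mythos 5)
Your proposal is correct and follows essentially the same route as the paper's proof: the same pullback maps $\beta_i$, the same use of the two neighborhood hypotheses to rule out one-sided edges between the interior of one piece and the outer half of the dowel, the same appeal to the equality of the induced labeled graphs on $C$ to reconcile the seam, and the same conclusion via the locality of the CSAN constraints. Your write-up is somewhat more explicit than the paper's on two points — the systematic case analysis over pairs of vertex classes for the consistency of $G'$, and the term-by-term verification that the global map of the constructed CSAN equals $F'$ via the identity $\rho_i(x)_w=x_{\beta_i(w)}$ — but these are elaborations of steps the paper treats more briefly, not a different argument.
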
 
\begin{proof} 
  Let us define ${\beta_i : V_i\to V'}$ by
  \[\beta_i(v) =
    \begin{cases}
      \phi_i^{-1}(v) &\text{ if }v\in\phi_i(C),\\
      v&\text{ else.}
    \end{cases}
  \]
  Fix ${i=1}$ or $2$. According to Definition~\ref{def:glueing}, if ${v\in V'}$ is such that ${\alpha(v) \in  V_i}$ then ${F'_v=(F_i)_{\alpha(v)}\circ\rho_i}$. By definition of CSAN, this means that for any ${x\in Q^{V'}}$ we have ${F'_v(x) = \psi_{i,\alpha(v)}(x_{|\beta(N_{G_i}(\alpha(v)))})}$ where $\psi_{i,\alpha(v)}$ is a map depending only on the labeled neighborhood of $\alpha(v)$ in $G_i$ as in Definition~\ref{def:csan}. So the dependencies of $v$ in $F'$ are in one-to-one correspondence through $\beta$ with the neighborhood of $\alpha(v)$ in $G_i$. They key observation is that the symmetry of dependencies is preserved, formally for any ${v'\in \beta(N_{G_i}(\alpha(v)))}$:
  \begin{itemize}
  \item either $\alpha(v')\in V_i$ in which case the dependency of $v'$ on $v$ (in map $\psi_{i,\alpha(v')}$) is the same as the dependency of $v$ on $v'$ (in map $\psi_{i,\alpha(v)}$), and both are determined by the undirected labeled edge ${\{\alpha(v),\alpha(v')\}}$ of $G_i$;
  \item or $\alpha(v')\not\in V_{i}$ and in this case necessarily ${v\in C_i}$ and ${v'\in C_{3-i}}$ (because ${N_{G_i}(V_i\setminus\varphi_i(C))\cap\varphi_i(C)\subseteq \varphi_i(C_i)}$ from the hypothesis), so the dependency of $v'$ on $v$ is the same as the dependency of $v$ on $v'$ because the labeled graphs induced by $\phi_1(C)$ and $\phi_2(C)$ in $G_1$ and $G_2$ are the same.
  \end{itemize}

  Concretely, $F'$ is a CSAN that can be defined on graph ${G'=(V',E_1'\cup E_2'\cup E(C))}$ with
  \[E_i' = E(V_i \setminus \varphi_i(C)) \cup \{(u,v_i): u \in V(C_i), v_i \in (V_i \setminus \varphi_i(C)),  (\varphi_i(u),v_i) \in  E_i\},\]
  and labels as follows:
  \begin{itemize}
  \item on $E(C)$ as in both $G_1$ and $G_2$ (which agree through maps $\phi_1$ and $\phi_2$ on $C$), 
  \item on ${E(V_i \setminus \varphi_i(C))}$ as in $G_i$,
  \item for each $u \in V(C_i), v_i \in (V_i \setminus \varphi_i(C))$ such that  $(\varphi_i(u),v_i) \in  E_i$, edge $(u,v_i)$ has same label as $(\varphi_i(u),v_i)$.
  \end{itemize}
  Since any CSAN families (Definition~\ref{def:csan}) is entirely based on local constraints on labels (vertex label plus set of labels of the incident edges), we deduce that $F'$ is in $\mathcal{F}$. 
\end{proof}

\begin{figure}\label{fig:concreteglueing}
\centering
\begin{tikzpicture}[x=0.75pt,y=0.75pt,yscale=-1,xscale=1]

\draw    (110,239) -- (156.5,239) ;
\draw    (300.75,95.25) -- (263.75,123) ;
\draw    (263.75,123) -- (291.5,160) ;
\draw    (209,123) -- (255.5,123) ;
\draw  [fill={rgb, 255:red, 255; green, 255; blue, 255 }  ,fill opacity=1 ] (192.5,123) .. controls (192.5,118.44) and (196.19,114.75) .. (200.75,114.75) .. controls (205.31,114.75) and (209,118.44) .. (209,123) .. controls (209,127.56) and (205.31,131.25) .. (200.75,131.25) .. controls (196.19,131.25) and (192.5,127.56) .. (192.5,123) -- cycle ;
\draw  [fill={rgb, 255:red, 255; green, 255; blue, 255 }  ,fill opacity=1 ] (255.5,123) .. controls (255.5,118.44) and (259.19,114.75) .. (263.75,114.75) .. controls (268.31,114.75) and (272,118.44) .. (272,123) .. controls (272,127.56) and (268.31,131.25) .. (263.75,131.25) .. controls (259.19,131.25) and (255.5,127.56) .. (255.5,123) -- cycle ;
\draw  [fill={rgb, 255:red, 255; green, 255; blue, 255 }  ,fill opacity=1 ] (283.25,160) .. controls (283.25,155.44) and (286.94,151.75) .. (291.5,151.75) .. controls (296.06,151.75) and (299.75,155.44) .. (299.75,160) .. controls (299.75,164.56) and (296.06,168.25) .. (291.5,168.25) .. controls (286.94,168.25) and (283.25,164.56) .. (283.25,160) -- cycle ;
\draw  [fill={rgb, 255:red, 255; green, 255; blue, 255 }  ,fill opacity=1 ] (292.5,95.25) .. controls (292.5,90.69) and (296.19,87) .. (300.75,87) .. controls (305.31,87) and (309,90.69) .. (309,95.25) .. controls (309,99.81) and (305.31,103.5) .. (300.75,103.5) .. controls (296.19,103.5) and (292.5,99.81) .. (292.5,95.25) -- cycle ;

\draw    (64,123) -- (110.5,123) ;
\draw  [fill={rgb, 255:red, 255; green, 255; blue, 255 }  ,fill opacity=1 ] (55.75,123) .. controls (55.75,118.44) and (59.44,114.75) .. (64,114.75) .. controls (68.56,114.75) and (72.25,118.44) .. (72.25,123) .. controls (72.25,127.56) and (68.56,131.25) .. (64,131.25) .. controls (59.44,131.25) and (55.75,127.56) .. (55.75,123) -- cycle ;
\draw    (110.5,123) -- (157,123) ;
\draw  [fill={rgb, 255:red, 255; green, 255; blue, 255 }  ,fill opacity=1 ] (102.25,123) .. controls (102.25,118.44) and (105.94,114.75) .. (110.5,114.75) .. controls (115.06,114.75) and (118.75,118.44) .. (118.75,123) .. controls (118.75,127.56) and (115.06,131.25) .. (110.5,131.25) .. controls (105.94,131.25) and (102.25,127.56) .. (102.25,123) -- cycle ;
\draw  [fill={rgb, 255:red, 255; green, 255; blue, 255 }  ,fill opacity=1 ] (148.75,123) .. controls (148.75,118.44) and (152.44,114.75) .. (157,114.75) .. controls (161.56,114.75) and (165.25,118.44) .. (165.25,123) .. controls (165.25,127.56) and (161.56,131.25) .. (157,131.25) .. controls (152.44,131.25) and (148.75,127.56) .. (148.75,123) -- cycle ;

\draw    (132.95,41.71) -- (209.72,41.71) ;
\draw  [fill={rgb, 255:red, 255; green, 255; blue, 255 }  ,fill opacity=1 ] (119.33,41.71) .. controls (119.33,33.82) and (125.43,27.42) .. (132.95,27.42) .. controls (140.48,27.42) and (146.57,33.82) .. (146.57,41.71) .. controls (146.57,49.6) and (140.48,56) .. (132.95,56) .. controls (125.43,56) and (119.33,49.6) .. (119.33,41.71) -- cycle ;
\draw  [fill={rgb, 255:red, 255; green, 255; blue, 255 }  ,fill opacity=1 ] (196.1,41.71) .. controls (196.1,33.82) and (202.2,27.42) .. (209.72,27.42) .. controls (217.24,27.42) and (223.34,33.82) .. (223.34,41.71) .. controls (223.34,49.6) and (217.24,56) .. (209.72,56) .. controls (202.2,56) and (196.1,49.6) .. (196.1,41.71) -- cycle ;
\draw    (257.75,211.25) -- (220.75,239) ;
\draw    (220.75,239) -- (248.5,276) ;
\draw    (166,239) -- (212.5,239) ;
\draw  [fill={rgb, 255:red, 255; green, 255; blue, 255 }  ,fill opacity=1 ] (149.5,239) .. controls (149.5,234.44) and (153.19,230.75) .. (157.75,230.75) .. controls (162.31,230.75) and (166,234.44) .. (166,239) .. controls (166,243.56) and (162.31,247.25) .. (157.75,247.25) .. controls (153.19,247.25) and (149.5,243.56) .. (149.5,239) -- cycle ;
\draw  [fill={rgb, 255:red, 255; green, 255; blue, 255 }  ,fill opacity=1 ] (212.5,239) .. controls (212.5,234.44) and (216.19,230.75) .. (220.75,230.75) .. controls (225.31,230.75) and (229,234.44) .. (229,239) .. controls (229,243.56) and (225.31,247.25) .. (220.75,247.25) .. controls (216.19,247.25) and (212.5,243.56) .. (212.5,239) -- cycle ;
\draw  [fill={rgb, 255:red, 255; green, 255; blue, 255 }  ,fill opacity=1 ] (240.25,276) .. controls (240.25,271.44) and (243.94,267.75) .. (248.5,267.75) .. controls (253.06,267.75) and (256.75,271.44) .. (256.75,276) .. controls (256.75,280.56) and (253.06,284.25) .. (248.5,284.25) .. controls (243.94,284.25) and (240.25,280.56) .. (240.25,276) -- cycle ;
\draw  [fill={rgb, 255:red, 255; green, 255; blue, 255 }  ,fill opacity=1 ] (249.5,211.25) .. controls (249.5,206.69) and (253.19,203) .. (257.75,203) .. controls (262.31,203) and (266,206.69) .. (266,211.25) .. controls (266,215.81) and (262.31,219.5) .. (257.75,219.5) .. controls (253.19,219.5) and (249.5,215.81) .. (249.5,211.25) -- cycle ;

\draw  [fill={rgb, 255:red, 255; green, 255; blue, 255 }  ,fill opacity=1 ] (101.75,239) .. controls (101.75,234.44) and (105.44,230.75) .. (110,230.75) .. controls (114.56,230.75) and (118.25,234.44) .. (118.25,239) .. controls (118.25,243.56) and (114.56,247.25) .. (110,247.25) .. controls (105.44,247.25) and (101.75,243.56) .. (101.75,239) -- cycle ;
\draw  [dash pattern={on 4.5pt off 4.5pt}] (98.75,111.13) -- (168.75,111.13) -- (168.75,134.88) -- (98.75,134.88) -- cycle ;
\draw  [dash pattern={on 4.5pt off 4.5pt}] (192.5,110.75) -- (271.5,110.75) -- (271.5,134.5) -- (192.5,134.5) -- cycle ;
\draw  [dash pattern={on 4.5pt off 4.5pt}]  (163.85,52) -- (138.96,94.41) ;
\draw [shift={(137.44,97)}, rotate = 300.40999999999997] [fill={rgb, 255:red, 0; green, 0; blue, 0 }  ][line width=0.08]  [draw opacity=0] (8.93,-4.29) -- (0,0) -- (8.93,4.29) -- cycle    ;
\draw  [dash pattern={on 4.5pt off 4.5pt}]  (185.12,52.73) -- (209.27,93.69) ;
\draw [shift={(210.79,96.27)}, rotate = 239.48] [fill={rgb, 255:red, 0; green, 0; blue, 0 }  ][line width=0.08]  [draw opacity=0] (8.93,-4.29) -- (0,0) -- (8.93,4.29) -- cycle    ;
\draw    (178.5,139) -- (178.5,171) ;
\draw [shift={(178.5,174)}, rotate = 270] [fill={rgb, 255:red, 0; green, 0; blue, 0 }  ][line width=0.08]  [draw opacity=0] (8.93,-4.29) -- (0,0) -- (8.93,4.29) -- cycle    ;

\draw (59,112.4+8) node [anchor=north west][inner sep=0.75pt]    {$a$};
\draw (105,113.4+5) node [anchor=north west][inner sep=0.75pt]    {$b$};
\draw (152,113.4+8) node [anchor=north west][inner sep=0.75pt]    {$c$};
\draw (196,113.4+5) node [anchor=north west][inner sep=0.75pt]    {$b$};
\draw (259,113.4+8) node [anchor=north west][inner sep=0.75pt]    {$c$};
\draw (295,85.4+5) node [anchor=north west][inner sep=0.75pt]    {$d$};
\draw (287,150.4+8) node [anchor=north west][inner sep=0.75pt]    {$e$};
\draw (244,266.4+8) node [anchor=north west][inner sep=0.75pt]    {$e$};
\draw (252,201.4+5) node [anchor=north west][inner sep=0.75pt]    {$d$};
\draw (216,229.4+8) node [anchor=north west][inner sep=0.75pt]    {$c$};
\draw (153,229.4+5) node [anchor=north west][inner sep=0.75pt]    {$b$};
\draw (124.01,28.42+8) node [anchor=north west][inner sep=0.75pt]    {$C_{1}$};
\draw (201.61,29.65+8) node [anchor=north west][inner sep=0.75pt]    {$C_{2}$};
\draw (105,228.4+8) node [anchor=north west][inner sep=0.75pt]    {$a$};
\draw (41,77.4) node [anchor=north west][inner sep=0.75pt]    {$G_{1}$};
\draw (256,71.4) node [anchor=north west][inner sep=0.75pt]    {$G_{2}$};
\draw (167,196.4) node [anchor=north west][inner sep=0.75pt]    {$G'$};
\draw (168,10.4) node [anchor=north west][inner sep=0.75pt]    {$C$};
\draw (115.94,61.46) node [anchor=north west][inner sep=0.75pt]    {$\varphi _{1}$};
\draw (211.56,62.19) node [anchor=north west][inner sep=0.75pt]    {$\varphi _{2}$};
\end{tikzpicture}

      
        

\caption{Example of a glueing of two compatibles CSAN. The labeling in nodes of $G_1$, $G_2$ and $G'$ shows equalities between local $\lambda$ maps of these three CSAN.}
\end{figure}
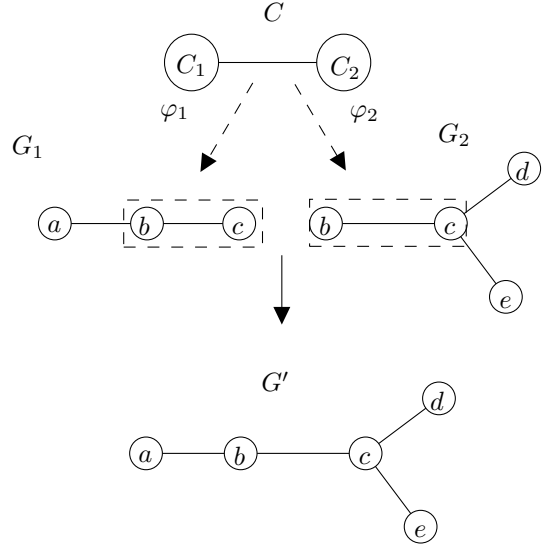

\subsection{$\G$-gadgets, gadget glueing and simulation of $\G$-networks}

We now give a precise meaning to the intuitively simple fact that, if a family of automata networks can coherently simulate a set of small building blocks (gates from $\G$), it should be able to simulate any automata network that can be built out of them ($\G$-networks).

The key idea here is that gates from $\G$ will be represented by networks of the family called $\G$-gadgets, and the wiring between gates to obtain a 
$\G$-network will translate into glueing between $\G$-gadgets. Following this idea there are two main conditions for the family to simulate any $\G$-network:
\begin{itemize}
\item the glueing of gadgets should be freely composable inside the family to allow the building of any $\G$-network;
\item the gadgets corresponding to gates from $\G$ should correctly and coherently simulate the functional relation between inputs and outputs given by their corresponding gate.
\end{itemize}

For clarity, we separate these conditions in two definitions.


We start by developing a definition for gadget glueing. Recall first that Definition~\ref{def:glueing} relies on the identification of a common dowel in the two networks to be glued. Here, as we want to mimic the wiring of gates which connects inputs to outputs, several copies of a fixed network called \textit{glueing interface} will be identified in each gadget, some of them corresponding to input, and the other ones to outputs. In this context,  the only glueing operations we will use are those where some output copies of the interface in a gadget $A$ are glued on input copies of the interface in a gadget $B$ and some input copies of the interface in $A$ are glued on output copies of the interface in $B$. Then, the global dowel used to formally apply Definition~\ref{def:glueing} is a disjoint union of the selected input/output copies of the interface. Figure~\ref{fig:gadgetglue} illustrates with the notations of the following Definition.

\newcommand\QF{Q_{\mathcal F}}

\begin{definition}[Glueing interface and gadgets]\label{def:gadgets-glueing}
  Let ${C=C_i\cup C_o}$ be a fixed set partitioned into two sets.
  A \emph{gadget} with \textit{glueing interface} ${C=C_i\cup C_o}$ is an automata network ${F:Q^{V_F}\rightarrow Q^{V_F}}$ together with two collections of injective maps ${\phi_{F,k}^i:C\rightarrow V_F}$ for ${k\in I(F)}$ and ${\phi_{F,k}^o:C\rightarrow V_F}$ for ${k\in O(F)}$ whose images in $V_F$ are pairwise disjoint and where $I(F)$ and $O(F)$ are disjoint sets which might be empty.

  Given two disjoint gadgets ${(F,(\phi_{F,k}^i),(\phi_{F,k}^o))}$ and ${(G,(\phi_{G,k}^i),(\phi_{G,k}^o))}$ with same alphabet and interface ${C=C_i\cup C_o}$, a \emph{gadget glueing} is a glueing of the form ${H = \glueing{F}{G}{C_F}{C_G}{\phi_F}{\phi_G}}$ defined as follows: 
  \begin{itemize}
  \item a choice of a set $A$ of inputs from $F$ and outputs from $G$ given by injective maps ${\sigma_F : A \to I(F)}$ and ${\sigma_G: A\to O(G)}$,
  \item a choice of a set $B$ of outputs from $F$ and inputs from $G$ given by injective maps ${\tau_F : B \to O(F)}$ and ${\tau_G: B\to I(G)}$ (the set $B$ is disjoint from $A$),
  \item ${C_F}$ is a disjoint union of $|A|$ copies of $C_i$, and $|B|$ copies of $C_o$: ${C_F = A\times C_i\cup B\times C_o}$,
  \item ${C_G}$ is a disjoint union of $|A|$ copies of $C_o$, and $|B|$ copies of $C_i$: ${C_G = A\times C_o\cup B\times C_i}$,
  \item ${\phi_F:C_F\cup C_G\to V_F}$ is such that ${\phi_F(a,c) = \phi_{F,\sigma_F(a)}^i(c)}$ for ${a\in A}$ and ${c\in C}$, and ${\phi_F(b,c) = \phi_{F,\tau_F(b)}^o(c)}$ for ${b\in B}$ and ${c\in C}$,
  \item ${\phi_G:C_F\cup C_G\to V_G}$ is such that ${\phi_G(a,c) = \phi_{G,\sigma_G(a)}^o(c)}$ for ${a\in A}$ and ${c\in C}$, and ${\phi_G(b,c) = \phi_{G,\tau_G(b)}^i(c)}$ for ${b\in B}$ and ${c\in C}$. 
  \end{itemize}
  The resulting network $H$ is a gadget with same alphabet and same interface with ${I(H) = I(F)\setminus\sigma_F(A) \cup I(G)\setminus\tau_G(B)}$ and ${O(H) = O(F)\setminus\tau_F(B)\cup O(F)\setminus\sigma_G(A)}$ and ${\phi_{H,k}^i}$ is $\phi_{F,k}^i$ when $k\in I(F)$ and ${\phi_{G,k}^i}$ when ${k\in I(G)}$, and ${\phi_{H,k}^o}$ is ${\phi_{F,k}^o}$ when ${k\in O(F)}$ and ${\phi_{G,k}^o}$ when ${k\in O(G)}$.

  Given a set of gadgets $X$ with same alphabet and interface, its closure by gadget glueing is the closure of $X$ by the following operations:
  \begin{itemize}
  \item add a disjoint copy of some gadget from the current set,
  \item add the disjoint union of two gadgets from the current set,
  \item add a gadget glueing of two gadgets from the current set.
  \end{itemize}
\end{definition}

\begin{remark}\label{rem:gadgets-glueing}
  The representation of the result of a gadget glueing can be easily computed from the two gadgets $F$ and $G$ and the choices of inputs/outputs given by maps $\sigma_F$, $\sigma_G$, $\tau_F$ and $\tau_G$: precisely, by definition of glueing (Definition~\ref{def:glueing}) the local map of each node of the result automata network is either a local map of $F$ (when in $V_F\setminus \phi_F(C_G)$ or in $C_F$) or a local map of $G$ (when in $V_G\setminus \phi_G(C_F)$ or in $C_G$). Note also that the closure by gadget glueing of a finite set of gadgets $X$ is always a set of automata networks of bounded degree.
\end{remark}

\begin{figure}\label{fig:gadgetglue}
  \begin{center}
    \begin{tikzpicture}[
      mnode/.style={circle,fill=white,draw=black,minimum size=.7cm},
      minode/.style={circle,fill=red!20!white,draw=black,minimum size=.7cm},
      mvnode/.style={circle,fill=green!20!white,draw=black,minimum size=.7cm},
      monode/.style={circle,fill=blue!20!white,draw=black,minimum size=.7cm}
      ]
      \draw[dashed] (1.5,2.5)--(1.5,-5);
      \draw (.5,2) node {inputs};
      \draw (2.5,2) node {outputs};
      \draw (1,0) -- (2,-1) -- (2,1);
      \draw (-1,0) node {$F$};
      \draw (.5,.7) node {$\sigma_F(A)$};
      \draw (2.5,-1.7) node {$\tau_F(B)$};
      \draw (0,0) node[minode] {$\alpha$} --  (1,0) node[monode] {$\beta$} -- (2,1) node[minode] {$\gamma$} -- (3,1) node[monode] {$\delta$};
      \draw (2,-1) node[minode] {$\epsilon$} -- (3,-1) node[monode] {$\zeta$};
      \begin{scope}[shift={(0,-2)}]
        \draw (1,-2) -- (2,-1);
      \draw (.5,-1.3) node {$\sigma_G(B)$};
      \draw (2.5,-1.7) node {$\tau_G(A)$};
        \draw (-1,-1) node {$G$};
        \draw (0,0) node[minode] {$a$} -- (1,0) node[monode] {$b$} -- (2,-1) node[minode] {$c$} -- (3,-1) node[monode] {$d$};
        \draw (0,-2) node[minode] {$e$} -- (1,-2) node[monode] {$f$};
      \end{scope}
      \begin{scope}[shift={(7,.5)}]
        \draw (-1.5,1) node {G};
        \draw (0,0) -- (0,-1);
        \draw (-1,0) -- (0,0);
        \draw[dotted] (0,0) -- (.5,1);
        \draw[dotted] (1,0) -- (1.5,1);
        \draw[dotted] (0,-1) -- (.5,-2);
        \draw[dotted] (1,-1) -- (1.5,-2);
        \draw (0,0) node[mvnode] {$c$} -- (1,0) node[mnode] {$d$};
        \draw (0,-1) node[mvnode] {$f$} -- (1,-1) node[mnode] {$e$};
        \draw (-2,0) node[mnode] {$a$} -- (-1,0) node[mnode] {$b$};
      \end{scope}
      \begin{scope}[shift={(7.5,.5)}]
        \draw (2.5,1) node {F};
        \draw (1,1) -- (1,-2);
        \draw (1,-2) -- (2,0);
        \draw (0,1) node[mnode] {$\alpha$} -- (1,1) node[mvnode] {$\beta$} -- (2,0) node[mnode] {$\gamma$} -- (3,0) node[mnode] {$\delta$};
        \draw (0,-2) node[mnode] {$\zeta$} -- (1,-2) node[mvnode] {$\epsilon$};
      \end{scope}
      \draw[very thick] (4,-2.5) -- (11,-2.5);
      \draw[very thick] (4,2.5) -- (4,-5.5);
      \begin{scope}[shift={(7,-4)}]
        \draw (-1.5,1) node {input};
        \draw (2.5,1) node {output};
        \draw (0,0) -- (0,-1);
        \draw (1,0) -- (1,-1);
        \draw (1,-1) -- (2,0);
        \draw (-1,0) -- (0,0);
        \draw (0,0) node[mnode] {$c$} -- (1,0) node[mnode] {$\beta$} -- (2,0) node[minode] {$\gamma$} -- (3,0) node[monode] {$\delta$};
        \draw (0,-1) node[mnode] {$f$} -- (1,-1) node[mnode] {$\epsilon$};
        \draw (-2,0) node[minode] {$a$} -- (-1,0) node[monode] {$b$};
      \end{scope}
    \end{tikzpicture}
  \end{center}
  \caption{Gadget glueing as in Definition~\ref{def:gadgets-glueing}. On the left, two gadgets with interface ${C=C_i\cup C_o}$ where $C_i$ part in each copy of the interface dowel is in red and $C_o$ part in blue. The gadget glueing is done with input ${\sigma_F(A)}$ on output ${\sigma_G(A)}$ (here $A$ is a singleton) and output ${\tau_F(B)}$ on input ${\tau_G(B)}$ ($B$ is also a singleton). On the upper right, a representation of the global glueing process where nodes in green are those in the copy of $C_F$ in $F$ or in the copy $C_G$ in $G$; dotted links show the bijection between the embeddings of ${C=C_F\cup C_G}$ into $V_F$ and $V_G$ via maps $\phi_F$ and $\phi_G$. On the lower right the resulting gadget with the same interface ${C=C_i\cup C_o}$ as the two initial gadgets.}
\end{figure}
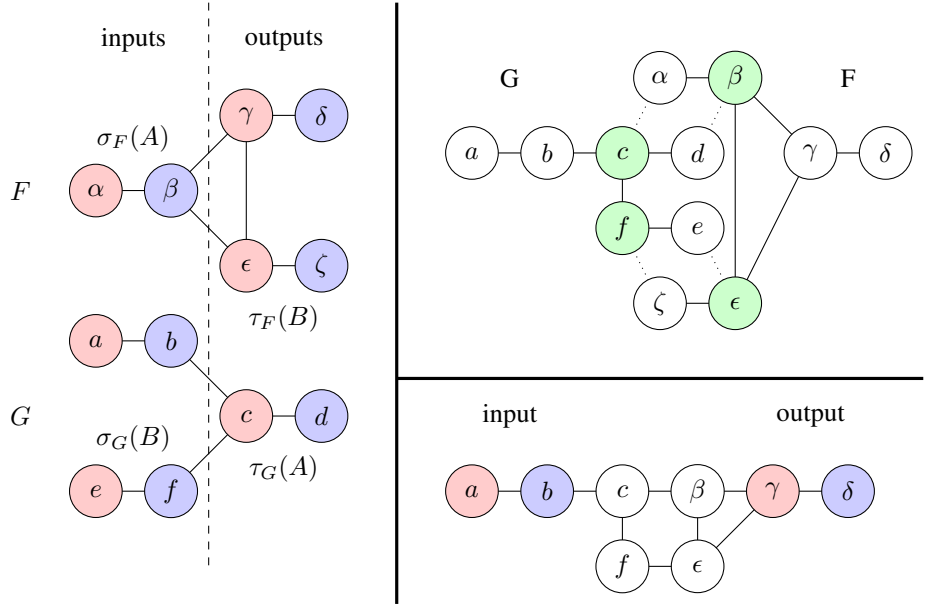

Lemma~\ref{lem:concreteglueing} gives sufficient conditions on a set of gadgets to have its closure by gadget glueing contained in a CSAN family.

\begin{lemma}\label{lem:csan-gadgets-glueing-closure}
  Fix some alphabet $Q$ and some glueing interface ${C=C_i\cup C_o}$ and some CSAN family $\mathcal{F}$.
  Let ${(G_n,\lambda_n,\rho_n)}$ for ${n\in S}$ be a set of CSAN belonging to $\mathcal{F}$ with associated global maps $F_n$. Let ${\phi_{F_n,k}^i}$ for ${k\in I(F_n)}$ and ${\phi_{F_n,k}^o}$ for ${k\in O(F_n)}$ be maps as in Definition~\ref{def:gadgets-glueing} so that ${(F_n,(\phi_{F_n,k}^i),(\phi_{F_n,k}^o))}$ is a gadget with interface ${C=C_i\cup C_o}$. Denote by $X$ the set of such gadgets. If the following conditions hold:
  \begin{itemize}
  \item the labeled graphs induced by ${\phi_{F_n,k}^i(C)}$ and by ${\phi_{F_n,k}^o(C)}$ in $G_n$ are all the same for all $n$ and $k$ with the identification of vertices given by the $\phi_{\ast,\ast}^\ast$ maps,
  \item ${N_{G_n}(\phi_{F_n,k}^i(C_o))\subseteq \phi_{F_n,k}^i(C)}$ for all ${n\in S}$ and all ${k\in I(F_n)}$,
  \item ${N_{G_n}(\phi_{F_n,k}^o(C_i))\subseteq \phi_{F_n,k}^o(C)}$ for all ${n\in S}$ and all ${k\in O(F_n)}$,
  \end{itemize}
  then the closure by gadget glueing of $X$ is included in $\mathcal{F}$.
\end{lemma}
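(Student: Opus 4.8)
The plan is to prove, by induction on the construction of the closure, a statement slightly stronger than mere membership in $\mathcal{F}$: namely that \emph{every} gadget produced belongs to $\mathcal{F}$ \emph{and} still satisfies the three interface conditions of the statement on each of its remaining input and output interface copies. This strengthening is forced upon us because a gadget glueing invokes Lemma~\ref{lem:concreteglueing}, whose applicability depends precisely on these conditions holding for the two gadgets being glued; the conditions must therefore be maintained as an invariant all along the closure, not merely assumed at the generators in $X$.

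The base case is exactly the hypothesis on $X$. The two operations ``add a disjoint copy'' and ``add a disjoint union'' create no edges between the pieces and leave each node's label and labeled neighborhood untouched, so they trivially preserve both membership in $\mathcal{F}$ (a CSAN family being defined by purely local constraints, Definition~\ref{def:csan}) and the interface conditions. All the content is thus in the gadget glueing step.

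So suppose $F$ and $G$ satisfy the invariant and we form a gadget glueing $H = \glueing{F}{G}{C_F}{C_G}{\phi_F}{\phi_G}$ as in Definition~\ref{def:gadgets-glueing}. First I would check that this is a legal CSAN glueing by verifying the three hypotheses of Lemma~\ref{lem:concreteglueing} for the dowel $C_F\cup C_G$ with glueing maps $\varphi_1=\phi_F$, $\varphi_2=\phi_G$ and partition $(C_F,C_G)$. The labeled-graph condition reduces, copy by copy, to the first interface hypothesis: for $a\in A$ the induced labeled subgraphs on $\phi^i_{F,\sigma_F(a)}(C)$ and on $\phi^o_{G,\sigma_G(a)}(C)$ are both the fixed interface graph on $C$, hence agree under the identification, and similarly for $b\in B$; since the interface copies are pairwise disjoint, the full induced graphs on $\phi_F(C_F\cup C_G)$ and $\phi_G(C_F\cup C_G)$ coincide. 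The two neighborhood containments $N_{G_F}(\phi_F(C_G))\subseteq \phi_F(C_F\cup C_G)$ and $N_{G_G}(\phi_G(C_F))\subseteq \phi_G(C_F\cup C_G)$ unwind to exactly the second and third interface hypotheses: every node of $\phi_F(C_G)$ is either a $C_o$-part of a used input copy or a $C_i$-part of a used output copy, whose $G_F$-neighborhood is contained in its own copy and hence in $\phi_F(C_F\cup C_G)$, and symmetrically on the $G$ side. Lemma~\ref{lem:concreteglueing} then yields $H\in\mathcal{F}$ together with an explicit CSAN graph $G'$ whose edge set is $E_1'\cup E_2'\cup E(C)$.

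The remaining, and main, task is to show $H$ still satisfies the three interface conditions on its \emph{unused} copies, so that the induction can continue. Here I would use that an unused copy lies entirely in $V_F\setminus\varphi_1(C)$ (resp.\ $V_G\setminus\varphi_2(C)$) by disjointness of the interface images, so all edges internal to it are inherited verbatim from $G_F$ (resp.\ $G_G$); the induced labeled subgraph on each unused copy is therefore unchanged and still equals the fixed interface graph. The delicate point is the neighborhood conditions, because glueing introduces new edges of the form $(u,v)$ with $u$ a merged dowel node and $v$ a non-dowel node precisely when $(\varphi_i(u),v)\in E_i$. Such an edge could \emph{a priori} reach an unused copy; the key is that it can never reach the \emph{constrained} part of it. Indeed, if $v$ belonged to the $C_o$-part of an unused input copy, then by the second interface hypothesis applied to that copy in $G_F$ all of its neighbors lie inside the copy, which is disjoint from the dowel, contradicting that $\varphi_i(u)$ is a dowel node; the $C_i$-part of an unused output copy is protected symmetrically by the third hypothesis. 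Hence the new edges only ever touch the \emph{unconstrained} parts (the $C_i$-parts of input copies and the $C_o$-parts of output copies), and the second and third conditions survive intact. This propagation argument --- verifying that glueing cannot corrupt the neighborhoods guarding the still-open interfaces --- is where I expect the real work to lie; once it is in place the induction closes and the closure of $X$ is contained in $\mathcal{F}$.
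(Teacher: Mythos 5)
Your proof is correct and follows essentially the same route as the paper's: verify the hypotheses of Lemma~\ref{lem:concreteglueing} for a gadget glueing by decomposing the global dowel into disjoint interface copies, then close an induction by showing the three interface conditions are preserved. The only difference is that you spell out in detail the invariant-preservation step (new dowel-to-interior edges cannot reach the constrained $C_o$/$C_i$ parts of unused copies), which the paper's proof dismisses as ``clear''; your elaboration is accurate and is indeed where the real content of that claim lies.
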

\begin{proof}
  Consider first the gadget glueing $H$ of two gadgets $F_{n}$ and $F_{n'}$ from $X$. Following Definition~\ref{def:gadgets-glueing}, the global dowel ${C_{F_n}\cup C_{F_{n'}}}$ used in such a glueing is a disjoint union of copies of $C$, and its embedding $\phi_{F_n}$ in $G_n$ (resp. $\phi_{F_{n'}}$ in $G_{n'}$) is a disjoint union of maps $\phi_{F_n,\ast}^\ast$ (resp. ${\phi_{F_{n'},\ast}^\ast}$). Therefore the three conditions of Lemma~\ref{lem:concreteglueing} follow from the three conditions of the hypothesis on gadgets from $X$ and we deduce that $H$ belongs to family $\mathcal{F}$. Moreover, it is clear that gadget $H$ then also verifies the three conditions from the hypothesis, and adding a copy of any gadget to the set also verifies the conditions. We deduce that the closure by gadget glueing of $X$ is included in ${\mathcal{F}}$.
   
\end{proof}

The second key aspect to have a coherent set $X$ of $\G$-gadgets is of dynamical nature: there must exists a collection of pseudo-orbits on each gadget satisfying suitable conditions to permit application of Lemma~\ref{lem:pseudo-orbit-glueing} for any gadget glueing in the closure of $X$; moreover, these pseudo-orbits must simulate via an appropriate coding the input/output relations of each gate ${g\in\G}$ in the corresponding gadget. To obtain this, we rely on a standard set of traces on the glueing interface that must be respected on any copy of it in any gadget.

\begin{definition}[Coherent $\G$-gadgets]\label{def:coherent-gadgets}
  Let $\G$ be any set of finite maps over alphabet $Q$ and let ${\mathcal F}$ be any set of abstract automata networks over alphabet $\QF$.
  We say  ${\mathcal F}$ has \emph{coherent $\G$-gadgets} if there exists:
    \begin{itemize}
    \item a unique glueing interface ${C=C_i\cup C_o}$, 
    \item a set $X$ of gadgets ${(F_g,(\phi_{g,k}^i)_{1\leq k\leq i(g)},(\phi_{g,k}^o)_{1\leq k\leq o(g)})}$ for each $g\in\G$ where ${F_g:\QF^{V_g}\rightarrow\QF^{V_g}\in\mathcal{F}}$ and sets $V_g$ and $C$ are pairwise disjoint, and the closure of $X$ by gadget glueing is contained in $\mathcal F$,
    \item a \emph{state configuration} ${s_q\in \QF^{C}}$ for each ${q\in Q}$ such that ${q\mapsto s_q}$ is an injective map,
    \item a \emph{context configuration} ${c_g\in \QF^{{\hat V}_g}}$ for each ${g\in\G}$ where ${{\hat V}_g=V_g\setminus\bigl(\cup_k\phi_{g,k}^i(C)\cup_k \phi_{g,k}^o(C)\bigr)}$,
    \item a time constant $T$,
    \item a \emph{standard trace} $\tau_{q,q'}\in (\QF^C)^{\{0,\ldots,T\}}$ for each pair ${q,q'\in Q}$ such that ${\tau_{q,q'}(0)=s_{q}}$ and ${\tau_{q,q'}(T)=s_{q'}}$,
    \item for each ${g\in\G}$ and for any uples of states ${q_{i,1},\ldots,q_{i,i(g)}\in Q}$ and ${q_{o,1},\ldots,q_{o,o(g)}\in Q}$ and ${q_{i,1}',\ldots,q_{i,i(g)}'\in Q}$ and ${q_{o,1}',\ldots,q_{o,o(g)}'\in Q}$ such that ${g(q_{i,1},\ldots,q_{i,i(g)}) = (q'_{o,1},\ldots,q'_{o,o(g)})}$, a $P_g$-pseudo-orbit ${(x^t)_{0\leq t\leq T}}$ of $F_g$ with ${P_g = \bigcup_{1\leq k\leq i(g)}\phi_{g,k}^i(C_o)\cup \bigcup_{1\leq k\leq o(g)}\phi_{g,k}^o(C_i)}$ and with
      \begin{itemize}
      \item for each ${1\leq k\leq i(g)}$, the trace ${t\mapsto x^t_{\phi_{g,k}^i(C)}}$ is exactly $\tau_{q_{i,k},q_{i,k}'}$,
      \item for each ${1\leq k\leq o(g)}$, the trace ${t\mapsto x^t_{\phi_{g,k}^o(C)}}$ is exactly $\tau_{q_{o,k},q_{o,k}'}$,
      \item ${x^0_{{\hat V}_g} = x^T_{{\hat V}_g} = c_g}$.
      \end{itemize}
    \end{itemize}
\end{definition}

We can now state the key lemma of our framework: having coherent $\G$-gadgets is sufficient to simulate the whole family of $\G$-networks.

\begin{lemma}\label{lem:from-gadgets-to-networks}
  Let $\G$ be a set of irreducible gates.
  If an abstract automata network family $\mathcal{F}$ has coherent $\G$-gadgets then it contains a subfamily of bounded degree networks with the canonical bounded degree representation ${(\mathcal{F}_0,\mathcal{F}_0^*)}$ that simulates $\Gamma(\mathcal{G})$ in time $T$ and space $S$ where $T$ is a constant map and $S$ is bounded by a linear map.
\end{lemma}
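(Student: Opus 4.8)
The plan is to realize an arbitrary $\G$-network by glueing together the coherent gadgets provided by the hypothesis, following exactly the wiring of the network, and then to lift the per-gadget pseudo-orbits into a genuine orbit of the glued network via Lemma~\ref{lem:pseudo-orbit-glueing}. Concretely, given a $\G$-network $F$ with gates $g_1,\ldots,g_m$ and bijections $\alpha\colon I\to V$, $\beta\colon V\to O$ as in Definition~\ref{def:g-network}, I would first observe that each node $v\in V$ names a single wire: $\beta(v)=(j,k)$ marks $v$ as the $k$-th output of $g_j$ while $\alpha^{-1}(v)=(j',k')$ marks it as the $k'$-th input of $g_{j'}$. I would then take one disjoint copy of the gadget $F_{g_j}$ for each gate and, for every wire $v$, perform the gadget glueing of Definition~\ref{def:gadgets-glueing} identifying the output interface $\phi_{F_{g_j},k}^o(C)$ with the input interface $\phi_{F_{g_{j'}},k'}^i(C)$. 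Since $\alpha,\beta$ are bijections, every interface is consumed by exactly one glueing; membership $F'\in\mathcal{F}$ then follows from the closure-by-gadget-glueing hypothesis of Definition~\ref{def:coherent-gadgets}, and $\mathcal{F}_0$ is taken to be the set of all networks obtained this way with the bounded degree representation guaranteed by Remark~\ref{rem:gadgets-glueing}. For the block embedding I would let $D_v$ be the unique copy of $C$ surviving in $V'$ for each wire $v$, and take the interior nodes $\bigcup_j\hat V_{g_j}$ as the context block; using Remark~\ref{rem:bloc-simul}, the embedding $\phi$ is defined by $\phi(x)_{D_v}=s_{x_v}$ and $\phi(x)_{\hat V_{g_j}}=c_{g_j}$, and injectivity of $q\mapsto s_q$ makes the patterns an injective coding, so $\phi$ is a legitimate block embedding.

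The heart of the argument is the dynamical step. Fixing $x$ and setting $y=F(x)$, for each gate $g_j$ I would feed Definition~\ref{def:coherent-gadgets} the input states $q_{i,k}=x_{\alpha(j,k)}$, $q_{i,k}'=y_{\alpha(j,k)}$ and output states $q_{o,k}=x_{\beta^{-1}(j,k)}$, $q_{o,k}'=y_{\beta^{-1}(j,k)}$. The required relation $g_j(q_{i,1},\ldots)=(q_{o,1}',\ldots)$ holds precisely because $y_{\beta^{-1}(j,k)}=g_j(\ldots)_k$ by Definition~\ref{def:g-network}, so the definition yields a $P_{g_j}$-pseudo-orbit whose trace on the relevant input interface is $\tau_{x_v,y_v}$ and whose trace on the relevant output interface is also $\tau_{x_v,y_v}$. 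These two traces coincide on each shared wire, which is exactly the compatibility condition~\eqref{eq:sametrace}.

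I would then glue the gadgets one at a time, invoking Lemma~\ref{lem:pseudo-orbit-glueing} at each step. The key bookkeeping is that glueing an output interface to an input interface absorbs into the dowel exactly the parts $\phi_{F_{g_j},k}^o(C_i)$ and $\phi_{F_{g_{j'}},k'}^i(C_o)$ of $P_{g_j}$ and $P_{g_{j'}}$; hence, once every wire is glued, the total bad set is empty and the combined sequence $(z^t)_{0\le t\le T}$ is a \emph{genuine} orbit of $F'$. By construction $z^0=\phi(x)$ and $z^T=\phi(y)$, so $\phi\circ F=(F')^T\circ\phi$ with the constant time constant $T$, establishing the simulation of Definition~\ref{def:bloc-simu}. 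For the complexity bounds I would note that $m=O(n)$ and that each gadget has constant size ($\G$ being finite), whence $|V'|=\sum_j|\hat V_{g_j}|+n|C|=O(n)$, so the space map $S$ is linear and the time is the constant $T$. The construction is in \DLOG{}: the gates and the maps $\alpha,\beta$ are recovered from the bounded degree representation of $F$ in logarithmic space by Remark~\ref{rem:represent-g-networks} (using irreducibility of $\G$), instantiating the fixed gadgets and performing the glueings is a bounded local computation per gate and per wire (Remark~\ref{rem:gadgets-glueing}), and the block embedding is output in the same manner, which gives $\Gamma(\mathcal{G})\preccurlyeq^T_S\mathcal{F}_0^*$.

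The main obstacle I anticipate is the careful tracking of bad sets through the inductive glueing: one must verify that each $P_{g_j}$ is fully absorbed (so that the final sequence is an orbit and not merely a pseudo-orbit) and that trace compatibility~\eqref{eq:sametrace} holds at every intermediate step, even when a single gadget is glued to several others through distinct wires simultaneously. This rests on the bijective bookkeeping matching wires of $F$ to surviving copies of $C$ in $V'$; once that correspondence is fixed, the algebraic verifications — that the chosen state tuples satisfy the gate relation and that the interface traces agree — are routine.
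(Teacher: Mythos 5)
Your proposal is correct and follows essentially the same route as the paper's proof: one gadget per gate, gadget glueings prescribed by $\alpha$ and $\beta$ (performed gate-by-gate so that multiple wires between two gadgets are absorbed in a single glueing), the block embedding with the surviving dowel copies as blocks $D_v$ and the interiors $\hat V_{g_j}$ as context, and inductive applications of Lemma~\ref{lem:pseudo-orbit-glueing} turning the $P_{g_j}$-pseudo-orbits into a genuine orbit realizing $\phi\circ F=(F')^T\circ\phi$. Your write-up is in fact somewhat more explicit than the paper's on the trace-compatibility check and on why the bad sets $P_{g_j}$ are fully absorbed once every wire is glued.
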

\begin{proof}
  We take the notations of Definition~\ref{def:coherent-gadgets}.
  To any $\G$-network $F$ with set of nodes $V$ given as in Definition~\ref{def:g-network} by a list of gates ${g_1,\ldots,g_k\in\G}$ and maps $\alpha$ and $\beta$ (see Remark~\ref{rem:represent-g-networks}) we associate an automata network from $\mathcal{F}$ as follows. First, let $(F_{g_i})_{1\leq i\leq k}$ be the gadgets corresponding to gates $g_i$ and suppose they are all disjoint (by taking disjoint copies when necessary). Then, start from the gadget $F_1=F_{g_1}$ and for any ${1\leq i< k}$ we define ${F_{i+1}}$ as the gadget glueing of $F_i$ and $F_{g_{i+1}}$ on the input/outputs as prescribed by maps $\alpha$ and $\beta$. More precisely, the gadget glueing select the set of inputs ${(j,k)}$ with ${1\leq j\leq i}$ and ${1\leq k\leq i(g_j)}$ such that ${\beta(\alpha(j,k))=(i+1,k')}$ for some ${1\leq k'\leq o(g_{i+1})}$ and glue them on their corresponding output ${(i+1,k')}$ of $g_{i+1}$ (precisely, through maps $\sigma_{F_i}$ and $\sigma_{F_{g_{i+1}}}$ of domain $A_{i+1}$ playing the role of maps $\sigma_F$ and $\sigma_G$ of Definition~\ref{def:gadgets-glueing}), and, symmetrically, selects the inputs ${(i+1,k)}$ with ${1\leq k\leq i(g_{i+1})}$ such that ${\beta(\alpha(i+1,k))=(j,k')}$ for some ${1\leq j\leq i}$ and ${1\leq k'\leq o(g_{j})}$ and glue their corresponding output ${(j,k')}$ (precisely, through maps $\tau_{F_{g_{i+1}}}$ and $\tau_{F_i}$ of domain $B_{i+1}$ playing the role of maps $\tau_G$ and $\tau_F$ from Definition~\ref{def:gadgets-glueing}). If both of these sets of inputs/outputs are empty, the gadget glueing is replaced by a simple disjoint union.

  The final gadget $F_k$ has no input and no output, and a representation of it as a pair graph and local maps can be constructed in \DLOG{}, because the local map of each of its nodes is independent of the glueing sequence above and completely determined by the gadget $F_{g_j}$ it belongs to and whether the node is inside some input or some output dowel or not (see Reamrk~\ref{rem:gadgets-glueing}).

  It now remains to show that the automata network $F_k$ simulates $F$. To fix notations, let $V_k$ be the set of nodes of $F_k$. For each ${v\in V}$, define ${D_v\subseteq V_k}$ as the copy of the dowel that correspond to node $v$ of $F$, \textit{i.e.} that was produced in the gadget glueing of ${F_i}$ with $F_{g_{i+1}}$ for $i$ such that ${\beta(v)=(i+1,k')}$ for some ${1\leq k'\leq o(g_{i+1})}$ (or symmetrically $\alpha(i+1,k)=v$ for some ${1\leq k\leq i(g_j)}$). More precisely, if $a\in A_{i+1}$ is such that ${\sigma_{F_{g_{i+1}}}(a)=(i+1,k')}$ then $D_v = \{a\}\times C$ (symmetrically if $b\in B_{i+1}$ is such that ${\tau_{F_{g_{i+1}}}(b)=(i+1,k)}$ then $D_v = \{b\}\times C$). Also denote by $\rho_v : D_v\to C$ the map such that ${\rho_v(a,c)=c}$ for all ${c\in C}$ (symmetrically, ${\rho_v(b,c)=c}$). With these notations, we have 
  \[V_k = \bigcup_{v\in V} D_v\cup\bigcup_{1\leq i\leq k}\hat{V}_{g_i}\]
  Let us define the block embedding $\phi : Q^V\to Q_{\mathcal{F}}^{V_k}$ as follows 
  \[\phi(x)(v') =
    \begin{cases}
      s_{x_v}(\rho_v(v'))&\text{ if }v'\in D_v,\\
      c_{g_i}(v')&\text{ if }v'\in\hat{V}_{g_i}.\\
    \end{cases}
  \]
  for any ${x\in Q^V}$ and any ${v'\in V_k}$, where $s_q$ for ${q\in Q}$ are the state configurations and $c_g$ for ${g\in\G}$ are the context configurations granted by Definition~\ref{def:coherent-gadgets}. Note that $\phi$ is injective because the map ${q\mapsto s_q}$ is injective. By inductive applications of Lemma~\ref{lem:pseudo-orbit-glueing}, the $P_{g_i}$-pseudo-orbits of each ${F_{g_i}}$ from Definition~\ref{def:coherent-gadgets} can be glued together to form valid orbits of $F_k$ that start from any configuration ${\phi(x)}$ with ${x\in Q^V}$ and ends after $T$ steps in a configuration ${\phi(y)}$ for some ${y\in Q^V}$ which verifies ${y=F(x)}$. Said differently, we have the following equality on $Q^V$: 
  \[\phi\circ F = F_k^T\circ\phi.\]
  Note that $T$ is a constant and that the size of $V_k$ is at most linear in the size of $V$. The lemma follows. 
\end{proof}

\begin{remark}
  Note that in Lemma~\ref{lem:from-gadgets-to-networks} above, the block embedding that is constructed can be viewed as a collection of blocs of bounded size that encode all the information plus a context (see Remark~\ref{rem:bloc-simul}).
\end{remark}

In the case of CSAN families and using Remark~\ref{rem:csan-bounded-degree} we have a simpler formulation of the Lemma.

\begin{corollary}\label{cor:gadgets-mon-csan}
  If $\G$ in a set of irreducible gates and $\mathcal{F}$ a CSAN family which has coherent $\G$-gadgets then $\mathcal{F}$ simulates $\Gamma(\mathcal{G})$ in time $T$ and space $S$ where $T$ is a constant map and $S$ is bounded by a linear map.  
\end{corollary}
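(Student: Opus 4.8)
The plan is to derive the corollary directly from Lemma~\ref{lem:from-gadgets-to-networks} by upgrading the representation used by the simulator from the bounded-degree one to the CSAN one. First I would invoke Lemma~\ref{lem:from-gadgets-to-networks}: since $\mathcal{F}$ has coherent $\G$-gadgets, it contains a subfamily $(\mathcal{F}_0,\mathcal{F}_0^*)$ of bounded-degree networks (with the canonical bounded-degree representation) that simulates $\Gamma(\G)$ in constant time $T$ and space $S$ bounded by a linear map. Concretely, this provides a \DLOG{} machine $M_0$ which, on input a bounded-degree representation $w$ of a $\G$-network $F$ on $n$ nodes, outputs a bounded-degree representation $w'_0\in\mathcal{F}_0^*$ of some network $A\in\mathcal{F}_0\subseteq\mathcal{F}$, together with the time constant $T$ and (a representation of) a block embedding $\phi$, such that $A$ simulates $F$ in time $T$ under $\phi$ and $A$ has $n_F=S(n)$ nodes.

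The only gap between this and the statement of the corollary is that the simulating network is produced in the bounded-degree representation $\mathcal{F}_0^*$ rather than in the CSAN standard representation $\mathcal{F}^*$. To close it, I would compose $M_0$ with the representation conversion of Remark~\ref{rem:csan-bounded-degree}: since $\mathcal{F}$ is a CSAN family, there is a \DLOG{} algorithm transforming any bounded-degree representation of a network of $\mathcal{F}$ into a CSAN representation of the \emph{same} abstract network. Applying it to $w'_0$ produces some $w'\in L_{\mathcal{F}}$ with $F_{w'}=A\in\mathcal{F}$. Because \DLOG{} is closed under composition, the composite machine $M$ that runs $M_0$ and then this conversion is again a \DLOG{} machine; it outputs $w'\in L_{\mathcal{F}}$ together with the unchanged time constant $T$ and block embedding $\phi$.

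It then remains to observe that all the simulation data are preserved by the change of representation. The abstract automata network $F_{w'}=A$ is literally the network produced by Lemma~\ref{lem:from-gadgets-to-networks}, so the block embedding $\phi$, the defining identity $\phi\circ F = A^T\circ\phi$, the time constant $T$, and the node count $n_F=S(n)$ are all left intact by merely re-encoding $A$. Hence $M$ witnesses, in the sense of Definition~\ref{def:simu-family}, that $\mathcal{F}^*$ simulates $\Gamma(\G)$ in time $T$ (a constant map) and space $S$ (bounded by a linear map), which is exactly the claim.

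I do not expect any genuine obstacle: the entire combinatorial content already lives in Lemma~\ref{lem:from-gadgets-to-networks}, and the only points requiring care are bookkeeping checks against Definition~\ref{def:simu-family} — namely that swapping the simulator's representation through a \DLOG{} procedure preserves the block embedding, the time constant, and the space bound, and that the composition of the two \DLOG{} procedures remains in \DLOG{}.
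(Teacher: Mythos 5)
Your proposal is correct and follows exactly the route the paper intends: the paper derives this corollary from Lemma~\ref{lem:from-gadgets-to-networks} "using Remark~\ref{rem:csan-bounded-degree}", i.e.\ by converting the bounded-degree representation of the simulator into a CSAN representation via a \DLOG{} procedure, which is precisely your argument. The bookkeeping you flag (closure of \DLOG{} under composition, invariance of the block embedding, time constant and space bound under re-encoding) is all that is needed, and you handle it correctly.
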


\newcommand\Gmon{\G_{m}}
\newcommand\Gmontwo{\G_{m,2}}
\subsection{$\Gmon$-networks and $\Gmontwo$-networks as standard universal families}
\label{sec:gmonuniv}
\newcommand\gateOR{\mathrm{OR}}
\newcommand\gateAND{\mathrm{AND}}
\newcommand\gateCOPY{\mathrm{COPY}}
Let $i,o \in \{1,2\}$ be two numbers. We define the functions $\gateOR, \gateAND: \{0,1\}^i \to \{0,1\}^o$ where $\gateOR(x) = \max(x)$ and $\gateAND(x) = \min(x)$. Note that in the case in which $i = o = 1$ we have $\gateAND(x) = \gateOR(x) = \text{Id}(x) = x$ and also in the case $i = 1$ and $o=2$ we have that $\gateAND(x)=\gateOR(x) = (x, x).$ We define the set $\Gmon = \{\gateAND, \gateOR\}.$ Observe that in this case $o$ and $i$ may take different values. In addition, we define the set $\Gmontwo$ in which we fix $i=o=2.$

It is folklore knowledge that monotone Boolean networks (with AND/OR local maps) can simulate any other network.
Here we make this statement precise within our formalism: $\Gmon$-networks are strongly universal.
Note that there is more work than the classical circuit transformations involving monotone gates because we need to obtain a simulation of any automata network via block embedding. In particular we need to build monotone circuitry that is synchronized and reusable (\textit{i.e.} that can be reinitialized to a standard state before starting a computation on a new input). Moreover, our definitions requires a production of $\Gmon$-networks in \DLOG{}.
The main ingredient for establishing universality of $\Gmon$-networks is an efficient circuit transformation due to Greenlaw, Hoover and Ruzzo in \cite[Theorems 6.2.3 to 6.2.5]{Greenlaw_1995}.
Let us start by proving that this family is strongly universal, which is slightly simpler to prove.

\begin{theorem}\label{theo:gmon-univ}
  The family $\Gamma(\mathcal{G}_m)$ of all $\Gmon$-networks is strongly universal.
\end{theorem}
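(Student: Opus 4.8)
The plan is to prove strong universality directly from Definition~\ref{def:universal}, i.e.\ to construct, for any fixed alphabet $Q$ and degree bound $\Delta$ and any ${F\in\mathcal{B}_{Q,\Delta}}$ given by its bounded degree representation, a $\Gmon$-network $N$ together with a block embedding $\phi$ such that ${\phi\circ F = N^T\circ\phi}$ for a constant $T$, with $N$ of size linear in the number of nodes of $F$ and produced in \DLOG{}. Since $Q$ is fixed, I first encode each state ${q\in Q}$ in \emph{double rail} binary, i.e.\ as $\lceil\log_2|Q|\rceil$ pairs of bits $(b,\overline b)$ with ${\overline b = 1-b}$; this gives an injective coding ${q\mapsto s_q}$ which will define $\phi$ on the storage cells. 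The point of double rail is that the logical negation needed inside local maps becomes a pure wiring operation (swapping the two rails), so that no NOT gate is ever required and everything stays within ${\{\gateAND,\gateOR,\gateCOPY\}\subseteq\Gmon}$ (recall that for ${i=1,o=2}$ the gates of $\Gmon$ realise fan-out/$\gateCOPY$).

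Next, for each node $v$ of $F$ I turn its local map ${F_v:Q^{d}\to Q}$ (with ${d\leq\Delta}$) into a monotone Boolean circuit operating on the double-rail encodings of the states of $v$ and its in-neighbours. Because $\Delta$ and $|Q|$ are fixed, $F_v$ is a Boolean map on a bounded number of bits, hence realisable by a monotone double-rail circuit of constant size and constant depth; here I invoke the efficient circuit transformation of Greenlaw, Hoover and Ruzzo \cite[Theorems 6.2.3--6.2.5]{Greenlaw_1995}, which not only produces a monotone circuit but also a \emph{synchronous} one (layered, all input-to-output paths of equal length). Wiring one such local-map circuit per node, following the communication graph $G$ of $F$ (the stored rails of $x_u$ feed, through $\gateCOPY$ fan-out, the circuits of all $v$ having $u$ as in-neighbour) and padding every path with identity/$\gateCOPY$ gates to a common length, yields a $\Gmon$-network computing one application of the global map $F$ with a uniform propagation delay $T$ depending only on $Q$ and $\Delta$.

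The central difficulty, and the reason the naive ``wire outputs to inputs'' construction fails, is \emph{synchronisation and reusability}: under the parallel update of a $\Gmon$-network a plain feedback loop pipelines successive computations and never exhibits a clean configuration to read off. I therefore make the circuit iterable by storing the computed configuration in monotone (double-rail) latches and feeding it back only at the end of each length-$T$ sweep, while forcing all auxiliary computation nodes to return to a fixed standard value; the synchronous layering guaranteed above is exactly what makes the whole network run in lockstep macro-steps of constant length $T$ and return to a standard state between them (local period-$T$ latches, all initialised in phase by $\phi$ and drifting nowhere since the dynamics is deterministic, avoid any need for a global clock broadcast). I would then define the block embedding $\phi$ by letting $D_v$ be the latch cells holding the double-rail code $s_{x_v}$, and letting the (constant per node, hence linear total) auxiliary circuitry form the context set fixed to its standard values; injectivity of $\phi$ follows from injectivity of ${q\mapsto s_q}$. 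Correctness of the local circuits together with the lockstep behaviour then gives ${\phi\circ F = N^T\circ\phi}$ on $Q^{V}$.

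Finally I would check the two quantitative requirements of strong universality. The size of $N$ is the sum over nodes of a constant-size gadget plus constant-degree inter-block wiring, hence linear in $n$, and $T$ is a constant (so both space and time maps are linear). The construction is \DLOG{}: each gadget depends only on the bounded object $F_v$ and its placement only on the local neighbourhood of $v$ in $G$, so the bounded degree representation of $N$ --- which by Remark~\ref{rem:represent-g-networks} is a legitimate representation of a $\Gmon$-network --- and the block embedding are emitted node by node using only logarithmic work. This establishes that $\Gamma(\mathcal{G}_m)$ simulates every $\mathcal{B}_{Q,\Delta}$ in linear time and space, i.e.\ is strongly universal. The main obstacle throughout is the synchronisation/reusability step: turning a monotone \emph{circuit} into a monotone \emph{automata network} that iterates cleanly, which is precisely where the layered output of \cite{Greenlaw_1995} and the construction of monotone latches with a reset mechanism are needed.
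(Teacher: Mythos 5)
Your overall route is the paper's: double-rail (inverse-free) coding of states, the Greenlaw--Hoover--Ruzzo monotone synchronous circuit transformation, a per-node constant-size circuit wired along the communication graph and padded to uniform depth, outputs fed back to inputs, and a block embedding whose blocks are the input cells and whose context is the remaining circuitry. Up to that point the two arguments coincide, including the \DLOG{} and linear-size accounting.

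The divergence, and the gap, is in how you make the network iterate cleanly. You propose ``monotone (double-rail) latches'' governed by ``local period-$T$ latches'' that load the new value only at the end of each sweep and force the auxiliary nodes back to a standard value --- but you never construct these objects, and building a value-holding, periodically-overwritten register out of $\gateAND$, $\gateOR$ and $\gateCOPY$ under parallel update is exactly the nontrivial content of this step: a monotone self-loop absorbs to a constant rather than holding a value; a hold/load multiplexer needs a clock signal \emph{and its complement}, i.e.\ further double-rail machinery you have not specified; and you must also verify that the \emph{entire} context returns to its standard pattern, not just the blocks $D_v$, for the equality $\phi\circ F = N^{T}\circ\phi$ to hold as an identity of configurations. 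The paper needs none of this apparatus: the block embedding sets every layer of the synchronous circuit to the all-zero pattern except the layer containing the inputs. Since $\gateAND$ and $\gateOR$ applied to all-zero inputs yield zero, a configuration with a single non-zero layer $L_i$ evolves in one step to a configuration whose single non-zero layer is $L_{i+1\bmod T}$ (where $T$ is the number of layers): the encoded data travels as a self-cleaning wave around the ring of layers and returns to the input layer after $T$ steps carrying $\phi(F(x))$, with everything else back to zero. Replacing your latch mechanism by this initialization-to-zero argument closes the gap; as written, the reusability step is asserted rather than proved.
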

\begin{proof}
Let $Q$ an arbitrary alphabet and $F:Q^{n} \to Q^{n}$ an arbitrary automata network on alphabet $Q$ such that the communication graph of $F$ has maximum degree $\Delta.$  Let $C:\{0,1\}^{n} \to \{0,1\}^{n}$ be a constant depth circuit representing $F$. Let us assume that $C$  has only  OR, AND and NOT gates. We can also assume that $C$ is synchronous because, as its depth does not depend on the size of the circuit, one can always add fanin one and fanout one OR gates in order to modify layer structure. We are going to use a very similar transformation to the one proposed in \cite[Theorem 6.2.3]{Greenlaw_1995} in order to efficiently construct an automata network in $\Gamma(\mathcal{G}_m)$. In fact, we are going to duplicate the original circuit by considering the coding $x \in \{0,1\} \to (x, 1-x) \in \{0,1\}^{2}.$ Roughly, each gate will have a positive part (which is essentially a copy) and a negative part which is produces the negation of the original output by using De Morgan's laws.  More precisely, we are going to replace each gate in the network by the gadgets shown in Figure \ref{fig:ANDORgatesgad}.  The main idea is that one can represent the function $x \wedge y$ by the coding: $(x \wedge y, \overline{x} \vee \overline{y})$ and $x \vee y$ by the coding: $(x \vee y, \overline{x} \wedge \overline{y}).$ In addition, each time there is a NOT gate, we replace it by a fan in $1$ fan out $1$ OR gadget and we connect positive outputs to negative inputs in the next layer and negative outputs to positive inputs as it is shown in Figure \ref{fig:NOTgatesgad}.  
We are going to call $C^*$ to the circuit constructed by latter transformations. Observe that $C^*$ is such that it holds on $\{0,1\}^i$: 
  \[\phi\circ C = C^*\circ\phi\]
  where $\phi : \{0,1\}^n\to\{0,1\}^{2n}$ is defined for any $n$ by ${\phi(x)_{2j}=x_j}$ and ${\phi(x)_{2j+1}=\neg x_j}$.

  Now consider the coding map ${m_Q:Q\to\{0,1\}^k}$ and let ${n=k|V|}$. Build from $C^*$ the $\Gmontwo$-network $F^*:\{0,1\}^{V^{+}} \to \{0,1\}^{V^{+}}$ that correspond to it (gate by gate) and where the output $j$ is wired to input $j$ for all ${1\leq j\leq 2n}$. Define a block embedding of $Q^V$ into ${\{0,1\}^{V^+}}$ as follows (see Remark~\ref{rem:bloc-simul}):
  \begin{itemize}
  \item for each $v\in V$ let $D_v$ be the set of input nodes in $F^*$ that code $v$ (via $m_Q$ and then double railed logic),
  \item let $C=V^+\setminus \bigcup_v D_v$ be the remaining context block,
  \item let ${p_{v,q}\in \{0,1\}^{D_v}}$ be the pattern coding node $v$ in state $q$,
  \item let $p_C = 0^C$ be the context pattern,
  \item let $\phi: Q^V\to \{0,1\}^{V^+}$ be the associated block embedding map.
  \end{itemize}
  We claim that $F^*$ simulates $F$ via block embedding $\phi$ with time constant equal to the depth of $C^*$ plus $1$. Indeed, $F^* $ can be seen as a directed cycle of $N$ layers where layer $L_{i+1\bmod N}$ only depends on layer $i$. The block embedding is such that for any configuration ${x\in Q^V}$, ${\phi(x)}$ is $0$ on each layer except the layer containing the inputs. On configurations where a single layer $L_i$ is non-zero, $F^*$ will produce a configuration where the only non-zero layer is $L_{i+1\bmod N}$. From there, it follows by construction of $F^*$ that ${\phi\circ F (x) = (F^*)^N\circ\phi(x)}$ for all ${x\in Q^V}$.

The fact that construction is obtainable in $\DLOG$ follows from the same reasoning used to show in \cite[Theorem 6.2.3]{Greenlaw_1995}. In fact, authors show that reduction is actually better as they show it is $\NC^{1}.$  
\end{proof}
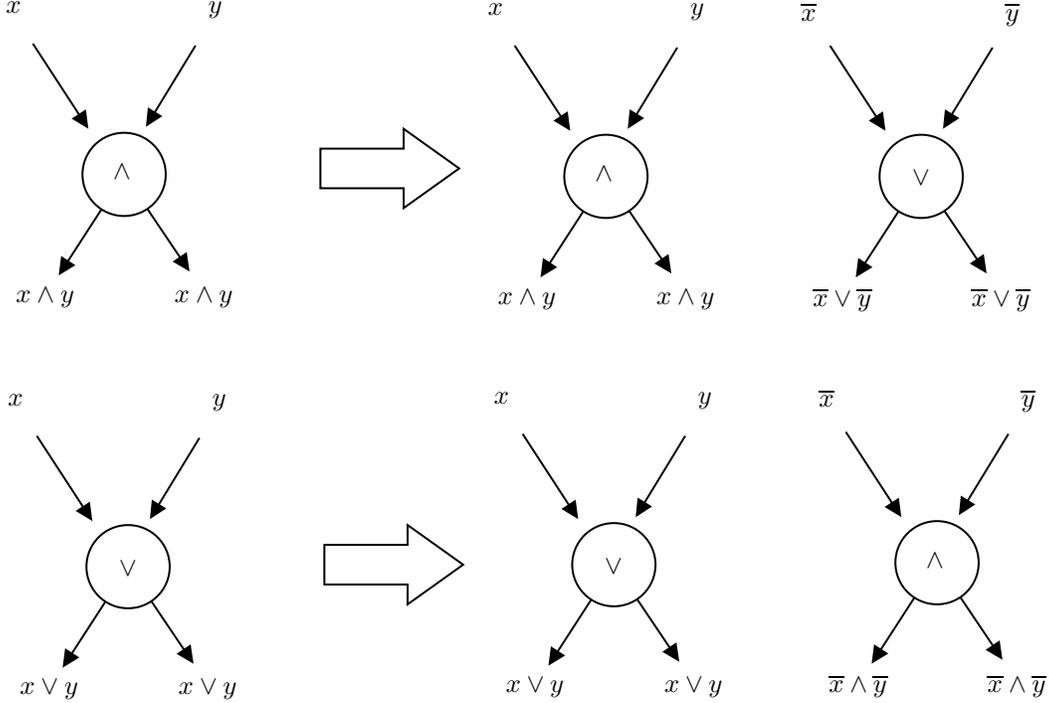
\begin{figure}

\centering

\tikzset{every picture/.style={line width=0.75pt}} 

\begin{tikzpicture}[x=0.75pt,y=0.75pt,yscale=-1,xscale=1]


\draw    (38,34) -- (64.36,74.49) ;
\draw [shift={(66,77)}, rotate = 236.93] [fill={rgb, 255:red, 0; green, 0; blue, 0 }  ][line width=0.08]  [draw opacity=0] (8.93,-4.29) -- (0,0) -- (8.93,4.29) -- cycle    ;
\draw    (120,35) -- (96.56,73.44) ;
\draw [shift={(95,76)}, rotate = 301.37] [fill={rgb, 255:red, 0; green, 0; blue, 0 }  ][line width=0.08]  [draw opacity=0] (8.93,-4.29) -- (0,0) -- (8.93,4.29) -- cycle    ;
\draw    (84,100) -- (115.32,146.51) ;
\draw [shift={(117,149)}, rotate = 236.04] [fill={rgb, 255:red, 0; green, 0; blue, 0 }  ][line width=0.08]  [draw opacity=0] (8.93,-4.29) -- (0,0) -- (8.93,4.29) -- cycle    ;
\draw    (84,100) -- (52.63,148.48) ;
\draw [shift={(51,151)}, rotate = 302.90999999999997] [fill={rgb, 255:red, 0; green, 0; blue, 0 }  ][line width=0.08]  [draw opacity=0] (8.93,-4.29) -- (0,0) -- (8.93,4.29) -- cycle    ;
\draw  [fill={rgb, 255:red, 255; green, 255; blue, 255 }  ,fill opacity=1 ] (63,100) .. controls (63,88.4) and (72.4,79) .. (84,79) .. controls (95.6,79) and (105,88.4) .. (105,100) .. controls (105,111.6) and (95.6,121) .. (84,121) .. controls (72.4,121) and (63,111.6) .. (63,100) -- cycle ;

\draw    (40,232) -- (66.36,272.49) ;
\draw [shift={(68,275)}, rotate = 236.93] [fill={rgb, 255:red, 0; green, 0; blue, 0 }  ][line width=0.08]  [draw opacity=0] (8.93,-4.29) -- (0,0) -- (8.93,4.29) -- cycle    ;
\draw    (122,233) -- (98.56,271.44) ;
\draw [shift={(97,274)}, rotate = 301.37] [fill={rgb, 255:red, 0; green, 0; blue, 0 }  ][line width=0.08]  [draw opacity=0] (8.93,-4.29) -- (0,0) -- (8.93,4.29) -- cycle    ;
\draw    (86,298) -- (117.32,344.51) ;
\draw [shift={(119,347)}, rotate = 236.04] [fill={rgb, 255:red, 0; green, 0; blue, 0 }  ][line width=0.08]  [draw opacity=0] (8.93,-4.29) -- (0,0) -- (8.93,4.29) -- cycle    ;
\draw    (86,298) -- (54.63,346.48) ;
\draw [shift={(53,349)}, rotate = 302.90999999999997] [fill={rgb, 255:red, 0; green, 0; blue, 0 }  ][line width=0.08]  [draw opacity=0] (8.93,-4.29) -- (0,0) -- (8.93,4.29) -- cycle    ;
\draw  [fill={rgb, 255:red, 255; green, 255; blue, 255 }  ,fill opacity=1 ] (65,298) .. controls (65,286.4) and (74.4,277) .. (86,277) .. controls (97.6,277) and (107,286.4) .. (107,298) .. controls (107,309.6) and (97.6,319) .. (86,319) .. controls (74.4,319) and (65,309.6) .. (65,298) -- cycle ;
\draw   (183,87) -- (225,87) -- (225,77) -- (253,97) -- (225,117) -- (225,107) -- (183,107) -- cycle ;
\draw    (281,35) -- (307.36,75.49) ;
\draw [shift={(309,78)}, rotate = 236.93] [fill={rgb, 255:red, 0; green, 0; blue, 0 }  ][line width=0.08]  [draw opacity=0] (8.93,-4.29) -- (0,0) -- (8.93,4.29) -- cycle    ;
\draw    (363,36) -- (339.56,74.44) ;
\draw [shift={(338,77)}, rotate = 301.37] [fill={rgb, 255:red, 0; green, 0; blue, 0 }  ][line width=0.08]  [draw opacity=0] (8.93,-4.29) -- (0,0) -- (8.93,4.29) -- cycle    ;
\draw    (327,101) -- (358.32,147.51) ;
\draw [shift={(360,150)}, rotate = 236.04] [fill={rgb, 255:red, 0; green, 0; blue, 0 }  ][line width=0.08]  [draw opacity=0] (8.93,-4.29) -- (0,0) -- (8.93,4.29) -- cycle    ;
\draw    (327,101) -- (295.63,149.48) ;
\draw [shift={(294,152)}, rotate = 302.90999999999997] [fill={rgb, 255:red, 0; green, 0; blue, 0 }  ][line width=0.08]  [draw opacity=0] (8.93,-4.29) -- (0,0) -- (8.93,4.29) -- cycle    ;
\draw  [fill={rgb, 255:red, 255; green, 255; blue, 255 }  ,fill opacity=1 ] (306,101) .. controls (306,89.4) and (315.4,80) .. (327,80) .. controls (338.6,80) and (348,89.4) .. (348,101) .. controls (348,112.6) and (338.6,122) .. (327,122) .. controls (315.4,122) and (306,112.6) .. (306,101) -- cycle ;

\draw    (440,35) -- (466.36,75.49) ;
\draw [shift={(468,78)}, rotate = 236.93] [fill={rgb, 255:red, 0; green, 0; blue, 0 }  ][line width=0.08]  [draw opacity=0] (8.93,-4.29) -- (0,0) -- (8.93,4.29) -- cycle    ;
\draw    (522,36) -- (498.56,74.44) ;
\draw [shift={(497,77)}, rotate = 301.37] [fill={rgb, 255:red, 0; green, 0; blue, 0 }  ][line width=0.08]  [draw opacity=0] (8.93,-4.29) -- (0,0) -- (8.93,4.29) -- cycle    ;
\draw    (486,101) -- (517.32,147.51) ;
\draw [shift={(519,150)}, rotate = 236.04] [fill={rgb, 255:red, 0; green, 0; blue, 0 }  ][line width=0.08]  [draw opacity=0] (8.93,-4.29) -- (0,0) -- (8.93,4.29) -- cycle    ;
\draw    (486,101) -- (454.63,149.48) ;
\draw [shift={(453,152)}, rotate = 302.90999999999997] [fill={rgb, 255:red, 0; green, 0; blue, 0 }  ][line width=0.08]  [draw opacity=0] (8.93,-4.29) -- (0,0) -- (8.93,4.29) -- cycle    ;
\draw  [fill={rgb, 255:red, 255; green, 255; blue, 255 }  ,fill opacity=1 ] (465,101) .. controls (465,89.4) and (474.4,80) .. (486,80) .. controls (497.6,80) and (507,89.4) .. (507,101) .. controls (507,112.6) and (497.6,122) .. (486,122) .. controls (474.4,122) and (465,112.6) .. (465,101) -- cycle ;

\draw    (285,231) -- (311.36,271.49) ;
\draw [shift={(313,274)}, rotate = 236.93] [fill={rgb, 255:red, 0; green, 0; blue, 0 }  ][line width=0.08]  [draw opacity=0] (8.93,-4.29) -- (0,0) -- (8.93,4.29) -- cycle    ;
\draw    (367,232) -- (343.56,270.44) ;
\draw [shift={(342,273)}, rotate = 301.37] [fill={rgb, 255:red, 0; green, 0; blue, 0 }  ][line width=0.08]  [draw opacity=0] (8.93,-4.29) -- (0,0) -- (8.93,4.29) -- cycle    ;
\draw    (331,297) -- (362.32,343.51) ;
\draw [shift={(364,346)}, rotate = 236.04] [fill={rgb, 255:red, 0; green, 0; blue, 0 }  ][line width=0.08]  [draw opacity=0] (8.93,-4.29) -- (0,0) -- (8.93,4.29) -- cycle    ;
\draw    (331,297) -- (299.63,345.48) ;
\draw [shift={(298,348)}, rotate = 302.90999999999997] [fill={rgb, 255:red, 0; green, 0; blue, 0 }  ][line width=0.08]  [draw opacity=0] (8.93,-4.29) -- (0,0) -- (8.93,4.29) -- cycle    ;
\draw  [fill={rgb, 255:red, 255; green, 255; blue, 255 }  ,fill opacity=1 ] (310,297) .. controls (310,285.4) and (319.4,276) .. (331,276) .. controls (342.6,276) and (352,285.4) .. (352,297) .. controls (352,308.6) and (342.6,318) .. (331,318) .. controls (319.4,318) and (310,308.6) .. (310,297) -- cycle ;

\draw    (448,230) -- (474.36,270.49) ;
\draw [shift={(476,273)}, rotate = 236.93] [fill={rgb, 255:red, 0; green, 0; blue, 0 }  ][line width=0.08]  [draw opacity=0] (8.93,-4.29) -- (0,0) -- (8.93,4.29) -- cycle    ;
\draw    (530,231) -- (506.56,269.44) ;
\draw [shift={(505,272)}, rotate = 301.37] [fill={rgb, 255:red, 0; green, 0; blue, 0 }  ][line width=0.08]  [draw opacity=0] (8.93,-4.29) -- (0,0) -- (8.93,4.29) -- cycle    ;
\draw    (494,296) -- (525.32,342.51) ;
\draw [shift={(527,345)}, rotate = 236.04] [fill={rgb, 255:red, 0; green, 0; blue, 0 }  ][line width=0.08]  [draw opacity=0] (8.93,-4.29) -- (0,0) -- (8.93,4.29) -- cycle    ;
\draw    (494,296) -- (462.63,344.48) ;
\draw [shift={(461,347)}, rotate = 302.90999999999997] [fill={rgb, 255:red, 0; green, 0; blue, 0 }  ][line width=0.08]  [draw opacity=0] (8.93,-4.29) -- (0,0) -- (8.93,4.29) -- cycle    ;
\draw  [fill={rgb, 255:red, 255; green, 255; blue, 255 }  ,fill opacity=1 ] (473,296) .. controls (473,284.4) and (482.4,275) .. (494,275) .. controls (505.6,275) and (515,284.4) .. (515,296) .. controls (515,307.6) and (505.6,317) .. (494,317) .. controls (482.4,317) and (473,307.6) .. (473,296) -- cycle ;

\draw   (185,287) -- (227,287) -- (227,277) -- (255,297) -- (227,317) -- (227,307) -- (185,307) -- cycle ;

\draw (72+5,88.4+5) node [anchor=north west][inner sep=0.75pt]    {$\land $};
\draw (23,11.4) node [anchor=north west][inner sep=0.75pt]    {$x$};
\draw (125,11.4) node [anchor=north west][inner sep=0.75pt]    {$y$};
\draw (28,155.4) node [anchor=north west][inner sep=0.75pt]    {$x\land y$};
\draw (108,155.4) node [anchor=north west][inner sep=0.75pt]    {$x\land y$};
\draw (75+5,288.4+5) node [anchor=north west][inner sep=0.75pt]    {$\lor $};
\draw (24,209.4) node [anchor=north west][inner sep=0.75pt]    {$x$};
\draw (127,209.4) node [anchor=north west][inner sep=0.75pt]    {$y$};
\draw (30,353.4) node [anchor=north west][inner sep=0.75pt]    {$x\lor y$};
\draw (110,353.4) node [anchor=north west][inner sep=0.75pt]    {$x\lor y$};
\draw (351,156.4) node [anchor=north west][inner sep=0.75pt]    {$x\land y$};
\draw (271,156.4) node [anchor=north west][inner sep=0.75pt]    {$x\land y$};
\draw (368,12.4) node [anchor=north west][inner sep=0.75pt]    {$y$};
\draw (266,12.4) node [anchor=north west][inner sep=0.75pt]    {$x$};
\draw (315+5,89.4+5) node [anchor=north west][inner sep=0.75pt]    {$\land $};
\draw (510,156.4) node [anchor=north west][inner sep=0.75pt]    {$\overline{x} \lor \overline{y}$};
\draw (430,156.4) node [anchor=north west][inner sep=0.75pt]    {$\overline{x} \lor \overline{y}$};
\draw (527,12.4) node [anchor=north west][inner sep=0.75pt]    {$\overline{y}$};
\draw (424,12.4) node [anchor=north west][inner sep=0.75pt]    {$\overline{x}$};
\draw (475+5,91.4+5) node [anchor=north west][inner sep=0.75pt]    {$\lor $};
\draw (355,352.4) node [anchor=north west][inner sep=0.75pt]    {$x\lor y$};
\draw (275,352.4) node [anchor=north west][inner sep=0.75pt]    {$x\lor y$};
\draw (372,208.4) node [anchor=north west][inner sep=0.75pt]    {$y$};
\draw (269,208.4) node [anchor=north west][inner sep=0.75pt]    {$x$};
\draw (320+5,287.4+5) node [anchor=north west][inner sep=0.75pt]    {$\lor $};
\draw (518,351.4) node [anchor=north west][inner sep=0.75pt]    {$\overline{x} \land \overline{y}$};
\draw (438,351.4) node [anchor=north west][inner sep=0.75pt]    {$\overline{x} \land \overline{y}$};
\draw (535,207.4) node [anchor=north west][inner sep=0.75pt]    {$\overline{y}$};
\draw (433,207.4) node [anchor=north west][inner sep=0.75pt]    {$\overline{x}$};
\draw (482+5,284.4+5) node [anchor=north west][inner sep=0.75pt]    {$\land $};

\end{tikzpicture}
\caption{AND and OR gadgets for simulating AND/OR gates with fanin and fanout 2. For other values of fanin and fanout gadgets are the same but considering different number of inputs/outputs}
\label{fig:ANDORgatesgad}
\end{figure}
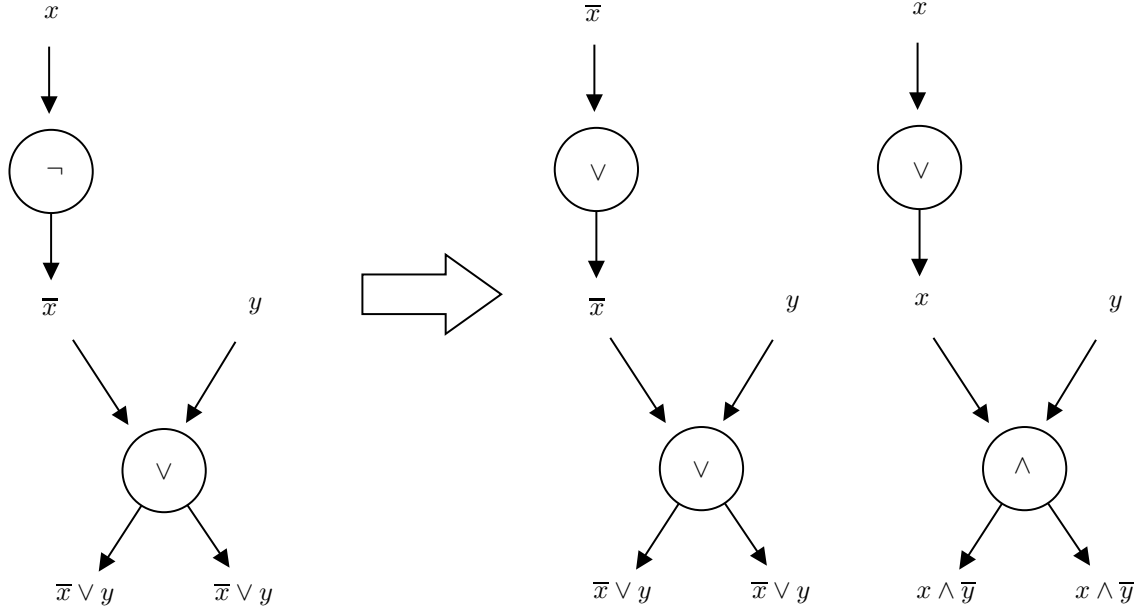
\begin{figure}
\centering

\tikzset{every picture/.style={line width=0.75pt}} 

\begin{tikzpicture}[x=0.75pt,y=0.75pt,yscale=-1,xscale=1]

\draw   (202,181) -- (244,181) -- (244,171) -- (272,191) -- (244,211) -- (244,201) -- (202,201) -- cycle ;
\draw    (44,66) -- (44,97) ;
\draw [shift={(44,100)}, rotate = 270] [fill={rgb, 255:red, 0; green, 0; blue, 0 }  ][line width=0.08]  [draw opacity=0] (8.93,-4.29) -- (0,0) -- (8.93,4.29) -- cycle    ;
\draw    (45,130) -- (45,181) ;
\draw [shift={(45,184)}, rotate = 270] [fill={rgb, 255:red, 0; green, 0; blue, 0 }  ][line width=0.08]  [draw opacity=0] (8.93,-4.29) -- (0,0) -- (8.93,4.29) -- cycle    ;
\draw  [fill={rgb, 255:red, 255; green, 255; blue, 255 }  ,fill opacity=1 ] (24,129) .. controls (24,117.4) and (33.4,108) .. (45,108) .. controls (56.6,108) and (66,117.4) .. (66,129) .. controls (66,140.6) and (56.6,150) .. (45,150) .. controls (33.4,150) and (24,140.6) .. (24,129) -- cycle ;

\draw    (482,64) -- (482,95) ;
\draw [shift={(482,98)}, rotate = 270] [fill={rgb, 255:red, 0; green, 0; blue, 0 }  ][line width=0.08]  [draw opacity=0] (8.93,-4.29) -- (0,0) -- (8.93,4.29) -- cycle    ;
\draw    (483,128) -- (483,179) ;
\draw [shift={(483,182)}, rotate = 270] [fill={rgb, 255:red, 0; green, 0; blue, 0 }  ][line width=0.08]  [draw opacity=0] (8.93,-4.29) -- (0,0) -- (8.93,4.29) -- cycle    ;
\draw  [fill={rgb, 255:red, 255; green, 255; blue, 255 }  ,fill opacity=1 ] (462,127) .. controls (462,115.4) and (471.4,106) .. (483,106) .. controls (494.6,106) and (504,115.4) .. (504,127) .. controls (504,138.6) and (494.6,148) .. (483,148) .. controls (471.4,148) and (462,138.6) .. (462,127) -- cycle ;
\draw    (319,65) -- (319,96) ;
\draw [shift={(319,99)}, rotate = 270] [fill={rgb, 255:red, 0; green, 0; blue, 0 }  ][line width=0.08]  [draw opacity=0] (8.93,-4.29) -- (0,0) -- (8.93,4.29) -- cycle    ;
\draw    (320,129) -- (320,180) ;
\draw [shift={(320,183)}, rotate = 270] [fill={rgb, 255:red, 0; green, 0; blue, 0 }  ][line width=0.08]  [draw opacity=0] (8.93,-4.29) -- (0,0) -- (8.93,4.29) -- cycle    ;
\draw  [fill={rgb, 255:red, 255; green, 255; blue, 255 }  ,fill opacity=1 ] (299,128) .. controls (299,116.4) and (308.4,107) .. (320,107) .. controls (331.6,107) and (341,116.4) .. (341,128) .. controls (341,139.6) and (331.6,149) .. (320,149) .. controls (308.4,149) and (299,139.6) .. (299,128) -- cycle ;
\draw    (327,213) -- (353.36,253.49) ;
\draw [shift={(355,256)}, rotate = 236.93] [fill={rgb, 255:red, 0; green, 0; blue, 0 }  ][line width=0.08]  [draw opacity=0] (8.93,-4.29) -- (0,0) -- (8.93,4.29) -- cycle    ;
\draw    (409,214) -- (385.56,252.44) ;
\draw [shift={(384,255)}, rotate = 301.37] [fill={rgb, 255:red, 0; green, 0; blue, 0 }  ][line width=0.08]  [draw opacity=0] (8.93,-4.29) -- (0,0) -- (8.93,4.29) -- cycle    ;
\draw    (373,279) -- (404.32,325.51) ;
\draw [shift={(406,328)}, rotate = 236.04] [fill={rgb, 255:red, 0; green, 0; blue, 0 }  ][line width=0.08]  [draw opacity=0] (8.93,-4.29) -- (0,0) -- (8.93,4.29) -- cycle    ;
\draw    (373,279) -- (341.63,327.48) ;
\draw [shift={(340,330)}, rotate = 302.90999999999997] [fill={rgb, 255:red, 0; green, 0; blue, 0 }  ][line width=0.08]  [draw opacity=0] (8.93,-4.29) -- (0,0) -- (8.93,4.29) -- cycle    ;
\draw  [fill={rgb, 255:red, 255; green, 255; blue, 255 }  ,fill opacity=1 ] (352,279) .. controls (352,267.4) and (361.4,258) .. (373,258) .. controls (384.6,258) and (394,267.4) .. (394,279) .. controls (394,290.6) and (384.6,300) .. (373,300) .. controls (361.4,300) and (352,290.6) .. (352,279) -- cycle ;
\draw    (490,213) -- (516.36,253.49) ;
\draw [shift={(518,256)}, rotate = 236.93] [fill={rgb, 255:red, 0; green, 0; blue, 0 }  ][line width=0.08]  [draw opacity=0] (8.93,-4.29) -- (0,0) -- (8.93,4.29) -- cycle    ;
\draw    (572,214) -- (548.56,252.44) ;
\draw [shift={(547,255)}, rotate = 301.37] [fill={rgb, 255:red, 0; green, 0; blue, 0 }  ][line width=0.08]  [draw opacity=0] (8.93,-4.29) -- (0,0) -- (8.93,4.29) -- cycle    ;
\draw    (536,279) -- (567.32,325.51) ;
\draw [shift={(569,328)}, rotate = 236.04] [fill={rgb, 255:red, 0; green, 0; blue, 0 }  ][line width=0.08]  [draw opacity=0] (8.93,-4.29) -- (0,0) -- (8.93,4.29) -- cycle    ;
\draw    (536,279) -- (504.63,327.48) ;
\draw [shift={(503,330)}, rotate = 302.90999999999997] [fill={rgb, 255:red, 0; green, 0; blue, 0 }  ][line width=0.08]  [draw opacity=0] (8.93,-4.29) -- (0,0) -- (8.93,4.29) -- cycle    ;
\draw  [fill={rgb, 255:red, 255; green, 255; blue, 255 }  ,fill opacity=1 ] (515,279) .. controls (515,267.4) and (524.4,258) .. (536,258) .. controls (547.6,258) and (557,267.4) .. (557,279) .. controls (557,290.6) and (547.6,300) .. (536,300) .. controls (524.4,300) and (515,290.6) .. (515,279) -- cycle ;

\draw    (56,214) -- (82.36,254.49) ;
\draw [shift={(84,257)}, rotate = 236.93] [fill={rgb, 255:red, 0; green, 0; blue, 0 }  ][line width=0.08]  [draw opacity=0] (8.93,-4.29) -- (0,0) -- (8.93,4.29) -- cycle    ;
\draw    (138,215) -- (114.56,253.44) ;
\draw [shift={(113,256)}, rotate = 301.37] [fill={rgb, 255:red, 0; green, 0; blue, 0 }  ][line width=0.08]  [draw opacity=0] (8.93,-4.29) -- (0,0) -- (8.93,4.29) -- cycle    ;
\draw    (102,280) -- (133.32,326.51) ;
\draw [shift={(135,329)}, rotate = 236.04] [fill={rgb, 255:red, 0; green, 0; blue, 0 }  ][line width=0.08]  [draw opacity=0] (8.93,-4.29) -- (0,0) -- (8.93,4.29) -- cycle    ;
\draw    (102,280) -- (70.63,328.48) ;
\draw [shift={(69,331)}, rotate = 302.90999999999997] [fill={rgb, 255:red, 0; green, 0; blue, 0 }  ][line width=0.08]  [draw opacity=0] (8.93,-4.29) -- (0,0) -- (8.93,4.29) -- cycle    ;
\draw  [fill={rgb, 255:red, 255; green, 255; blue, 255 }  ,fill opacity=1 ] (81,280) .. controls (81,268.4) and (90.4,259) .. (102,259) .. controls (113.6,259) and (123,268.4) .. (123,280) .. controls (123,291.6) and (113.6,301) .. (102,301) .. controls (90.4,301) and (81,291.6) .. (81,280) -- cycle ;

\draw (91+5,270.4+5) node [anchor=north west][inner sep=0.75pt]    {$\lor $};
\draw (143,191.4) node [anchor=north west][inner sep=0.75pt]    {$y$};
\draw (46,335.4) node [anchor=north west][inner sep=0.75pt]    {$\overline{x} \lor y$};
\draw (126,335.4) node [anchor=north west][inner sep=0.75pt]    {$\overline{x} \lor y$};
\draw (313,43.4) node [anchor=north west][inner sep=0.75pt]    {$\overline{x}$};
\draw (362+5,269.4+5) node [anchor=north west][inner sep=0.75pt]    {$\lor $};
\draw (414,190.4) node [anchor=north west][inner sep=0.75pt]    {$y$};
\draw (317,334.4) node [anchor=north west][inner sep=0.75pt]    {$\overline{x} \lor y$};
\draw (397,334.4) node [anchor=north west][inner sep=0.75pt]    {$\overline{x} \lor y$};
\draw (524+5,267.4+5) node [anchor=north west][inner sep=0.75pt]    {$\land $};
\draw (577,190.4) node [anchor=north west][inner sep=0.75pt]    {$y$};
\draw (480,334.4) node [anchor=north west][inner sep=0.75pt]    {$x\land \overline{y}$};
\draw (560,334.4) node [anchor=north west][inner sep=0.75pt]    {$x\land \overline{y}$};
\draw (310+5,119.4+5) node [anchor=north west][inner sep=0.75pt]    {$\lor $};
\draw (315,191.4) node [anchor=north west][inner sep=0.75pt]    {$\overline{x}$};
\draw (478,42.4) node [anchor=north west][inner sep=0.75pt]    {$x$};
\draw (473+5,118.4+5) node [anchor=north west][inner sep=0.75pt]    {$\lor $};
\draw (479,189.4) node [anchor=north west][inner sep=0.75pt]    {$x$};
\draw (38+3,119.4+5) node [anchor=north west][inner sep=0.75pt]    {$\neg $};
\draw (40,44.4) node [anchor=north west][inner sep=0.75pt]    {$x$};
\draw (39,191.4) node [anchor=north west][inner sep=0.75pt]    {$\overline{x}$};

\end{tikzpicture}

\caption{NOT gadget wiring for circuit simulation using gates from $\Gmon.$ In this case a NOT gate is connected to an OR gate in the original circuit. Copies of the NOT gate in the circuit performing simulation are connected to the copies of the OR gate switched: positive part is connected to negative part of the OR gate and viceversa.}
\label{fig:NOTgatesgad}

\end{figure}

\begin{theorem}
The family $\Gamma(\Gmontwo)$ simulates in constant time and linear space the family $\Gamma(\Gmon)$, i.e. there exists a constant function $T: \N \to \N$ and a linear function $S: \N \to \N$ such that $\Gamma(\Gmon) \preccurlyeq^{T}_{S}\Gamma(\Gmontwo)$
\label{theo:gmontwosimgmon}
\end{theorem}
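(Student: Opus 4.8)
The plan is to give a direct construction with time constant $T=1$ and linear space, rather than routing through the coherent $\Gmon$-gadgets machinery (although the argument can be recast in that language via Lemma~\ref{lem:from-gadgets-to-networks}). The key observation is that the only difference between $\Gmon$ and $\Gmontwo$ is the fan-in/fan-out arity: a gate of $\Gmon$ may have $i,o\in\{1,2\}$, whereas $\Gmontwo$ fixes $i=o=2$. Since $\gateAND(x,x)=\gateOR(x,x)=x$, $\gateAND(x,1)=x$ and $\gateOR(x,0)=x$, every gate of $\Gmon$ is the restriction of a $(2,2)$ gate of the same type once we can feed a suitable constant into a missing input and discard a superfluous output. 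So given a $\Gmon$-network $F$ with gates $g_1,\dots,g_n$ presented as in Definition~\ref{def:g-network}, I would replace each $g_j$ of arity $(i_j,o_j)$ by the $(2,2)$ gate $g_j'$ of the same AND/OR type, inheriting the real wiring of $F$ on the first $i_j$ inputs and first $o_j$ outputs.

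The main obstacle is a conservation constraint intrinsic to $\G$-networks: by Definition~\ref{def:g-network} the maps $\alpha,\beta$ are bijections, so in any $\Gmontwo$-network every output node is read by exactly one input and there is no slack to create a constant source or to throw away a garbage output. One checks by counting that the number $A$ of gates with $i_j=1$ equals the number $B$ of gates with $o_j=1$ (both equal $2n-|V_F|$, since $\sum_j i_j=\sum_j o_j=|V_F|$). The resolution is a single \emph{utility gadget}, a two-gate $\Gmontwo$-network that simultaneously supplies one constant source and one absorbing sink: two AND gates cross-wired (both outputs of the first feeding the inputs of the second, one output of the second feeding the first, the remaining input of the first reading an external ``garbage'' wire and the remaining output of the second serving as the source) admit the all-zero configuration as a fixed point, in which the source wire stays $0$ and the sink kills any incoming value through $\gateAND(\cdot,0)=0$; using OR gates and the all-one configuration gives a $1$-source. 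I would then install $A$ such utility gadgets (of the type matching each padded input: a $0$-source for an OR gate, a $1$-source for an AND gate), route their $A$ sources into the $A$ padded inputs, and route the $B=A$ superfluous outputs into their $A$ sinks. The result $F'$ is a bona fide $\Gmontwo$-network: all wires are matched, and no self-loop is created since the real wiring is inherited from the self-loop-free $F$ and the source/sink wires connect distinct gates.

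Next I would define the block embedding $\phi$ of Definition~\ref{def:bloc-simu}. Each node $v$ of $F$, which is some output $(j,k)$ of $g_j$, is encoded by the corresponding output node(s) of $g_j'$ in $F'$: a single node when $o_j=2$, and the pair consisting of the real output together with the garbage output when $o_j=1$ (both carry the common value $g_j(\cdot)$, so the encoding is injective, with patterns $q\mapsto(q)$ or $q\mapsto(q,q)$). The context block is exactly the set of utility-gadget nodes, with constant pattern all-$0$ or all-$1$. The crux of the verification is that this context is preserved at every step: regardless of the garbage values fed into the sinks, each utility gadget returns to its constant configuration after one step, so the sources remain constant. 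Consequently one step of $F'$ reproduces one step of $F$, yielding $\phi\circ F=F'\circ\phi$, i.e.\ simulation with $T=1$.

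Finally I would bound the resources and check effectivity. We have $|V_{F'}|=2(n+2A)\le 6|V_F|$, so the space function $S$ is linear and $T\equiv 1$ is constant, as required. The transformation is purely local (each gate and each needed utility gadget is produced by inspecting bounded neighborhoods), hence computable in \DLOG{} from the bounded-degree representation of $F$, and it outputs the bounded-degree representation of $F'$, the constant $T$, and the representation of $\phi$; this matches Definition~\ref{def:simu-family} and establishes $\Gamma(\Gmon)\preccurlyeq^{T}_{S}\Gamma(\Gmontwo)$. I expect the delicate point of the whole argument to be the bookkeeping of sources versus sinks (the equality $A=B$ and the type-matching of the supplied constants), together with the verification that feeding uncontrolled garbage into the sinks never perturbs the constant context.
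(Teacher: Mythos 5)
Your construction is correct, but it is genuinely different from the one in the paper. The paper keeps the gates' original arities visible through a triple\nobreakdash-rail coding $x\mapsto(x,x,0)$ and replaces each gate of $\Gmon$ by a depth\nobreakdash-$6$ synchronous circuit of $\Gmontwo$ gates (Figures~\ref{fig:simgmongmon21}--\ref{fig:simgmongmon23}); the surplus constant~$0$ outputs produced by fanin\nobreakdash-$2$/fanout\nobreakdash-$1$ blocks are routed to the fanin\nobreakdash-$1$/fanout\nobreakdash-$2$ blocks that need them, using the bijection between these two gate classes. You instead pad every gate to arity $(2,2)$ directly, which yields real\nobreakdash-time simulation $T=1$ rather than $T=6$, and you outsource the constant supply and garbage disposal to separate two\nobreakdash-gate source/sink gadgets whose all\nobreakdash-$0$ (resp.\ all\nobreakdash-$1$) configuration is a fixed point robust to arbitrary garbage input, since $\gateAND(g,0)=0$ and $\gateOR(g,1)=1$. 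Both arguments ultimately rest on the same conservation law coming from the bijectivity of $\alpha$ and $\beta$ in Definition~\ref{def:g-network} (your $\#\{j: i(g_j)=1\}=\#\{j: o(g_j)=1\}$ is equivalent to the paper's bijection between $(1,2)$\nobreakdash- and $(2,1)$\nobreakdash-gates), and both give linear space and a \DLOG{} transformation; your version is the more economical of the two, while the paper's block\nobreakdash-gadget style matches the circuit\nobreakdash-layer bookkeeping it reuses in Theorem~\ref{theo:gmon-univ}. The points you flag as delicate (type\nobreakdash-matching of sources, invariance of the context under garbage) do check out: with $D_v$ taken to be the pair of output nodes when $o(g_j)=1$ the embedding is injective with $\phi\circ F=F'\circ\phi$, and no self\nobreakdash-loop is introduced since every added wire joins distinct gates.
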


\begin{proof}
Let $F:Q^{V} \to Q^{V}$ be an arbitrary $\Gmon$-network coded by its standard representation defined by a list of gates $g_{1}, \hdots, g_{n}$ and two functions $\alpha$ an $\beta$ mapping inputs to nodes in $F$ and nodes in $F$ to outputs respectively. We are going to construct in $\DLOG$ a $\Gmontwo$-network $G$ that simulates $F$ in time $T = \mathcal{O}(1)$ and space $S = \mathcal{O}(|V|)$ where $H$ is the communication graph of $F$. In order to do that, we are going to replace each gate $g_{k}$ by a small gadget. More precisely, we are going to introduce the following coding function: $x \in \{0,1\} \to (x,x,0) \in \{0,1\}^{3}$. We are going to define gadgets for each gate. Let us take $k \in \{1,\hdots,n\}$ and call $g^{*}_{k}$ the corresponding gadget associated to $g_{k}$. Let us that suppose $g_{k}$ is an OR gate and that it has fanin $2$ and fanout $1$ then, we define $g^{*}: \{0,1\}^{6} \to \{0,1\}^{6}$ as a function that for each input of the form $(x,x,y,y,0,0)$ produces the output $g^{*}((x,x,y,y,0,0)) = (x \vee y, x \vee y, 0,0,0,0)$. The case fanin $1$ and fanout $1$ is given by $g^{*}((x,x,0,0,0,0)) = (x, x, 0,0,0,0)$,  the case fanin 2 and fanout 2 is given by $g^{*}((x,x,y,y,0,0)) = (x \vee y, x \vee y, x \vee y, x \vee y,0,0)$ and finally case fanin $1$ and fanout $2$ is given by the same latter function but on input (x,x,0,0,0,0). The AND case is completely analogous. We are going to implement the previous functions as small (constant depth) synchronized circuits that we call block gadgets. More precisely, we are going to identify functions $g^{*}$ with its correspondent block gadget. The detail on the construction of these circuits that define latter functions are provided in Figures \ref{fig:simgmongmon21}, \ref{fig:simgmongmon22} and \ref{fig:simgmongmon23}.

Once we have defined the structure of block gadgets, we have to manage connections between them and  also manage the fixed $0$ inputs that we have added in addition  to the zeros that are produced by the blocks as outputs. In order to do that, let us assume that gates $g_{i} $ and $g_{j}$ are connected. Note from the discussion on coding above that AND/OR gadgets have between $2$ and $4$ inputs and outputs fixed to $0$. In particular, as it is shown in Figures \ref{fig:simgmongmon21}, \ref{fig:simgmongmon22} and \ref{fig:simgmongmon23}, all the block gadgets have the same amount of zeros in the input and in the output with the exception of the fanin $1$ fanout $2$ gates and the fanin $2$ fanout $1$ gates. However, as $\G$-networks are closed systems (the amount of inputs must be the same that the amount of outputs) we have that, for each  fanin $1$ fanout $2$  gate,  it must be a fanin $2$ fanout $1$ gate and vice versa (otherwise there would be more input than outputs or more outputs than inputs). In other words, there is a bijection between the set of fanin $1$ fanout $2$  gates and the set of fanin $2$ fanout $1$. Observe that fanin $2$ fanout $1$ gates consume $2$ zeros in input but produce $4$ zeros in output while fanin $1$ fanout $2$ gates consume $4$ in input and produce $2$ zeros in output (see Figures \ref{fig:simgmongmon22} and \ref{fig:simgmongmon23}). So, between $g^{*}_{i}$ and $g^{*}_{j}$ we have to distinct two cases: a) if both gates have the same number of inputs and outputs, connections are managed in the obvious way i.e., outputs corresponding to the computation performed by original gate are assigned between $g^{*}_{i}$ and $g^{*}_{j}$  and each gate uses the same zeros they produce to feed its inputs. b) if $g^{*}_{i}$ or $g^{*}_{j}$ have more inputs than outputs or vice versa, we have to manage the extra zeros (needed or produced). Without lost of generality, we assume that $g^{*}_{i}$ is fanin $2$ fanout $1$. Then, by latter observation it must exists another gate $g_{k}$ and thus, a gadget block $g^{*}_{k}$ with fanin $1$ and fanout $2$. We simply connect extra zeros produced by  $g^{*}_{i}$ to block  $g^{*}_{k}$ and we do the same we did in previous case in order to manage connections.

Note that $F^{*}$  is constructible in $\DLOG$ as it suffices to read the standard representation of $F$ and produce the associated block gadgets which have constant size. In addition we have that previous encoding $g \to g^*$  induce a block map $\phi: \{0,1\}^{V} \to \{0,1\}^{V^{+}}$  where $|V^{+}| =  \mathcal{O} ( |V| )$ and that $\phi \circ F = F^{*T} \circ \phi$ where $T=6$ is the size of each gadget block in $F^{*}$. We conclude that $F^{*} \in \Gamma(\Gmontwo)$ simulates  $F$ in space $|V^{+}| = \mathcal{O} ( |V| )$ and time $T =6$ and thus,  $ \Gamma(\Gmon)  \preccurlyeq^T_{S} \Gamma(\Gmontwo)$ where $T$ is constant and $S$ is a linear function.  .

 \end{proof}
\begin{figure}
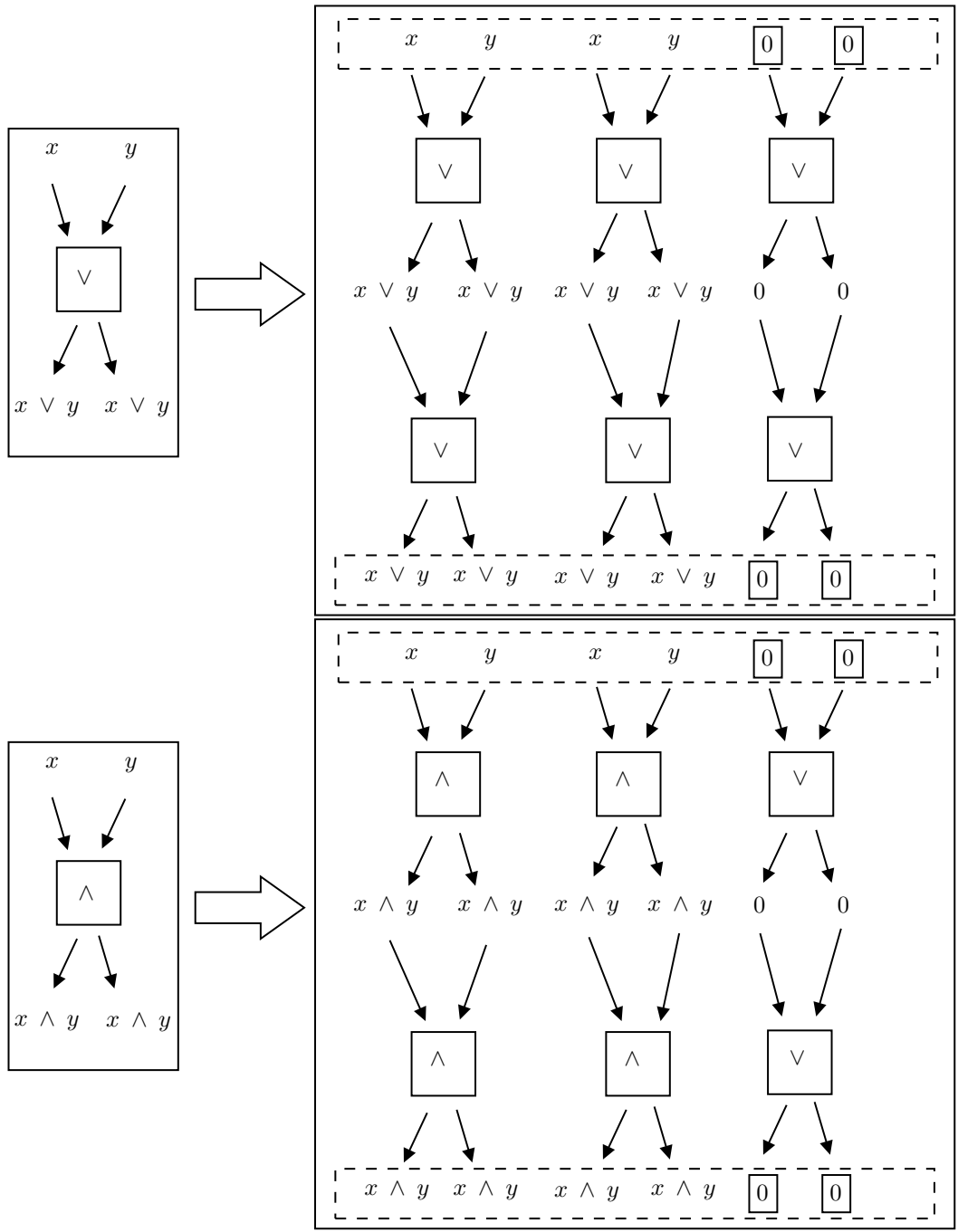

\centering

\tikzset{every picture/.style={line width=0.75pt}} 



\caption{Block gadgets for simulating  fanin $2$ fanout $2$ AND/OR gates using only gates in $\Gmontwo$. Squared zeros represent the amount of zeros that can be used as inputs for the same block. }
\label{fig:simgmongmon23}
\end{figure}
\begin{corollary}
The family $\Gamma(\Gmontwo)$ is strongly universal.
\end{corollary}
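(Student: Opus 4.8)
The plan is to obtain this corollary with essentially no new work, by chaining the two theorems just proved through the transitivity of family simulation. By Theorem~\ref{theo:gmon-univ}, the family $\Gamma(\Gmon)$ is strongly universal, so for every alphabet $Q$ and every degree $\Delta\geq 1$ we have ${\mathcal{B}_{Q,\Delta}\preccurlyeq^{T_1}_{S_1}\Gamma(\Gmon)}$ with $T_1$ and $S_1$ linear. By Theorem~\ref{theo:gmontwosimgmon}, we have ${\Gamma(\Gmon)\preccurlyeq^{T_2}_{S_2}\Gamma(\Gmontwo)}$ with $T_2$ constant and $S_2$ linear. The goal is to compose these into a single linear-time, linear-space simulation ${\mathcal{B}_{Q,\Delta}\preccurlyeq^{T}_{S}\Gamma(\Gmontwo)}$, which is exactly the requirement of strong universality in Definition~\ref{def:universal}.

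The central step is to invoke transitivity of the relation $\preccurlyeq$ between families. As recorded in the remark following Definition~\ref{def:simu-family}, this relation is transitive because \DLOG{} is closed under composition and the individual-network simulation of Definition~\ref{def:bloc-simu} is a pre-order; moreover, under composition the resource functions compose accordingly. Concretely, a bounded-degree network $B$ with $n$ nodes is first transformed in \DLOG{} into some ${G\in\Gamma(\Gmon)}$ with $S_1(n)$ nodes that simulates $B$ in time $T_1(n)$, and then $G$ is transformed in \DLOG{} into some network of $\Gamma(\Gmontwo)$ with $S_2(S_1(n))$ nodes that simulates $G$ in time $T_2(S_1(n))$. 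Chaining the two block embeddings preserves injectivity, so the composite is a valid block embedding witnessing a simulation with space map ${S = S_2\circ S_1}$ and time map ${T(n) = T_1(n)\cdot T_2(S_1(n))}$.

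It then only remains to check that these composed bounds are linear. The space map $S_2\circ S_1$ is a composition of two linear maps and is therefore linear. For the time map, since $T_2$ is a constant function, the term $T_2(S_1(n))$ does not depend on $n$ and equals some fixed constant $c$; hence ${T(n)=c\cdot T_1(n)}$ is linear because $T_1$ is. As this holds uniformly for every $Q$ and every $\Delta\geq 1$, the family $\Gamma(\Gmontwo)$ simulates every bounded-degree family in linear time and space, which is precisely strong universality.

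I do not anticipate any genuine obstacle, since all the substantial content lies in Theorems~\ref{theo:gmon-univ} and~\ref{theo:gmontwosimgmon}. The single point that deserves care is the bookkeeping of resource bounds: one must use that Theorem~\ref{theo:gmontwosimgmon} provides a \emph{constant}-time (not merely linear-time) simulation, because multiplying the linear time map $T_1$ by a constant keeps $T$ linear, whereas multiplying by another linear map would yield a quadratic time bound and break the linearity required for strong universality. This is exactly why the constant-time guarantee was built into Theorem~\ref{theo:gmontwosimgmon}.
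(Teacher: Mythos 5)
Your proof is correct and follows exactly the paper's own argument: the corollary is obtained by composing Theorem~\ref{theo:gmon-univ} with Theorem~\ref{theo:gmontwosimgmon} via transitivity of simulation. Your additional bookkeeping on why the composed time and space maps stay linear (using the constant-time guarantee of Theorem~\ref{theo:gmontwosimgmon}) is accurate and makes explicit what the paper leaves implicit.
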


\begin{proof}
Result is direct from Theorem \ref{theo:gmon-univ} ($\Gamma(\Gmon)$ is strongly universal)  and Theorem \ref{theo:gmontwosimgmon} ($ \Gamma(\Gmon)  \preccurlyeq^T_{S} \Gamma(\Gmontwo)$ where $T$ is constant and $S$ is a linear function).
\end{proof}


Now we show the universality of $\Gamma(\Gmon)$. The proof is essentially a consequence of \cite[Theorem 6.2.5]{Greenlaw_1995}.
Roughly, latter result starts with alternated monotone circuit which has only fanin $2$ and fanout $2$ gates (previous results in the same reference show that one can always reduce to this case starting from an arbitrary circuit) and  gives an $\textbf{NC}^{1}$ construction of a synchronous circuit preserving latter properties.
We need additional care here because we want a reusable circuit whose output is fed back to its input.
Note also that the construction uses quadratic space in the number of gates of the circuit given in input, so we cannot show strong universality this way but only universality.
\begin{theorem}\label{teo:gmonuniv}
The family $\Gamma(\Gmon)$ of all $\Gmon$-networks is universal
\end{theorem}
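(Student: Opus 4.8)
The plan is to mirror the proof of Theorem~\ref{theo:gmon-univ}, but to replace its use of a \emph{constant-depth} circuit (legitimate there only because of the bounded-degree hypothesis on the target) by the full circuit synchronisation of Greenlaw, Hoover and Ruzzo. Fix an arbitrary alphabet $Q$, a polynomial $P$, and an arbitrary ${F\in\mathcal{U}_{Q,P}}$ given by a circuit ${C:\{0,1\}^{kn}\to\{0,1\}^{kn}}$ of size at most $P(n)$, where ${F:Q^V\to Q^V}$, ${|V|=n}$ and ${m_Q:Q\to\{0,1\}^{k}}$ is the binary coding. Unlike in the strongly universal case, $C$ now has unbounded (up to polynomial) depth, so it cannot be treated as a bounded family of layers.

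First I would push all negations to the inputs and pass to double-rail logic exactly as in the proof of Theorem~\ref{theo:gmon-univ}, obtaining a purely monotone circuit on the doubled variables via the coding ${\phi(x)_{2j}=x_j}$ and ${\phi(x)_{2j+1}=\neg x_j}$. Next I would apply \cite[Theorem 6.2.5]{Greenlaw_1995} to this monotone circuit to produce a \emph{synchronous} circuit $C^*$ built solely from $\gateAND$ and $\gateOR$ gates of fanin and fanout bounded by $2$, organised into layers ${L_0,\ldots,L_d}$ such that ${L_{i+1}}$ depends only on $L_i$. This transformation is computable in ${\NC^1\subseteq\DLOG}$; crucially, its depth is ${d=O(P(n))}$ and its size is ${O(P(n)^2)}$, the synchronisation inserting copy/delay gates along every wire, which inflates the size quadratically and forbids a \emph{linear} space bound (hence strong universality) through this route.

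The genuinely new ingredient compared to Theorem~\ref{theo:gmon-univ} is making $C^*$ \emph{reusable}: I would close the loop by wiring each output $j$ of $C^*$ back to input $j$, producing a $\Gmon$-network ${F^*:\{0,1\}^{V^+}\to\{0,1\}^{V^+}}$ whose nodes form a directed cycle of ${N=d+1}$ layers. The block embedding ${\Phi:Q^V\to\{0,1\}^{V^+}}$ places the double-railed code of $x$ on the input layer and sets every other layer to $0$ (the constant context pattern of Remark~\ref{rem:bloc-simul}); it is injective because $m_Q$ and the rails are. The heart of the argument is a wave-propagation invariant: since a monotone gate outputs $0$ whenever all its inputs are $0$, the all-zero layers are neutral, so from a configuration in which a single layer $L_i$ is non-zero, one step of $F^*$ makes exactly ${L_{i+1\bmod N}}$ non-zero and zeroes out $L_i$, its values being computed correctly thanks to synchrony. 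Iterating $N$ steps sends the input layer through the whole cycle and back, carrying the code of $F(x)$, so that ${\Phi\circ F=(F^*)^N\circ\Phi}$ on $Q^V$, i.e. a block simulation (Definition~\ref{def:bloc-simu}) with time constant ${T=N}$.

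Finally I would record the resource bounds: ${T=O(P(n))}$ and ${|V^+|=O(P(n)^2)}$ are polynomial and $\DLOG$-computable, and $F^*$ is produced from $C$ in $\DLOG$ by the preceding steps, so by Definition~\ref{def:simu-family} the family $\Gamma(\Gmon)$ simulates ${(\mathcal{U}_{Q,P},\mathcal{U}_{Q,P}^*)}$ in polynomial time and space; as $Q$ and $P$ were arbitrary, $\Gamma(\Gmon)$ is universal in the sense of Definition~\ref{def:universal}. The main obstacle is precisely the reusability/synchrony step: one must verify that the fed-back monotone circuit, driven by an all-zero context, leaves no stale non-zero residue and that the double-rail discipline is preserved along the now-unbounded depth, so that the wave invariant holds exactly and the simulation is faithful for every initial state.
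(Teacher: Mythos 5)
Your overall strategy coincides with the paper's: both proofs invoke \cite[Theorem 6.2.5]{Greenlaw_1995} to obtain a synchronous, monotone, fanin/fanout-$2$ circuit of size $O(n^2)$, feed the outputs back to the inputs, and run the layered ``wave'' argument already used in Theorem~\ref{theo:gmon-univ} to obtain a block simulation with polynomial time and space constants. Your resource accounting (quadratic blow-up in space, hence universality but not strong universality by this route) also matches the paper's.

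There is, however, one concrete gap, and it is precisely the step the paper singles out as ``the only critical point''. After the Greenlaw--Hoover--Ruzzo transformation, the synchronous circuit does \emph{not} have its inputs in bijection with its outputs: the construction duplicates input variables (a single variable may feed several distinct input gates) and introduces constant inputs. Your phrase ``wiring each output $j$ of $C^*$ back to input $j$'' is therefore not well defined as stated, and without repair the resulting object is not a $\Gmon$-network, since Definition~\ref{def:g-network} requires ${|V|=|I|=|O|}$ together with the bijections $\alpha$ and $\beta$ matching inputs and outputs to nodes. The paper resolves this by marking which input gates are originals, copies, or constants, and appending $O(\log n)$-depth trees of $\gateCOPY$ gates and fanin/fanout-$1$ $\gateAND$/$\gateOR$ gates that replicate each output as many times as the corresponding variable is consumed, absorb surplus copies, and regenerate the constant inputs; this adds only $O(\log n)$ to the time constant and keeps the whole construction \DLOG-computable. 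Your proposal needs this (or an equivalent) reconciliation step; the remainder of your argument, including the all-zero-context wave invariant and the injectivity of the block embedding, is sound and is essentially the paper's.
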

\begin{proof}
Let $F:Q^{k} \to Q^{k}$ be some arbitrary network with a circuit representation $C:\{0,1\}^{n} \to \{0,1\}^{n}$ such that $n = k^{\mathcal{O}(1)}.$ By \cite[Theorem 6.2.5]{Greenlaw_1995} we can assume that there exists a circuit  $C':\{0,1\}^{n'} \to \{0,1\}^{n'}$ where $n' = \mathcal{O}(n^{2})$ such that $C'$ is synchronous alternated and monotone. In addition, every gate in $C'$ has fanin and fanout $2$. We remark that latter reference do not only provides the standard encoding of $C'$ but also give us a $\DLOG$ algorithm (it is actually $\textbf{NC}^{1}$) which takes the standard representation of $C:\{0,1\}^{n} \to \{0,1\}^{n}$ and produces $C'$. We are going to slightly modify latter algorithm in order to construct not only a circuit but a $\Gmon$-network. In fact, the only critical point is to manage the identification between outputs and inputs. This is not direct from the result by Ruzzo et al. as their algorithm involves duplication of inputs and also adding constant inputs. In order to manage this, it suffices to simply modify their construction in order to mark original, copies and constant inputs. Then, as $\Gmon$ includes COPY gates and also AND/OR gates with fanout $1$, one can always produce copies of certain input if we need more, or erase extra copies by adding and small tree of $\mathcal{O}(\log(n))$ depth. Same goes for constant inputs. Formally, at the end of the algorithm, the $\DLOG$ can read extra information regarding copies and constant inputs, and then can construct $\mathcal{O}(\log(n))$ depth circuit that produces a coherent encoding for inputs and outputs. This latter construction defines a $\Gmon$-network $G:\{0,1\}^{n''} \to \{0,1\}^{n''}$ and an encoding $\phi:Q^{n} \to \{0,1\}^{n''}$ where $n'' = \mathcal{O}(n^{2})$  such that $\phi \circ F = G^{T} \circ \phi$ where $T = \mathcal{O}(\text{depth(C')} + \log(n))$. Thus, $\Gmon$ is universal.
\end{proof}

We can now state the following direct corollary.
\begin{corollary}
Let $\mathcal{F}$ be a strongly universal automata network family. Then, $\mathcal{F}$ is universal.
\end{corollary}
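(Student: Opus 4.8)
The plan is to route every required universal simulation through a single fixed universal family of bounded degree, exploiting the transitivity of the simulation preorder together with the defining property of strong universality. The guiding observation, already anticipated in the remark following Definition~\ref{def:universal}, is that strong universality of $\mathcal{F}$ is exactly the ability to simulate \emph{every} bounded degree family $(\mathcal{B}_{Q,\Delta},\mathcal{B}_{Q,\Delta}^*)$ with only linear overhead in time and space. Hence it suffices to exhibit one universal family that happens to be of bounded degree, simulate it by $\mathcal{F}$ with linear cost, and then compose with its universality.

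First I would invoke Theorem~\ref{teo:gmonuniv}, which asserts that $\Gamma(\Gmon)$ is universal. Since every gate of $\Gmon$ has fanin and fanout at most $2$, every $\Gmon$-network has a communication graph of degree bounded by a fixed constant $\Delta$, and under its canonical bounded degree representation the family $\Gamma(\Gmon)$ sits inside $(\mathcal{B}_{\{0,1\},\Delta},\mathcal{B}_{\{0,1\},\Delta}^*)$: each word encoding a $\Gmon$-network as a pair (graph, local maps) is a legal word of $L_{\mathcal{B}_{\{0,1\},\Delta}}$. This is the step that lets strong universality be applied to $\Gamma(\Gmon)$.

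Next, using strong universality of $\mathcal{F}$ with alphabet $\{0,1\}$ and degree $\Delta$, I obtain a \DLOG{} machine witnessing $\mathcal{B}_{\{0,1\},\Delta}^*\preccurlyeq^{T_0}_{S_0}\mathcal{F}^*$ with $T_0,S_0$ linear; restricting this machine to the sublanguage of words coming from $\Gamma(\Gmon)$ yields $\Gamma(\Gmon)\preccurlyeq^{T_0}_{S_0}\mathcal{F}^*$ with the same linear bounds. Then, for an arbitrary alphabet $Q$ and polynomial $P$, universality of $\Gamma(\Gmon)$ gives $\mathcal{U}_{Q,P}\preccurlyeq^{T_1}_{S_1}\Gamma(\Gmon)$ with $T_1,S_1$ polynomial. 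Composing the two simulations via transitivity of the family simulation preorder (under which, as noted after Definition~\ref{def:simu-family}, the time and space maps multiply) produces $\mathcal{U}_{Q,P}\preccurlyeq^{T}_{S}\mathcal{F}^*$; since a polynomial composed with a linear map stays polynomial, both $T$ and $S$ remain polynomial. As $Q$ and $P$ were arbitrary, this is precisely universality of $\mathcal{F}$.

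The only genuinely delicate point, and the one I would write out with care, is the restriction of a family simulation to a subfamily: I must check that the \DLOG{} reduction furnished by strong universality still behaves correctly when fed the sublanguage $L_{\Gamma(\Gmon)}\subseteq L_{\mathcal{B}_{\{0,1\},\Delta}}$, producing valid block embeddings and the same resource functions on those inputs. Everything else — the degree bound for $\Gmon$-networks and the verification that the composed time and space functions stay polynomial — is routine once the composition rule for $T$ and $S$ is made explicit.
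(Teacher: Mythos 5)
Your proposal is correct and follows essentially the same route as the paper: the paper's proof also exhibits $\Gamma(\Gmon)$ as a bounded-degree family that is universal by Theorem~\ref{teo:gmonuniv}, so that strong universality of $\mathcal{F}$ gives a linear-cost simulation of it, and transitivity (with multiplication of the resource maps) yields universality. Your extra care about restricting the \DLOG{} reduction to the sublanguage $L_{\Gamma(\Gmon)}\subseteq L_{\mathcal{B}_{\{0,1\},\Delta}}$ is a sound elaboration of a detail the paper leaves implicit via Remark~\ref{rem:represent-g-networks}.
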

\begin{proof}
In order to show the result, it suffices to exhibit a $\G$-network family ($\G$-networks are bounded degree networks) which is strongly universal and universal at the same time. By Theorem \ref{teo:gmonuniv} we take $\G = \Gmon$ and thus, corollary holds.
\end{proof}

\begin{corollary}\label{cor:univfrommon}
  Let $\G$ be either $\Gmon$ or $\Gmontwo$.
  Any family $\mathcal{F}$ that has coherent $\G$-gadgets contains a subfamily of bound degree networks with bounded degree representation which is  (strongly) universal. Any CSAN family with coherent $\G$-gadgets is  (strongly) universal.
\end{corollary}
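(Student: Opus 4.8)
The plan is to combine the gadget-to-network machinery of this subsection with the (strong) universality of the reference families $\Gamma(\Gmon)$ and $\Gamma(\Gmontwo)$, and then to chain the resulting simulations by transitivity of $\preccurlyeq$.

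First I would verify the only hypothesis specific to $\G$ that is needed to apply Lemma~\ref{lem:from-gadgets-to-networks} and Corollary~\ref{cor:gadgets-mon-csan}, namely that the gates of $\G$ are irreducible. For $\G=\Gmon$ one inspects each admissible fanin/fanout pair with values in $\{1,2\}$: in every case each output of $\gateAND$ and of $\gateOR$ effectively depends on every input, so the bipartite dependency graph is weakly connected and the gate is irreducible; as $\Gmontwo\subseteq\Gmon$ the same conclusion holds for $\Gmontwo$. This is the sole verification that has to be done by hand.

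Next, for the first assertion I would invoke Lemma~\ref{lem:from-gadgets-to-networks}: since $\mathcal{F}$ has coherent $\G$-gadgets (Definition~\ref{def:coherent-gadgets}), it contains a bounded-degree subfamily $(\mathcal{F}_0,\mathcal{F}_0^*)$ such that $\Gamma(\mathcal{G})\preccurlyeq^{T_1}_{S_1}\mathcal{F}_0^*$ with $T_1$ a constant map and $S_1$ linear. For the CSAN assertion this step is replaced by Corollary~\ref{cor:gadgets-mon-csan}, which yields the analogous simulation directly for the CSAN family itself, with no passage to a subfamily. I would then record that $\Gamma(\mathcal{G})$ is strongly universal: for $\G=\Gmon$ this is Theorem~\ref{theo:gmon-univ}, and for $\G=\Gmontwo$ it is the corollary asserting that $\Gamma(\Gmontwo)$ is strongly universal. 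By Definition~\ref{def:universal}, this means that for every alphabet $Q$ and every degree $\Delta$ we have $\mathcal{B}_{Q,\Delta}^*\preccurlyeq^{T_2}_{S_2}\Gamma(\mathcal{G})$ with $T_2$ and $S_2$ linear.

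Finally I would compose. By transitivity of $\preccurlyeq$ (the remark following Definition~\ref{def:simu-family}), from $\mathcal{B}_{Q,\Delta}^*\preccurlyeq^{T_2}_{S_2}\Gamma(\mathcal{G})$ and $\Gamma(\mathcal{G})\preccurlyeq^{T_1}_{S_1}\mathcal{F}_0^*$ one obtains a simulation $\mathcal{B}_{Q,\Delta}^*\preccurlyeq\mathcal{F}_0^*$ whose time and space maps are obtained by composing and multiplying $T_1,T_2$ and $S_1,S_2$; since $T_1$ is constant and $T_2,S_1,S_2$ are linear, both resulting maps are again linear. Hence $\mathcal{F}_0$ (respectively the CSAN family) is strongly universal, and the parenthetical ``universal'' then follows from the corollary stating that every strongly universal family is universal. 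There is no genuine difficulty here: the one point deserving attention is precisely this preservation of the linear regime under composition — had we relied on Theorem~\ref{teo:gmonuniv} (universality of $\Gamma(\Gmon)$, obtained with quadratic space) instead of strong universality, the same chaining would yield universality but not strong universality.
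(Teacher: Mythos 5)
Your proposal is correct and follows exactly the route the paper intends (the corollary is stated without an explicit proof, but the chain Lemma~\ref{lem:from-gadgets-to-networks} / Corollary~\ref{cor:gadgets-mon-csan} $\to$ strong universality of $\Gamma(\Gmon)$ and $\Gamma(\Gmontwo)$ $\to$ transitivity of simulation with constant-times-linear composition is precisely the implicit argument). Your explicit check of irreducibility of the gates and the closing remark on why quadratic-space universality of $\Gamma(\Gmon)$ would not suffice for the strong version are both accurate and fill in details the paper leaves unstated.
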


\subsection{Closure and synchronous closure}

Although monotone gates are sometimes easier to realize in concrete dynamical system which make the above results useful, there is nothing special about them to achieve universality: any set of gates that are expressive enough for Boolean functions yields the same universality result. Given a set of maps $\G$ over alphabet $Q$, we define its \emph{closure} ${\overline{\G}}$ as the set of maps that are computed by circuits that can be built using only gates from $\G$. More precisely, $\overline{\G}$ is the closure of $\G$ by composition, \textit{i.e.} forming from maps ${g_1:Q^{I_1}\to Q^{O_1}}$ and ${g_2:Q^{I_2}\to Q^{O_2}}$ (with ${I_1, I_2, O_1, O_2}$ disjoint) a composition $g$ by plugging a subset of outputs $O\subseteq O_2$ of $g_1$ into a subset of inputs $I\subseteq I_2$ of $g_2$, thus obtaining ${g: Q^{I_1\cup I_2\setminus I}\to Q^{O_1\setminus O\cup O_2}}$ with 
\[g(x)_o =
  \begin{cases}
    g_1(x_{I_1})_o &\text{ if }o\in O_1\setminus O\\
    g_2(y)_o &\text{ if }o\in O_2
  \end{cases}
\]
where ${y_j = x_j}$ for ${j\in I_2\setminus I}$ and ${y_j= g_1(x_{I_1})_{\pi(j)}}$ where ${\pi:I\to O}$ is the chosen bijection between $I$ and $O$ (the wiring of outputs of $g_1$ to inputs of $g_2$).
A composition is \emph{synchronous} if either ${I=\emptyset}$ or ${I=I_2}$. We then define the synchronous closure ${\overline{\G}^2}$ as the closure by synchronous composition. The synchronous composition correspond to synchronous circuits with gates in $\G$. A $\G$-circuit is a sequence of compositions starting from elements of $\G$. It is synchronous if the compositions are synchronous. The depth of a $\G$-circuit is the maximal length of a path from an input to an output. In the case of a synchronous circuits, all such path are of equal length.

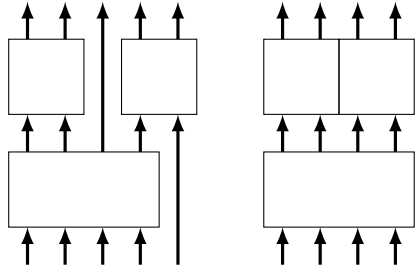
\begin{figure}
  \centering
  \begin{minipage}{.2\linewidth}
    \begin{tikzpicture}[scale=.5]
      \draw (-1.5,1)--(2.5,1)--(2.5,-1)--(-1.5,-1)--cycle;
      \draw[->,very thick] (-1,-2)  -- (-1,-1);
      \draw[->,very thick] (0,-2) -- (0,-1);
      \draw[->,very thick] (1,-2) -- (1,-1);
      \draw[->,very thick] (2,-2) -- (2,-1);
      \draw[->,very thick] (3,-2) -- (3,2);
      \draw[->,very thick] (-1,1)  -- (-1,2);
      \draw[->,very thick] (0,1) -- (0,2);
      \draw[->,very thick] (1,1) -- (1,5);
      \draw[->,very thick] (2,1) -- (2,2);
      \begin{scope}[shift={(0,3)}]
        \draw (-1.5,1)--(.5,1)--(.5,-1)--(-1.5,-1)--cycle;
        \draw[->,very thick] (-1,1)  -- (-1,2);
        \draw[->,very thick] (0,1) -- (0,2);
      \end{scope}
      \begin{scope}[shift={(3,3)}]
        \draw (-1.5,1)--(.5,1)--(.5,-1)--(-1.5,-1)--cycle;
        \draw[->,very thick] (-1,1)  -- (-1,2);
        \draw[->,very thick] (0,1) -- (0,2);
      \end{scope}
    \end{tikzpicture}
  \end{minipage}
  \begin{minipage}{.2\linewidth}
    \begin{tikzpicture}[scale=.5]
      \draw (-1.5,1)--(2.5,1)--(2.5,-1)--(-1.5,-1)--cycle;
      \draw[->,very thick] (-1,-2)  -- (-1,-1);
      \draw[->,very thick] (0,-2) -- (0,-1);
      \draw[->,very thick] (1,-2) -- (1,-1);
      \draw[->,very thick] (2,-2) -- (2,-1);
      \draw[->,very thick] (-1,1)  -- (-1,2);
      \draw[->,very thick] (0,1) -- (0,2);
      \draw[->,very thick] (1,1) -- (1,2);
      \draw[->,very thick] (2,1) -- (2,2);
      \begin{scope}[shift={(0,3)}]
        \draw (-1.5,1)--(.5,1)--(.5,-1)--(-1.5,-1)--cycle;
        \draw[->,very thick] (-1,1)  -- (-1,2);
        \draw[->,very thick] (0,1) -- (0,2);
      \end{scope}
      \begin{scope}[shift={(2,3)}]
        \draw (-1.5,1)--(.5,1)--(.5,-1)--(-1.5,-1)--cycle;
        \draw[->,very thick] (-1,1)  -- (-1,2);
        \draw[->,very thick] (0,1) -- (0,2);
      \end{scope}
    \end{tikzpicture}
  \end{minipage}
  \caption{Non-synchronous composition (on the left) and synchronous composition (on the right).}
  \label{fig:sync-nonsync-composition}
\end{figure}

\begin{remark}
  The above definitions are very close to the classical notion of clones \cite{post_lattice}. However, we stress that, in our case, projections maps ${Q^k\to Q}$ are generally not available. This is important because in a given dynamical systems, erasing information might be impossible (think about reversible systems) and hiding it into some non-coding part might be complicated.
\end{remark}

\begin{proposition}\label{prop:closuregadgets}
  Fix some alphabet $Q$ and consider two finite sets of maps $\G$ and $\G'$ over alphabet $Q$ such that:
  \begin{itemize}
  \item either contains the identity map $Q\to Q$ and is such that $\overline{\G}$ contains $\G'$,
  \item or there is an integer $k$ such that ${\overline{\G}^2_k}$, the set of elements of $\overline{\G}^2$ that can be realized by a circuit of depth $k$, contains $\G'$.
  \end{itemize}
  Then, any family $\mathcal{F}$ that has coherent $\G$-gadgets has coherent $\G'$-gadgets.
\end{proposition}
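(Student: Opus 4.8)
The plan is to follow the template of Lemma~\ref{lem:from-gadgets-to-networks}, but applied one level up: each gate $g'\in\G'$ will be realized not by a single gadget but by a whole sub-network of $\G$-gadgets wired according to a $\G$-circuit computing $g'$, and this sub-network will itself be packaged as a coherent $\G'$-gadget. First I would reduce both hypotheses to a common normal form, namely a synchronous $\G$-circuit of one fixed depth $D$ for every $g'\in\G'$. In the second case this is immediate with $D=k$, since $\overline{\G}^2_k$ consists exactly of the maps computed by synchronous $\G$-circuits of depth $k$. In the first case I would start from an arbitrary $\G$-circuit for $g'$ (granted by $\G'\subseteq\overline{\G}$) and use the identity map $Q\to Q$, which lies in $\G$, to pad every input-to-output path to equal length, making the circuit synchronous; further padding with identity gates equalizes the depths of the finitely many circuits to a common $D=\max_{g'}\mathrm{depth}(g')$. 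From there the two cases are treated uniformly.

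Next I would build, for each $g'\in\G'$, the gadget $F_{g'}$ by gadget glueing (Definition~\ref{def:gadgets-glueing}) the $\G$-gadgets attached to the gates of its circuit, wiring outputs to inputs exactly as prescribed and leaving the circuit's external inputs and outputs as the input/output interface copies of $F_{g'}$. The glueing interface ${C=C_i\cup C_o}$, the alphabet, and the state configurations $s_q$ are kept unchanged. Since each $F_{g'}$ is obtained from the original gadget set $X$ by gadget glueing, and since the three conditions of Lemma~\ref{lem:csan-gadgets-glueing-closure} are preserved by glueing (as in its proof), the closure by gadget glueing of $\{F_{g'}:g'\in\G'\}$ is contained in that of $X$, hence in $\mathcal F$. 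I would set the new time constant to $T'=D\cdot T$: running the glued network for $T$ steps realizes one synchronous layer transition of the circuit, so $D$ consecutive blocks of length $T$ propagate a signal from the circuit inputs through all $D$ layers to the outputs.

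The dynamical heart is the construction of the new standard traces $\tau'_{q,q'}$ (of length $T'$), the context configurations $c_{g'}$, and the required pseudo-orbits. For a fixed admissible input/output assignment of $g'$, I would build the $P_{g'}$-pseudo-orbit block by block: within block ${p\in\{1,\dots,D\}}$ each constituent $\G$-gadget runs the coherent pseudo-orbit granted by Definition~\ref{def:coherent-gadgets} for the values it reads and produces during that block, and these are glued spatially across gadgets by repeated application of Lemma~\ref{lem:pseudo-orbit-glueing}; the matching hypothesis~\eqref{eq:sametrace} on each internal wire holds automatically, since the two gadgets sharing it run pseudo-orbits presenting the same standard trace $\tau_{\bullet,\bullet}$ there. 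Concatenating the $D$ blocks in time yields a length-$T'$ pseudo-orbit of $F_{g'}$. On each interface copy this pseudo-orbit must read off as a single canonical trace $\tau'_{q,q'}$, equal to a concatenation ${\tau_{q,\theta_1}\cdot\tau_{\theta_1,\theta_2}\cdots\tau_{\theta_{D-1},q'}}$ through fixed intermediate values $\theta_1,\dots,\theta_{D-1}$, while on the internal dowels it must return at time $T'$ to the value fixed by $c_{g'}$.

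The main obstacle is the simultaneous fulfilment of three constraints: (a) correctness, which forces the real input to be injected in the first block and the output to be read in block $D$, hence $T'=DT$ with no slack; (b) reusability, i.e.\ every internal wire returns to its context value after $T'$ steps; and (c) the fact that input copies and output copies must obey the \emph{same} trace $\tau'_{q,q'}$, because under gadget glueing an output copy of one gadget becomes an input copy of another. Tracking the propagation delay shows that the transient values on a boundary wire between its endpoints depend only on the chosen context, not on the endpoint data, so the $\theta_i$ are well defined as constants; the delicate point is that these constants, being simultaneously injected on input wires and produced on output wires, must be left invariant by the circuit's layer-to-layer propagation. I would resolve this by threading the trace through a canonical \emph{resting configuration} of the circuit, that is a fixed point of the recurrent layer dynamics, defining $c_{g'}$ as its restriction to the internal nodes; the uniform depth and, in the first case, the availability of identity gates are what make a single such resting behaviour coherent across the whole finite family $\G'$. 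Establishing that this resting configuration can be chosen uniformly is the step requiring the most care.
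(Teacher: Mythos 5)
Your proposal follows the paper's proof route exactly: pad the $\G$-circuits with identity gates into synchronous circuits of one common depth $D$ (in the second case the common depth $k$ is already given by hypothesis, with no identity gate needed), replace each gate by its coherent $\G$-gadget, compose by gadget glueing according to the circuit, and obtain the required pseudo-orbits of the composite by iterated application of Lemma~\ref{lem:pseudo-orbit-glueing}, with time constant $D\cdot T$. The paper's proof stops essentially there, declaring the translation ``straightforward''; so at the level of strategy there is nothing to separate the two.

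Where you go further is in spelling out your constraints (a)--(c) on the new standard traces, and this is a genuine issue that the paper's argument silently absorbs. Since input and output copies of the interface must carry the \emph{same} trace family $\tau'_{q,q'}$, and since the values appearing at times $T,2T,\dots,(D-1)T$ on an output copy are garbage determined solely by the context (hence independent of the endpoints), the intermediate values of $\tau'_{q,q'}$ are forced to be constants $\beta_1,\dots,\beta_{D-1}$; feeding $\beta_j$ on the inputs during block $j{+}1$ and unwinding the reusability condition on the internal dowels then forces $g'(\beta_j,\dots,\beta_j)=(\beta_j,\dots,\beta_j)$ coordinatewise for every $g'\in\G'$. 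Your ``resting configuration'' (the propagation of a diagonal input through the circuit, which is indeed a fixed point of the layer-shift dynamics) is exactly the right device, but the step you flag as delicate---that it can be chosen \emph{uniformly}---is not automatic for arbitrary $\G'$: it amounts to the existence of a common quiescent state $\beta\in Q$ with $g'(\beta,\dots,\beta)=(\beta,\dots,\beta)$ for all $g'\in\G'$, an assumption absent from the statement (take $Q=\{0,1\}$ and a $\G'$ containing negation). So your proof, like the paper's, is complete only modulo this point; it does hold for every gate set the paper actually feeds into the proposition ($\beta=0$ works for $\Gmon$, $\Gmontwo$, $\Gconj$ and $\Gwfa$), and your write-up has the merit of isolating precisely where the argument needs it. One minor remark: to keep the new gadgets and their glueing closure inside $\mathcal{F}$ you do not need Lemma~\ref{lem:csan-gadgets-glueing-closure}, which is specific to CSAN families; the containment is already part of Definition~\ref{def:coherent-gadgets}, because the closure of the new gadget set is contained in the closure of the original one.
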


\begin{proof}
  Suppose first that the first item holds.  Since ${\overline{\G}}$ contains $\G'$ there must exist a circuit made of gates from $\G$ that produces any given element $g\in\G'$. One then wants to apply gadget glueing on gadgets from $\G$ to mimic the composition and thus obtain a gadget corresponding to $g$. However this doesn't work as simply because propagation delay is a priori not respected at each gate in the circuit composition yielding $g$ and there is a risk that information arrives distinct delays at different outputs. However, since $\G$ contains the identity map, there is a corresponding gadget in the family that actually implements a delay line. This additionnal gadget solves the problem: it is straightforward to transform by padding with identity gates all circuit with gates in $\G$ into synchronous ones. Moreover, by padding again, we can assume that the finite set of such circuits computing elements of $\Gmon$ are all of same depth. It is then straightforward to translate this set of circuits into coherent $\Gmon$-gadgets by iterating gadget glueing and using Lemma~\ref{lem:pseudo-orbit-glueing}.

  If the second item holds the situation is actually simpler because the synchronous closure contains only synchronous circuits of gates from $\Gmontwo$ so we can directly translate the circuits producing the maps of $\Gmontwo$ into gadgets via gadget glueing by Lemma~\ref{lem:pseudo-orbit-glueing} as in the previous case. Moreover, the hypothesis is that all elements of $\G'$ are realized by circuit of same depth so we get gadgets that share the same time constant.
   
\end{proof}

\subsection{Super-polynomial periods without universality}

A universal family must exhibit super-polynomial periods, however universality is far from necessary to have this dynamical feature. In this subsection we define the family of wire networks to illustrate this.

 In order to do that, we need the following classical result about the growth of Chebyshev function and prime number theorem. 

\begin{lemma} \cite{hardy2008introduction}
  Let $m \geq 2$ and $\mathcal{P}(m) = \{p \leq m \text{ } | \text{ }  p \text{ prime}\}$. If we define $\pi(m) = |\mathcal{P}(m)|$ and  $\theta(m)  = \sum \limits_{p \in \mathcal{P}(m)}\log (p)$  then we have ${\pi (m) \sim \frac{m}{\log(m)}}$ and ${\theta(m) \sim m}$.
  \label{lemma:primes}
\end{lemma}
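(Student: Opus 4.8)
The statement is precisely the Prime Number Theorem together with the companion asymptotic for the first Chebyshev function $\theta$, so the plan is to recall the standard analytic route rather than reprove it from scratch. I would first reduce the two claims to one. Writing $\pi(m) = \sum_{p \le m} 1$ and $\theta(m) = \sum_{p \le m} \log p$, Abel summation gives
$$\pi(m) = \frac{\theta(m)}{\log m} + \int_2^m \frac{\theta(t)}{t\log^2 t}\, dt.$$
So once $\theta(m) \sim m$ is known, the leading term is $m/\log m$ while the integral is of strictly smaller order, and one obtains $\pi(m) \sim m/\log m$; the reverse implication is symmetric. It therefore suffices to establish $\theta(m) \sim m$.

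For the latter I would follow Newman's streamlined argument. The first steps are to introduce the Riemann zeta function $\zeta(s) = \sum_{n \ge 1} n^{-s}$ on $\mathrm{Re}(s) > 1$ with its Euler product $\zeta(s) = \prod_p (1 - p^{-s})^{-1}$, and the auxiliary series $\Phi(s) = \sum_p (\log p)\, p^{-s}$. Logarithmic differentiation of the Euler product shows that $\Phi(s)$ and $-\zeta'(s)/\zeta(s)$ differ by a function holomorphic on $\mathrm{Re}(s) > 1/2$, so the behaviour of $\Phi$ near the line $\mathrm{Re}(s) = 1$ is dictated by the pole and the zeros of $\zeta$. I would then continue $\zeta$ meromorphically past that line, with a single simple pole at $s = 1$, so that $\Phi(s) - 1/(s-1)$ extends holomorphically to a neighbourhood of $\mathrm{Re}(s) = 1$ provided $\zeta$ has no zero there.

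The hard part will be exactly this last proviso: ruling out zeros of $\zeta$ on the whole line $\mathrm{Re}(s) = 1$, i.e. showing $\zeta(1 + it) \ne 0$ for every real $t \ne 0$. This is the classical nonvanishing lemma, which I would prove via the elementary inequality $3 + 4\cos\varphi + \cos 2\varphi \ge 0$ applied to $|\zeta(\sigma)^3\,\zeta(\sigma + it)^4\,\zeta(\sigma + 2it)|$ as $\sigma \to 1^+$. Granting it, Newman's analytic (Tauberian) theorem applied to the Laplace-type transform of $\theta(e^u)e^{-u} - 1$ yields the convergence of $\int_1^\infty \frac{\theta(x) - x}{x^2}\, dx$, and a final short monotonicity argument exploiting that $\theta$ is nondecreasing upgrades this convergence to the pointwise asymptotic $\theta(x) \sim x$. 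Since Hardy and Wright already carry out all of these steps, in the paper the lemma is invoked purely as a citation.
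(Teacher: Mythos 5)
The paper does not prove this lemma at all: it is invoked purely as a citation to Hardy and Wright, exactly as you observe in your last sentence, so there is no in-paper argument to compare against. Your outline is a correct sketch of the standard analytic proof. The Abel-summation identity
\[
\pi(m) = \frac{\theta(m)}{\log m} + \int_2^m \frac{\theta(t)}{t\log^2 t}\, dt
\]
is right, and the Newman route (Euler product, $\Phi(s)$ versus $-\zeta'/\zeta$, nonvanishing of $\zeta$ on $\mathrm{Re}(s)=1$ via $3+4\cos\varphi+\cos 2\varphi\ge 0$, the Tauberian theorem, and the monotonicity upgrade) is the standard one. The only step you leave implicit is Chebyshev's elementary bound $\theta(x)=O(x)$: you need it both to see that the integral above is $O(m/\log^2 m)$ and hence of lower order, and as the boundedness hypothesis required to apply Newman's analytic theorem to $\theta(e^u)e^{-u}-1$. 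That bound is itself elementary (e.g.\ via $\binom{2n}{n}$), so this is a small omission in an otherwise sound sketch, but it should be stated if the argument were written out in full.
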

\newcommand\Gwire{\G_{w}}

By using the Lemma \ref{lemma:primes} we can construct automata networks with non-polynomial cycles simply by making disjoint union of rotations (\textit{i.e.} network whose interaction graph is a cycle that just rotate the configuration at each step). Indeed, it is sufficient to consider rotations on cycle whose length are successive prime numbers. It turns out that these automata networks are exactly $\Gwire$-networks where $\Gwire$ is a single 'wire gate': ${\Gwire =\{id_B\}}$ where ${id_B}$ is the identity map over $\{0,1\}$.

Formally, according to Definition \ref{def:g-network}, for any $\Gwire$-network $F:Q^V \to Q^V$ there exist a partition $V=C_1\cup C_2 \hdots \cup C_k$ where $C_i = \{u^i_1\hdots,u^i_{l_i}\}$ with ${l_i\geq 2}$ for each $i = 1,\hdots,k$ and $F(x)_{u^i_{s+1\bmod l_i}}= x_{u^i_s}$ for any $x \in Q^V$ and $0\leq s \leq l_i$.


\begin{theorem}\label{theo:non-polyn-cycl}
  Any family $\mathcal{F}$ that has coherent $\Gwire$-gadgets has superpolynomial cycles, more precisely: there is some ${\alpha>0}$ such that for infinitely many ${n\in\N}$, there exists a network $F_n\in\mathcal{F}$ with ${O(n)}$ nodes and a periodic orbit of size ${\Omega(\exp (n^\alpha))}$. 
\end{theorem}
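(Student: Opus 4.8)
The plan is to build explicit $\Gwire$-networks whose periodic orbits have super-polynomial period, and then transport this dynamical feature into $\mathcal{F}$ through the coherent-gadgets machinery. The key number-theoretic input is Lemma~\ref{lemma:primes}: if $p_1<p_2<\cdots<p_k$ are the primes up to $m$, then a disjoint union of rotations of lengths $p_1,\ldots,p_k$ has a periodic orbit of period $\lcm(p_1,\ldots,p_k)=\prod_{i}p_i = \exp(\theta(m))$. Since $\theta(m)\sim m$ by Lemma~\ref{lemma:primes}, this period is $\exp(\Omega(m))$, while the total number of nodes is $N=\sum_i p_i = O(m^2)$ (each prime contributes a cycle of its own length, and there are $\pi(m)\sim m/\log m$ of them, each of size at most $m$). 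Writing $n=N$, we get $m=\Omega(\sqrt{n})$, hence a period of $\exp(\Omega(\sqrt{n})) = \Omega(\exp(n^{\alpha}))$ with $\alpha=1/2$, which is the announced super-polynomial bound.

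Concretely, I would first fix, for each $m\geq 2$, the $\Gwire$-network $R_m$ defined as the disjoint union over all primes $p\leq m$ of a single cycle $C_p$ of length $p$ carrying the rotation map $F(x)_{u_{s+1\bmod p}} = x_{u_s}$; by the formal description of $\Gwire$-networks given just before the theorem, each $R_m$ is indeed a $\Gwire$-network, and it belongs to $\Gamma(\Gwire)$. Choosing the all-ones configuration on one cycle and all-zeros elsewhere (or any non-constant configuration on each $C_p$) gives a periodic point of $R_m$ whose orbit has period exactly $\lcm_{p\leq m} p = \prod_{p\leq m} p$. This establishes the super-polynomial-period dynamics inside the reference family $\Gamma(\Gwire)$.

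Next I would invoke the gadget framework to push this into $\mathcal{F}$. Since $\mathcal{F}$ has coherent $\Gwire$-gadgets and the single gate $\mathrm{id}_B$ is irreducible, Lemma~\ref{lem:from-gadgets-to-networks} (or Corollary~\ref{cor:gadgets-mon-csan} in the CSAN case) yields that $\mathcal{F}$ contains a subfamily simulating $\Gamma(\Gwire)$ in constant time $T$ and linear space $S$. Applying the simulation to each $R_m$ produces a network $F_m\in\mathcal{F}$ with $O(N)=O(n)$ nodes, and by Lemma~\ref{lem:simu-dynamic} the orbit graph $G_{R_m}$ embeds into $G_{F_m}^{T}$; in particular a periodic orbit of $R_m$ of period $P$ lifts to a periodic orbit of $F_m$ of period $T\cdot P$. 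Since $T$ is a constant, the period of the lifted orbit is still $\exp(\Omega(m)) = \Omega(\exp(n^{\alpha}))$ with $\alpha=1/2$, and the node count remains $O(n)$. Relabelling, for infinitely many $n$ we obtain the desired $F_n\in\mathcal{F}$ with $O(n)$ nodes and a periodic orbit of size $\Omega(\exp(n^{\alpha}))$.

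The main obstacle, and the step deserving the most care, is the bookkeeping that turns the prime-counting estimates into a clean statement relating the number of nodes $n$ to the exponent $\alpha$: one must verify that $N=\sum_{p\leq m}p = O(m^2)$ (a routine consequence of $\theta(m)\sim m$ together with $\pi(m)\sim m/\log m$, since $\sum_{p\le m} p \le m\,\pi(m)$), and then confirm that the constant-time, linear-space distortion of the simulation neither destroys the periodicity nor degrades the exponent. The latter is guaranteed precisely because the simulation time constant $T$ is \emph{constant} (not merely polynomial): multiplying an $\exp(\Omega(m))$ period by a constant leaves it $\exp(\Omega(m))$, so no loss in $\alpha$ occurs, unlike the general super-polynomial distortion discussed after Theorem~\ref{them:univ-rich-dynamics}. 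Everything else is a direct application of the already-established lemmas.
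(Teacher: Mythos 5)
Your proposal is correct and follows essentially the same route as the paper's own proof: a disjoint union of prime-length rotations with period $\prod_{p\le m}p=\exp(\theta(m))$ and $O(m^2)$ nodes, transported into $\mathcal{F}$ via Lemma~\ref{lem:from-gadgets-to-networks} and Lemma~\ref{lem:simu-dynamic}, yielding $\alpha=1/2$ (the paper states the equivalent bound $\Omega(\exp(\sqrt{m\log m}))$ and seeds each circuit with a single $1$). One small slip: the configuration ``all-ones on one cycle and all-zeros elsewhere'' is a \emph{fixed point} of the rotation, so you must rely on your parenthetical alternative --- a non-constant pattern on \emph{every} prime cycle, which has period exactly $p$ on the cycle of length $p$ since $p$ is prime --- to get the claimed $\lcm$.
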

\begin{proof}
  Taking the notations of Lemma~\ref{lemma:primes}, define for any $n$ the $\Gwire$-network $G_n$ made of disjoint union of circuits of each prime length less than $n$. $G_n$ has size at most ${n\pi(n)}$ and if we consider a configuration $x$ which is in state $1$ at exactly one node in each of the $\pi(n)$ disjoint circuit, it is clear that the orbit of $x$ is periodic of period $\exp{\theta(n)}$. Therefore, from Lemma~\ref{lemma:primes}, for any $n$, $G_n$ is a circuit of size ${m\leq n\pi(n)}$ with a periodic orbit of size $\theta(n)\in\Omega(\exp(\sqrt{m\log m}))$.
 By hypothesis there are linear maps $T$ and $S$ such that for any $n$, there is $F_n$ that simulates $G_n$ (by Lemma~\ref{lem:from-gadgets-to-networks}), therefore $F_n$ also has a super-polynomial cycle by Lemma~\ref{lem:simu-dynamic}. 
\end{proof}

\newcommand\Gconj{\G_{conj}}
\newcommand\Fconj{\mathcal{F}_{conj}}
\subsection{Conjunctive networks and $\Gconj$-networks}

Let $G=(V,E)$ be any directed graph. The conjunctive network associated to $G$ is the automata network ${F_G:\{0,1\}^V\to\{0,1\}^V}$ given by ${F(x)_i = \wedge_{j\in N^-(i)}x_j}$ where $N^-(i)$ denotes the incoming neighborhood of $i$. Conjunctive networks are thus completely determined by the interaction graph and a circuit representation can be deduced from this graph in \DLOG{}. We define the family $\Fconj$ as the set of conjunctive networks together with the standard representation $\Fconj^*$ which are just directed graphs encoded as finite words in a canonical way.

\begin{remark}
  We can of course do the same with disjunctive networks. Any conjunctive network $F_G$ on graph $G$ is conjugated to the disjunctive network $F'_G$ on the same graph by the negation map ${\rho : \{0,1\}^V\to \{0,1\}^V}$ defined by ${\rho(x)_i = 1-x_i}$, formally ${\rho\circ F_G = F'_G\circ \rho}$. In particular, this means that the families of conjunctive and disjunctive networks simulate each other. In the sequel we will only state results for conjunctive networks while they hold for disjunctive networks as well.
\end{remark}

Let us now consider the set $\Gconj = \{\gateAND,\gateCOPY\}$.
$\Gconj$-networks are nothing else than conjunctive networks with the following degree constraints: each node has either in-degree $1$ and out-degree $2$, or in-degree $2$ and out-degree $1$. The following theorem shows that, up to simulation, these constraints are harmless.

\begin{theorem}\label{theo:Gconj-networks}
  The family of $\Gconj$-networks simulates the family ${(\Fconj,\Fconj^*)}$ of conjunctive networks in linear time and polynomial space.
\end{theorem}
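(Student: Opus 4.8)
The plan is to show that any conjunctive network $F_G$ on a directed graph $G=(V,E)$ can be simulated by a $\Gconj$-network obtained by replacing each node of $G$ by a small gadget built out of $\gateAND$ and $\gateCOPY$ gates, and then synchronising the whole construction. For a node $v$ with in-degree $d^-(v)$ I would first build a balanced binary \emph{$\gateAND$-tree} with $d^-(v)$ leaves and one root (using $d^-(v)-1$ gates of fanin $2$, fanout $1$), which computes $\bigwedge_{u\in N^-(v)}x_u=F_G(x)_v$ at its root; symmetrically, for the out-degree $d^+(v)$ I would build a balanced \emph{$\gateCOPY$-tree} with one root and $d^+(v)$ leaves (using $d^+(v)-1$ gates of fanin $1$, fanout $2$), which broadcasts this value to all outgoing edges. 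Wiring the output leaf of the gadget $\Gamma_v$ corresponding to an edge $(v,w)$ to the matching input leaf of $\Gamma_w$ reproduces exactly the adjacency of $G$, and the result is a $\Gconj$-network because every gate has the prescribed degrees (one checks moreover that the numbers of $\gateAND$ and $\gateCOPY$ gates coincide, both equal to $|E|-|V|$, so the collection of gates closes up consistently into a network).

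The crucial point is synchronisation. A single application of $F_G$ corresponds to propagating data once through every gadget, but different gadgets have different depths, so there is no common time constant yet. I would fix this by padding with \emph{delay lines}: a $\gateCOPY$ gate whose two outputs are both fed into a single $\gateAND$ gate computes the identity with delay $2$ while respecting the degree constraints, and inserting such lines lets me bring every input-leaf-to-output-leaf path to a common length $T=O(\log n)$. After this padding $H$ decomposes into $T$ cyclic layers $L_0,\dots,L_{T-1}$, with $L_{(i+1)\bmod T}$ depending only on $L_i$, and the glued boundary leaves $\ell_{(v,w)}$ all live in $L_0$. The block embedding then follows the blocs/context variant of Remark~\ref{rem:bloc-simul}: I set $D_v=\{\ell_{(v,w)}:w\in N^+(v)\}$ with patterns $p_{v,q}$ constant equal to $q$ (so every leaf $\ell_{(v,w)}$ stores $x_v$), and I put every non-$L_0$ node in a context block carrying the all-$0$ pattern. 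Since $\gateAND$ and $\gateCOPY$ both send the all-$0$ input to the all-$0$ output, the invariant ``only one layer is nonzero'' is preserved by $H$: starting from $\phi(x)$ the data wavefront advances exactly one layer per step, reading at each stage only the layer that actually carries data, and after $T$ steps it returns to $L_0$ transformed by the full layered circuit. By construction that circuit maps $\ell_{(v,w)}=x_v$ to $\ell_{(v,w)}=F_G(x)_v$, so $H^T\circ\phi=\phi\circ F_G$ as required by Definition~\ref{def:bloc-simu}.

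Finally I would check the resource bounds and the \DLOG{} constructibility demanded by Definition~\ref{def:simu-family}: the time constant is $T=O(\log n)$ (in particular bounded by a linear map), the number of nodes of $H$ is $O(|E|\log n)=O(n^2\log n)$ (polynomial space), and all tree-building and padding decisions are local and depend only on the degrees read off $G$, hence computable in logarithmic space. The step I expect to be the most delicate is precisely this synchronisation-plus-wavefront argument: making every cycle of $H$ have length divisible by $T$ so that the cyclic layering is globally consistent (which forces some parity bookkeeping when padding with delay-$2$ lines), and handling the two degenerate cases — \emph{sources} of $G$, where an empty conjunction must be realised as the constant $1$, which I would supply by a fixed-point sub-gadget sitting in the constant context (a $\gateCOPY$/$\gateAND$ $2$-cycle stuck at $1$ and feeding only into the source gadgets), and \emph{sinks}, where $D_v$ would otherwise be empty and must be given a dedicated read-out leaf so that $x_v$ remains injectively encoded.
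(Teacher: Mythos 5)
Your overall strategy is the paper's: replace each node $v$ of $G$ by a fanin gadget built from $\gateAND$ gates feeding a fanout gadget built from $\gateCOPY$ gates, wire output leaves to input leaves along the edges of $G$, synchronise every gadget-traversal path to a common length $T$, and block-embed configurations at the gadget boundaries. The only real divergence is cosmetic: the paper uses chain-shaped gadgets normalised to exactly $n$ steps (so $T$ is linear) where you use balanced trees padded to $T=O(\log n)$; both fit the claimed bounds, and your layered ``single nonzero wavefront'' argument is the same device the paper deploys in the proof of Theorem~\ref{theo:gmon-univ}.

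The step that does not go through as written is the synchronisation, precisely the point you flag as delicate. Your only padding primitive is the $\gateCOPY\to\gateAND$ pair, of delay exactly $2$, so it can only reconcile path lengths of equal parity; in a balanced binary tree with $d$ leaves, $d$ not a power of two, leaves sit at depths $\lfloor\log_2 d\rfloor$ and $\lceil\log_2 d\rceil$, so two input-to-output paths through the same gadget already differ in length by $1$, and no delay-$2$ padding fixes that. You need a delay-$1$ identity, and inside $\Gconj$ that forces an auxiliary port: an $\gateAND$ with one input pinned to a constant $1$, or a $\gateCOPY$ with one output routed into an absorbing subnetwork. Your proposed constant-$1$ source --- a $\gateCOPY$/$\gateAND$ $2$-cycle --- cannot play this role: the COPY's two outputs exactly fill the AND's two inputs and the AND's output exactly fills the COPY's input, so the cycle is closed and exports nothing. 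The repair is the one visible in the paper's Figure~\ref{fig:faningadget}: each input leaf is a $\gateCOPY$ whose two outputs enter a single $\gateAND$, every leaf-to-root path is made the same length by inserting $\gateCOPY$ nodes whose spare output is soaked up by a small absorbing appendix, and the gate counts are balanced globally. With that delay-$1$/absorber mechanism substituted for your $2$-cycle, your construction and its resource analysis are correct.
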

\begin{proof}
  Let $F$ be an arbitrary conjunctive network on graph ${G=(V,E)}$ with $n$ nodes. Its maximal in/out degree is at most $n$. For each node of indegree ${i\leq n}$ we can make a tree-like $\Gconj$-gadget with $i$ inputs and $1$ output that computes the conjunction of its $i$ inputs in exactly ${n}$ steps: more precisely, we can build a sub-network of size ${O(n)}$ with $i$ identified 'input' nodes of fanin $1$ and one identified output node of fanout $1$ such that for any ${t\in\N}$ the state of the output node at time $t+n$ is the conjunction of the states of the input nodes at time $t$ (the only sensible aspect is to maintain synchronization in the gadget, see Figure~\ref{fig:faningadget}).
  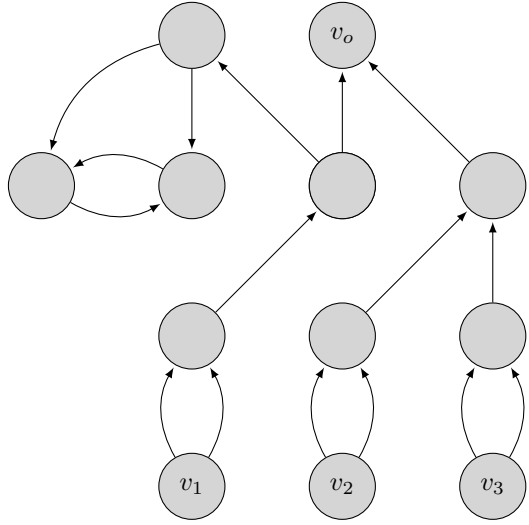
\begin{figure}
    \centering
    \begin{tikzpicture}[shorten >=1pt,node distance=2cm,on grid,auto]
      \tikzstyle{every state}=[fill={rgb:black,2;white,10}]
      \node[state] (q_1)                    {};
      \node[state] (q_2)  [right of=q_1]    {};
      \node[state] (q_3)  [right of=q_2]    {};
      \node[state] (i_1)  [below of=q_1]    {$v_1$};
      \node[state] (i_2)  [below of=q_2]    {$v_2$};
      \node[state] (i_3)  [below of=q_3]    {$v_3$};
      \node[state] (q_4)  [above of=q_3]    {};
      \node[state] (q_5)  [above of=q_2]    {};
      \node[state] (q_6)  [above of=q_5]    {$v_o$};
      \node[state] (q_7)  [above of=q_1]    {};
      \node[state] (q_8)  [above of=q_7]    {};
      \node[state] (q_9)  [right of=q_7]    {};
      \node[state] (q'_6)  [left of=q_7]    {};
      \path[->]
      (q_2) edge (q_4);
      \path[->]
      (q_1) edge (q_5);
      \path[->]
      (q_5) edge (q_8);
      \path[->]
      (q_3) edge (q_4);
      \path[->]
      (q_5) edge (q_6);
      \path[->]
      (q_4) edge (q_6);
      \path[->]
      (q_8) edge (q_7);
      \path[->]
      (q_8) edge[bend right] (q'_6);
      \path[->]
      (q'_6) edge[bend right] (q_7);
      \path[->]
       (q_7) edge[bend right] (q'_6);
      \path[->]
      (i_1) edge[bend left] (q_1);
      \path[->]
      (i_1) edge[bend right] (q_1);
      \path[->]
      (i_2) edge[bend left] (q_2);
      \path[->]
      (i_2) edge[bend right] (q_2);
      \path[->]
      (i_3) edge[bend left] (q_3);
      \path[->]
      (i_3) edge[bend right] (q_3);
    \end{tikzpicture}
    \caption{Fanin gadget of degree $3$. For any configuration $x$, ${F^3(x)_{v_o}=x_{v_1}\wedge x_{v_2}\wedge x_{v_3}}$.}
    \label{fig:faningadget}
  \end{figure}

  We do the same for copying the output of a gate $i$ times and dealing with arbitrary fanout. Then we replace each node of $F$ by a meta node made of the two gadgets to deal with fanin/fanout and connect everything together according to graph $G$ (note that fanin/fanout is granted to be $1$ in the gadgets so connections respect the degree constraints). We obtain in \DLOG{} a $\Gconj$-network of size polynomial in $n$ that simulates $F$ in linear time. 
\end{proof}

\begin{remark}
  The family of conjunctive networks can produce super-polynomial periods but is not universal. There are several ways to show this. It is for instance impossible to produce super-polynomial transients within the family \cite[Theorem 3.20]{De_Schutter_1999} so Corollary~\ref{cor:universality} conclude. One could also use Corollary~\ref{coro:trickyuniversal} since a node in a strongly connected component of a conjunctive network must have a trace period of at most the size of the component (actually much more in known about periods in conjunctive networks through the concept of loop number or cyclicity, see \cite{De_Schutter_1999}).
\end{remark}

\newcommand\Gwfa{\G_{t}}
\newcommand\stdAND{\mathrm{AND}_{\{0,1\}}}
\newcommand\specAND{\mathrm{AND}_{2}}
\newcommand\gateLoop{\mathrm{\Lambda}}
\newcommand\gateId{\mathrm{Id}}
\newcommand\gateCpy{\mathrm{\Upsilon}}

\subsection{Super-polynomial transients and periods without universality}

Let us consider in this section alphabet ${Q=\{0,1,2\}}$. We are going to define a set $\Gwfa$ such that $\Gwfa$-networks exhibit super-polynomial transients but are not universal. To help intuition, $\Gwfa$-networks can be though as standard conjunctive networks on ${\{0,1\}}$ that can in some circumstances produce state $2$ which is a spreading state (a node switches to state $2$ if one of its incoming neighbors is in state $2$). The extra state $2$ will serve to mark super-polynomial transients, but it cannot escape a strongly connected component once it appears in and, as we will see, $\Gwfa$-networks are therefore too limited in their ability to produce large periodic behavior inside strongly connected components.

$\Gwfa$ is made of the following maps:
\begin{align*}
  \stdAND &: (x,y) \mapsto
                    \begin{cases}
                      2&\text{ if $2\in\{x,y\}$}\\
                      x\wedge y&\text{ else.}
                    \end{cases}\\
  \specAND &: (x,y) \mapsto
                    \begin{cases}
                      2&\text{ if $2\in\{x,y\}$ or $x=y=1$}\\
                      0&\text{ else.}
                    \end{cases}\\
  \gateLoop &: (x,y) \mapsto
                    \begin{cases}
                      2&\text{ if $2\in\{x,y\}$}\\
                      x&\text{ else.}
                    \end{cases}\\
  \gateId &: x\mapsto x.\\
  \gateCpy &: x\mapsto (x,x).
\end{align*}

 $\Gwfa$-networks can produce non-polynomial periods by disjoint union of rotations of prime lengths as in Theorem~\ref{theo:non-polyn-cycl}, but they can also wait for a global synchronization of all rotations and freeze the result of the test for this synchronization condition inside a small feedback loop attached to a ``controlled AND map''.

More precisely, as shown in Figure~\ref{fig:freezeand} we can use in the context of any $\Gwfa$-network a small module ${T(x)}$ of made of five nodes with the following property: if the $\gateLoop$ node of the module is in state $0$ in some initial configuration, then it stays in state $0$ as long as nodes $x$ is not in state $1$, and when $x=1$ at some time step $t$ then from step $t+2$ on the $\gateLoop$ node is in state $2$ at least one step every two steps. This module is the key to control transient behavior.

\begin{figure}
  \centering
  \begin{tikzpicture}[shorten >=1pt,node distance=2cm,on grid,auto]
    \tikzstyle{every state}=[fill={rgb:black,2;white,10}]
    
    \fill[fill=white!90!gray] (-4,0)--(-2,-2)--(-4,-2)--cycle;

    \node[state] (q_1)                    {$\specAND$};
    \node[state] (q_2)  [left of=q_1]    {$\gateCpy_1$};
    \node[state] (q_3)  [below of=q_1]    {$\gateCpy_2$};
    \node[state] (q_4)  [right of=q_1]    {$\gateLoop$};
    \node[state] (q_5)  [below of=q_2]    {$x$};
    \node[state] (q_6)  [below of=q_4]    {$\gateId$};

    \path[<->]
    (q_6) edge (q_4);
    \path[->]
    (q_2) edge (q_1);
    \path[->]
    (q_1) edge (q_4);
    \path[->]
    (q_3) edge (q_1);
    \path[->]
    (q_5) edge (q_2);
    \path[->]
    (q_5) edge (q_3);
    \path[dotted,very thick] (q_5)+(-2,2) edge (q_5);
    \path[dotted,very thick] (q_5)+(-2,0) edge (q_5);
  \end{tikzpicture}
  
  \caption{Freezing the result of a test in a $\Gwfa$-network. The module $T(x)$ is made of the nodes marked $\gateCpy$, $\specAND$, $\gateLoop$ and $\gateId.$  Observe that each node represents some output of its corresponding label (for more details on $\G$-networks see Definition \ref{def:g-network}). Each gate has one output with the exception of the gate $\Upsilon$ which is represented by two nodes. The module $T(x)$ reads the value of node $x$ belonging to an arbitrary $\Gwfa$-network (represented in light gray inside dotted lines).  The output $\Lambda$ is fed back to its control input via the $\gateId$ node (self-loops are forbidden in $\Gwfa$-networks). Note that $x$  as well as the rest of the network is not influenced by the behavior of the gates of the module $T(x)$.}
  \label{fig:freezeand}
\end{figure}

Besides, the map $\stdAND$ behaves like standard Boolean AND map when its inputs are in ${\{0,1\}}$. More generally, by combining such maps in a tree-like fashion, one can build modules ${A(x_1,\ldots,x_k)}$ for any number $k$ of inputs with a special output node which has the following property for some time delay $\Delta\in O(\log(k))$: the output node at time ${t+\Delta}$ is in state $1$ if and only if all nodes $x_i$ (with ${1\leq i\leq k}$) are in state $1$ at time $t$.

Combining these two ingredients, we can build upon the construction of Theorem~\ref{theo:non-polyn-cycl} to obtain non-polynomial transients in any family having coherent $\Gwfa$-gadgets.

\begin{theorem}
  \label{theo:non-poly-transient}
  Any family $\mathcal{F}$ that has coherent $\Gwfa$-gadgets has superpolynomial transients, more precisely: there is some ${\alpha>0}$ such that for any ${n\in\N}$, there exists a network $F_n\in\mathcal{F}$ with ${O(n)}$ nodes and a configuration $x$ such that ${F_n^t(x)}$ is not in an attractor of $F_n$ with ${t\in\Omega(\exp ( n^\alpha))}$.
\end{theorem}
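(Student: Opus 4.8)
The plan is to combine the super-polynomial cycle construction of Theorem~\ref{theo:non-polyn-cycl} with the freezing module $T(x)$ and the conjunction module $A(x_1,\ldots,x_k)$ described above. The intuition is the following: take the $\Gwire$-network $G_n$ consisting of disjoint rotations of prime lengths $p_1,\ldots,p_{\pi(n)}$, but now realize these rotations inside a $\Gwfa$-network (using the fact that $\stdAND$ and $\gateId$/$\gateCpy$ can emulate a wire, or more directly that $\Gwfa$ contains the maps needed to build pure rotations). Mark exactly one node in each cycle with state $1$ and the rest with state $0$. As the system runs, the ``marked'' node rotates around each cycle, and the configuration returns to its initial state only after $\lcm(p_1,\ldots,p_{\pi(n)}) = \exp(\theta(n))$ steps, which is super-polynomial in the number of nodes $m = O(n\pi(n))$.

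First I would attach to this odometer-like device a detector that fires exactly when all cycles are simultaneously synchronized to a distinguished ``all aligned'' configuration. Concretely, pick one designated node $v_i$ in each cycle $i$, and feed the states $v_1,\ldots,v_{\pi(n)}$ into a conjunction module $A(v_1,\ldots,v_{\pi(n)})$ built from $\stdAND$ gates in a tree of depth $O(\log n)$. The output of $A$ will be in state $1$ precisely when all designated nodes carry the mark simultaneously, an event that occurs for the first time only after a number of steps of order $\exp(\theta(n))$ (after adjusting phases so that the unique common alignment is the target, which one can arrange by choosing the initial offsets in each cycle). Next I would route this output into the freezing module $T(x)$ with $x$ being the output node of $A$: by the stated property of $T$, the $\gateLoop$ node of $T$ stays in state $0$ until $x$ first equals $1$, and from two steps after that event it enters state $2$ at least once every two steps. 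Thus the $\gateLoop$ node changes its long-term behavior only after the super-polynomial synchronization time.

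The key observation making this a \emph{transient} rather than a periodic statement is that state $2$ is a spreading, absorbing state: once the $\gateLoop$ node reaches state $2$, it cannot return to the $\{0,1\}$-behavior, so the orbit cannot have been periodic before this moment. Formally, I would take the initial configuration $x$ with all cycles marked appropriately and the $T$-module in state $0$, and argue that $F_n^t(x)$ is not yet in an attractor for all $t$ up to the first synchronization time, which is in $\Omega(\exp(\sqrt{m\log m}))\subseteq\Omega(\exp(m^\alpha))$ for some $\alpha>0$ by Lemma~\ref{lemma:primes}. This gives the desired transient bound. Finally, since $\mathcal{F}$ has coherent $\Gwfa$-gadgets, Lemma~\ref{lem:from-gadgets-to-networks} gives a network $F_n\in\mathcal{F}$ simulating this $\Gwfa$-network in constant time $T$ and linear space $S$, and by Lemma~\ref{lem:simu-dynamic} the transient length is preserved up to the factor $T$, so $F_n$ has $O(n)$ nodes (after renaming $n$) and a super-polynomial transient.

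The main obstacle I expect is the phase-alignment and timing bookkeeping: I must ensure that the unique first time at which the conjunction module $A$ outputs $1$ is genuinely super-polynomially far in the future, which requires choosing the initial mark positions in the prime-length cycles so that their simultaneous alignment at the designated nodes happens only once per full $\lcm$ period (and not earlier through some coincidental partial alignment). This is where the Chinese Remainder Theorem is needed: the system of congruences determining when all designated nodes are marked simultaneously has a unique solution modulo $\lcm(p_1,\ldots,p_{\pi(n)})$, so the event is indeed first reached at a super-polynomial time. A secondary technical point is to verify that the $O(\log n)$ propagation delay $\Delta$ of the conjunction tree and the constant delay of the $T$-module do not spoil the synchronization argument; since both delays are negligible compared to the period, one only needs to account for them as a fixed additive offset, which does not affect the asymptotic transient bound.
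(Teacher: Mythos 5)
Your proposal is correct and follows essentially the same route as the paper's proof: disjoint prime-length rotations with one designated node per cycle, a $\stdAND$-tree $A(v_1,\ldots,v_k)$ detecting simultaneous alignment, the freezing module $T(x)$ recording that event irreversibly via the spreading state $2$, and transfer to $\mathcal{F}$ through Lemma~\ref{lem:from-gadgets-to-networks} and Lemma~\ref{lem:simu-dynamic}. The paper fixes the initial marks as the successor of each $v_p$ so that the first simultaneous alignment occurs at the product of the primes below $n$, which is exactly the phase-alignment bookkeeping you flagged as the main obstacle.
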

\begin{proof}
  Like in Theorem~\ref{theo:non-polyn-cycl}, the key of the proof is to show that there is a $\Gwfa$-network with transient length as in the theorem statement, then the property immediately holds for networks of the family $\mathcal{F}$ by Lemma~\ref{lem:from-gadgets-to-networks} and Lemma~\ref{lem:simu-dynamic}.

  For any ${n>0}$ we construct a $\Gwfa$-network $G_n$ made of two parts:
  \begin{itemize}
  \item the 'bottom' part of $G_n$ uses a polynomial set of nodes $B_n$ and consists in a disjoint union of circuits for each prime length less than $n$ as in Theorem~\ref{theo:non-polyn-cycl}, but where for each prime $p$, the circuit of length $p$ has a node $v_p$ which implements a copy gate $\gateCOPY$, thus not only sending its value to the next node in the circuit, but also outputting it to the second part of $G_n$;
  \item the 'top' part of $G_n$ is made of a module ${A(x_1,\ldots,x_k)}$ connected to all nodes $v_p$ as inputs and whose output is connected to a test module $T(x)$ as in Figure~\ref{fig:freezeand}.
  \end{itemize}
  Note that the size of $G_n$ is polynomial in $n$.  With this construction we have the following property as soon as the modules ${A(x_1,\ldots,x_k)}$ and $T(x)$ are initialized to state $0$ everywhere: as long as nodes $v_p$ are not simultaneously in state $1$ then the output of the test module $T(x)$ stays in state $0$; moreover, if at some time $t$ nodes $v_p$ are simultaneously in state $t$, then after time $t+O(\log(t))$ the output node of module $T(x)$ is in state $1$ one step every two steps. This means that $t+O(\log(t))$ is a lower bound on the transient of the considered orbit. To conclude the theorem it is sufficient to consider the initial configuration where all nodes are in state $0$ except the successor of node $v_p$ in each circuit of prime length $p$, which are in state $1$. In this case it is clear that the first time $t$ at which all nodes $v_p$ are in state $1$ is the product of prime numbers less than $n$. As in theorem~\ref{theo:non-polyn-cycl}, we conclude thanks to Lemma~\ref{lemma:primes}. 
\end{proof}

As said above, $\Gwfa$-networks are limited in their ability to produce large periods. More precisely, as shown by the following lemma, their behavior is close enough to conjunctive networks so that it can be analyzed as the superposition of the propagation/creation of state $2$ above the behavior of a classical Boolean conjunctive network. To any $\Gwfa$-network $F$ we associate the Boolean conjunctive network $F^*$ with alphabet $\{0,1\}$ as follows: nodes with local map $\stdAND$ or $\specAND$ are simply transformed into nodes with Boolean conjunctive local maps on the same neighbors, nodes with local maps $\gateCpy$ or $\gateId$ are left unchanged (only their alphabet changes), and nodes with map $\gateLoop(x,y)$ are transformed into a node with only $x$ as incoming neighborhood. 

\begin{lemma}\label{lem:superconjunctive}
  Let $F$ be a $\Gwfa$-network with node set $V$ and $F^*$ its associated Boolean conjunctive network. Consider any ${x\in\{0,1,2\}^V}$ and any ${x^*\in\{0,1\}^V}$ such that the following holds: 
  \[\forall v\in V: x_v\in\{0,1\}\Rightarrow x^*_v=x_v,\]
  then the same holds after one step of each network: 
  \[\forall v\in V: F(x)_v\in\{0,1\}\Rightarrow F^*(x^*)_v=F(x)_v.\]
\end{lemma}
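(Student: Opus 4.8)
The statement is a one-step invariance property relating a $\Gwfa$-network $F$ to its Boolean conjunctive abstraction $F^*$. The natural approach is a direct case analysis on the local map $g_v$ of an arbitrary node $v\in V$, using the defining equations of the five maps in $\Gwfa$ together with the hypothesis that $x^*$ agrees with $x$ on all coordinates where $x$ takes a Boolean value. The plan is to fix $v$ and assume the consequent's premise, namely $F(x)_v\in\{0,1\}$; from this I will deduce constraints on the incoming states $x_u$ (for $u$ in the incoming neighborhood of $v$), show these states are in fact Boolean, invoke the hypothesis to transfer them verbatim to $x^*$, and then check that $F^*(x^*)_v$ computes the same Boolean value. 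The crucial observation making all cases work is that state $2$ is \emph{spreading}: for every map in $\Gwfa$, if any relevant incoming neighbor is in state $2$ then the output is $2$. Contrapositively, $F(x)_v\in\{0,1\}$ forces the relevant incoming neighbors to be in $\{0,1\}$, which is exactly what lets the hypothesis bite.

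First I would dispatch the three straightforward maps. For $\gateId$ and $\gateCpy$ the output simply copies the single incoming state $x_u$; if the output is Boolean then $x_u\in\{0,1\}$, so $x^*_u=x_u$ by hypothesis, and since $F^*$ leaves these maps unchanged we get $F^*(x^*)_v=x^*_u=x_u=F(x)_v$. For $\stdAND$ with incoming neighbors $u_1,u_2$: if $F(x)_v\in\{0,1\}$ then by definition $2\notin\{x_{u_1},x_{u_2}\}$, so both are Boolean and $F(x)_v=x_{u_1}\wedge x_{u_2}$; the hypothesis gives $x^*_{u_i}=x_{u_i}$, and $F^*$ applies the Boolean conjunction on the same pair, so $F^*(x^*)_v=x^*_{u_1}\wedge x^*_{u_2}=x_{u_1}\wedge x_{u_2}=F(x)_v$.

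The two remaining maps need slightly more care because their $F^*$-images discard information. For $\specAND$ with neighbors $u_1,u_2$: a Boolean output of $\specAND$ can only be $0$, which occurs precisely when neither input is $2$ and they are not both $1$, i.e. both inputs are Boolean and $x_{u_1}\wedge x_{u_2}=0$. Transferring via the hypothesis, $F^*$ (which replaces $\specAND$ by Boolean conjunction) yields $x^*_{u_1}\wedge x^*_{u_2}=0=F(x)_v$, as required. For $\gateLoop(x,y)$ with $x=x_{u_1}$ the control input and $y=x_{u_2}$: if the output is Boolean then $2\notin\{x_{u_1},x_{u_2}\}$ and $F(x)_v=x_{u_1}$, which is Boolean; since $F^*$ transforms this node into one depending only on $u_1$ with the identity-like conjunctive action, $F^*(x^*)_v=x^*_{u_1}=x_{u_1}=F(x)_v$. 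Assembling the cases, the implication holds at every $v\in V$, which is the claim.

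\textbf{Main obstacle.} The reasoning itself is routine; the only point demanding attention is the $\gateLoop$ case, where $F^*$ \emph{drops} the second incoming neighbor $y$. One must verify that discarding $y$ is harmless precisely because a Boolean output of $\gateLoop$ already guarantees $y\neq 2$ and the output value equals $x$ independently of $y$'s Boolean value, so the abstracted network's reduced dependency still reproduces the correct bit. I expect this bookkeeping about which dependencies survive in $F^*$ to be the subtlest part, everything else following mechanically from the spreading nature of state $2$.
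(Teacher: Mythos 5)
Your proposal is correct and follows essentially the same route as the paper's proof: the key observation in both is that state $2$ is spreading, so a Boolean output at $v$ forces all the incoming neighbors that $F^*$ retains to be Boolean, at which point the hypothesis transfers them to $x^*$ and the conjunctive abstraction computes the same bit (with the special check that $\gateLoop$'s dropped second input is harmless). Your write-up is simply a more detailed case-by-case version of the paper's one-paragraph argument.
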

\begin{proof}
  It is sufficient to check that if $F(x)_v\neq 2$, it means that all its incoming neighbors are in ${\{0,1\}}$ so $x$ and $x^*$ are equal on these incoming neighbors, and that it only depend on neighbor $a$ in the case of a local map ${\gateLoop(a,b)}$. In any case, we deduce ${F^*(x^*)_v=F(x)_v}$ by definition of $F^\ast$. 
\end{proof}

$\Gwfa$-networks are close to Boolean conjunctive networks as shown by the previous lemma. The following result shows that this translates into strong limitations in their ability to produce large periods and prevents them to be universal.

\begin{theorem}
  \label{theo:transient-nonuniversal}
  The family of $\Gwfa$-networks is not universal.
\end{theorem}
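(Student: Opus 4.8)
The plan is to show that the $\Gwfa$-networks fail one of the necessary conditions for universality established earlier, namely the one coming from Corollary~\ref{coro:trickyuniversal}: a universal family must contain networks with a node lying in a strongly connected component of the interaction graph whose trace has super-polynomial period. I will prove that $\Gwfa$-networks cannot achieve this, by leveraging Lemma~\ref{lem:superconjunctive} to reduce the long-term periodic behavior of any $\Gwfa$-network to that of its associated Boolean conjunctive network $F^*$, for which periods inside strongly connected components are well understood (and polynomially bounded via the cyclicity / loop-number theory referenced in the remark after Theorem~\ref{theo:Gconj-networks} and in \cite{De_Schutter_1999}).

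\textbf{First} I would argue that on any periodic orbit, the spreading state $2$ behaves very rigidly. Observe that state $2$ is \emph{spreading}: by inspection of all five maps in $\Gwfa$, if any incoming neighbor of $v$ is in state $2$ then $F(x)_v=2$. Consequently, on a configuration $x$ in a periodic orbit, the set $Z(x)=\{v : x_v=2\}$ evolves monotonically along the cycle in the sense that $Z$ restricted to any strongly connected component of the interaction graph is either eventually empty or eventually all of it: once state $2$ enters a strongly connected component it spreads to the whole component within a number of steps bounded by the diameter and can never leave (no map ever outputs a non-$2$ value when an incoming neighbor is $2$). Since we are on a periodic orbit, the only possibilities within a given strongly connected component $S$ are that either all nodes of $S$ are permanently in state $2$, or no node of $S$ is ever in state $2$. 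In the first case the trace at any node of $S$ is constant (period $1$); so to get a large period at a node in a strongly connected component we must be in the second case, where the orbit stays in $\{0,1\}$ on all of $S$ forever.

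\textbf{Next}, for such an orbit confined to $\{0,1\}$ on the component $S$, I invoke Lemma~\ref{lem:superconjunctive}: taking $x^*=x$ on $S$ (valid since $x$ is already Boolean there), the trace of $F$ at each node of $S$ coincides with the trace of the Boolean conjunctive network $F^*$. Thus the period of the trace at a node $v$ in a strongly connected component equals the period of the corresponding trace in $F^*$. But $F^*$ is a genuine Boolean conjunctive network, and a node in a strongly connected component of a conjunctive network has trace period bounded polynomially in the size of the component — this is exactly the obstruction noted in the remark following Theorem~\ref{theo:Gconj-networks} (periods governed by the loop number / cyclicity, all polynomially bounded, see \cite{De_Schutter_1999}). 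Hence every node in a strongly connected component of any $\Gwfa$-network has polynomially bounded trace period, contradicting the conclusion of Corollary~\ref{coro:trickyuniversal} that a universal family must contain networks with super-polynomial trace period at such a node. Therefore $\Gwfa$-networks are not universal.

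\textbf{The main obstacle} I expect is making the dichotomy argument for state $2$ fully rigorous on periodic orbits: I must be careful that the interaction graph of a $\Gwfa$-network is exactly the wiring graph of gates (which may differ subtly from $F^*$'s graph, since the $\gateLoop$ map drops its control input $y$), and verify that a node's membership in a strongly connected component of $F$'s interaction graph is preserved appropriately when passing to $F^*$. The delicate point is that the $\gateLoop$ node's self-feedback loop (via the $\gateId$ node) remains in $F^*$ only through the $x$-input, so I should confirm that a large-period node in a strongly connected component of $F$ also lies in a strongly connected component of $F^*$ with the same trace — which follows because on a $2$-free periodic orbit the two dynamics agree exactly by Lemma~\ref{lem:superconjunctive}, and strong connectivity is a property of the shared underlying graph restricted to the relevant edges. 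Once this bookkeeping is settled, applying the known polynomial bound on conjunctive-network periods and contradicting Corollary~\ref{coro:trickyuniversal} closes the argument.
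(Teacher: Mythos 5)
Your overall strategy (bound the trace period at nodes of strongly connected components and contradict Corollary~\ref{coro:trickyuniversal}) is the same as the paper's, but the case analysis you build it on rests on a false dichotomy. You claim that on a periodic orbit, within a strongly connected component $S$, either every node of $S$ is permanently in state $2$ or no node of $S$ is ever in state $2$. This is not true: state $2$ is spreading (it propagates forward along edges), but nothing forces it to \emph{persist} at a node once all of that node's incoming neighbors have left state $2$. A directed cycle of three $\gateId$ gates carrying a single $2$, with orbit ${(2,0,0)\to(0,2,0)\to(0,0,2)\to\cdots}$, is a legitimate $\Gwfa$-network whose unique strongly connected component is never entirely in state $2$ yet contains nodes that are periodically in state $2$. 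This mixed regime is not a pathology: it is precisely the mechanism exploited in Theorem~\ref{theo:non-polyn-cycl} and in the transient construction of Theorem~\ref{theo:non-poly-transient} (rotations of prime length), so any proof must handle it. In the mixed case your invocation of Lemma~\ref{lem:superconjunctive} with ${x^*=x}$ on $S$ is unavailable, since the configuration is not Boolean on $S$ at all times, and your argument yields no bound on the period there.

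The paper closes this gap by analyzing the two ``layers'' of the trace separately instead of through a dichotomy. For a node $v$ lying on a closed walk of length ${L\leq n}$, forward spreading of state $2$ gives ${x_v=2\Rightarrow F^L(x)_v=2}$, and periodicity of the orbit gives the converse (if ${F^L(x)_v=2}$ then ${F^{kL}(x)_v=2}$ for all $k$, and choosing $kL$ a multiple of the global period forces ${x_v=2}$); hence the indicator of state $2$ at $v$ is periodic of period dividing $L$. Superposed on this, Lemma~\ref{lem:superconjunctive} (used as a shadowing statement, with \emph{any} Boolean $x^*$ agreeing with $x$ off state $2$, not with ${x^*=x}$) shows that the non-$2$ values at $v$ follow the trace of the Boolean conjunctive network $F^*$, whose period at a node on a closed walk is at most $L$ by the same spreading argument applied to state $0$. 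The whole trace therefore has period at most ${n^2}$, contradicting Corollary~\ref{coro:trickyuniversal}. If you replace your dichotomy by this superposition argument, the rest of your proof, including your bookkeeping about the interaction graphs of $F$ and $F^*$, goes through.
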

\begin{proof}
  Consider a Boolean conjunctive automata network $F$, a configuration $x$ with periodic orbit under $F$ and some node $v$ such that there is a walk of length $L$ from $v$ to $v$. We claim that $x_v=F^L(x)_v$ so the trace at node $v$ in $x$ is periodic of period less than $L$. Indeed, in a conjunctive network state $0$ is spreading so clearly if $x_v=0$ then $F^L(x)_v=0$ and, more generally, ${F^{kL}(x)_v=0}$ for any ${k\geq 1}$. On the contrary, if ${x_v=1}$ then we can't have ${F^L(x)_v=0}$ because then ${F^{Pk}(x)_v=0}$ with $P$ the period of $x$ which would imply $x_v=0$.

  With the same reasoning, if we consider any $\Gwfa$-network $F$, any configuration $x$ with periodic orbit and some node $v$ such that there is a walk of length $L$ from $v$ to $v$, then it holds: 
  \[x_v=2 \Leftrightarrow F^L(x)_v=2.\]
  We deduce thanks to Lemma~\ref{lem:superconjunctive} that for any configuration $x$ with periodic orbit of some $\Gwfa$-network $F$ with $n$ nodes, and for any node $v$ belonging to some strongly connected component, the period of the trace at $v$ starting from $x$ is less than ${n^2}$: it is a periodic pattern of presence of state $2$ of length less than $n$ superposed on a periodic trace on $\{0,1\}$ of length less than $n$. We conclude that the family of $\Gwfa$-networks cannot be universal thanks to Theorem~\ref{coro:trickyuniversal}. 
\end{proof}

\subsection{Simulations and update mode extensions}

In the the next sections, we establish various strong universality results concerning families which are block sequential, local clocks or periodic extensions, \textit{i.e.} of the form ${\blocfamily{\mathcal{F}}{b}}$ or ${\clockfamily{\mathcal{F}}{c}}$ or ${\periodfamily{\mathcal{F}}{p}}$ for some CSAN family $\mathcal{F}$.
Automata networks from such families have a composite alphabet ${Q\times Q'}$ where the extension component $Q'$ is used to code the update mode.
If such an automata network $F$ simulates some automata network $G$, it could potentially use distinct values on both components of states in the block embedding in order to represent different initial configurations for $G$.
Since the $Q'$ component determines the update mode, this means that $F$ could potentially ``be in distinct update modes'' in the same simulation of $G$, depending on the initial configuration of $G$ to simulate.
It would then be illegitimate to say that some automata network under some update mode is simulating $G$.
On the contrary, if the block embedding is such that the $Q'$ component is constant whatever the configuration of $G$ to be represented, then all the orbits of $F$ used to simulate $G$ correspond to the exact same update mode.
In this case, we say that the simulation is \emph{update mode coherent}, and it is legitimate to speak about an automata network under a specific update mode being simulating another one.

First, we would like to make it clear that all results of Section~\ref{sec:concretesimulationresults} are obtained by update mode coherent simulations:
it can be checked that all the pseudo orbits of all the gadgets used are indeed update mode coherent (the extension component of state is always the same, whatever the configuration being encoded).

Second, we would like point out that in fact if some family ${\blocfamily{\mathcal{F}}{b}}$ or ${\clockfamily{\mathcal{F}}{c}}$ or ${\periodfamily{\mathcal{F}}{p}}$ is (strongly) universal then it is also (strongly) universal using only update mode coherent simulations.
The reason for this is as follows: inside the extension component $Q'$ coding the update mode at some node $v$, there are (at most) two components, one being constant (for local clocks or periodic extensions) and one having a deterministic cyclic behavior of the type ${i\mapsto i+1 \bmod p}$ where $p$ is the period of the update mode at the considered node $v$. Since the block embedding of any simulation must be injective, if node $v$ is part of some block to encode states $q$ and $q'$ of some simulated automata network $G$, and if $G$ allows both transitions from $q$ to $q'$ and from $q$ to $q$, then:
\begin{itemize}
\item the constant part of the extension component $Q'$ must be the same in the coding of $q$ and $q'$ (to make possible the transition from $q$ to $q'$);
\item the time constant of the simulation must be a multiple of the period of the update mode at $v$ (to allow transition from $q$ to $q$), and therefore the cyclic part of the extension component must hold the same value in the coding of $q$ and $q'$ (to allow transition from $q$ to $q'$).
\end{itemize}
We deduce that if $G$ can transition from any state $q$ to any state $q'$ at each node, then simulation of $G$ is forced to be update mode coherent.
Finally, observe that any automata network in the family of $\Gmon$-networks has the property of allowing a transition from any $q$ to any $q'$ at any node, therefore a (strongly) universal family ${\blocfamily{\mathcal{F}}{b}}$ or ${\clockfamily{\mathcal{F}}{c}}$ or ${\periodfamily{\mathcal{F}}{p}}$ must simulate the family of $\Gmon$-networks using update mode coherent simulations only. By using the fact that $\Gmon$-networks are strongly universal, we deduce that the considered family can actually simulate any network using update mode coherent simulations.

\section{Effect of asynchronism: a case study of symmetric networks}
\label{sec:concretesimulationresults}


In this section, we focus on studying concrete symmetric automata network (CSAN) families. We use previous theoretical framework on complexity of automata networks families in order to classify different CSAN families according to their dynamical behavior under different update schemes. More precisely, we focus on two different families of CSAN: signed conjunctive networks and min-max networks. We distinguish three main subfamilies inside signed conjunctive networks (for detailed definitions see Definition \ref{def:conjfamily} and Definition \ref{def:minmaxfamily} ) :
\begin{itemize}
 \item symmetric conjunctive networks, which are regular conjunctive networks (in which all edge labels are the identity function); 
 \item locally positive signed conjunctive networks, in which we allow negative edges (labeled by the switch function: $\text{Switch}(x) = 1-x$) but with a local constraint forcing the existence of at least one positive edge in each neighborhood (one edge labeled by the identity function); and 
 \item general signed conjunctive networks, in which there could be negative edges without any constraint (possibly all edges can be negative). 
 \end{itemize}
 In addition, we consider the latter presented three update schemes: block sequential, local clocks and general periodic update scheme. We classify previous families according to their dynamical behavior and simulation capabilities by using the  framework presented in previous sections
Before we enter into the detail, we present in Table \ref{tab:resultsummary} a summary of the main results obtained. 
\begin{table}[]
\centering
\resizebox{0.75\textwidth}{!}{%
\begin{tabular}{|c|c|c|c|c|}
\hline
\textit{Family/Update scheme} & Block sequential & Local clocks & General periodic  \\ \hline
Conjunctive networks          & BPA              & BPA          & \textbf{SPA}     \\ \hline
Locally positive              & BPA              & \textbf{SU}  & SU                    \\ \hline
Signed conjunctive networks   & \textbf{SU}      & SU           & SU              \\ \hline
Min-max networks              & \textbf{SU}      & SU           & SU                   \\ \hline
\end{tabular}%
}
\caption{Summary of the main results on complexity of the dynamics of the network families studied in the current chapter, depending on different update schemes. BPA = Bounded period attractors. SPA = Superpolynomial attractors. SU = Strong universality. Black fonts indicate the emergence of complex behavior such as long period attractors or universality. }
\label{tab:resultsummary}
\end{table}
Observe that in each row of Table \ref{tab:resultsummary}, we show how the dynamical behavior of some CSAN families changes as we change the update scheme. In particular, the most simple ones, such as conjunctive and locally positive networks exhibit a relatively simple dynamical behavior (they have bounded period attractors). Contrarily, the last two families have strong universality even for block sequential update schemes. In addition, we would like to remark that there is not only a hierarchy for update schemes (block sequential update schemes are a particular case of local clocks and both are a particular cases of periodic update schemes) but that network  families are also somehow related as conjunctive networks are a sub-family of all the other ones. 
Additionally, one can also observe that there is some sort of ''diagonal emergence'' of strong universality in Table \ref{tab:resultsummary} consisting in the fact that it seems to exists a trade-off between the complexity in the definition of some network families and the complexity of the corresponding update schemes. In other words, simple families seem to need more complex update schemes in order to be universal and as we pass to more complex rules one can observe this property for simpler update schemes. 

\subsection{Symmetric Conjunctive networks}

As said previously, symmetric conjunctive network form a particular CSAN family on alphabet ${\{0,1\}}$ where all edges are labeled by the identity map and all nodes have the same 'conjunctive' local map 
\[\lambda(q,X) =
  \begin{cases}
    0 &\text{ if }0\in X,\\
    1 &\text{ else.}
  \end{cases}
\]

First, symmetric conjunctive networks being a particular case of symmetric threshold networks it follows from the classical results of \cite{PaperGoles,GolesO80}, that they always converge to some fixed point or cycle of length two. Therefore they are not dynamically complex with parallel update schedule. It turns out that with a periodic update schedule of period 3 they can break this limitation and produce super-polynomial cycles.

Particularly, we show that symmetric conjunctive networks with a periodic update schedule of period $3$  can break latter limitation on attractor period and produce superpolynomial cycles.  Observe that all the graphs over we are defining networks in this family (and also in the other concrete examples we will be exploring in next sections) are non-directed (symmetric). If we observe carefully the effect of periodic update schemes,  we note that we are actually changing the interaction graph of the network by considering different orders for updating nodes and thus, breaking the symmetry in the different connections that nodes have in the network. Moreover, we corroborate this remark by showing that actually, we can simulate arbitrary conjunctive networks by using a periodic update scheme. We accomplish this by applying our formalism on simulation and gadget glueing. In fact, we show that the family of symmetric conjunctive networks admits coherent $\G_{\text{conj}}$-gadgets and thus it is capable of simulating the family $(\mathcal{F}_{\text{conj}},\mathcal{F}^{*}_{\text{conj}})$. Particularly, this implies that this family admits attractors of superpolynomial period.


\begin{proposition}
  Let $p\geq 1$ and denote by ${\periodfamily{\mathcal{F}}{p}_{\text{sym-conj}}}$ be the family of symmetric conjunctive networks under periodic update schedules of period $p$. Then, ${\periodfamily{\mathcal{F}}{p}_{\text{sym-conj}}}$ is not universal and therefore it does not admit coherent $\Gmon$-gadgets.
\end{proposition}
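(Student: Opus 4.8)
The plan is to prove non-universality through a dynamical invariant — the length of transients — rather than through the trace-based criterion of Corollary~\ref{coro:trickyuniversal}. Recall that by Theorem~\ref{them:univ-rich-dynamics} any universal family must contain networks with super-polynomial transients; hence it suffices to show that every network of $\periodfamily{\mathcal{F}}{p}_{\text{sym-conj}}$ has transients of length polynomial in its number of nodes. A member of this family is the periodic extension $F'$ of a symmetric conjunctive network $F$ on node set $V$, working on the composite alphabet $Q'=Q\times\{0,\ldots,p-1\}\times 2^{\{0,\ldots,p-1\}}$; crucially the extension enlarges only the alphabet, so $F'$ still has $n=|V|$ nodes.

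First I would isolate the structural key lemma: the map $\Phi=(F')^{p}$, read on the $Q$-component alone (the counter and set components being respectively $p$-periodic and constant, hence immediately recurrent), acts as a genuine Boolean conjunctive network $\tilde F$ on $V$. Indeed, each intermediate step $F^{\mu(k)}$ sets the $Q$-value of a node either to the minimum of its neighbours or to its own value, so every coordinate of $F^{\mu(k)}$ is a minimum over some subset of coordinates; since a minimum of minima is a minimum, the $p$-fold composition $\tilde F$ satisfies $\tilde F(x)_v=\min_{w\in R_v}x_w$ for suitable sets $R_v\subseteq V$. Because the counter returns to its initial value after exactly $p$ steps, the same composition is applied in every length-$p$ window, so sampling the $Q$-orbit of $F'$ at multiples of $p$ yields exactly an orbit of the fixed network $\tilde F$.

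Next I would invoke the classical bound that Boolean conjunctive networks (possibly on a directed interaction graph, here given by the $R_v$, which is the setting of the family $\Fconj$) reach their attractor within polynomially many steps \cite{De_Schutter_1999}. Combining this with the lemma gives the transient estimate: from any configuration $z$, after $p\cdot\tau(\tilde F)=\mathrm{poly}(n)$ steps the $Q$-component of the orbit is $\Phi$-periodic, and since the counter and set parts are already recurrent, the whole orbit of $F'$ is periodic. Thus every network of $\periodfamily{\mathcal{F}}{p}_{\text{sym-conj}}$ has polynomial transients, which by the contrapositive of Theorem~\ref{them:univ-rich-dynamics} forbids universality. The second assertion then follows immediately: periodic extensions of a CSAN family again form a CSAN family, so by Corollary~\ref{cor:univfrommon} the existence of coherent $\Gmon$-gadgets would entail (strong) universality; non-universality therefore rules out such gadgets.

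The main obstacle — and the reason I avoid the trace/strongly-connected-component route — is precisely the counter component: its deterministic self-increment $c\mapsto c+1\bmod p$ places a self-loop at every node of the interaction graph of $F'$, so every node trivially lies in a strongly connected component and Corollary~\ref{coro:trickyuniversal} cannot be applied as for plain conjunctive networks (a node fed by several prime-length rotations could exhibit a super-polynomial trace while sitting on such a spurious loop). Working with transients sidesteps this entirely. The only points deserving care are the verification that the composition of the $p$ partial conjunctive updates is literally conjunctive (the min-of-min argument above) and that the cited polynomial transient bound is genuinely stated for directed conjunctive networks.
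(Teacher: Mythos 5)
Your argument is essentially the paper's own proof: both decompose the periodic extension into the recurrent counter/set components and the $Q$-component, observe that sampling the $Q$-orbit every $p$ steps gives a fixed (non-symmetric) conjunctive network, invoke the $O(n^2)$ transient bound of \cite[Theorem 3.20]{De_Schutter_1999}, and conclude non-universality from the dynamical consequences of universality, with the gadget claim following because coherent $\Gmon$-gadgets would force universality. Your explicit min-of-mins justification and your appeal to Theorem~\ref{them:univ-rich-dynamics} rather than Corollary~\ref{coro:trickyuniversal} are minor presentational differences, not a different route.
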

\begin{proof}
  We actually show that the longest transient of any ${F\in\mathcal{F}}$ with $n$ nodes is ${O(n^2)}$, then the conclusions follow by Corollary \ref{coro:trickyuniversal} and Theorem \ref{teo:gmonuniv}. Recall that the alphabet is ${\{0,1\}\times\{0,\cdots,p\}\times 2^{0,\cdots,p}}$ and, by definition, on any given configuration $x$ the component ${\{0,\cdots,p\}\times 2^{0,\cdots,p}}$ is periodic of period $p$ independently of the behavior on the first ${\{0,1\}}$ component. Moreover, the behavior on the ${\{0,1\}}$ component is that of a fixed (non-symmetric) conjunctive network $F'$ in the following sense: 
  \[F^{t+p}(x) = F'(F^t(x)), \forall t\geq 0.\]
  By \cite[Theorem 3.20]{De_Schutter_1999}, the transient of any orbit of $F'$ is ${O(n^2)}$. We deduce that the transient of the orbit of $x$ under $F$ is also ${O(n^2)}$. 
\end{proof}

\begin{theorem}
  Let  ${\periodfamily{\mathcal{F}}{3}_{\text{sym-conj}}}$ be the family of symmetric conjunctive networks under periodic update schedule of period $3$. ${\periodfamily{\mathcal{F}}{3}_{\text{sym-conj}}}$ has coherent $\Gconj$-gadgets and therefore simulates the family of conjunctive networks ${(\Fconj,\Fconj^*)}$ and for instance can produce superpolynomial cycles.
\end{theorem}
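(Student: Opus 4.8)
I need to show that the family of symmetric conjunctive networks under a periodic update schedule of period $3$, denoted ${\periodfamily{\mathcal{F}}{3}_{\text{sym-conj}}}$, has coherent $\Gconj$-gadgets. Once this is established, Corollary~\ref{cor:gadgets-mon-csan} (or Lemma~\ref{lem:from-gadgets-to-networks}) gives simulation of $\Gamma(\mathcal{G}_{conj})$, and Theorem~\ref{theo:Gconj-networks} then gives simulation of the full conjunctive family $(\Fconj,\Fconj^*)$; superpolynomial cycles follow from Theorem~\ref{theo:non-polyn-cycl} (via $\Gwire\subseteq\Gconj$) together with Lemma~\ref{lem:simu-dynamic}. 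So the entire content of the proof is the explicit construction of coherent $\Gconj$-gadgets satisfying Definition~\ref{def:coherent-gadgets}.

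**The plan.** The key insight, already flagged in the surrounding text, is that a periodic update schedule breaks the symmetry of the interaction graph: although the underlying CSAN graph is undirected, choosing \emph{when} each node updates effectively orients the information flow. Concretely, if node $u$ updates at a time step just before node $v$ does, then within one period $v$ ``sees'' the freshly-updated value of $u$ while $u$ does not yet see $v$'s new value; over the period this realizes a directed wire from $u$ to $v$. With period $3$ I would assign to each node one of three update phases (encoded in the $\{0,1,2\}$ part of the extension alphabet, together with the appropriate $x_s$ subset component from the periodic extension) so that information flows along a chosen direction. I would then build two gadgets, one for $\gateAND$ (in-degree $2$, out-degree $1$) and one for $\gateCOPY$ (in-degree $1$, out-degree $2$), as small symmetric conjunctive networks whose nodes carry fixed update phases. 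The glueing interface $C=C_i\cup C_o$ will be a short path of nodes (the ``dowel''), with the $C_i$/$C_o$ split arranged exactly so that the conditions of Lemma~\ref{lem:csan-gadgets-glueing-closure} hold: each half of the interface touches the rest of its gadget only through the other half, which is automatic if the interface is realized as a path whose interior nodes carry the whole local neighborhood. I must check the three bullet conditions of Lemma~\ref{lem:csan-gadgets-glueing-closure} (identical labeled graphs on each interface copy, and the two neighborhood-containment conditions); since all edge labels are the identity and all node maps are the conjunctive $\lambda$, the label-matching is trivial and only the containment conditions require care in placing the gadget's internal wiring.

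**The dynamical core.** The substantive step is exhibiting, for each gate $g\in\Gconj$ and each admissible input/output tuple consistent with $g$, a $P_g$-pseudo-orbit of length $T$ realizing the standard traces $\tau_{q,q'}$ on every interface copy, with the context restored after $T$ steps. Here I would fix a single time constant $T$ equal to the update period times the gadget ``depth'' (so that each node updates a fixed number of times and the whole gadget returns to phase $0$), choose state configurations $s_0,s_1\in\QF^C$ that encode the Boolean value carried by the wire together with the correct update-phase bits, and define standard traces that transport the value $q$ at time $0$ to the value $q'$ at time $T$ along the interface. The pseudo-orbit freedom on $P_g$ (the ``outgoing'' halves of the interface, $\phi^i_{g,k}(C_o)$ and $\phi^o_{g,k}(C_i)$) is precisely what absorbs the values arriving from neighboring gadgets after glueing, so that Lemma~\ref{lem:pseudo-orbit-glueing} can stitch the local pseudo-orbits into genuine orbits of the composite network. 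I would verify by direct computation over the three phases that the $\gateAND$ gadget outputs $\min$ of its two inputs and the $\gateCOPY$ gadget duplicates its input, each with the same delay, and that the phase bits cycle back to their initial values, giving $x^0_{\hat V_g}=x^T_{\hat V_g}=c_g$.

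**Main obstacle.** The principal difficulty is synchronization: all gadgets must share one global time constant $T$ and one common interface with one common family of standard traces $\tau_{q,q'}$, so the delay through the $\gateAND$ gadget and the $\gateCOPY$ gadget must be made equal and the phase assignments at every interface copy (whether used as an input or an output port) must be mutually consistent after glueing. Getting the phase bookkeeping right so that an output port of one gadget, carrying phase pattern compatible with being ``read ahead'', meets an input port of the next gadget with the matching pattern, while simultaneously keeping the symmetric-conjunctive constraint (identity edge labels, min node maps) intact, is where the real work lies; everything else reduces to checking the finite list of conditions in Definition~\ref{def:coherent-gadgets} and Lemma~\ref{lem:csan-gadgets-glueing-closure}. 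I expect that padding both gadgets with identity-like wire segments (realized inside the conjunctive family as single-neighbor conjunctions) to equalize depths, as in the argument of Proposition~\ref{prop:closuregadgets}, will resolve the synchronization issue cleanly.
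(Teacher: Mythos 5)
Your proposal follows essentially the same route as the paper's proof: exploit the period-$3$ schedule to orient information flow along an undirected path, build a $\gateCOPY$ gadget (one path branching into two) and a $\gateAND$ gadget (two paths merging into one) over a two-node glueing interface, verify the conditions of Lemma~\ref{lem:csan-gadgets-glueing-closure}, exhibit the pseudo-orbits of Definition~\ref{def:coherent-gadgets} with a common time constant $T=3$, and then invoke Theorem~\ref{theo:Gconj-networks}, Proposition~\ref{prop:closuregadgets} and Theorem~\ref{theo:non-polyn-cycl} for the consequences. The only caveats are that the actual content of the paper's proof is the explicit phase assignment (subsets $\{0,2\}$, $\{0,1\}$, $\{1,2\}$ on consecutive nodes) and the verification tables, which your plan defers, and that your aside ``$\Gwire\subseteq\Gconj$'' is literally false --- the identity gate is obtained by \emph{composing} $\gateCOPY$ and $\gateAND$ via Proposition~\ref{prop:closuregadgets}, not by inclusion --- though this does not affect the conclusion.
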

\begin{proof}
  The conclusion about simulations of conjunctive networks and superpolynomial cycles follows from previous results as soon as we prove that the family has coherent $\Gconj$-gadgets: Theorem~\ref{theo:Gconj-networks} for simulation of conjunctive networks and Proposition~\ref{prop:closuregadgets} to show that the family has coherent $\Gwire$-gadgets (because the identity map is obtained by composition of $\gateCOPY$ and $\gateAND$) and then Theorem~\ref{theo:non-polyn-cycl} for super-polynomial cycles.

  We now describe the coherent $\Gconj$-gadgets within family  ${\periodfamily{\mathcal{F}}{3}_{\text{sym-conj}}}$ using notations from Definition~\ref{def:coherent-gadgets}. Let $F_{\gateCOPY} \in {\periodfamily{\mathcal{F}}{3}_{\text{sym-conj}}}$ be defined on the following graph with nodes ${V_{\gateCOPY}=\{v_1,v_2,v_3,v_4,v_5,v_4',v_5'\}}$:
  \begin{center}
    \begin{tikzpicture}
      \draw (2,0) -- ++(1,-.5) node[fill=white,circle,draw=black] {$v_4'$} -- ++(1,0) node[fill=white,circle,draw=black] {$v_5'$};
      \draw (0,0) node[fill=white,circle,draw=black] {$v_1$} -- ++(1,0) node[fill=white,circle,draw=black] {$v_2$} -- ++(1,0) node[fill=white,circle,draw=black] {$v_3$} -- ++(1,.5) node[fill=white,circle,draw=black] {$v_4$} -- ++(1,0) node[fill=white,circle,draw=black] {$v_5$};
    \end{tikzpicture}
  \end{center}

  Let also $F_{\gateAND}\in\mathcal{F}$ be defined on the following graph with nodes ${V_{\gateAND}=\{v_1,v_2,v_1',v_2',v_3,v_4,v_5\}}$:
  \begin{center}
    \begin{tikzpicture}
      \draw (0,-.5) node[fill=white,circle,draw=black] {$v_1'$} -- ++(1,0) node[fill=white,circle,draw=black] {$v_2'$} -- ++(1,.5);
      \draw (0,.5) node[fill=white,circle,draw=black] {$v_1$} -- ++(1,0) node[fill=white,circle,draw=black] {$v_2$} -- ++(1,-.5) node[fill=white,circle,draw=black] {$v_3$} -- ++(1,0) node[fill=white,circle,draw=black] {$v_4$} -- ++(1,0) node[fill=white,circle,draw=black] {$v_5$};
    \end{tikzpicture}
  \end{center}
  
Recall that both $F_{\gateCOPY}$ and $F_{\gateAND}$ have alphabet ${Q=\{0,1\}\times\{0,1,2\}\times 2^{\{0,1,2\}}}$. 
  Now let ${C=C_i\cup C_o}$ be the glueing interface with ${C_i=\{i\}}$ and ${C_o=\{o\}}$. $F_{\gateCOPY}$ is seen as a gadget with one input and two outputs for the gate $\gateCOPY\in\Gconj$ as follows:
  \begin{itemize}
  \item $\phi_{\gateCOPY,1}^i(i)=v_2$ and ${\phi_{\gateCOPY,1}^i(o)=v_1}$;
  \item $\phi_{\gateCOPY,1}^o(i)=v_5$ and ${\phi_{\gateCOPY,1}^o(o)=v_4}$.
  \item $\phi_{\gateCOPY,2}^o(i)=v_5'$ and ${\phi_{\gateCOPY,2}^o(o)=v_4'}$.
  \end{itemize}
  $F_{\gateAND}$ is seen as a gadget with two inputs and one output for the gate $\gateAND\in\Gconj$ as follows:
  \begin{itemize}
  \item $\phi_{\gateAND,1}^i(i)=v_2$ and ${\phi_{\gateAND,1}^i(o)=v_1}$;
  \item $\phi_{\gateAND,2}^i(i)=v_2'$ and ${\phi_{\gateAND,2}^i(o)=v_1'}$.
  \item $\phi_{\gateAND,1}^o(i)=v_5$ and ${\phi_{\gateAND,1}^o(o)=v_4}$.
  \end{itemize}
  Note in particular that the conditions of Lemma~\ref{lem:csan-gadgets-glueing-closure} are satisfied so the closure by gadget glueing of this gadget stays in our family $\mathcal{F}$.
  Indeed $F_{\gateCOPY}$ has the same induced label graphs on all images of $C$ under $\phi$ maps, and the neighborhood of ${\phi_{\gateCOPY,1}^i(C_o)=v_1}$ is $v_2$ which belongs to ${\phi_{\gateCOPY,1}^i(C)}$, and similarly for other $\phi$ maps. Corresponding properties hold also for $F_{\gateAND}$.
  We now define the following elements required by Definition~\ref{def:coherent-gadgets}:
  \begin{itemize}
  \item the two state configurations $s_q$ for ${q\in\{0,1\}}$ are defined by ${s_q(i)=(q,0,\{0,2\})}$ and ${s_q(o)=(1,0,\{0,1\})}$;
  \item the context configuration is defined by ${c(v_3)=(1,0,\{1,2\})}$;
  \item the time constant is ${T=3}$
  \item the standard trace $\tau_{q,q'}$ over the glueing interface from ${q\in\{0,1\}}$ to ${q'\in\{0,1\}}$ is given by:
  \begin{footnotesize}
    \[ \begin{array}{c|c|c}
        \mathrm{time} & i & o\\
        \hline
        0 & (q,0,\{0,2\}) & (1,0,\{0,1\})\\
        1 & (1,1,\{0,2\}) & (q,1,\{0,1\})\\
        2 & (1,2,\{0,2\}) & (1,2,\{0,1\})\\
        3 & (q',0,\{0,2\}) & (1,0,\{0,1\})\\
       \end{array}
     \]
     \end{footnotesize}
   \item for any ${q_i,q_i^T,q_o,q_{o'}\in\{0,1\}}$ we have the following ${\{v_1,v_5,v_5'\}}$-pseudo orbit for $F_{\gateCOPY}$:
   \begin{tiny}
    \[ \begin{array}{c|c|c|c|c|c|c|c}
        \mathrm{time} & v_1 & v_2 & v_3 & v_4 & v_5 & v_4' & v_5' \\
        \hline
        0 & (q_i,0,\{0,2\}) & (1,0,\{0,1\}) & (1,0,\{1,2\}) & (q_o,0,\{0,2\}) & (1,0,\{0,1\}) & (q_o',0,\{0,2\}) & (1,0,\{0,1\}) \\
        1 & (1,1,\{0,2\}) & (q_i,1,\{0,1\}) & (1,1,\{1,2\}) & (1,1,\{0,2\}) & (q_o,1,\{0,1\}) & (1,1,\{0,2\}) & (q_o',1,\{0,1\})\\
        2 & (1,2,\{0,2\}) & (1,2,\{0,1\}) & (q_i,2,\{1,2\}) & (1,2,\{0,2\}) & (1,2,\{0,1\}) & (1,2,\{0,2\}) & (1,2,\{0,1\})\\
        3 & (q_i',0,\{0,2\}) & (1,0,\{0,1\}) & (1,0,\{1,2\}) & (q_i,0,\{0,2\}) & (1,0,\{0,1\}) & (q_i,0,\{0,2\}) & (1,0,\{0,1\}) \\
       \end{array}
     \]
     \end{tiny}
   \item for any ${q_i,q_i^T,q_{i'},q_{i'}^T,q_o\in\{0,1\}}$ we have the following ${\{v_1,v_1',v_5\}}$-pseudo orbit for $F_{\gateAND}$:
   \begin{tiny}
    \[ \begin{array}{c|c|c|c|c|c|c|c}
        \mathrm{time} & v_1 & v_2 & v_1' & v_2' & v_3 & v_4 & v_5  \\
        \hline
        0 & (q_i,0,\{0,2\}) & (1,0,\{0,1\}) & (q_{i'},0,\{0,2\}) & (1,0,\{0,1\}) & (1,0,\{1,2\}) & (q_o,0,\{0,2\}) & (1,0,\{0,1\}) \\
        1 & (1,1,\{0,2\}) & (q_i,1,\{0,1\}) & (1,1,\{0,2\}) & (q_{i'},1,\{0,1\}) & (1,1,\{1,2\}) & (1,1,\{0,2\}) & (q_o,1,\{0,1\})\\
        2 & (1,2,\{0,2\}) & (1,2,\{0,1\}) & (1,2,\{0,2\}) & (1,2,\{0,1\}) & (q_i\wedge q_{i'},2,\{1,2\}) & (1,2,\{0,2\}) & (1,2,\{0,1\}) \\
        3 & (q_i^T,0,\{0,2\}) & (1,0,\{0,1\}) & (q_{i'}^T,0,\{0,2\}) & (1,0,\{0,1\}) & (1,0,\{1,2\}) & (q_i\wedge q_{i'},0,\{0,2\}) & (1,0,\{0,1\}) \\
       \end{array}
     \]
         \end{tiny}

   \end{itemize}
\end{proof}

Interestingly, local clock update schedules on conjunctive networks are not able to produce superpolynomial cycles. To avoid to much conflicts in indices in notations, we denote by ${x^{Q}}$, $x^c$ and $x^m$ the three components of a configuration $x$ in some local clocks extension network.

\begin{lemma} \label{lem:locconj}

  Let us fix any $c>0$ and consider the family  ${\clockfamily{\mathcal{F}}{c}_{sym-conj}}$  of all symmetric conjunctive networks under local clocks update scheme with clock period $c$. Fix $n>0$ and let $F:  Q_{c}^{n} \to  Q_{c}^{n} \in {\clockfamily{\mathcal{F}}{c}_{sym-conj}}.$ For any configuration $(x^{Q},x^{c},x^{m}) \in Q_c^{n}$ we have that the period of the attractor reached from $(x^{Q},x^{c},x^{m})$ is at most $2 \lcm\{x^m_{v}: v \in \{1,\hdots,n\}\}.$ Moreover, for each attractor  $\overline{x}  \in \text{Att}(F)$, the set of nodes whose $Q$ component is not constant in $\overline{x}$ induces a bipartite subgraph.

\end{lemma}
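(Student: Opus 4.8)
The plan is to exploit the structure of the local clocks extension: at each node $v$, the clock component cycles with period $x^m_v$ (the fixed local period), and the $Q$ component is updated according to the conjunctive rule only at the time steps when $v$'s clock triggers. First I would set $L = \lcm\{x^m_v : v \in \{1,\ldots,n\}\}$ and observe that after $L$ steps the entire clock component $(x^c, x^m)$ returns to its initial value, so the map $G = F^L$ acts on the $Q$ component exactly like iterating a fixed (non-symmetric) conjunctive network $F'$ on $\{0,1\}^n$: during one full period of length $L$, each node $v$ has been updated $L/x^m_v$ times, and the net effect on the $Q$ component is a fixed deterministic conjunctive-type map. The key point is that $F'$ is a (possibly non-symmetric) conjunctive network, because composing conjunctive updates in any order still yields a monotone conjunctive map $\{0,1\}^n \to \{0,1\}^n$ of the form $y_v = \bigwedge_{u \in S_v} x_u$ for some dependency sets $S_v$.

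For the period bound, I would invoke the classical result that synchronous (parallel) conjunctive networks have attractors of period at most $2$ (this is exactly the Goles--Olivos type result cited earlier, which applies to conjunctive networks under parallel update, yielding fixed points or $2$-cycles). Applying this to $F' = G$ viewed as a parallel conjunctive network, any configuration reaches an attractor of $F'$ of period $1$ or $2$. Pulling this back through $G = F^L$, the period of the attractor of the original $F$ divides $2L$, giving the bound $2\lcm\{x^m_v\}$ claimed. I would need to be slightly careful: I must argue that the attractor period of $F$ is at most $2L$ rather than merely that $F^{2L}$ fixes the limit set, but since the clock component is genuinely $L$-periodic and the $Q$-component period divides $2$ under $G$, the combined period divides $2L$ directly.

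For the second claim, I would analyze the set $N$ of nodes whose $Q$ component is non-constant along the attractor $\overline{x}$. On the attractor, $G = F^L$ restricted to the $Q$ component is a parallel conjunctive network converging on a $2$-cycle, and the nodes that genuinely oscillate (period exactly $2$ under $G$) are precisely those alternating between $0$ and $1$. The standard structural fact about conjunctive (or more generally monotone symmetric threshold) networks on $2$-cycles is that an oscillating node must have all its influencing neighbors oscillate in antiphase; since the underlying communication graph is symmetric (undirected), two adjacent oscillating nodes must be in opposite phases, which means a proper $2$-coloring of the induced subgraph exists. Hence the subgraph induced by $N$ is bipartite. The main obstacle will be making precise the relationship between the non-symmetric iterate $F'$ and the original symmetric graph when identifying which nodes oscillate and establishing the antiphase property: I must show that even though $F'$ may be non-symmetric, the oscillation phases are still governed by the original undirected adjacency in a way that forces the two-coloring. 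I would handle this by tracking, within a single period, which update actually determines each oscillating node's value and showing that a conjunction equal to $0$ at one time and $1$ at the next forces a specific neighbor to be in antiphase, then using symmetry of the edge relation to conclude bipartiteness of the induced subgraph.
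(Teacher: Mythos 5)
Your reduction breaks down at the central step. You pass to $G=F^L$ with $L=\lcm\{x^m_v\}$ and observe that its action on the $Q$-component is a conjunctive map $y_v=\bigwedge_{u\in S_v}x_u$; this much is fine. But you then invoke the period-$\le 2$ theorem for ``parallel conjunctive networks'', and that theorem (Goles--Olivos) holds only for \emph{symmetric} threshold/conjunctive networks. The dependency sets $S_v$ obtained by composing updates fired at different instants are in general not symmetric (if $u$ is updated more often than $v$, tracing dependencies back through one period is asymmetric), and non-symmetric conjunctive networks can have arbitrarily long periods --- a directed cycle of length $p$ is a rotation of period $p$. Indeed, the theorem immediately following this lemma in the paper shows that symmetric conjunctive networks under a \emph{periodic} update schedule of period $3$ simulate arbitrary directed conjunctive networks and produce super-polynomial cycles; there too the composition over one period of the schedule is a fixed conjunctive map, so your argument, if it were valid, would contradict that result. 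What distinguishes local clocks from general periodic schedules is precisely the regular spacing $x^m_v$ between consecutive updates of each node, and any correct proof must use it.

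The paper's proof does exactly that: for a node $i$ that oscillates in the attractor it analyzes the first switching times $t_1$ (from $0$ to $1$) and $t_0$ (from $1$ to $0$) and uses the symmetric-deadlock property (two adjacent nodes simultaneously in state $0$ freeze each other forever under a symmetric conjunctive rule) to force every non-constant neighbor $k$ of $i$ to satisfy $x^m_k=x^m_i$ with the same clock shift and to be in exact antiphase with $i$; this simultaneously yields the period bound $2x^m_i$ at each node (hence $2\lcm\{x^m_v\}$ globally) and the proper $2$-coloring of the oscillating set. Your sketch of the bipartiteness argument is in the right spirit (antiphase of adjacent oscillators via symmetry of the edge relation), but as written it rests on the unjustified period-$2$ claim for $G$, so the gap in the first part propagates to the second.
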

\begin{proof}
	Let $(x^{Q},x^{c},x^{m}) \in Q_{c}^{n}$ be a configuration and  $(\overline{x}^{Q}, \overline{x}^{c},\overline{x}^{m})  \in \text{Att}(F)$ be an attractor that is reachable from $(x^{Q},x^{c},x^{m})$ and that is not a fixed point, i.e. $p(\overline{x}) \geq 2$. In order to simplify the notation, we are going to denote $\overline{x}^{Q}$ by $\overline{x}$. Let $i \in \{0,\hdots, n-1\}$ be a coordinate such that $\overline{x}_i$ changes its state, i.e there exists some $t \in \N$ such that $\overline{x}(0)_i \not = \overline{x}(t)_i$. Without loss of generality we assume that $\overline{x}(0)_i = 1$ and $\overline{x}^c(0)_i = 0$ and $\overline{x}(1)_i = 0$. Consider $t_1$ as the first time step such that the coordinate $i$ changes its state from $0$ to $1$, which means that, $t_1$ is the first time step such that $\overline{x}(t_1) = 0$ and $\overline{x}(t_1+1) = 1$. Observe that $t_{1} = sx^m_{i}$ for some $s \geq 1$. Note that, for all $j \in N(i)$, $\overline{x}(t_1)_j = 1$. Moreover, we have that for all $j \in N(i)$:  and that $\overline{x}(s)_j = 1$ for all $s \in [1, t_{1}]$ (see Figure \ref{fig:schemeloc}). This is because, since the interaction graph is symmetric, $j$ cannot be in state $0$ during interval $[1,t_1]$ otherwise both $i$ and $j$ would stay in state $0$ forever, thus contradicting the hypothesis on $i$. We deduce that $x^m_{j} \geq t_{1} = sx^m_{i}$ (otherwise $j$ would update its state and become $0$ on interval $[1,t_1]$). Now consider $t_0$ to be the first time step in which the node $i$ changes its state from $1$ to $0$, i.e. $\overline{x}(t_0) = 1$ and $\overline{x}(t_0+1) = 0$. Observe that, $t_{1} < t_{0} = t_{1} + s^{*}x^m_{i}$ for some $s^{*} \geq 1$. In addition, there must exist some neighbor $k \in N(i)$ satisfying that $\overline{x}(t_0)_{k} = 0$, otherwise $i$ cannot change to $0$ (it requires at least one neighbor in $0$ in order to change its state from $1$ to $0$). Observe that node $k$ satisfies $x^m_{k} \leq x^m_{i}$ because it needs to update to $0$ before node $i$. More precisely, by the definition of $t_{0}$ we have that $i$ is fixed in state between $t_{1}$ and $t_{0}$ (see Figure \ref{fig:schemeloc}). Additionally, we have that $\overline{x}_{k}(t_{1}) = 1,$ $\overline{x}_{k}(t_{0}) = 0$ and also we have that $\overline{x}(t_{0}-x^m_{i})_{k} = 1$ (otherwise it contradicts the minimality of $t_{0}$).  Finally, since $i$ is remains in state $0$ on the interval $[t_{0}, t_{0} + x^m_{i}]$ then,  $k$ cannot be updated in the same interval. Thus, $x^m_{k} \leq x^m_{i}.$  Moreover, the latter observations imply that $i$ and $k$ are synchronized, i.e. $(\overline{x}^{c}(0))_{k} = (\overline{x}^{c}(0))_{i},$ $x^m_{k} = x^m_{i},$ $t_{1} = x^m_{i}$ and $t_{0} = t_{1} + x^m_{i}.$   Note that also, we have that for all $t$, $\overline{x}(t)_i = 1- \overline{x}(t)_k.$
We have shown that the period of any node $v$ is at most ${2x^m_v}$, so we deduce $p(\overline{x}) \leq 2 \lcm\{x^m_{v}: v \in \{1,\hdots,n\}\}$.

	At this point, we know that $i$ must have at least one neighbor that is not constant in $\overline{x}$ and that it is synchronized. Let us assume that there is a non constant neighbor $\ell$ of $i$ that satisfies $x^m_{\ell} > x^m_{i}.$ On the other hand, we have that $\ell$ is in state $1$ on the interval $[0,t_{1}]$ (see Figure \ref{fig:schemeloc}) because otherwise $i$ cannot switch to state $1$ at the time step $t_{1}$. Observe that, by hypothesis, $\ell$ cannot change its state on intervals of the form $[rx^m_{i}, (r+1)x^m_{i})$ for $r \in \N$ even since $i$ is in state $0$ on those intervals (otherwise $i$ cannot switch back to $1$ because it would have a neighbor in $0$).  However, for $r$ even, $i$ is in state $1$ on intervals of the form $[rx^m_{i}, (r+1)x^m_{i}).$ Suppose that $\ell$ changes its value for the first time on an interval of the form $[r^{*}x^m_{i}, (r^{*}+1)x^m_{i})$ for some $r^{*} \in \N$ odd, i.e. $\overline{x}(s)_{\ell} = 0$ for some $s \in [r^{*}x^m_{i}, (r^{*}+1)x^m_{i}).$ Observe now that  $\overline{x}((r^{*}+2)x^m_{i}) = 1$ since $i$ must return to state $1$ but  $\ell$ cannot change its state in $[(r^{*}+1)x^m_{i}, (r^{*}+2)x^m_{i})$ because $i$ is in state $0$. Then, we must have  $x^m_{\ell}  \leq x^m_{i},$ which contradicts the hypothesis. We conclude that every non constant neighbor of $i$ is synchronized. Repeating the same argument now for any non constant neighbor of $i$ we have that all the nodes in the connected component containing $i$ have local delay $x^m_{i}$. Iterating this same technique now for each $i$ in the network, we deduce that $\overline{x}$ is such that $p(\overline{x}) \leq 2 \lcm\{x^m_{v}: v \in \{1,\hdots,n\}\}$ since locally, each connected component containing some node $i$ is synchronized and thus, each non-constant node is switching its state every $2 x^m_{i}$ time steps. In addition, for each node $i$ every non-constant neighbor is in the state $0$ whenever $i$ is in the state $1$. Thus, the set of nodes which are not constant for  $(\overline{x}^{Q}, \overline{x}^{c},\overline{x}^{m})$ i.e. $S(\overline{x}^{Q}, \overline{x}^{c},\overline{x}^{m}),$ induces a two colorable subgraph. The result holds. 
\end{proof}
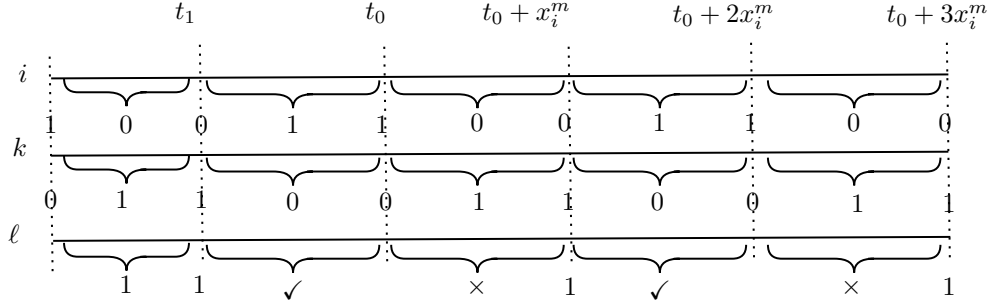
\begin{figure}
\tikzset{every picture/.style={line width=0.75pt}} 
\centering
\begin{tikzpicture}[x=0.75pt,y=0.75pt,yscale=-1,xscale=1]

\draw    (72,114) -- (524.5,112) ;
\draw    (72,153) -- (524.5,151) ;
\draw    (73,196) -- (525.5,194) ;
\draw  [dash pattern={on 0.84pt off 2.51pt}]  (240,98) -- (241.5,208) ;
\draw  [dash pattern={on 0.84pt off 2.51pt}]  (333,98) -- (334.5,208) ;
\draw  [dash pattern={on 0.84pt off 2.51pt}]  (425,96) -- (426.5,206) ;
\draw  [dash pattern={on 0.84pt off 2.51pt}]  (71.25,103) -- (72.75,213) ;
\draw   (150.5,115) .. controls (150.5,119.67) and (152.83,122) .. (157.5,122) -- (184,122) .. controls (190.67,122) and (194,124.33) .. (194,129) .. controls (194,124.33) and (197.33,122) .. (204,122)(201,122) -- (230.5,122) .. controls (235.17,122) and (237.5,119.67) .. (237.5,115) ;
\draw   (150.5,154) .. controls (150.5,158.67) and (152.83,161) .. (157.5,161) -- (184,161) .. controls (190.67,161) and (194,163.33) .. (194,168) .. controls (194,163.33) and (197.33,161) .. (204,161)(201,161) -- (230.5,161) .. controls (235.17,161) and (237.5,158.67) .. (237.5,154) ;
\draw   (150.5,198) .. controls (150.5,202.67) and (152.83,205) .. (157.5,205) -- (184,205) .. controls (190.67,205) and (194,207.33) .. (194,212) .. controls (194,207.33) and (197.33,205) .. (204,205)(201,205) -- (230.5,205) .. controls (235.17,205) and (237.5,202.67) .. (237.5,198) ;
\draw   (243.5,115) .. controls (243.5,119.67) and (245.83,122) .. (250.5,122) -- (277,122) .. controls (283.67,122) and (287,124.33) .. (287,129) .. controls (287,124.33) and (290.33,122) .. (297,122)(294,122) -- (323.5,122) .. controls (328.17,122) and (330.5,119.67) .. (330.5,115) ;
\draw   (243.5,154) .. controls (243.5,158.67) and (245.83,161) .. (250.5,161) -- (277,161) .. controls (283.67,161) and (287,163.33) .. (287,168) .. controls (287,163.33) and (290.33,161) .. (297,161)(294,161) -- (323.5,161) .. controls (328.17,161) and (330.5,158.67) .. (330.5,154) ;
\draw   (243.5,198) .. controls (243.5,202.67) and (245.83,205) .. (250.5,205) -- (277,205) .. controls (283.67,205) and (287,207.33) .. (287,212) .. controls (287,207.33) and (290.33,205) .. (297,205)(294,205) -- (323.5,205) .. controls (328.17,205) and (330.5,202.67) .. (330.5,198) ;

\draw   (335.5,115) .. controls (335.5,119.67) and (337.83,122) .. (342.5,122) -- (369,122) .. controls (375.67,122) and (379,124.33) .. (379,129) .. controls (379,124.33) and (382.33,122) .. (389,122)(386,122) -- (415.5,122) .. controls (420.17,122) and (422.5,119.67) .. (422.5,115) ;
\draw   (335.5,154) .. controls (335.5,158.67) and (337.83,161) .. (342.5,161) -- (369,161) .. controls (375.67,161) and (379,163.33) .. (379,168) .. controls (379,163.33) and (382.33,161) .. (389,161)(386,161) -- (415.5,161) .. controls (420.17,161) and (422.5,158.67) .. (422.5,154) ;
\draw   (335.5,198) .. controls (335.5,202.67) and (337.83,205) .. (342.5,205) -- (369,205) .. controls (375.67,205) and (379,207.33) .. (379,212) .. controls (379,207.33) and (382.33,205) .. (389,205)(386,205) -- (415.5,205) .. controls (420.17,205) and (422.5,202.67) .. (422.5,198) ;

\draw   (433.5,115) .. controls (433.5,119.67) and (435.83,122) .. (440.5,122) -- (467,122) .. controls (473.67,122) and (477,124.33) .. (477,129) .. controls (477,124.33) and (480.33,122) .. (487,122)(484,122) -- (513.5,122) .. controls (518.17,122) and (520.5,119.67) .. (520.5,115) ;
\draw   (433.5,154) .. controls (433.5,158.67) and (435.83,161) .. (440.5,161) -- (467,161) .. controls (473.67,161) and (477,163.33) .. (477,168) .. controls (477,163.33) and (480.33,161) .. (487,161)(484,161) -- (513.5,161) .. controls (518.17,161) and (520.5,158.67) .. (520.5,154) ;
\draw   (433.5,198) .. controls (433.5,202.67) and (435.83,205) .. (440.5,205) -- (467,205) .. controls (473.67,205) and (477,207.33) .. (477,212) .. controls (477,207.33) and (480.33,205) .. (487,205)(484,205) -- (513.5,205) .. controls (518.17,205) and (520.5,202.67) .. (520.5,198) ;

\draw   (78.5,114) .. controls (78.5,118.67) and (80.83,121) .. (85.5,121) -- (100,121) .. controls (106.67,121) and (110,123.33) .. (110,128) .. controls (110,123.33) and (113.33,121) .. (120,121)(117,121) -- (134.5,121) .. controls (139.17,121) and (141.5,118.67) .. (141.5,114) ;
\draw   (78.5,153) .. controls (78.5,157.67) and (80.83,160) .. (85.5,160) -- (100,160) .. controls (106.67,160) and (110,162.33) .. (110,167) .. controls (110,162.33) and (113.33,160) .. (120,160)(117,160) -- (134.5,160) .. controls (139.17,160) and (141.5,157.67) .. (141.5,153) ;
\draw   (78.5,197) .. controls (78.5,201.67) and (80.83,204) .. (85.5,204) -- (100,204) .. controls (106.67,204) and (110,206.33) .. (110,211) .. controls (110,206.33) and (113.33,204) .. (120,204)(117,204) -- (134.5,204) .. controls (139.17,204) and (141.5,201.67) .. (141.5,197) ;

\draw  [dash pattern={on 0.84pt off 2.51pt}]  (523.75,96) -- (525.25,206) ;
\draw  [dash pattern={on 0.84pt off 2.51pt}]  (147,97) -- (148.5,207) ;

\draw (463+6,213.4) node [anchor=north west][inner sep=0.75pt]    {$\times $};
\draw (372,214.4) node [anchor=north west][inner sep=0.75pt]    {$\checkmark $};
\draw (187,214.4) node [anchor=north west][inner sep=0.75pt]    {$\checkmark $};
\draw (274+6,213.4) node [anchor=north west][inner sep=0.75pt]    {$\times $};
\draw (67,169.4) node [anchor=north west][inner sep=0.75pt]    {$0$};
\draw (67,131.4) node [anchor=north west][inner sep=0.75pt]    {$1$};
\draw (492,75.4) node [anchor=north west][inner sep=0.75pt]    {$t_{0} +3 x^{m}_{i}$};
\draw (384,75.4) node [anchor=north west][inner sep=0.75pt]    {$t_{0} +2 x^{m}_{i}$};
\draw (288,74.4) node [anchor=north west][inner sep=0.75pt]    {$t_{0} + x^{m}_{i}$};
\draw (520,213.4) node [anchor=north west][inner sep=0.75pt]    {$1$};
\draw (329,213.4) node [anchor=north west][inner sep=0.75pt]    {$1$};
\draw (105,212.4) node [anchor=north west][inner sep=0.75pt]    {$1$};
\draw (142,212.4) node [anchor=north west][inner sep=0.75pt]    {$1$};
\draw (518,130.4) node [anchor=north west][inner sep=0.75pt]    {$0$};
\draw (520,171.4) node [anchor=north west][inner sep=0.75pt]    {$1$};
\draw (421,170.4) node [anchor=north west][inner sep=0.75pt]    {$0$};
\draw (420,130.4) node [anchor=north west][inner sep=0.75pt]    {$1$};
\draw (373,170.4) node [anchor=north west][inner sep=0.75pt]    {$0$};
\draw (328,170.4) node [anchor=north west][inner sep=0.75pt]    {$1$};
\draw (472,130.4) node [anchor=north west][inner sep=0.75pt]    {$0$};
\draw (474,171.4) node [anchor=north west][inner sep=0.75pt]    {$1$};
\draw (236,170.4) node [anchor=north west][inner sep=0.75pt]    {$0$};
\draw (283,170.4) node [anchor=north west][inner sep=0.75pt]    {$1$};
\draw (282,130.4) node [anchor=north west][inner sep=0.75pt]    {$0$};
\draw (234,131.4) node [anchor=north west][inner sep=0.75pt]    {$1$};
\draw (326,130.4) node [anchor=north west][inner sep=0.75pt]    {$0$};
\draw (374,130.4) node [anchor=north west][inner sep=0.75pt]    {$1$};
\draw (189,170.4) node [anchor=north west][inner sep=0.75pt]    {$0$};
\draw (189,131.4) node [anchor=north west][inner sep=0.75pt]    {$1$};
\draw (143,131.4) node [anchor=north west][inner sep=0.75pt]    {$0$};
\draw (143,169.4) node [anchor=north west][inner sep=0.75pt]    {$1$};
\draw (103,169.4) node [anchor=north west][inner sep=0.75pt]    {$1$};
\draw (105,131.4) node [anchor=north west][inner sep=0.75pt]    {$0$};
\draw (49,186.4) node [anchor=north west][inner sep=0.75pt]    {$\ell $};
\draw (51,142.4) node [anchor=north west][inner sep=0.75pt]    {$k$};
\draw (54,105.4) node [anchor=north west][inner sep=0.75pt]    {$i$};
\draw (229,74.4) node [anchor=north west][inner sep=0.75pt]    {$t_{0}$};
\draw (133,74.4) node [anchor=north west][inner sep=0.75pt]    {$t_{1}$};

\end{tikzpicture}
\label{fig:schemeloc}
\caption{Scheme of the dynamics of nodes $i$, $k$ and $\ell$ defined in the proof of Lemma \ref{lem:locconj}. The checkmarks indicate where it is feasible for $\ell$ to be updated and the crosses mark the intervals on which $\ell$ can change its state.}
\end{figure}

As seen above, there is a qualitative jump between local clocks and periodic update schedules for conjunctive networks in the size of dynamical cycles. However, for transient and prediction problems, even general periodic update schedules fail to produce maximal complexity (under standard complexity classes separations assumptions).

\begin{theorem}
  Let $p\in\N$ and consider the family ${\periodfamily{\mathcal{F}}{p}_{\text{sym-conj}}}$ of symmetric conjunctive networks under periodic update schedules of period $p$.  ${\periodfamily{\mathcal{F}}{p}_{\text{sym-conj}}}$ is neither dynamically nor computationally complex: more precisely, the transients of any network in ${\periodfamily{\mathcal{F}}{p}_{\text{sym-conj}}}$  with $n$ nodes are of length at most ${O(n^2)},$ the problem $\textsc{PREDu}_{\periodfamily{\mathcal{F}}{p}_{\text{sym-conj}}}$ can be solved by a $\textbf{NC}^2$ algorithm and $\textsc{PREDb}_{\periodfamily{\mathcal{F}}{p}_{\text{sym-conj}}}$ can be solved in polynomial time.
\end{theorem}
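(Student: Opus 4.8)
The plan is to exploit the structural decomposition already used in the previous proposition: since the update schedule has period $p$, the projection onto the $\{0,1\}$ component evolves every $p$ steps according to a fixed (non-symmetric) conjunctive network $F'$, in the sense that $F^{t+p}(x)=F'(F^t(x))$ for all $t\geq 0$. The transient bound is then immediate: writing $t=t_0+kp$ with $0\leq t_0<p$, the orbit of $x$ under $F$ reaches an attractor as soon as the orbit of its projection under $F'$ does, and by \cite[Theorem 3.20]{De_Schutter_1999} the latter has transient $O(n^2)$; since $p$ is a constant this gives an $O(n^2)$ transient for $F$ (the extension does not enlarge the node set). The heart of the argument is therefore to turn the two prediction problems into the evaluation of a single coordinate $F'^{k}(y)_v$ and to observe that this is nothing but an exact-length reachability query, solvable by Boolean matrix powering.

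First I would dispose of the clock components. In the periodic extension the counter coordinate at every node advances by $1$ modulo $p$ at each step and the set coordinate is invariant, so $(F^t(x))_v$ and $x_v$ already differ on the counter whenever $t\not\equiv 0\pmod p$; in that case the answer is trivially YES. When $t\equiv 0 \pmod p$ we have $t_0=0$, $k=t/p$, and the question reduces to deciding whether $F'^{k}(x^{Q})_v\neq x^{Q}_v$, i.e. to evaluating one coordinate of an iterate of the conjunctive network $F'$.

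Next I would make $F'$ explicit. Composing the $p$ block-updates of one period, each node's value is a conjunction of the period-start values (a conjunction of conjunctions), so $F'$ is conjunctive: $F'(y)_v=\bigwedge_{u\in S_v}y_u$ for sets $S_v$ whose incidence matrix $A'$ factors as a Boolean product $A'=M_{p-1}\cdots M_0$, where $(M_s)_{u,v}=1$ iff either $v$ updates at offset $s$ (determined locally from $x^{c}_v$ and $x^{s}_v$) and $u\in N(v)$, or $v$ does not update at offset $s$ and $u=v$. Since $p$ is constant, $A'$ is computable in $\NC^1$. By induction $F'^{k}(y)_v=\bigwedge_{u:\,(A'^{k})_{u,v}=1}y_u$, so $F'^{k}(x^Q)_v=0$ precisely when there is a walk of length exactly $k$ from some node $u$ with $x^{Q}_u=0$ to $v$; equivalently the $v$-entry of $z^{\top}A'^{k}$ equals $1$, where $z$ is the indicator vector of $\{u:x^{Q}_u=0\}$. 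All empty-conjunction corner cases are handled automatically by this reachability reformulation.

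Finally I would read off the complexities from repeated squaring over the Boolean semiring. For \PREDB{} the time $t$ is binary so $k$ may be exponential, but $\log k$ is polynomial in the input size; computing $A'^{k}$ by $O(\log k)$ Boolean matrix multiplications (each $O(n^3)$ time) and then the single product $z^\top A'^{k}$ runs in polynomial time, placing \PREDB{} in $\textbf{P}$ -- note that the possibly super-polynomial global period of $F$ is irrelevant, since we never simulate the orbit. For \PREDU{} the time $t$ is unary so $k$ is polynomial and $\log k=O(\log n)$; the $O(\log n)$ sequential Boolean matrix multiplications, each realizable in $\NC^1$, compose to depth $O(\log^2 n)$, giving a $\NC^2$ algorithm (one may alternatively observe that, for polynomial $t$, the query is reachability in the polynomial-size time-expanded graph, hence in $\textbf{NL}\subseteq\NC^2$). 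The only genuinely delicate step is verifying that the one-period composition is again conjunctive, together with the correct bookkeeping of the update offsets in the matrices $M_s$; everything else is routine once the problem is phrased as Boolean matrix powering.
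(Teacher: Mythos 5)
Your proposal is correct and follows essentially the same route as the paper: compose one period of updates into a fixed non-symmetric conjunctive network, invoke \cite[Theorem 3.20]{De_Schutter_1999} for the $O(n^2)$ transient, and answer both prediction problems by Boolean matrix powering via repeated squaring, giving circuits of depth $O(\log t\cdot\log n)$ and hence $\NC^2$ for unary $t$ and polynomial time for binary $t$. Your explicit factorization $A'=M_{p-1}\cdots M_0$ and the observation that the clock component already forces a state change when $t\not\equiv 0\pmod p$ are welcome elaborations of details the paper leaves implicit, but they do not change the argument.
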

\begin{proof}
  Let $F \in {\periodfamily{\mathcal{F}}{p}_{\text{sym-conj}}}$ with $n$ nodes and consider any initial configuration $x$. By definition, the orbit of $x$ under $F^p$ is constant on the second and third component of sates, and the action of $F^p$ on the first component when starting from $x$ is a particular non-symmetric conjunctive network $F_x$ that can be seen as an arbitrary Boolean matrix $M_x$. First, by~\cite[Theorem 3.20]{De_Schutter_1999}, the transient of the orbit of $x$ under $F_x$ is of length at most ${2n^2-3n+2}$. We deduce that the transient of $x$ under $F$ is in ${O(n^2)}$.

  Second, it is easy to compute $M_x$ from $F$ and $x$ in $\textbf{NC}^1$. Moreover, matrix multiplication can be done in $\textbf{NC}^{1}$ and by fast exponentiation circuits we can compute ${M_x^t}$ with polynomial circuits of depth ${O(\log(t)\log(n))}$. With a constant computational overhead, we can therefore efficiently compute ${F^t(x)_v}$ and the complexity upper bounds on $\textsc{PREDu}_{\periodfamily{\mathcal{F}}{p}_{\text{sym-conj}}}$ and $\textsc{PREDb}_{\periodfamily{\mathcal{F}}{p}_{\text{sym-conj}}}$ follow. 
\end{proof}

\begin{remark}
We remark that even when symmetric conjunctive networks under periodic update schemes are not dynamically nor computationally complex, it has been shown that, with a different type of update schemes, it is possible to construct coherent $\Gmontwo$-gadgets and thus, show strong universality for this latter family. Particularly, we allude to the case of firing memory update schemes, which can also be studied in the context of our asynchronous extensions framework, as same as periodic update schemes. Just as we have mentioned before, this type of update scheme introduces internal clocks in each node, which will actually depend on the dynamics of the network (contrarily to the case of local clocks, where clocks are fixed and independent from the dynamics of the network). This key aspect gives the network strong dynamical and computational capabilities which can be used to simulate monotone boolean networks. Interested reader is referred to in order \cite{goles2020firing} to see details.
\end{remark}
 

\subsection{Locally positive symmetric signed conjunctive networks}

In this section, we study a generalization of conjunctive networks that we call \textit{locally positive} symmetric conjunctive networks.  Using latter notation for CSAN families we denote this family by $\mathcal{F}_{\text{locally-pos}}.$ In this particular case, we allow edges to have negative signs (which will switch the state of the corresponding neighbor) but with a local constraint: no neighborhood in which all the connections (remember that all edges are undirected) are negative is allowed. More precisely, a locally positive symmetric conjunctive network is a CSAN $(G,\lambda,\rho)$ in for any $v \in V(G)$ we have $\lambda_v(q,S) = \bigwedge \limits_{q \in S} q$ and there exists $w \in N(v): \rho(vw) = \text{Id}.$

We will show for this family that the \emph{threshold of universality} when changing update modes is between block sequential update schemes and local clocks update schemes.
More precisely,  we show, on one hand, that the family remains dynamically constrained under block sequential schedule, and, on the other hand, we show that a local clocks version of this  family is strongly universal as a consequence of its capability of simulating coherent $\Gmontwo$-gadgets. A similar construction will be used also for the general case in which we allow any kind of sign label over the edges of the network as we will show in next sections.

\begin{theorem}
  Fix any ${b\geq 1}$ and consider ${\blocfamily{\mathcal{F}}{b}_{\text{locally-pos}}}$ the family of all locally positive symmetric signed conjunctive networks under blocksequential schedule with at most $b$ blocks. Any periodic orbit of any ${F\in {\blocfamily{\mathcal{F}}{b}_{\text{locally-pos}}}}$ has length $1$ or ${2b}$.
\end{theorem}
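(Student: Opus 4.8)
The plan is to factor the dynamics into a trivial clock rotation and a genuine Boolean part, and then to control the Boolean part by a direct antiphase argument in the spirit of Lemma~\ref{lem:locconj}, using the local positivity constraint in an essential way.

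First I would unfold the block sequential extension. A network ${F'\in\blocfamily{\mathcal{F}}{b}_{\text{locally-pos}}}$ lives on the alphabet ${\{0,1\}\times\{0,\ldots,b-1\}}$, and its second (clock) component evolves by the deterministic rotation ${r\mapsto r-1 \bmod b}$ independently of the Boolean component; hence on any configuration it is periodic of period exactly $b$ when ${b>1}$ and of period $1$ when ${b=1}$. In particular every periodic orbit of $F'$ has length a multiple of $b$, and this length is governed by the action of the round map ${G=\overline\pi\circ F'^{\,b}}$ on the Boolean component, which is exactly the block sequential update map of the underlying locally positive signed conjunctive network $F$ (the blocks ${B_0,\ldots,B_{b-1}}$ being read off from the initial clock values). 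It therefore suffices to prove that on any recurrent configuration the Boolean component is either fixed by $G$ or $2$-periodic under $G$; combining this with the clock rotation of period $b$ then yields the stated $F'$-orbit lengths. For ${b=1}$ this is exactly the classical statement that parallel symmetric threshold networks have attractors of period $1$ or $2$ \cite{GolesO80,PaperGoles}.

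The core step is the period-$\{1,2\}$ bound for $G$, and here I would argue directly rather than through an energy functional: no Lyapunov argument can work uniformly, since general signed conjunctive networks under the same block sequential schedule are strongly universal, so the proof must exploit both the local positivity of $F$ and the uniformity of the update period ${\tau_v=b}$. Following the scheme of Lemma~\ref{lem:locconj}, fix a recurrent configuration $\overline x$ with $G$-period ${P\geq 2}$ and a node $v$ whose Boolean value is non-constant along the orbit. Since $F$ is conjunctive, $v$ can only switch to $1$ at its (unique, once-per-round) update if every positive neighbour is then $1$ and every negative neighbour is then $0$; local positivity furnishes at least one positive neighbour $w$. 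Tracking the within-round update order, I would show that a flashing $v$ forces $w$ to flash as well and in exact antiphase with $v$ (while any flashing negative neighbour must be in phase), and that these forced relations, because all nodes share the same period $b$, can only close up consistently when ${P=2}$. Iterating the relation across the connected component of flashing nodes then gives that the non-constant nodes induce a bipartite subgraph oscillating in antiphase, exactly as in Lemma~\ref{lem:locconj}, and pins the period to $2$.

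The main obstacle is this last combinatorial step in the presence of negative edges: the monotone ``state $0$ is spreading'' reasoning that drives the pure conjunctive case of Lemma~\ref{lem:locconj} no longer applies, so the forced antiphase/in-phase propagation must be established by careful case analysis on the sign of each incident edge and on the relative update positions of $v$ and its neighbours within a round. The uniform period ${\tau_v=b}$ is precisely what makes the propagation terminate in a single round; dropping it recreates the mismatched update frequencies exploited in the local clocks universality constructions, which is exactly why the same family is bounded under block sequential schedules but not under general local clocks. A secondary bookkeeping point is to translate the period-$\{1,2\}$ behaviour of $G$ through the period-$b$ clock rotation to obtain the claimed orbit lengths $1$ and $2b$ for $F'$ itself.
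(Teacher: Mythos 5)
Your overall strategy points in the right direction---no energy functional, a direct combinatorial argument that exploits the positive neighbour guaranteed by local positivity, antiphase flashing---and the reduction to the round map $G=\overline\pi\circ F'^{\,b}$ is a legitimate (if unnecessary) reformulation. But the step you yourself flag as ``the main obstacle'' is precisely the content of the theorem, and your plan for it (a ``careful case analysis on the sign of each incident edge and on the relative update positions'') is neither carried out nor the right lever: no analysis of negative neighbours is needed at all. The single observation you are missing is this: if $i$ and $j$ are joined by a \emph{positive} edge in a signed conjunctive network, then as soon as both are in state $0$ at the same time they remain in state $0$ forever, since each, whenever it is updated, sees the other at $0$ through an unnegated conjunct. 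Hence in a periodic orbit in which node $i$ changes state, any positive neighbour $j$ of $i$ must be in state $1$ throughout every maximal interval $I=[t_1,t_2]$ on which $i$ is $0$.

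From that one fact the whole proof falls out in the raw time scale, with no case distinction on signs. The length of $I$ is a multiple of $b$ (node $i$ changes state only at its updates, which are spaced $b$ apart); if it exceeded $b$, then $j$ would be updated at some step inside $[t_1,t_2-1]$, would see $i=0$ through the positive edge, and would drop to $0$ inside $I$---contradicting the previous paragraph. So $|I|=b$, and moreover the unique update time of $j$ in that window is forced to be $t_2$, i.e.\ $i$ and $j$ are updated synchronously and $j$ becomes $0$ at time $t_2+1$. Since $j$ then keeps the value $0$ until its next update, the update of $i$ at step $t_2+b\to t_2+b+1$ sends $i$ back to $0$, so the ensuing $1$-run of $i$ also has length exactly $b$. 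Iterating, every non-constant node alternates runs of exactly $b$ zeros and $b$ ones, which pins the period to $2b$. Note finally that ``antiphase'' alone---which is all your bipartite closure argument would deliver---does not bound the period: two complementary traces can have any even period (e.g.\ $0,0,1,1,0,0,1,1,\dots$ and its complement have period $4$ in round time). The exact-length-$b$ run argument above is therefore indispensable and cannot be replaced by the consistency argument you sketch.
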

\begin{proof}
  Take some configuration $x$ in a periodic orbit. If no node changes its state in the orbit then $x$ is actually a fixed-point. Otherwise take some node $i$ that changes its state and consider a maximal time interval ${I=[t_1;t_2]}$ with ${t_1>0}$ during which $i$ is in state $0$: ${\forall t\in I}$, ${F^{t}(x)_i=0}$ but ${F^{t_1-1}(x)_i=F^{t_2+1}(x)_i=1}$. Let ${j}$ be any positive neighborhood (i.e. such that ${\rho(i,j)}$ is the identity). First, we must have ${\forall t\in I}$, ${F^{t}(x)_j=1}$ because supposing ${F^t(x)_i=F^t(x)_j=0}$ implies ${F^{t'}(x)_i=F^{t'}(x)_j=0}$ for all ${t'\geq t}$ which would contradict the hypothesis that $i$ changes its state in the orbit. Thus ${t_2-t_1+1=b}$ because it is by definition a multiple of $b$ and if it were strictly larger than $b$ then node $j$ would be updated in the interval ${[t_1;t_2-1]}$ and therefore would turn into state $0$ in the interval $I$ which is impossible. The same argument actually shows that $i$ and $j$ must be updated synchronously. Therefore it is updated at time $t_2$ and we must have ${F^{t_2+1}(x)_j=0}$. This implies that ${F^{t_2+b+1}(x)_i=0}$ and shows that the maximal time interval starting from $t_2+1$ during which $i$ is in state $1$ is of length exactly $b$. We can then iterate this reasoning starting at time ${t_2+b+1}$ and we deduce that the orbit of $x$ at node $i$ alternates $b$ steps in state $0$ and $b$ steps in state $1$ forever. The same holds for any node that changes it state and finally we have shown that the orbit of $x$ is of period $1$ or $2b$. 
\end{proof}

%

Now, we want to show that coherent AND/OR gadgets can be implemented  in $\clockfamily{\mathcal{F}}{c}{\text{locally-pos}}$, i.e. we want to show that this family has coherent $\Gmontwo$ gadgets where $\Gmontwo = \{\text{AND}_{2},\text{OR}_{2}\}$ where $\text{OR}_{2}: \{0,1\}^{2} \to \{0,1\}^{2}$ is such that $\text{OR}_{2}(x,y) = (x \vee y, x \vee y$) and function $\text{AND}_{2}: \{0,1\}^{2} \to \{0,1\}^{2}$ is such that $\text{AND}_{2}(x,y) = (x \wedge y, x \wedge y)$. However, in order to accomplish this task we need a construction that we will be using for the next subsection.  Particularly, we need to implement the gadgets that are  shown in Figures \ref{fig:ORstruct} and \ref{fig:ANDstruct}.  Then, we will adapt latter gadgets in order to make it work for locally positive symmetric conjunctive networks. Thus, as a consequence, we will have that $\clockfamily{\mathcal{F}}{c}_{\text{locally-pos}}$ is strongly universal.

As  results on general signed symmetric conjunctive networks  are required first in order to show the proof of the main theorem of this section, we will just state the main result and then, we will show the proof of next theorem in the next section, in order to simplify things. By doing this we will respect the order given by hierarchy between different families and preserve coherence of results at the same time.

\begin{theorem} \label{teo:localuniv}
There exist $c>0$ such that the family  $\clockfamily{\mathcal{F}}{c}_{\text{locally-pos}}$  of all locally positive symmetric conjunctive networks under local clocks update scheme with clock parameter $c$ has coherent $\Gmontwo$-gadgets.
\end{theorem}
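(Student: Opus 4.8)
The theorem asks to show that for some clock parameter $c$, the family $\clockfamily{\mathcal{F}}{c}_{\text{locally-pos}}$ of locally positive symmetric conjunctive networks under local clocks has coherent $\Gmontwo$-gadgets. By Corollary~\ref{cor:univfrommon}, this immediately yields strong universality, so the entire content is the explicit construction of gadgets satisfying Definition~\ref{def:coherent-gadgets}. My plan follows exactly the template set by the proof of the $\Gconj$-gadget theorem for symmetric conjunctive networks under periodic schedule: exhibit concrete gadgets $F_{\gateAND}$ and $F_{\gateOR}$ in the family (as CSAN on small labeled graphs, here with signed edges and a local period assignment), a glueing interface $C=C_i\cup C_o$, state configurations $s_0,s_1$, context configurations, a time constant $T$, standard traces $\tau_{q,q'}$, and, for each gate and each compatible tuple of boundary states, an explicit pseudo-orbit matching those traces on the interface copies while computing the gate's Boolean function in the interior.

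**Key steps, in order.** First I would design the two gadget graphs shown schematically in Figures~\ref{fig:ORstruct} and~\ref{fig:ANDstruct}, assigning to each node both a signed edge labeling (identity or the switch map $x\mapsto 1-x$) and a local clock period $\tau_v$ and shift $\delta_v$, subject to the constraint that every node has at least one positive (identity) incident edge so that we stay inside $\mathcal{F}_{\text{locally-pos}}$. The crucial trick is that since the underlying local map is conjunction, an OR is obtained by wrapping a conjunction between two negations via De Morgan, using negative edges to realize the sign flips; this is why negative edges are needed and why pure conjunctive networks failed. I would verify the three graph-theoretic conditions of Lemma~\ref{lem:csan-gadgets-glueing-closure} (the induced labeled graphs on all interface copies coincide, and the $C_o$-neighborhoods of input copies and $C_i$-neighborhoods of output copies are contained in the respective interface images), which guarantees the closure by gadget glueing stays inside $\mathcal{F}_{\text{locally-pos}}$. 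Next I would fix $T$ as a small multiple of $c$ (large enough to let the conjunction propagate through the gadget and for all local clocks to complete an integer number of cycles, ensuring period-coherence), define $s_q$ and the standard traces $\tau_{q,q'}$ on the two-node interface, and finally write out the pseudo-orbit tables for $F_{\gateAND}$ and $F_{\gateOR}$, checking line by line that each interior node updates correctly according to its clock (updating only when $\delta_v=t\bmod\tau_v$) and that the prescribed traces appear on the interface nodes while the pseudo-orbit freedom is used precisely on the sets $P_g$ of Definition~\ref{def:coherent-gadgets}.

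**Main obstacle.** The hard part is the simultaneous satisfaction of three tensions in choosing the clock periods and shifts: the local-positivity constraint forbids all-negative neighborhoods, yet I need negations to build OR from AND; the local clocks must be chosen so that signal propagation is correctly timed (a node must read its inputs only after they have stabilized, which constrains the relative shifts $\delta_v$); and the time constant $T$ must be a common multiple of all clock periods appearing so that the whole interface returns to a standard state, making the simulation period-coherent in the sense of the update-mode-coherence discussion. Because the theorem statement says this construction depends on and reuses the general signed case, I would factor the argument so that the locally-positive gadgets are obtained from the signed gadgets of Figures~\ref{fig:ORstruct} and~\ref{fig:ANDstruct} by a local modification that adds a spare positive edge (e.g. an auxiliary node connected by an identity edge whose state is held constant at $1$ so as not to alter the conjunction) to every node that would otherwise have an all-negative neighborhood; verifying that this padding preserves both the dynamics and the three conditions of Lemma~\ref{lem:csan-gadgets-glueing-closure} is the delicate bookkeeping step. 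Once the pseudo-orbit tables are checked, coherence of the $\Gmontwo$-gadgets follows directly from Definition~\ref{def:coherent-gadgets}, completing the proof.
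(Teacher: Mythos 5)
Your proposal matches the paper's proof: the paper likewise takes the signed gadgets of Figures~\ref{fig:ORstruct} and~\ref{fig:ANDstruct} verbatim and pads every node with an auxiliary positive neighbour, kept permanently in state $1$ by choosing its local clock period and shift so that it only updates at time steps where its gadget-neighbour is in state $1$ (which is possible because every gadget node visits state $1$ at a fixed phase of its period, independently of the encoded inputs). The only point worth making explicit in your write-up is that "held constant at $1$" is achieved precisely by this clock-phase trick --- in a conjunctive network the auxiliary node copies its neighbour when updated, so the local clocks are what make the padding inert --- but you have correctly identified both the construction and the delicate bookkeeping it requires.
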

Finally, as a direct consequence of latter theorem we have the following corollary:
\begin{corollary}
	There exist $c>0$ such that the family $\clockfamily{\mathcal{F}}{c}_{\text{locally-pos}}$  of all locally positive symmetric conjunctive networks under local clocks update scheme with clock parameter $c$ is strongly universal. In particular, $\clockfamily{\mathcal{F}}{c}_{\text{locally-pos}}$  is both dynamically and computationally complex.
\end{corollary}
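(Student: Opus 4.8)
The plan is to build coherent $\Gmontwo$-gadgets inside $\clockfamily{\mathcal{F}}{c}_{\text{locally-pos}}$ and then invoke Corollary~\ref{cor:univfrommon} to conclude strong universality. Following Definition~\ref{def:coherent-gadgets}, I would exhibit, for each of the two gates $\gateAND,\gateOR\in\Gmontwo$, a small locally positive signed conjunctive network $F_{\gateAND},F_{\gateOR}$ equipped with a glueing interface $C=C_i\cup C_o$, together with state configurations $s_0,s_1$, context configurations, a time constant $T$, standard traces $\tau_{q,q'}$, and pseudo-orbits realizing the input/output relation of each gate. The key structural idea, sketched in Figures~\ref{fig:ORstruct} and~\ref{fig:ANDstruct}, is to use the local clocks freedom to make certain nodes update at different frequencies so that, even though all local maps are conjunctions, some nodes effectively compute a disjunction. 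Concretely, the trick is that a conjunction with a negated input, $x\wedge\neg y$, combined with two layers of negations via signed edges, can be turned into the behavior of an OR gate on the coding component; here the local positivity constraint (at least one identity-labeled edge per neighborhood) must be respected, which is why we cannot use a purely negative neighborhood and instead rely on auxiliary synchronizing nodes carrying constant $1$ through positive edges.

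First I would fix the interface $C$ and the encoding $q\mapsto s_q$ exactly as in the symmetric conjunctive case treated earlier (a pair of interface nodes $i,o$ whose $Q$-component plus clock-component codes the Boolean value and enforces synchronization), then design the internal wiring of $F_{\gateAND}$ and $F_{\gateOR}$ so that the three conditions of Lemma~\ref{lem:csan-gadgets-glueing-closure} hold: the induced labeled graphs on all images of $C$ agree, and the neighborhoods of the $C_o$-part of each input copy and the $C_i$-part of each output copy stay inside the corresponding interface image. This is what guarantees that the closure by gadget glueing remains inside the family, and it is a purely local, finite graph-theoretic verification. I would then set the time constant $T$ (a small multiple of the common clock period $c$) and write down the standard traces $\tau_{q,q'}$ on the interface, making sure $\tau_{q,q'}(0)=s_q$ and $\tau_{q,q'}(T)=s_{q'}$, so that traces can be glued consistently via Lemma~\ref{lem:pseudo-orbit-glueing}.

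The heart of the argument, and the step I expect to be the main obstacle, is producing for each gate the explicit $P_g$-pseudo-orbits of Definition~\ref{def:coherent-gadgets} that (i) exhibit the prescribed standard trace on every input and output copy of $C$, (ii) return the context to $c_g$ after exactly $T$ steps, and (iii) genuinely compute $g(q_{i,1},\ldots)=(q'_{o,1},\ldots)$ on the coding component. Because the only available local map is the conjunction and the only freedom lies in the signs on edges and in the individual clock periods $\tau_v$ and shifts $\delta_v$, realizing an OR gate requires a delicate timing choice: the negations must be arranged so that, at the precise update instants dictated by the clocks, a double negation of a conjunction reads as a disjunction, while off-beat nodes are frozen and do not corrupt the computation. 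Verifying conditions (i)--(iii) amounts to checking a bounded table of configurations step by step for every combination of input/output boundary values $q_{i,k},q_{i,k}',q_{o,k},q_{o,k}'$, analogous to the pseudo-orbit tables displayed for the period-$3$ symmetric conjunctive gadgets, but now with the extra bookkeeping coming from heterogeneous clock periods.

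Since the statement explicitly defers the construction to the section on general signed conjunctive networks, I would carry out the detailed gadget construction there in full generality (allowing arbitrary signs) and then observe that the same gadgets can be made to satisfy the local positivity constraint by padding each neighborhood with an auxiliary identity-labeled edge to a synchronizing node held at state $1$, which does not affect the conjunctive outcome. With Theorem~\ref{teo:localuniv} thus established, the corollary is immediate: having coherent $\Gmontwo$-gadgets, Corollary~\ref{cor:univfrommon} yields that $\clockfamily{\mathcal{F}}{c}_{\text{locally-pos}}$ is strongly universal, and Corollary~\ref{cor:universality} together with Theorem~\ref{them:univ-rich-dynamics} then gives that the family is both computationally and dynamically complex.
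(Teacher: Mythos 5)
Your proposal follows the paper's route exactly: the corollary is obtained by establishing coherent $\Gmontwo$-gadgets for $\clockfamily{\mathcal{F}}{c}_{\text{locally-pos}}$ (Theorem~\ref{teo:localuniv}) and then invoking Corollary~\ref{cor:univfrommon} together with Theorem~\ref{them:univ-rich-dynamics} and Corollary~\ref{cor:universality}, and your plan for the gadgets --- reuse the all-negative signed symmetric conjunctive gadgets of Figures~\ref{fig:ORstruct} and~\ref{fig:ANDstruct} and pad every node with a positive neighbor kept at state $1$ --- is precisely the paper's construction. One small imprecision: the local clocks are not what makes conjunctions behave like disjunctions (that is already achieved by double-rail coding and negative edges in the block sequential signed case); their essential role here is to schedule each auxiliary positive neighbor so that it updates only at the input-independent instants when its gadget neighbor is in state $1$, which is what keeps it at $1$ and renders the positivity padding harmless.
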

\begin{proof}
	Proof is a direct consequence of Theorem \ref{them:univ-rich-dynamics}, Corollary \ref{cor:universality} and Corollary \ref{cor:univfrommon}.
\end{proof}

\subsection{Symmetric signed conjunctive networks}
In this section we study conjunctive networks with negative edges without any local constraint in the number of positive edges. Formally the symmetric signed conjunctive networks family is a CSAN family in $\{0,1\}$ in which $\lambda_v: Q \times 2^Q \to Q$ is given by $\lambda(q,S) = 0 \text{ if } 0 \in S$ and $\lambda(q,S) = 1 \text{ if } 0 \not \in S$ and for any $e \in E$ we have $\rho_e \in \{\text{Id},\text{Switch}\}.$ where $\text{Switch}(x) = 1-x$. We denote previous family as $\mathcal{F}_{\text{sign-sym-conj}}$ and for a different update schemes we consider the notation $\blocfamily{\mathcal{F}}{b}_{\text{sign-sym-conj}}, \clockfamily{\mathcal{F}}{c}_{\text{sign-sym-conj}}$ and $\periodfamily{\mathcal{F}}{p}_{\text{sign-sym-conj}}$ for block sequential, local clocks and periodic versions of this family respectively.

We start by remarking that for the parallel update scheme, $\mathcal{F}_{\text{sign-sym-conj}}$ family is not universal as it is a type of threshold family and thus it have bounded period transient and attractors (see \cite{PaperGoles,GolesO80}). Then, a natural question is whether this remains true for other update schemes. In this sense, we are going to show that, when we consider the next update scheme in our hierarchy, the block sequential update scheme then, $\mathcal{F}_{\text{sign-sym-conj}}$ family is strongly universal.
%
 

\subsubsection{Block sequential case}
In this section, we will show that $\blocfamily{\mathcal{F}}{b}_{\text{sign-sym-conj}}$ is strongly universal as a consequence of its capability to implement coherent $\Gmontwo$-gadgets. In addition, we conclude that, as a direct consequence of the latter property, that previous family is both dynamically complex and computationally complex. This means that for this family, complex behavior is exhibited under block sequential update schemes. 
\begin{figure}
\centering

\begin{tikzpicture}[x=0.55pt,y=0.55pt,yscale=-1,xscale=1]

\draw  [line width=1.5]  (121,322) -- (156,357) -- (121,392) -- (86,357) -- cycle ;

\draw  [line width=1.5]  (121,60) -- (156,95) -- (121,130) -- (86,95) -- cycle ;

\draw  [line width=1.5]  (334,320) -- (369,355) -- (334,390) -- (299,355) -- cycle ;

\draw  [line width=1.5]  (334,58) -- (369,93) -- (334,128) -- (299,93) -- cycle ;

\draw  [line width=1.5]  (539,320) -- (574,355) -- (539,390) -- (504,355) -- cycle ;

\draw  [line width=1.5]  (539,58) -- (574,93) -- (539,128) -- (504,93) -- cycle ;

\draw    (121,130) -- (120.5,327) ;
\draw    (334,128) -- (333.5,325) ;
\draw    (539,128) -- (538.5,325) ;
\draw    (343.75,224) -- (523.94,224) ;
\draw    (122.5,225) -- (381.75,224) ;
\draw  [fill={rgb, 255:red, 255; green, 255; blue, 255 }  ,fill opacity=1 ] (284.19,197) -- (389.31,197) -- (389.31,251) -- (284.19,251) -- cycle ;
\draw  [fill={rgb, 255:red, 255; green, 255; blue, 255 }  ,fill opacity=1 ] (489.39,197) -- (594.5,197) -- (594.5,251) -- (489.39,251) -- cycle ;
\draw  [fill={rgb, 255:red, 255; green, 255; blue, 255 }  ,fill opacity=1 ] (75,197) -- (180.11,197) -- (180.11,251) -- (75,251) -- cycle ;

\draw  [dash pattern={on 4.5pt off 4.5pt}] (118.57,36) -- (175.5,36) -- (83.4,117) -- (26.47,117) -- cycle ;
\draw  [dash pattern={on 4.5pt off 4.5pt}] (120.57,298) -- (177.5,298) -- (85.4,379) -- (28.47,379) -- cycle ;
\draw  [dash pattern={on 4.5pt off 4.5pt}] (376.57,69) -- (433.5,69) -- (341.4,150) -- (284.47,150) -- cycle ;
\draw  [dash pattern={on 4.5pt off 4.5pt}] (329.57,294) -- (386.5,294) -- (294.4,375) -- (237.47,375) -- cycle ;
\draw  [dash pattern={on 4.5pt off 4.5pt}] (535.57,294) -- (592.5,294) -- (500.4,375) -- (443.47,375) -- cycle ;
\draw  [dash pattern={on 4.5pt off 4.5pt}] (571.57,69) -- (628.5,69) -- (536.4,150) -- (479.47,150) -- cycle ;
\draw  [dash pattern={on 5.63pt off 4.5pt}][line width=1.5]  (65.5,168) -- (601.5,168) -- (601.5,278) -- (65.5,278) -- cycle ;
\draw  [dash pattern={on 4.5pt off 4.5pt}] (101.56,113.5) .. controls (110.98,92.79) and (136.27,76) .. (158.06,76) .. controls (179.84,76) and (189.86,92.79) .. (180.44,113.5) .. controls (171.02,134.21) and (145.73,151) .. (123.94,151) .. controls (102.16,151) and (92.14,134.21) .. (101.56,113.5) -- cycle ;
\draw  [dash pattern={on 4.5pt off 4.5pt}] (103.56,379.5) .. controls (112.98,358.79) and (138.27,342) .. (160.06,342) .. controls (181.84,342) and (191.86,358.79) .. (182.44,379.5) .. controls (173.02,400.21) and (147.73,417) .. (125.94,417) .. controls (104.16,417) and (94.14,400.21) .. (103.56,379.5) -- cycle ;
\draw  [dash pattern={on 4.5pt off 4.5pt}] (314.56,371.5) .. controls (323.98,350.79) and (349.27,334) .. (371.06,334) .. controls (392.84,334) and (402.86,350.79) .. (393.44,371.5) .. controls (384.02,392.21) and (358.73,409) .. (336.94,409) .. controls (315.16,409) and (305.14,392.21) .. (314.56,371.5) -- cycle ;
\draw  [dash pattern={on 4.5pt off 4.5pt}] (522.56,372.5) .. controls (531.98,351.79) and (557.27,335) .. (579.06,335) .. controls (600.84,335) and (610.86,351.79) .. (601.44,372.5) .. controls (592.02,393.21) and (566.73,410) .. (544.94,410) .. controls (523.16,410) and (513.14,393.21) .. (522.56,372.5) -- cycle ;
\draw  [dash pattern={on 4.5pt off 4.5pt}] (276.56,73.5) .. controls (285.98,52.79) and (311.27,36) .. (333.06,36) .. controls (354.84,36) and (364.86,52.79) .. (355.44,73.5) .. controls (346.02,94.21) and (320.73,111) .. (298.94,111) .. controls (277.16,111) and (267.14,94.21) .. (276.56,73.5) -- cycle ;
\draw  [dash pattern={on 4.5pt off 4.5pt}] (479.56,71.5) .. controls (488.98,50.79) and (514.27,34) .. (536.06,34) .. controls (557.84,34) and (567.86,50.79) .. (558.44,71.5) .. controls (549.02,92.21) and (523.73,109) .. (501.94,109) .. controls (480.16,109) and (470.14,92.21) .. (479.56,71.5) -- cycle ;
\draw    (25.8,225.33) -- (72.5,225) ;
\draw    (594.8,224.33) -- (641.5,224) ;

\draw  [fill={rgb, 255:red, 255; green, 255; blue, 255 }  ,fill opacity=1 ]  (121, 392) circle [x radius= 13.6, y radius= 13.6]   ;
\draw (121,392) node    {$0$};
\draw  [fill={rgb, 255:red, 255; green, 255; blue, 255 }  ,fill opacity=1 ]  (86, 357) circle [x radius= 13.6, y radius= 13.6]   ;
\draw (86,357) node    {$0$};
\draw  [fill={rgb, 255:red, 255; green, 255; blue, 255 }  ,fill opacity=1 ]  (121, 322) circle [x radius= 13.6, y radius= 13.6]   ;
\draw (121,322) node    {$0$};
\draw  [fill={rgb, 255:red, 255; green, 255; blue, 255 }  ,fill opacity=1 ]  (156, 357) circle [x radius= 13.6, y radius= 13.6]   ;
\draw (156,357) node    {$0$};
\draw  [fill={rgb, 255:red, 255; green, 255; blue, 255 }  ,fill opacity=1 ]  (86, 95) circle [x radius= 13.6, y radius= 13.6]   ;
\draw (86,95) node    {$0$};
\draw  [fill={rgb, 255:red, 255; green, 255; blue, 255 }  ,fill opacity=1 ]  (121, 60) circle [x radius= 13.6, y radius= 13.6]   ;
\draw (121,60) node    {$0$};
\draw  [fill={rgb, 255:red, 255; green, 255; blue, 255 }  ,fill opacity=1 ]  (156, 95) circle [x radius= 13.6, y radius= 13.6]   ;
\draw (156,95) node    {$0$};
\draw  [fill={rgb, 255:red, 255; green, 255; blue, 255 }  ,fill opacity=1 ]  (121, 130) circle [x radius= 13.6, y radius= 13.6]   ;
\draw (121,130) node    {$0$};
\draw  [fill={rgb, 255:red, 255; green, 255; blue, 255 }  ,fill opacity=1 ]  (369, 355) circle [x radius= 13.6, y radius= 13.6]   ;
\draw (369,355) node    {$1$};
\draw  [fill={rgb, 255:red, 255; green, 255; blue, 255 }  ,fill opacity=1 ]  (334, 320) circle [x radius= 13.6, y radius= 13.6]   ;
\draw (334,320) node    {$0$};
\draw  [fill={rgb, 255:red, 255; green, 255; blue, 255 }  ,fill opacity=1 ]  (299, 355) circle [x radius= 13.6, y radius= 13.6]   ;
\draw (299,355) node    {$0$};
\draw  [fill={rgb, 255:red, 255; green, 255; blue, 255 }  ,fill opacity=1 ]  (334, 390) circle [x radius= 13.6, y radius= 13.6]   ;
\draw (334,390) node    {$1$};
\draw  [fill={rgb, 255:red, 255; green, 255; blue, 255 }  ,fill opacity=1 ]  (334, 128) circle [x radius= 13.6, y radius= 13.6]   ;
\draw (334,128) node    {$1$};
\draw  [fill={rgb, 255:red, 255; green, 255; blue, 255 }  ,fill opacity=1 ]  (369, 93) circle [x radius= 13.6, y radius= 13.6]   ;
\draw (369,93) node    {$1$};
\draw  [fill={rgb, 255:red, 255; green, 255; blue, 255 }  ,fill opacity=1 ]  (334, 58) circle [x radius= 13.6, y radius= 13.6]   ;
\draw (334,58) node    {$0$};
\draw  [fill={rgb, 255:red, 255; green, 255; blue, 255 }  ,fill opacity=1 ]  (299, 93) circle [x radius= 13.6, y radius= 13.6]   ;
\draw (299,93) node    {$0$};
\draw  [fill={rgb, 255:red, 255; green, 255; blue, 255 }  ,fill opacity=1 ]  (574, 355) circle [x radius= 13.6, y radius= 13.6]   ;
\draw (574,355) node    {$0$};
\draw  [fill={rgb, 255:red, 255; green, 255; blue, 255 }  ,fill opacity=1 ]  (539, 320) circle [x radius= 13.6, y radius= 13.6]   ;
\draw (539,320) node    {$1$};
\draw  [fill={rgb, 255:red, 255; green, 255; blue, 255 }  ,fill opacity=1 ]  (504, 355) circle [x radius= 13.6, y radius= 13.6]   ;
\draw (504,355) node    {$1$};
\draw  [fill={rgb, 255:red, 255; green, 255; blue, 255 }  ,fill opacity=1 ]  (539, 390) circle [x radius= 13.6, y radius= 13.6]   ;
\draw (539,390) node    {$0$};
\draw  [fill={rgb, 255:red, 255; green, 255; blue, 255 }  ,fill opacity=1 ]  (539, 128) circle [x radius= 13.6, y radius= 13.6]   ;
\draw (539,128) node    {$0$};
\draw  [fill={rgb, 255:red, 255; green, 255; blue, 255 }  ,fill opacity=1 ]  (574, 93) circle [x radius= 13.6, y radius= 13.6]   ;
\draw (574,93) node    {$0$};
\draw  [fill={rgb, 255:red, 255; green, 255; blue, 255 }  ,fill opacity=1 ]  (539, 58) circle [x radius= 13.6, y radius= 13.6]   ;
\draw (539,58) node    {$0$};
\draw  [fill={rgb, 255:red, 255; green, 255; blue, 255 }  ,fill opacity=1 ]  (504, 93) circle [x radius= 13.6, y radius= 13.6]   ;
\draw (504,93) node    {$0$};
\draw  [fill={rgb, 255:red, 155; green, 155; blue, 155 }  ,fill opacity=1 ]  (291,1) -- (339,1) -- (339,25) -- (291,25) -- cycle  ;
\draw (315,13) node    {$t\ =0$};
\draw (125.56,224) node    {$0$};
\draw (333.75,223.5) node    {$0$};
\draw (539.94,222) node    {$z$};
\draw  [fill={rgb, 255:red, 155; green, 155; blue, 155 }  ,fill opacity=1 ]  (209, 190) circle [x radius= 13.6, y radius= 13.6]   ;
\draw (209,190) node    {$1$};
\draw  [fill={rgb, 255:red, 155; green, 155; blue, 155 }  ,fill opacity=1 ]  (50, 61) circle [x radius= 13.6, y radius= 13.6]   ;
\draw (50,61) node    {$2$};
\draw  [fill={rgb, 255:red, 155; green, 155; blue, 155 }  ,fill opacity=1 ]  (42, 322) circle [x radius= 13.6, y radius= 13.6]   ;
\draw (42,322) node    {$1$};
\draw  [fill={rgb, 255:red, 155; green, 155; blue, 155 }  ,fill opacity=1 ]  (257, 321) circle [x radius= 13.6, y radius= 13.6]   ;
\draw (257,321) node    {$1$};
\draw  [fill={rgb, 255:red, 155; green, 155; blue, 155 }  ,fill opacity=1 ]  (469, 319) circle [x radius= 13.6, y radius= 13.6]   ;
\draw (469,319) node    {$1$};
\draw  [fill={rgb, 255:red, 155; green, 155; blue, 155 }  ,fill opacity=1 ]  (619, 118) circle [x radius= 13.6, y radius= 13.6]   ;
\draw (619,118) node    {$2$};
\draw  [fill={rgb, 255:red, 155; green, 155; blue, 155 }  ,fill opacity=1 ]  (409, 118) circle [x radius= 13.6, y radius= 13.6]   ;
\draw (409,118) node    {$2$};
\draw  [fill={rgb, 255:red, 155; green, 155; blue, 155 }  ,fill opacity=1 ]  (262, 61) circle [x radius= 13.6, y radius= 13.6]   ;
\draw (262,61) node    {$3$};
\draw  [fill={rgb, 255:red, 155; green, 155; blue, 155 }  ,fill opacity=1 ]  (392, 405) circle [x radius= 13.6, y radius= 13.6]   ;
\draw (392,405) node    {$3$};
\draw  [fill={rgb, 255:red, 155; green, 155; blue, 155 }  ,fill opacity=1 ]  (612, 401) circle [x radius= 13.6, y radius= 13.6]   ;
\draw (612,401) node    {$3$};
\draw  [fill={rgb, 255:red, 155; green, 155; blue, 155 }  ,fill opacity=1 ]  (195, 403) circle [x radius= 13.6, y radius= 13.6]   ;
\draw (195,403) node    {$3$};
\draw (17,223) node    {$x$};
\draw  [fill={rgb, 255:red, 155; green, 155; blue, 155 }  ,fill opacity=1 ]  (203, 118) circle [x radius= 13.6, y radius= 13.6]   ;
\draw (203,118) node    {$3$};
\draw  [fill={rgb, 255:red, 155; green, 155; blue, 155 }  ,fill opacity=1 ]  (466, 61) circle [x radius= 13.6, y radius= 13.6]   ;
\draw (466,61) node    {$3$};


\end{tikzpicture}
\begin{tikzpicture}[x=0.55pt,y=0.55pt,yscale=-1,xscale=1]

\draw  [line width=1.5]  (129,325) -- (164,360) -- (129,395) -- (94,360) -- cycle ;

\draw  [line width=1.5]  (129,63) -- (164,98) -- (129,133) -- (94,98) -- cycle ;

\draw  [line width=1.5]  (342,323) -- (377,358) -- (342,393) -- (307,358) -- cycle ;

\draw  [line width=1.5]  (342,61) -- (377,96) -- (342,131) -- (307,96) -- cycle ;

\draw  [line width=1.5]  (547,323) -- (582,358) -- (547,393) -- (512,358) -- cycle ;

\draw  [line width=1.5]  (547,61) -- (582,96) -- (547,131) -- (512,96) -- cycle ;

\draw    (601.5,226.77) -- (642.8,226.5) ;
\draw    (129,133) -- (128.5,330) ;
\draw    (342,131) -- (341.5,328) ;
\draw    (547,131) -- (546.5,328) ;
\draw    (351.75,227) -- (531.94,227) ;
\draw    (130.5,228) -- (389.75,227) ;
\draw  [fill={rgb, 255:red, 255; green, 255; blue, 255 }  ,fill opacity=1 ] (292.19,200) -- (397.31,200) -- (397.31,254) -- (292.19,254) -- cycle ;
\draw  [fill={rgb, 255:red, 255; green, 255; blue, 255 }  ,fill opacity=1 ] (497.39,200) -- (602.5,200) -- (602.5,254) -- (497.39,254) -- cycle ;
\draw  [fill={rgb, 255:red, 255; green, 255; blue, 255 }  ,fill opacity=1 ] (83,200) -- (188.11,200) -- (188.11,254) -- (83,254) -- cycle ;

\draw  [dash pattern={on 4.5pt off 4.5pt}] (126.57,39) -- (183.5,39) -- (91.4,120) -- (34.47,120) -- cycle ;
\draw  [dash pattern={on 4.5pt off 4.5pt}] (128.57,301) -- (185.5,301) -- (93.4,382) -- (36.47,382) -- cycle ;
\draw  [dash pattern={on 4.5pt off 4.5pt}] (384.57,72) -- (441.5,72) -- (349.4,153) -- (292.47,153) -- cycle ;
\draw  [dash pattern={on 4.5pt off 4.5pt}] (337.57,297) -- (394.5,297) -- (302.4,378) -- (245.47,378) -- cycle ;
\draw  [dash pattern={on 4.5pt off 4.5pt}] (543.57,297) -- (600.5,297) -- (508.4,378) -- (451.47,378) -- cycle ;
\draw  [dash pattern={on 4.5pt off 4.5pt}] (579.57,72) -- (636.5,72) -- (544.4,153) -- (487.47,153) -- cycle ;
\draw  [dash pattern={on 5.63pt off 4.5pt}][line width=1.5]  (73.5,172) -- (609.5,172) -- (609.5,282) -- (73.5,282) -- cycle ;
\draw  [dash pattern={on 4.5pt off 4.5pt}] (109.56,116.5) .. controls (118.98,95.79) and (144.27,79) .. (166.06,79) .. controls (187.84,79) and (197.86,95.79) .. (188.44,116.5) .. controls (179.02,137.21) and (153.73,154) .. (131.94,154) .. controls (110.16,154) and (100.14,137.21) .. (109.56,116.5) -- cycle ;
\draw  [dash pattern={on 4.5pt off 4.5pt}] (111.56,382.5) .. controls (120.98,361.79) and (146.27,345) .. (168.06,345) .. controls (189.84,345) and (199.86,361.79) .. (190.44,382.5) .. controls (181.02,403.21) and (155.73,420) .. (133.94,420) .. controls (112.16,420) and (102.14,403.21) .. (111.56,382.5) -- cycle ;
\draw  [dash pattern={on 4.5pt off 4.5pt}] (322.56,374.5) .. controls (331.98,353.79) and (357.27,337) .. (379.06,337) .. controls (400.84,337) and (410.86,353.79) .. (401.44,374.5) .. controls (392.02,395.21) and (366.73,412) .. (344.94,412) .. controls (323.16,412) and (313.14,395.21) .. (322.56,374.5) -- cycle ;
\draw  [dash pattern={on 4.5pt off 4.5pt}] (530.56,375.5) .. controls (539.98,354.79) and (565.27,338) .. (587.06,338) .. controls (608.84,338) and (618.86,354.79) .. (609.44,375.5) .. controls (600.02,396.21) and (574.73,413) .. (552.94,413) .. controls (531.16,413) and (521.14,396.21) .. (530.56,375.5) -- cycle ;
\draw  [dash pattern={on 4.5pt off 4.5pt}] (284.56,76.5) .. controls (293.98,55.79) and (319.27,39) .. (341.06,39) .. controls (362.84,39) and (372.86,55.79) .. (363.44,76.5) .. controls (354.02,97.21) and (328.73,114) .. (306.94,114) .. controls (285.16,114) and (275.14,97.21) .. (284.56,76.5) -- cycle ;
\draw  [dash pattern={on 4.5pt off 4.5pt}] (487.56,74.5) .. controls (496.98,53.79) and (522.27,37) .. (544.06,37) .. controls (565.84,37) and (575.86,53.79) .. (566.44,74.5) .. controls (557.02,95.21) and (531.73,112) .. (509.94,112) .. controls (488.16,112) and (478.14,95.21) .. (487.56,74.5) -- cycle ;
\draw    (34.8,227.33) -- (81.5,227) ;

\draw  [fill={rgb, 255:red, 255; green, 255; blue, 255 }  ,fill opacity=1 ]  (129, 395) circle [x radius= 13.6, y radius= 13.6]   ;
\draw (129,395) node    {$0$};
\draw  [fill={rgb, 255:red, 255; green, 255; blue, 255 }  ,fill opacity=1 ]  (94, 360) circle [x radius= 13.6, y radius= 13.6]   ;
\draw (94,360) node    {$1$};
\draw  [fill={rgb, 255:red, 255; green, 255; blue, 255 }  ,fill opacity=1 ]  (129, 325) circle [x radius= 13.6, y radius= 13.6]   ;
\draw (129,325) node    {$1$};
\draw  [fill={rgb, 255:red, 255; green, 255; blue, 255 }  ,fill opacity=1 ]  (164, 360) circle [x radius= 13.6, y radius= 13.6]   ;
\draw (164,360) node    {$0$};
\draw  [fill={rgb, 255:red, 255; green, 255; blue, 255 }  ,fill opacity=1 ]  (94, 98) circle [x radius= 13.6, y radius= 13.6]   ;
\draw (94,98) node    {$1$};
\draw  [fill={rgb, 255:red, 255; green, 255; blue, 255 }  ,fill opacity=1 ]  (129, 63) circle [x radius= 13.6, y radius= 13.6]   ;
\draw (129,63) node    {$1$};
\draw  [fill={rgb, 255:red, 255; green, 255; blue, 255 }  ,fill opacity=1 ]  (164, 98) circle [x radius= 13.6, y radius= 13.6]   ;
\draw (164,98) node    {$0$};
\draw  [fill={rgb, 255:red, 255; green, 255; blue, 255 }  ,fill opacity=1 ]  (129, 133) circle [x radius= 13.6, y radius= 13.6]   ;
\draw (129,133) node    {$0$};
\draw  [fill={rgb, 255:red, 255; green, 255; blue, 255 }  ,fill opacity=1 ]  (377, 358) circle [x radius= 13.6, y radius= 13.6]   ;
\draw (377,358) node    {$0$};
\draw  [fill={rgb, 255:red, 255; green, 255; blue, 255 }  ,fill opacity=1 ]  (342, 323) circle [x radius= 13.6, y radius= 13.6]   ;
\draw (342,323) node    {$0$};
\draw  [fill={rgb, 255:red, 255; green, 255; blue, 255 }  ,fill opacity=1 ]  (307, 358) circle [x radius= 13.6, y radius= 13.6]   ;
\draw (307,358) node    {$0$};
\draw  [fill={rgb, 255:red, 255; green, 255; blue, 255 }  ,fill opacity=1 ]  (342, 393) circle [x radius= 13.6, y radius= 13.6]   ;
\draw (342,393) node    {$0$};
\draw  [fill={rgb, 255:red, 255; green, 255; blue, 255 }  ,fill opacity=1 ]  (342, 131) circle [x radius= 13.6, y radius= 13.6]   ;
\draw (342,131) node    {$0$};
\draw  [fill={rgb, 255:red, 255; green, 255; blue, 255 }  ,fill opacity=1 ]  (377, 96) circle [x radius= 13.6, y radius= 13.6]   ;
\draw (377,96) node    {$0$};
\draw  [fill={rgb, 255:red, 255; green, 255; blue, 255 }  ,fill opacity=1 ]  (342, 61) circle [x radius= 13.6, y radius= 13.6]   ;
\draw (342,61) node    {$1$};
\draw  [fill={rgb, 255:red, 255; green, 255; blue, 255 }  ,fill opacity=1 ]  (307, 96) circle [x radius= 13.6, y radius= 13.6]   ;
\draw (307,96) node    {$1$};
\draw  [fill={rgb, 255:red, 255; green, 255; blue, 255 }  ,fill opacity=1 ]  (582, 358) circle [x radius= 13.6, y radius= 13.6]   ;
\draw (582,358) node    {$1$};
\draw  [fill={rgb, 255:red, 255; green, 255; blue, 255 }  ,fill opacity=1 ]  (547, 323) circle [x radius= 13.6, y radius= 13.6]   ;
\draw (547,323) node    {$0$};
\draw  [fill={rgb, 255:red, 255; green, 255; blue, 255 }  ,fill opacity=1 ]  (512, 358) circle [x radius= 13.6, y radius= 13.6]   ;
\draw (512,358) node    {$0$};
\draw  [fill={rgb, 255:red, 255; green, 255; blue, 255 }  ,fill opacity=1 ]  (547, 393) circle [x radius= 13.6, y radius= 13.6]   ;
\draw (547,393) node    {$1$};
\draw  [fill={rgb, 255:red, 255; green, 255; blue, 255 }  ,fill opacity=1 ]  (547, 131) circle [x radius= 13.6, y radius= 13.6]   ;
\draw (547,131) node    {$1$};
\draw  [fill={rgb, 255:red, 255; green, 255; blue, 255 }  ,fill opacity=1 ]  (582, 96) circle [x radius= 13.6, y radius= 13.6]   ;
\draw (582,96) node    {$1$};
\draw  [fill={rgb, 255:red, 255; green, 255; blue, 255 }  ,fill opacity=1 ]  (547, 61) circle [x radius= 13.6, y radius= 13.6]   ;
\draw (547,61) node    {$0$};
\draw  [fill={rgb, 255:red, 255; green, 255; blue, 255 }  ,fill opacity=1 ]  (512, 96) circle [x radius= 13.6, y radius= 13.6]   ;
\draw (512,96) node    {$0$};
\draw (133.56,226) node    {$\overline{x}$};
\draw (341.75,226.5) node    {$0$};
\draw (547.94,225) node    {$0$};
\draw  [fill={rgb, 255:red, 155; green, 155; blue, 155 }  ,fill opacity=1 ]  (217, 193) circle [x radius= 13.6, y radius= 13.6]   ;
\draw (217,193) node    {$1$};
\draw  [fill={rgb, 255:red, 155; green, 155; blue, 155 }  ,fill opacity=1 ]  (58, 64) circle [x radius= 13.6, y radius= 13.6]   ;
\draw (58,64) node    {$2$};
\draw  [fill={rgb, 255:red, 155; green, 155; blue, 155 }  ,fill opacity=1 ]  (50, 325) circle [x radius= 13.6, y radius= 13.6]   ;
\draw (50,325) node    {$1$};
\draw  [fill={rgb, 255:red, 155; green, 155; blue, 155 }  ,fill opacity=1 ]  (265, 324) circle [x radius= 13.6, y radius= 13.6]   ;
\draw (265,324) node    {$1$};
\draw  [fill={rgb, 255:red, 155; green, 155; blue, 155 }  ,fill opacity=1 ]  (477, 322) circle [x radius= 13.6, y radius= 13.6]   ;
\draw (477,322) node    {$1$};
\draw  [fill={rgb, 255:red, 155; green, 155; blue, 155 }  ,fill opacity=1 ]  (627, 121) circle [x radius= 13.6, y radius= 13.6]   ;
\draw (627,121) node    {$2$};
\draw  [fill={rgb, 255:red, 155; green, 155; blue, 155 }  ,fill opacity=1 ]  (417, 121) circle [x radius= 13.6, y radius= 13.6]   ;
\draw (417,121) node    {$2$};
\draw  [fill={rgb, 255:red, 155; green, 155; blue, 155 }  ,fill opacity=1 ]  (270, 64) circle [x radius= 13.6, y radius= 13.6]   ;
\draw (270,64) node    {$3$};
\draw  [fill={rgb, 255:red, 155; green, 155; blue, 155 }  ,fill opacity=1 ]  (400, 408) circle [x radius= 13.6, y radius= 13.6]   ;
\draw (400,408) node    {$3$};
\draw  [fill={rgb, 255:red, 155; green, 155; blue, 155 }  ,fill opacity=1 ]  (620, 404) circle [x radius= 13.6, y radius= 13.6]   ;
\draw (620,404) node    {$3$};
\draw  [fill={rgb, 255:red, 155; green, 155; blue, 155 }  ,fill opacity=1 ]  (203, 406) circle [x radius= 13.6, y radius= 13.6]   ;
\draw (203,406) node    {$3$};
\draw  [fill={rgb, 255:red, 155; green, 155; blue, 155 }  ,fill opacity=1 ]  (211, 121) circle [x radius= 13.6, y radius= 13.6]   ;
\draw (211,121) node    {$3$};
\draw  [fill={rgb, 255:red, 155; green, 155; blue, 155 }  ,fill opacity=1 ]  (474, 64) circle [x radius= 13.6, y radius= 13.6]   ;
\draw (474,64) node    {$3$};
\draw  [fill={rgb, 255:red, 155; green, 155; blue, 155 }  ,fill opacity=1 ]  (306,3) -- (354,3) -- (354,27) -- (306,27) -- cycle  ;
\draw (330,15) node    {$t\ =1$};
\draw (25,225) node    {$0$};

\end{tikzpicture}
\caption{One step of the dynamics of the NOT part of NOR gadget implemented by a signed symmetric conjunctive network. Dotted circles and triangles represent blocks. Numbers in gray represent the updating order of each block. Each time step $t$ is taken after three time steps (one for each block). Total simulation time is $T = 9$.}
\label{fig:NOT1andnot}
\end{figure}
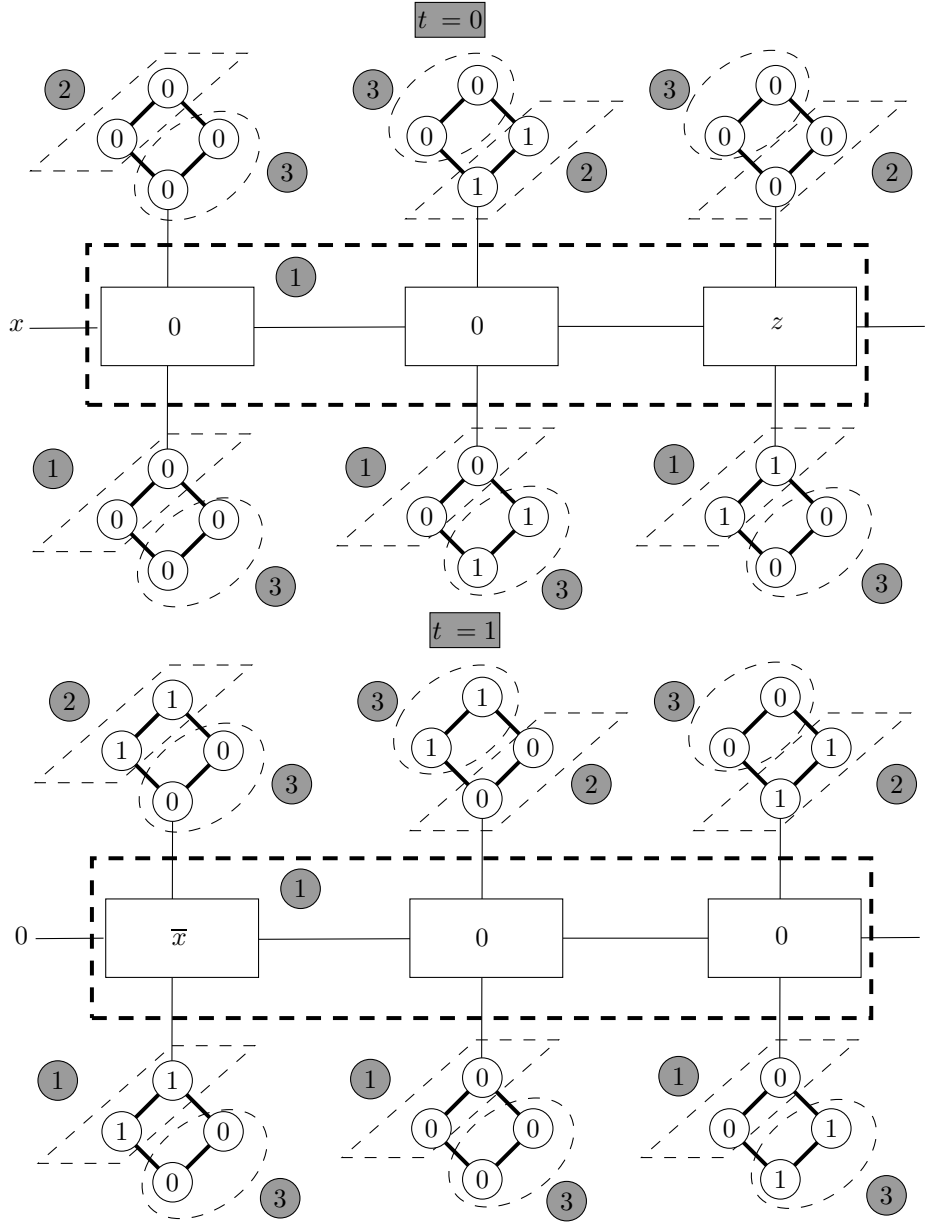

\begin{figure}
\centering
	
	\begin{tikzpicture}[x=0.55pt,y=0.55pt,yscale=-1,xscale=1]
	
	\draw  [line width=1.5]  (123,327) -- (158,362) -- (123,397) -- (88,362) -- cycle ;
	
	\draw  [line width=1.5]  (123,65) -- (158,100) -- (123,135) -- (88,100) -- cycle ;

	\draw  [line width=1.5]  (336,325) -- (371,360) -- (336,395) -- (301,360) -- cycle ;
	
	\draw  [line width=1.5]  (336,63) -- (371,98) -- (336,133) -- (301,98) -- cycle ;

	\draw  [line width=1.5]  (541,325) -- (576,360) -- (541,395) -- (506,360) -- cycle ;
	
	\draw  [line width=1.5]  (541,63) -- (576,98) -- (541,133) -- (506,98) -- cycle ;

	\draw    (595.5,228.77) -- (633.8,228.67) ;
	\draw    (123,135) -- (122.5,332) ;
	\draw    (336,133) -- (335.5,330) ;
	\draw    (541,133) -- (540.5,330) ;
	\draw    (345.75,229) -- (525.94,229) ;
	\draw    (124.5,230) -- (383.75,229) ;
	\draw  [fill={rgb, 255:red, 255; green, 255; blue, 255 }  ,fill opacity=1 ] (286.19,202) -- (391.31,202) -- (391.31,256) -- (286.19,256) -- cycle ;
	\draw  [fill={rgb, 255:red, 255; green, 255; blue, 255 }  ,fill opacity=1 ] (491.39,202) -- (596.5,202) -- (596.5,256) -- (491.39,256) -- cycle ;
	\draw  [fill={rgb, 255:red, 255; green, 255; blue, 255 }  ,fill opacity=1 ] (77,202) -- (182.11,202) -- (182.11,256) -- (77,256) -- cycle ;

	\draw  [dash pattern={on 4.5pt off 4.5pt}] (120.57,41) -- (177.5,41) -- (85.4,122) -- (28.47,122) -- cycle ;
	\draw  [dash pattern={on 4.5pt off 4.5pt}] (122.57,303) -- (179.5,303) -- (87.4,384) -- (30.47,384) -- cycle ;
	\draw  [dash pattern={on 4.5pt off 4.5pt}] (378.57,74) -- (435.5,74) -- (343.4,155) -- (286.47,155) -- cycle ;
	\draw  [dash pattern={on 4.5pt off 4.5pt}] (331.57,299) -- (388.5,299) -- (296.4,380) -- (239.47,380) -- cycle ;
	\draw  [dash pattern={on 4.5pt off 4.5pt}] (537.57,299) -- (594.5,299) -- (502.4,380) -- (445.47,380) -- cycle ;
	\draw  [dash pattern={on 4.5pt off 4.5pt}] (573.57,74) -- (630.5,74) -- (538.4,155) -- (481.47,155) -- cycle ;
	\draw  [dash pattern={on 5.63pt off 4.5pt}][line width=1.5]  (67.5,173) -- (603.5,173) -- (603.5,283) -- (67.5,283) -- cycle ;
	\draw  [dash pattern={on 4.5pt off 4.5pt}] (103.56,118.5) .. controls (112.98,97.79) and (138.27,81) .. (160.06,81) .. controls (181.84,81) and (191.86,97.79) .. (182.44,118.5) .. controls (173.02,139.21) and (147.73,156) .. (125.94,156) .. controls (104.16,156) and (94.14,139.21) .. (103.56,118.5) -- cycle ;
	\draw  [dash pattern={on 4.5pt off 4.5pt}] (105.56,384.5) .. controls (114.98,363.79) and (140.27,347) .. (162.06,347) .. controls (183.84,347) and (193.86,363.79) .. (184.44,384.5) .. controls (175.02,405.21) and (149.73,422) .. (127.94,422) .. controls (106.16,422) and (96.14,405.21) .. (105.56,384.5) -- cycle ;
	\draw  [dash pattern={on 4.5pt off 4.5pt}] (316.56,376.5) .. controls (325.98,355.79) and (351.27,339) .. (373.06,339) .. controls (394.84,339) and (404.86,355.79) .. (395.44,376.5) .. controls (386.02,397.21) and (360.73,414) .. (338.94,414) .. controls (317.16,414) and (307.14,397.21) .. (316.56,376.5) -- cycle ;
	\draw  [dash pattern={on 4.5pt off 4.5pt}] (524.56,377.5) .. controls (533.98,356.79) and (559.27,340) .. (581.06,340) .. controls (602.84,340) and (612.86,356.79) .. (603.44,377.5) .. controls (594.02,398.21) and (568.73,415) .. (546.94,415) .. controls (525.16,415) and (515.14,398.21) .. (524.56,377.5) -- cycle ;
	\draw  [dash pattern={on 4.5pt off 4.5pt}] (278.56,78.5) .. controls (287.98,57.79) and (313.27,41) .. (335.06,41) .. controls (356.84,41) and (366.86,57.79) .. (357.44,78.5) .. controls (348.02,99.21) and (322.73,116) .. (300.94,116) .. controls (279.16,116) and (269.14,99.21) .. (278.56,78.5) -- cycle ;
	\draw  [dash pattern={on 4.5pt off 4.5pt}] (481.56,76.5) .. controls (490.98,55.79) and (516.27,39) .. (538.06,39) .. controls (559.84,39) and (569.86,55.79) .. (560.44,76.5) .. controls (551.02,97.21) and (525.73,114) .. (503.94,114) .. controls (482.16,114) and (472.14,97.21) .. (481.56,76.5) -- cycle ;
	\draw    (30.8,229.33) -- (77.5,229) ;
	
	\draw  [fill={rgb, 255:red, 155; green, 155; blue, 155 }  ,fill opacity=1 ]  (468, 66) circle [x radius= 13.6, y radius= 13.6]   ;
	\draw (468,66) node    {$3$};
	\draw  [fill={rgb, 255:red, 155; green, 155; blue, 155 }  ,fill opacity=1 ]  (205, 123) circle [x radius= 13.6, y radius= 13.6]   ;
	\draw (205,123) node    {$3$};
	\draw  [fill={rgb, 255:red, 155; green, 155; blue, 155 }  ,fill opacity=1 ]  (197, 408) circle [x radius= 13.6, y radius= 13.6]   ;
	\draw (197,408) node    {$3$};
	\draw  [fill={rgb, 255:red, 155; green, 155; blue, 155 }  ,fill opacity=1 ]  (614, 406) circle [x radius= 13.6, y radius= 13.6]   ;
	\draw (614,406) node    {$3$};
	\draw  [fill={rgb, 255:red, 155; green, 155; blue, 155 }  ,fill opacity=1 ]  (394, 410) circle [x radius= 13.6, y radius= 13.6]   ;
	\draw (394,410) node    {$3$};
	\draw  [fill={rgb, 255:red, 155; green, 155; blue, 155 }  ,fill opacity=1 ]  (264, 66) circle [x radius= 13.6, y radius= 13.6]   ;
	\draw (264,66) node    {$3$};
	\draw  [fill={rgb, 255:red, 155; green, 155; blue, 155 }  ,fill opacity=1 ]  (411, 123) circle [x radius= 13.6, y radius= 13.6]   ;
	\draw (411,123) node    {$2$};
	\draw  [fill={rgb, 255:red, 155; green, 155; blue, 155 }  ,fill opacity=1 ]  (621, 123) circle [x radius= 13.6, y radius= 13.6]   ;
	\draw (621,123) node    {$2$};
	\draw  [fill={rgb, 255:red, 155; green, 155; blue, 155 }  ,fill opacity=1 ]  (471, 324) circle [x radius= 13.6, y radius= 13.6]   ;
	\draw (471,324) node    {$1$};
	\draw  [fill={rgb, 255:red, 155; green, 155; blue, 155 }  ,fill opacity=1 ]  (259, 326) circle [x radius= 13.6, y radius= 13.6]   ;
	\draw (259,326) node    {$1$};
	\draw  [fill={rgb, 255:red, 155; green, 155; blue, 155 }  ,fill opacity=1 ]  (44, 327) circle [x radius= 13.6, y radius= 13.6]   ;
	\draw (44,327) node    {$1$};
	\draw  [fill={rgb, 255:red, 155; green, 155; blue, 155 }  ,fill opacity=1 ]  (52, 66) circle [x radius= 13.6, y radius= 13.6]   ;
	\draw (52,66) node    {$2$};
	\draw  [fill={rgb, 255:red, 155; green, 155; blue, 155 }  ,fill opacity=1 ]  (211, 195) circle [x radius= 13.6, y radius= 13.6]   ;
	\draw (211,195) node    {$1$};
	\draw (541.94,227) node    {$0$};
	\draw (335.75,228.5) node    {$x$};
	\draw (127.56,229) node    {$0$};
	\draw  [fill={rgb, 255:red, 255; green, 255; blue, 255 }  ,fill opacity=1 ]  (541, 133) circle [x radius= 13.6, y radius= 13.6]   ;
	\draw (541,133) node    {$0$};
	\draw  [fill={rgb, 255:red, 255; green, 255; blue, 255 }  ,fill opacity=1 ]  (576, 98) circle [x radius= 13.6, y radius= 13.6]   ;
	\draw (576,98) node    {$0$};
	\draw  [fill={rgb, 255:red, 255; green, 255; blue, 255 }  ,fill opacity=1 ]  (541, 63) circle [x radius= 13.6, y radius= 13.6]   ;
	\draw (541,63) node    {$1$};
	\draw  [fill={rgb, 255:red, 255; green, 255; blue, 255 }  ,fill opacity=1 ]  (506, 98) circle [x radius= 13.6, y radius= 13.6]   ;
	\draw (506,98) node    {$1$};
	\draw  [fill={rgb, 255:red, 255; green, 255; blue, 255 }  ,fill opacity=1 ]  (576, 360) circle [x radius= 13.6, y radius= 13.6]   ;
	\draw (576,360) node    {$0$};
	\draw  [fill={rgb, 255:red, 255; green, 255; blue, 255 }  ,fill opacity=1 ]  (541, 325) circle [x radius= 13.6, y radius= 13.6]   ;
	\draw (541,325) node    {$0$};
	\draw  [fill={rgb, 255:red, 255; green, 255; blue, 255 }  ,fill opacity=1 ]  (506, 360) circle [x radius= 13.6, y radius= 13.6]   ;
	\draw (506,360) node    {$0$};
	\draw  [fill={rgb, 255:red, 255; green, 255; blue, 255 }  ,fill opacity=1 ]  (541, 395) circle [x radius= 13.6, y radius= 13.6]   ;
	\draw (541,395) node    {$0$};
	\draw  [fill={rgb, 255:red, 255; green, 255; blue, 255 }  ,fill opacity=1 ]  (336, 133) circle [x radius= 13.6, y radius= 13.6]   ;
	\draw (336,133) node    {$0$};
	\draw  [fill={rgb, 255:red, 255; green, 255; blue, 255 }  ,fill opacity=1 ]  (371, 98) circle [x radius= 13.6, y radius= 13.6]   ;
	\draw (371,98) node    {$0$};
	\draw  [fill={rgb, 255:red, 255; green, 255; blue, 255 }  ,fill opacity=1 ]  (336, 63) circle [x radius= 13.6, y radius= 13.6]   ;
	\draw (336,63) node    {$0$};
	\draw  [fill={rgb, 255:red, 255; green, 255; blue, 255 }  ,fill opacity=1 ]  (301, 98) circle [x radius= 13.6, y radius= 13.6]   ;
	\draw (301,98) node    {$0$};
	\draw  [fill={rgb, 255:red, 255; green, 255; blue, 255 }  ,fill opacity=1 ]  (371, 360) circle [x radius= 13.6, y radius= 13.6]   ;
	\draw (371,360) node    {$0$};
	\draw  [fill={rgb, 255:red, 255; green, 255; blue, 255 }  ,fill opacity=1 ]  (336, 325) circle [x radius= 13.6, y radius= 13.6]   ;
	\draw (336,325) node    {$1$};
	\draw  [fill={rgb, 255:red, 255; green, 255; blue, 255 }  ,fill opacity=1 ]  (301, 360) circle [x radius= 13.6, y radius= 13.6]   ;
	\draw (301,360) node    {$1$};
	\draw  [fill={rgb, 255:red, 255; green, 255; blue, 255 }  ,fill opacity=1 ]  (336, 395) circle [x radius= 13.6, y radius= 13.6]   ;
	\draw (336,395) node    {$0$};
	\draw  [fill={rgb, 255:red, 255; green, 255; blue, 255 }  ,fill opacity=1 ]  (123, 135) circle [x radius= 13.6, y radius= 13.6]   ;
	\draw (123,135) node    {$1$};
	\draw  [fill={rgb, 255:red, 255; green, 255; blue, 255 }  ,fill opacity=1 ]  (158, 100) circle [x radius= 13.6, y radius= 13.6]   ;
	\draw (158,100) node    {$1$};
	\draw  [fill={rgb, 255:red, 255; green, 255; blue, 255 }  ,fill opacity=1 ]  (123, 65) circle [x radius= 13.6, y radius= 13.6]   ;
	\draw (123,65) node    {$0$};
	\draw  [fill={rgb, 255:red, 255; green, 255; blue, 255 }  ,fill opacity=1 ]  (88, 100) circle [x radius= 13.6, y radius= 13.6]   ;
	\draw (88,100) node    {$0$};
	\draw  [fill={rgb, 255:red, 255; green, 255; blue, 255 }  ,fill opacity=1 ]  (158, 362) circle [x radius= 13.6, y radius= 13.6]   ;
	\draw (158,362) node    {$1$};
	\draw  [fill={rgb, 255:red, 255; green, 255; blue, 255 }  ,fill opacity=1 ]  (123, 327) circle [x radius= 13.6, y radius= 13.6]   ;
	\draw (123,327) node    {$0$};
	\draw  [fill={rgb, 255:red, 255; green, 255; blue, 255 }  ,fill opacity=1 ]  (88, 362) circle [x radius= 13.6, y radius= 13.6]   ;
	\draw (88,362) node    {$0$};
	\draw  [fill={rgb, 255:red, 255; green, 255; blue, 255 }  ,fill opacity=1 ]  (123, 397) circle [x radius= 13.6, y radius= 13.6]   ;
	\draw (123,397) node    {$1$};
	\draw  [fill={rgb, 255:red, 155; green, 155; blue, 155 }  ,fill opacity=1 ]  (307,6) -- (355,6) -- (355,30) -- (307,30) -- cycle  ;
	\draw (331,18) node    {$t\ =2$};
	\draw (21,228) node    {$0$};

	\end{tikzpicture}

\begin{tikzpicture}[x=0.5pt,y=0.5pt,yscale=-1,xscale=1]

\draw  [line width=1.5]  (123,331) -- (158,366) -- (123,401) -- (88,366) -- cycle ;

\draw  [line width=1.5]  (123,69) -- (158,104) -- (123,139) -- (88,104) -- cycle ;

\draw  [line width=1.5]  (336,329) -- (371,364) -- (336,399) -- (301,364) -- cycle ;

\draw  [line width=1.5]  (336,67) -- (371,102) -- (336,137) -- (301,102) -- cycle ;

\draw  [line width=1.5]  (541,329) -- (576,364) -- (541,399) -- (506,364) -- cycle ;

\draw  [line width=1.5]  (541,67) -- (576,102) -- (541,137) -- (506,102) -- cycle ;

\draw    (26.8,234.08) -- (74.5,234) ;
\draw    (633.8,232.83) -- (596.5,232.77) ;
\draw    (123,139) -- (122.5,336) ;
\draw    (336,137) -- (335.5,334) ;
\draw    (541,137) -- (540.5,334) ;
\draw    (345.75,233) -- (525.94,233) ;
\draw    (124.5,234) -- (383.75,233) ;
\draw  [fill={rgb, 255:red, 255; green, 255; blue, 255 }  ,fill opacity=1 ] (286.19,206) -- (391.31,206) -- (391.31,260) -- (286.19,260) -- cycle ;
\draw  [fill={rgb, 255:red, 255; green, 255; blue, 255 }  ,fill opacity=1 ] (491.39,206) -- (596.5,206) -- (596.5,260) -- (491.39,260) -- cycle ;
\draw  [fill={rgb, 255:red, 255; green, 255; blue, 255 }  ,fill opacity=1 ] (77,206) -- (182.11,206) -- (182.11,260) -- (77,260) -- cycle ;

\draw  [dash pattern={on 4.5pt off 4.5pt}] (120.57,45) -- (177.5,45) -- (85.4,126) -- (28.47,126) -- cycle ;
\draw  [dash pattern={on 5.63pt off 4.5pt}][line width=1.5]  (122.57,307) -- (179.5,307) -- (87.4,388) -- (30.47,388) -- cycle ;
\draw  [dash pattern={on 4.5pt off 4.5pt}] (378.57,78) -- (435.5,78) -- (343.4,159) -- (286.47,159) -- cycle ;
\draw  [dash pattern={on 5.63pt off 4.5pt}][line width=1.5]  (331.57,303) -- (388.5,303) -- (296.4,384) -- (239.47,384) -- cycle ;
\draw  [dash pattern={on 5.63pt off 4.5pt}][line width=1.5]  (537.57,303) -- (594.5,303) -- (502.4,384) -- (445.47,384) -- cycle ;
\draw  [dash pattern={on 4.5pt off 4.5pt}] (573.57,78) -- (630.5,78) -- (538.4,159) -- (481.47,159) -- cycle ;
\draw  [dash pattern={on 5.63pt off 4.5pt}][line width=1.5]  (67.5,177) -- (603.5,177) -- (603.5,287) -- (67.5,287) -- cycle ;
\draw  [dash pattern={on 4.5pt off 4.5pt}] (103.56,122.5) .. controls (112.98,101.79) and (138.27,85) .. (160.06,85) .. controls (181.84,85) and (191.86,101.79) .. (182.44,122.5) .. controls (173.02,143.21) and (147.73,160) .. (125.94,160) .. controls (104.16,160) and (94.14,143.21) .. (103.56,122.5) -- cycle ;
\draw  [dash pattern={on 4.5pt off 4.5pt}] (105.56,388.5) .. controls (114.98,367.79) and (140.27,351) .. (162.06,351) .. controls (183.84,351) and (193.86,367.79) .. (184.44,388.5) .. controls (175.02,409.21) and (149.73,426) .. (127.94,426) .. controls (106.16,426) and (96.14,409.21) .. (105.56,388.5) -- cycle ;
\draw  [dash pattern={on 4.5pt off 4.5pt}] (316.56,380.5) .. controls (325.98,359.79) and (351.27,343) .. (373.06,343) .. controls (394.84,343) and (404.86,359.79) .. (395.44,380.5) .. controls (386.02,401.21) and (360.73,418) .. (338.94,418) .. controls (317.16,418) and (307.14,401.21) .. (316.56,380.5) -- cycle ;
\draw  [dash pattern={on 4.5pt off 4.5pt}] (524.56,381.5) .. controls (533.98,360.79) and (559.27,344) .. (581.06,344) .. controls (602.84,344) and (612.86,360.79) .. (603.44,381.5) .. controls (594.02,402.21) and (568.73,419) .. (546.94,419) .. controls (525.16,419) and (515.14,402.21) .. (524.56,381.5) -- cycle ;
\draw  [dash pattern={on 4.5pt off 4.5pt}] (278.56,82.5) .. controls (287.98,61.79) and (313.27,45) .. (335.06,45) .. controls (356.84,45) and (366.86,61.79) .. (357.44,82.5) .. controls (348.02,103.21) and (322.73,120) .. (300.94,120) .. controls (279.16,120) and (269.14,103.21) .. (278.56,82.5) -- cycle ;
\draw  [dash pattern={on 4.5pt off 4.5pt}] (481.56,80.5) .. controls (490.98,59.79) and (516.27,43) .. (538.06,43) .. controls (559.84,43) and (569.86,59.79) .. (560.44,80.5) .. controls (551.02,101.21) and (525.73,118) .. (503.94,118) .. controls (482.16,118) and (472.14,101.21) .. (481.56,80.5) -- cycle ;

\draw  [fill={rgb, 255:red, 155; green, 155; blue, 155 }  ,fill opacity=1 ]  (468, 70) circle [x radius= 13.6, y radius= 13.6]   ;
\draw (468,70) node    {$3$};
\draw  [fill={rgb, 255:red, 155; green, 155; blue, 155 }  ,fill opacity=1 ]  (205, 127) circle [x radius= 13.6, y radius= 13.6]   ;
\draw (205,127) node    {$3$};
\draw (17,232) node    {$x'$};
\draw  [fill={rgb, 255:red, 155; green, 155; blue, 155 }  ,fill opacity=1 ]  (197, 412) circle [x radius= 13.6, y radius= 13.6]   ;
\draw (197,412) node    {$3$};
\draw  [fill={rgb, 255:red, 155; green, 155; blue, 155 }  ,fill opacity=1 ]  (614, 410) circle [x radius= 13.6, y radius= 13.6]   ;
\draw (614,410) node    {$3$};
\draw  [fill={rgb, 255:red, 155; green, 155; blue, 155 }  ,fill opacity=1 ]  (394, 414) circle [x radius= 13.6, y radius= 13.6]   ;
\draw (394,414) node    {$3$};
\draw  [fill={rgb, 255:red, 155; green, 155; blue, 155 }  ,fill opacity=1 ]  (264, 70) circle [x radius= 13.6, y radius= 13.6]   ;
\draw (264,70) node    {$3$};
\draw  [fill={rgb, 255:red, 155; green, 155; blue, 155 }  ,fill opacity=1 ]  (411, 127) circle [x radius= 13.6, y radius= 13.6]   ;
\draw (411,127) node    {$2$};
\draw  [fill={rgb, 255:red, 155; green, 155; blue, 155 }  ,fill opacity=1 ]  (621, 127) circle [x radius= 13.6, y radius= 13.6]   ;
\draw (621,127) node    {$2$};
\draw  [fill={rgb, 255:red, 155; green, 155; blue, 155 }  ,fill opacity=1 ]  (471, 328) circle [x radius= 13.6, y radius= 13.6]   ;
\draw (471,328) node    {$1$};
\draw  [fill={rgb, 255:red, 155; green, 155; blue, 155 }  ,fill opacity=1 ]  (259, 330) circle [x radius= 13.6, y radius= 13.6]   ;
\draw (259,330) node    {$1$};
\draw  [fill={rgb, 255:red, 155; green, 155; blue, 155 }  ,fill opacity=1 ]  (44, 331) circle [x radius= 13.6, y radius= 13.6]   ;
\draw (44,331) node    {$1$};
\draw  [fill={rgb, 255:red, 155; green, 155; blue, 155 }  ,fill opacity=1 ]  (52, 70) circle [x radius= 13.6, y radius= 13.6]   ;
\draw (52,70) node    {$2$};
\draw  [fill={rgb, 255:red, 155; green, 155; blue, 155 }  ,fill opacity=1 ]  (211, 199) circle [x radius= 13.6, y radius= 13.6]   ;
\draw (211,199) node    {$1$};
\draw (541.94,231) node    {$\overline{x}$};
\draw (335.75,232.5) node    {$0$};
\draw (127.56,233) node    {$0$};
\draw  [fill={rgb, 255:red, 255; green, 255; blue, 255 }  ,fill opacity=1 ]  (541, 137) circle [x radius= 13.6, y radius= 13.6]   ;
\draw (541,137) node    {$0$};
\draw  [fill={rgb, 255:red, 255; green, 255; blue, 255 }  ,fill opacity=1 ]  (576, 102) circle [x radius= 13.6, y radius= 13.6]   ;
\draw (576,102) node    {$0$};
\draw  [fill={rgb, 255:red, 255; green, 255; blue, 255 }  ,fill opacity=1 ]  (541, 67) circle [x radius= 13.6, y radius= 13.6]   ;
\draw (541,67) node    {$0$};
\draw  [fill={rgb, 255:red, 255; green, 255; blue, 255 }  ,fill opacity=1 ]  (506, 102) circle [x radius= 13.6, y radius= 13.6]   ;
\draw (506,102) node    {$0$};
\draw  [fill={rgb, 255:red, 255; green, 255; blue, 255 }  ,fill opacity=1 ]  (576, 364) circle [x radius= 13.6, y radius= 13.6]   ;
\draw (576,364) node    {$0$};
\draw  [fill={rgb, 255:red, 255; green, 255; blue, 255 }  ,fill opacity=1 ]  (541, 329) circle [x radius= 13.6, y radius= 13.6]   ;
\draw (541,329) node    {$1$};
\draw  [fill={rgb, 255:red, 255; green, 255; blue, 255 }  ,fill opacity=1 ]  (506, 364) circle [x radius= 13.6, y radius= 13.6]   ;
\draw (506,364) node    {$1$};
\draw  [fill={rgb, 255:red, 255; green, 255; blue, 255 }  ,fill opacity=1 ]  (541, 399) circle [x radius= 13.6, y radius= 13.6]   ;
\draw (541,399) node    {$0$};
\draw  [fill={rgb, 255:red, 255; green, 255; blue, 255 }  ,fill opacity=1 ]  (336, 137) circle [x radius= 13.6, y radius= 13.6]   ;
\draw (336,137) node    {$1$};
\draw  [fill={rgb, 255:red, 255; green, 255; blue, 255 }  ,fill opacity=1 ]  (371, 102) circle [x radius= 13.6, y radius= 13.6]   ;
\draw (371,102) node    {$1$};
\draw  [fill={rgb, 255:red, 255; green, 255; blue, 255 }  ,fill opacity=1 ]  (336, 67) circle [x radius= 13.6, y radius= 13.6]   ;
\draw (336,67) node    {$0$};
\draw  [fill={rgb, 255:red, 255; green, 255; blue, 255 }  ,fill opacity=1 ]  (301, 102) circle [x radius= 13.6, y radius= 13.6]   ;
\draw (301,102) node    {$0$};
\draw  [fill={rgb, 255:red, 255; green, 255; blue, 255 }  ,fill opacity=1 ]  (371, 364) circle [x radius= 13.6, y radius= 13.6]   ;
\draw (371,364) node    {$1$};
\draw  [fill={rgb, 255:red, 255; green, 255; blue, 255 }  ,fill opacity=1 ]  (336, 329) circle [x radius= 13.6, y radius= 13.6]   ;
\draw (336,329) node    {$0$};
\draw  [fill={rgb, 255:red, 255; green, 255; blue, 255 }  ,fill opacity=1 ]  (301, 364) circle [x radius= 13.6, y radius= 13.6]   ;
\draw (301,364) node    {$0$};
\draw  [fill={rgb, 255:red, 255; green, 255; blue, 255 }  ,fill opacity=1 ]  (336, 399) circle [x radius= 13.6, y radius= 13.6]   ;
\draw (336,399) node    {$1$};
\draw  [fill={rgb, 255:red, 255; green, 255; blue, 255 }  ,fill opacity=1 ]  (123, 139) circle [x radius= 13.6, y radius= 13.6]   ;
\draw (123,139) node    {$0$};
\draw  [fill={rgb, 255:red, 255; green, 255; blue, 255 }  ,fill opacity=1 ]  (158, 104) circle [x radius= 13.6, y radius= 13.6]   ;
\draw (158,104) node    {$0$};
\draw  [fill={rgb, 255:red, 255; green, 255; blue, 255 }  ,fill opacity=1 ]  (123, 69) circle [x radius= 13.6, y radius= 13.6]   ;
\draw (123,69) node    {$0$};
\draw  [fill={rgb, 255:red, 255; green, 255; blue, 255 }  ,fill opacity=1 ]  (88, 104) circle [x radius= 13.6, y radius= 13.6]   ;
\draw (88,104) node    {$0$};
\draw  [fill={rgb, 255:red, 255; green, 255; blue, 255 }  ,fill opacity=1 ]  (158, 366) circle [x radius= 13.6, y radius= 13.6]   ;
\draw (158,366) node    {$0$};
\draw  [fill={rgb, 255:red, 255; green, 255; blue, 255 }  ,fill opacity=1 ]  (123, 331) circle [x radius= 13.6, y radius= 13.6]   ;
\draw (123,331) node    {$0$};
\draw  [fill={rgb, 255:red, 255; green, 255; blue, 255 }  ,fill opacity=1 ]  (88, 366) circle [x radius= 13.6, y radius= 13.6]   ;
\draw (88,366) node    {$0$};
\draw  [fill={rgb, 255:red, 255; green, 255; blue, 255 }  ,fill opacity=1 ]  (123, 401) circle [x radius= 13.6, y radius= 13.6]   ;
\draw (123,401) node    {$0$};
\draw  [fill={rgb, 255:red, 155; green, 155; blue, 155 }  ,fill opacity=1 ]  (296-4,6) -- (344+4,6) -- (344+4,30) -- (296-4,30) -- cycle  ;
\draw (320,18) node    {$t\ =3$};

\end{tikzpicture}
\caption{Two last steps of the dynamics described by the NOT part of NOR gadget implemented by a symmetric signed conjunctive network. Dotted circles and triangles represent blocks. Numbers in gray represent the updating order of each block. Each time step $t$ is taken after three time steps (one for each block). Total simulation time is $T = 9$.}	
\label{fig:NOT2andnot}
\end{figure}
\begin{figure}[t!]
\centering

\begin{tikzpicture}[x=0.75pt,y=0.75pt,yscale=-1,xscale=1]

\draw    (272,48) -- (337.8,48) ;
\draw    (368,48) -- (433.8,48) ;
\draw    (206.2,48) -- (272,48) ;
\draw  [fill={rgb, 255:red, 255; green, 255; blue, 255 }  ,fill opacity=1 ] (237,28) -- (307,28) -- (307,68) -- (237,68) -- cycle ;
\draw    (367,48) -- (432.8,48) ;
\draw  [fill={rgb, 255:red, 255; green, 255; blue, 255 }  ,fill opacity=1 ] (333,28) -- (403,28) -- (403,68) -- (333,68) -- cycle ;

\draw (256,40) node [anchor=north west][inner sep=0.75pt]   [align=left] {NOT};
\draw (352,40) node [anchor=north west][inner sep=0.75pt]   [align=left] {NOT};

\end{tikzpicture}

	\caption{Wire gadget implemented on a signed symmetric conjuntive network. $2$ copies of NOT gadget are combined in order to form a wire. Simulation time is $T=6.$}
	\label{fig:wireandnot}
\end{figure}
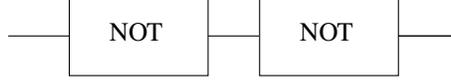

As we will be using the same structures to show the main result, we start by showing a less powerful result. We show that $\blocfamily{\mathcal{F}}{b}_{\text{sign-sym-conj}}$ is able to simulate $\G_w$-networks. This is only a way to motivate the main result by showing how key structures work in a particular simple context. As we will see when we present the main result, we do not use exactly this result in the actual proof of main theorem but we apply the same type of ideas as related structures are part of the gadgets we construct. We remark that all involved gadgets have only negative labels, i.e. each edge is labeled by the $\text{Switch}$ function which takes a bit $x$ and produces $1-x.$

\begin{lemma}
	The family $\blocfamily{\mathcal{F}}{3}_{\text{sign-sym-conj}}$ of all signed symmetric conjunctive networks under block sequential update schemes of at most $3$ blocks  has coherent $\G_w$-gadgets.
\end{lemma}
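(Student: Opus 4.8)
The plan is to instantiate Definition~\ref{def:coherent-gadgets} for the single gate of $\Gwire$, namely the wire (the identity on $\{0,1\}$, with one input, one output, and connected dependency graph, hence irreducible), so that a single coherent gadget suffices. The only primitive available locally in a signed conjunctive network is negation: a degree-one node joined by a $\text{Switch}$-labeled edge computes $x\mapsto 1-x$. The natural route is therefore to realize the wire as a double negation. First I would construct a \emph{NOT gadget}: a signed symmetric conjunctive network on a small graph carrying only $\text{Switch}$-labeled edges, together with a $3$-block schedule, whose value on a designated input node is negated and delivered on a designated output node after a fixed delay. The purpose of the diamond (all-negative $4$-cycle) building blocks depicted in Figures~\ref{fig:NOT1andnot} and~\ref{fig:NOT2andnot} is twofold: the local map at every node remains a conjunction of negations (so the gadget stays in the family), and the $3$-block scheduling breaks the symmetry of the undirected interaction graph, so that the signal effectively propagates in a single direction and can be buffered without being overwritten before it is read.

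Next I would obtain the wire gadget $F_{\gateId}$ by gluing two copies of the NOT gadget in series (Figure~\ref{fig:wireandnot}): composing the two negations yields the identity, so the value on the output after $T$ steps equals the value on the input, with $T=6$. To certify that this series composition lives inside $\blocfamily{\mathcal{F}}{3}_{\text{sign-sym-conj}}$, I would fix the glueing interface $C=C_i\cup C_o$ and verify the three conditions of Lemma~\ref{lem:csan-gadgets-glueing-closure}: that the labeled graphs induced on every image of $C$ coincide, and that the two half-dowel neighborhood inclusions $N(\phi^i(C_o))\subseteq\phi^i(C)$ and $N(\phi^o(C_i))\subseteq\phi^o(C)$ hold. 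By construction these are purely local checks on the interface and diamond nodes, and they guarantee that the whole closure by gadget glueing remains in the family.

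It then remains to produce the remaining data of Definition~\ref{def:coherent-gadgets}: two injective state configurations $s_0,s_1\in Q^C$ (recording both the Boolean value and the block-counter component of the $3$-block extension), a context configuration $c$ on the non-interface nodes, the time constant $T=6$, standard traces $\tau_{q,q'}$ on the interface that interpolate from $s_q$ to $s_{q'}$ compatibly with the periodic block counter, and, for each $q\in\{0,1\}$, a $P$-pseudo-orbit of $F_{\gateId}$ realizing the identity, namely one whose input trace is $\tau_{q,q}$, whose output trace is $\tau_{q,q}$, and which returns all internal nodes to $c$ at time $T$. The compatibility condition~\eqref{eq:sametrace} required to chain the pseudo-orbits via Lemma~\ref{lem:pseudo-orbit-glueing} is then automatic, since distinct copies of the interface carry the same standard trace.

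The main obstacle I expect is the synchronization bookkeeping. One must tune the $3$-block assignment and the diamond buffers so that the negated signal arrives at the output at exactly the prescribed step, while every internal node simultaneously returns to its context value at time $T$ (so that $x^0_{\hat V}=x^T_{\hat V}=c$), and so that the block-counter component of the state agrees at every interface copy. These are elementary local verifications, but they must be checked uniformly for all four interface transitions $q\to q'$, and getting the timing of the conjunction-of-negations right — so that the symmetric back-influence of the undirected edges never corrupts a value before it is consumed — is the delicate part. Everything else reduces to the routine local checks already packaged by Lemmas~\ref{lem:csan-gadgets-glueing-closure} and~\ref{lem:pseudo-orbit-glueing}.
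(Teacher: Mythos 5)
Your proposal follows essentially the same route as the paper: a NOT gadget built from all-negative edges with $4$-cycle clocks under a $3$-block schedule, two copies glued in series to realize the identity, and the closure and coherence conditions checked via Lemmas~\ref{lem:csan-gadgets-glueing-closure} and~\ref{lem:pseudo-orbit-glueing}. The only quibble is the time constant: since each signal hop through one of the six central nodes consumes a full round of the three blocks, the constant $T$ of Definition~\ref{def:coherent-gadgets} is $6\times 3=18$ elementary steps rather than $6$.
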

\begin{proof}
	We define a gadget simulating $\text{Id}$ by considering two copies of the NOT gadget presented in Figures \ref{fig:NOT1andnot} and \ref{fig:NOT2andnot}. Observe that this gadget has $3$ central nodes (marked inside a thick dotted rectangle in Figures \ref{fig:NOT1andnot} and \ref{fig:NOT2andnot}) together with $2$ copies of a $4$ nodes cycle graph. Generally speaking, the dynamics on this cycle graphs works as a clock which allows information to flow through the central part in only one direction (from left to right). In addition, they allow the gadget to erase information once it has been transmitted. This latter property allows the gadget to clean itself in order to receive new information.  The wire gadget is composed by two copies of the NOT gadget as is presented in Figure \ref{fig:wireandnot}. Since this gadget is composed by two copies of the NOT gadget, this gadget is defined by a path graph with $6$ central nodes together with $2 \times 2 \times 3=12$ cycle graphs (two copies for each node). We enumerate nodes in the central part from left to write by the following ordering: $\{0,1,2,3,4,5\}$. Additionally, since the functioning of each copy of the NOT gadget is based in an ordered partition of $3$ blocks, we take the union of corresponding blocks in each partition in order to define 3 larger blocks for the wire gadget.  For one of this larger blocks we use the notation $\{0,1,2\}.$ Thus, observe that each wire takes $T=6\times 3 = 18$ time steps in order transport the information. This is because for each round of $3$ time steps in which we update each block, we make the signal pass through exactly one node. We define the glueing interface as $C_i = \{i\}$ and $C_o = \{o\}$. We map input and output in the following way: $\phi^i(i)={1},$ $\phi^i(o)={0}$, $\phi^o(i)= \{5\}$ $\phi^o(o) = \{4\}$. Note that in each case the neighborhood of output and input part is completely contained in $C_i$ and $C_o$ respectively and also the image of $C$ by functions $\phi$ is always the same (two nodes path graph). Thus, the glueing interface satisfies conditions of Lemma \ref{lem:csan-gadgets-glueing-closure}. Now we enumerate the remaining elements required by Definition \ref{def:coherent-gadgets}:
	\begin{itemize}
	\item State configurations $s_q(i) = (q,0)$ and $s_q(o) = (q,1)$ for any $q \in \{0,1\}.$
	\item Context configurations are given in Figure \ref{fig:NOT1andnot} as well as the iteration order which defines the blocks.
	\item Standard trace and pseudo orbit for nodes $\{0,1,4,5\}$ are defined in Figure \ref{fig:NOT1andnot} and Figure \ref{fig:NOT2andnot}.
	\end{itemize}
\end{proof}

We are now in conditions to introduce the main result:
\begin{lemma}
The family $\blocfamily{\mathcal{F}}{3}_{\text{sign-sym-conj}}$ of all signed symmetric conjunctive networks under block sequential update schemes of at most $3$ blocks has coherent $\Gmontwo$-gadgets.
\end{lemma}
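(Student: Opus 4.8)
The plan is to avoid building the AND and OR gadgets by hand and instead reduce to the closure machinery of Proposition~\ref{prop:closuregadgets}, building coherent gadgets only for a small functionally complete set of gates. The key observation is that a node of a signed symmetric conjunctive network whose two incident edges both carry the $\mathrm{Switch}$ label computes $\min(\neg x,\neg y)=\neg(x\vee y)$, i.e. the two-input NOR map. Thus the central node of the ``NOT part of NOR gadget'' depicted in Figures~\ref{fig:NOT1andnot} and~\ref{fig:NOT2andnot} is really a NOR node, and the surrounding four-node cycles play the role of clocks that force information to flow in one direction and reset the gadget after transmission, exactly as in the wire construction of the previous lemma.

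First I would promote the wire construction to a genuine NOR gadget: keep the same three-block block-sequential schedule and the same clock cycles, but let the single conjunctive central node read \emph{two} input interface copies through negative edges instead of one. Together with the wire ($\gateId$) gadget inherited from the previous lemma and a $\gateCOPY$ gadget — obtained by letting a transmitted signal feed two downstream interface copies, which is harmless for a conjunctive node of fanout two — this yields coherent $\{\mathrm{NOR},\gateId,\gateCOPY\}$-gadgets for $\blocfamily{\mathcal{F}}{3}_{\text{sign-sym-conj}}$. For each of these three gadgets I must verify the three hypotheses of Lemma~\ref{lem:csan-gadgets-glueing-closure} (identical induced labeled graphs on all interface images, and the neighborhoods of the $C_o$/$C_i$ halves staying inside the interface) so that the closure by gadget glueing remains inside the family, and I must exhibit the state configurations, context configurations, time constant, standard traces and $P_g$-pseudo-orbits demanded by Definition~\ref{def:coherent-gadgets}; these can be read directly off the dynamics displayed in Figures~\ref{fig:NOT1andnot} and~\ref{fig:NOT2andnot}.

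Second, I would invoke functional completeness of NOR. Since $\gateOR(x,y)=\mathrm{NOR}(\mathrm{NOR}(x,y),\mathrm{NOR}(x,y))$ and $\gateAND(x,y)=\mathrm{NOR}(\mathrm{NOR}(x,x),\mathrm{NOR}(y,y))$, both $\gateAND$ and $\gateOR$ with fanin and fanout $2$ — the duplicated outputs supplied by $\gateCOPY$ — belong to the closure $\overline{\{\mathrm{NOR},\gateId,\gateCOPY\}}$. Because this set contains the identity map, the first item of Proposition~\ref{prop:closuregadgets} applies and immediately upgrades coherent $\{\mathrm{NOR},\gateId,\gateCOPY\}$-gadgets to coherent $\Gmontwo$-gadgets, which is exactly the claim.

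The main obstacle I expect is the engineering of the clocked NOR gadget rather than the logical composition: I must check that the three-block schedule still lets a signal traverse exactly one central node per clock round, that the gadget cleans itself so that it is reusable, and that the standard trace over the glueing interface is identical on every input and every output copy, so that the trace-matching condition of Lemma~\ref{lem:pseudo-orbit-glueing} holds when the gadgets are glued. Finally, as in the proof of Proposition~\ref{prop:closuregadgets}, padding the NOR-compositions with $\gateId$/delay gadgets will be needed to equalize propagation delays so that $\gateAND$ and $\gateOR$ share a single time constant.
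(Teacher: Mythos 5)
Your proposal is correct in substance, and it reaches the lemma by a genuinely different organization of the argument than the paper, even though the underlying hardware is identical. The paper assembles the AND and OR gadgets of $\Gmontwo$ \emph{monolithically}: it wires input NOT-wires, a central two-negative-input conjunctive node (which is exactly your NOR), and output NOT-wires into two large gadgets, merges all the block-sequential blocks into three global blocks, and then verifies Definition~\ref{def:coherent-gadgets} by writing out the full $36$-step pseudo-orbits in explicit tables (including the clock dynamics). In effect the paper is hand-executing the identities $\gateAND=\mathrm{NOR}(\neg\,\cdot,\neg\,\cdot)$ and $\gateOR=\neg\,\mathrm{NOR}(\cdot,\cdot)$ inside the gadget construction. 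You instead propose to certify coherent gadgets only for the small set $\{\mathrm{NOR},\gateId,\gateCOPY\}$ and then invoke the first item of Proposition~\ref{prop:closuregadgets} (the set contains the identity, and its closure contains $\Gmontwo$ by NOR-completeness plus $\gateCOPY$ for the duplicated outputs). This buys a shorter verification — three small pseudo-orbit checks instead of two $36$-step tables — and delegates synchronization and delay-padding to the already-proved closure machinery; what it costs is that you still owe the reader the explicit interface, state/context configurations, standard traces and $P_g$-pseudo-orbits for the NOR and COPY gadgets (the Id gadget you inherit from the wire lemma), and in particular you must check that the COPY gadget's two output dowels both carry the standard trace despite the symmetric feedback, which in the paper is only verified implicitly inside the big tables. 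Neither of these is a gap in the idea — the paper's own figures and tables show that both structures exist in the family — but your write-up should not wave them off as "harmless"; they are precisely the content that Definition~\ref{def:coherent-gadgets} demands.
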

\begin{proof}
	We will show that the gadgets in the Figures \ref{fig:ORstruct} and \ref{fig:ANDstruct} are coherent $\Gmontwo$-gadgets. Note that both these gadgets are made of several wires made of NOT gadgets (we call the two in the left hand side of the figure input wires and the other two at the right hand side output wires) and a computation gadget. Observe that, each of these structures (wires and computation gadget) needs exactly $3$ blocks. In addition, note that in Table \ref{tab:Clockdym-signed} the dynamics of the $4$-cycles that are attached to each node is shown. We recall that the function of these clocks is to allow information to flow in one direction only (all the interactions are symmetric so this is not straightforward) and to erase information once it has been copied or processed by the nodes in the gadget. We use the following notation in order to represent nodes in these structures: $c_{s,i,j,p}$ where $s$ is the number of the NOT gadget (a $3$-node-path together with two $4$-cycles for each node, see Figure \ref{fig:clockorder}) to which the cycle is attached, so $s \in \{1,2,3,4,5,6,7,8,9\}$ where the order is taken from left to right (for example in Figure \ref{fig:ORstruct} the copy associated to $(v_{1},v_{2},v_{3})$ comes first then, $(v'_{1},v'_{2},v'_{3})$, then $(v_{4},v_{5},v_{6})$ and so on). Additionally,  $i$ is in the position of the cycle in the NOT block, so $i \in \{1,2,3\}$, $j$ is the position of the node relative to the $4$-cycle graph considered in counter clockwise order (see Figure \ref{fig:clockorder}), so $j \in \{1,2,3,4\}$ and $p$ is the position of the cycle in the central structure of the gadget. Observe that, since there are two copies for each node, we denote them by \textit{upper} and \textit{lower} so $p \in \{u,l\}$. Since input wires and output wires are needed to be updated at the same time and are independent (we have two copies for each one) we combine their blocks in the obvious way (we take the union of pairs of blocks that have the same update order). More precisely, we combine $3\times 9$ blocks of each particular part (there are $8$ copies of NOT and one computation gadget having $3$ blocks each one) in order to define again only $3$ blocks. Precise definition, using notation shown in Figures \ref{fig:ORstruct} and \ref{fig:ANDstruct}, is the following: 
	\begin{itemize}
	 \item $B_{0} = \bigcup \limits_{i=1}^{12} \{v_{i}\} \cup \{v'_{i}\} \cup \bigcup \limits_{s,i}\{c_{s,i,1,l},c_{s,i,2,l}\};$
	 \item $B_{1} = \bigcup \limits_{s}\{c_{s,1,3,u},c_{s,1,4,u}, c_{s,2,1,u},c_{s,2,2,u},c_{s,3,1,u},c_{s,3,2,u}\};$ and 
	 \item $B_{2} =  \bigcup \limits_{s,i} \{c_{s,i,3,l},c_{s,i,4,l}\}.$
	\end{itemize}
	
Now we are going to use information in Tables \ref{tab:ANDdym-signed}, \ref{tab:ORdym-signed} and \ref{tab:Clockdym-signed} in order to show that these gadgets satisfy the conditions of Definition \ref{def:coherent-gadgets}.  Observe that, on the one hand, for the OR gadget (see Figure \ref{fig:ORstruct}),  input wires compute the result in $3 \times 3$ (it needs to carry the signal through the three nodes in the wire and each of this intermediate steps takes three steps, one for each block) time steps and computation gadget takes $3 \times 3$ as well. On the other hand, for the AND gadget (see Figure \ref{fig:ANDstruct}) input wires compute desired in $3 \times 6$ time steps. We define the associated network as $F_{\text{AND}}: (\{0,1\} \times \{0,1,2\})^{15 \times 2 \times 4} \to (\{0,1\} \times \{0,1,2\})^{15 \times 2 \times 4}$ and $F_{\text{OR}}: (\{0,1\} \times \{0,1,2\})^{15 \times 2 \times 4} \to (\{0,1\} \times \{0,1,2\})^{15 \times 2 \times 4}.$ In fact we have that:
\begin{enumerate}
	\item There is a unique glueing interface given by $C = C_{i} \cup C_{o}$ where:
	\begin{itemize}
	\item $C_{i} = \{i\} \cup \{a(i,1),a(i,2),a(i,3),a(i,4)\} \cup  \{a'(i,1),a'(i,2),a'(i,3),a'(i,4)\};$ and
	\item \sloppy $C_{o} = \{o',o\} \cup \{a(o',1),a(o',2),a(o',3),a(o',4)\} \cup \{a(o,1),a(o,2),a(o,3),a(o,4)\} \cup \{a'(o',1),a'(o',2),a'(o',3),a'(o',4)\} \cup \{a'(o,1),a'(o,2),a'(o,3),a'(o,4)\} $
	\end{itemize}	
	We define labelling functions $\phi^{i}_{\text{AND},k}$, $\phi^{o}_{\text{AND},k}$, $\phi^{i}_{\text{OR},k}$, $\phi^{o}_{\text{OR},k}$ (for the sake of simplicity we show the definition for the AND gadget since the one for the OR gadget is completely analogous)  for $k = 1,2$ as:
	\begin{itemize}
	\item  $\phi^{i}_{\text{AND},1}(i) = v_{1}$, $\phi^{i}_{\text{AND},1}(o') = v_{2}$ and $\phi^{i}_{\text{AND},1}(o) = v_{3};$
	\item  \sloppy $\phi^{i}_{\text{AND},1}(a(i,r)) = c_{1,1,r,u}$, $\phi^{i}_{\text{AND},1}(a'(i,r)) = c_{1,1,r,l}$ for $r=1,2,3,4,$ where $1$ corresponds to the NOT gadget which starts with $v_{1}$ in Figure \ref{fig:ANDstruct}.
	\item  \sloppy $\phi^{i}_{\text{AND},1}(a(o',r)) = c_{1,2,r,u}$, $\phi^{i}_{\text{AND},1}(a'(o',r)) = c_{1,2,r,l}$ for $r=1,2,3,4,$ where $1$ corresponds to the NOT gadget which starts with $v_{1}$ in Figure \ref{fig:ANDstruct};
	\item  \sloppy $\phi^{i}_{\text{AND},1}(a(o,r)) = c_{1,3,r,u}$, $\phi^{i}_{\text{AND},1}(a'(o,r)) = c_{1,3,r,l}$ for $r=1,2,3,4,$ where $1$ corresponds to the NOT gadget which starts with $v'_{1}$ in Figure \ref{fig:ANDstruct};
	\item $\phi^{i}_{\text{AND},2}(i) = v'_{1}$, $\phi^{i}_{\text{AND},2}(o') = v'_{2}$ and $\phi^{i}_{\text{AND},2}(o) = v'_{3};$ 
	\item  \sloppy $\phi^{i}_{\text{AND},2}(a(i,r)) = c_{1',1,r,u}$, $\phi^{i}_{\text{AND},1}(a'(i,r)) = c_{1',1,r,l}$ for $r=1,2,3,4,$ where $2$ correspond to the NOT gadget which starts with $v'_{1}$ in Figure \ref{fig:ANDstruct};
	\item  \sloppy $\phi^{i}_{\text{AND},2}(a(o',r)) = c_{1',2,r,u}$, $\phi^{i}_{\text{AND},1}(a'(o',r)) = c_{1',2,r,l}$ for $r=1,2,3,4,$ where $2$ corresponds to the NOT gadget which starts with $v'_{1}$ in Figure \ref{fig:ANDstruct};
	\item  \sloppy $\phi^{i}_{\text{AND},2}(a(o,r)) = c_{1',3,r,u}$, $\phi^{i}_{\text{AND},1}(a'(o,r)) = c_{1',3,r,l}$ for $r=1,2,3,4,$ where $2$ corresponds to the NOT gadget which starts with $v'_{1}$ in Figure \ref{fig:ANDstruct}.
	\item $\phi^{o}_{\text{AND},1}(i) = v_{10}$, $\phi^{o}_{\text{AND},1}(o') = v_{11}$ and $\phi^{o}_{\text{AND},1}(o) = v_{12};$
\item  \sloppy $\phi^{o}_{\text{AND},1}(a(i,r)) = c_{5,1,r,u}$, $\phi^{o}_{\text{AND},1}(a'(i,r)) = c_{5,1,r,l}$ for $r=1,2,3,4,$ where $8$ corresponds to the NOT gadget which starts with $v_{10}$ in Figure \ref{fig:ANDstruct}.
	\item  \sloppy $\phi^{o}_{\text{AND},1}(a(o',r)) = c_{5,2,r,u}$, $\phi^{o}_{\text{AND},1}(a'(o',r)) = c_{5,2,r,l}$ for $r=1,2,3,4,$ where $8$ corresponds to the NOT gadget which starts with $v_{10}$ in Figure \ref{fig:ANDstruct};
	\item  \sloppy $\phi^{o}_{\text{AND},1}(a(o,r)) = c_{5,3,r,u}$, $\phi^{o}_{\text{AND},1}(a'(o,r)) = c_{5,3,r,l}$ for $r=1,2,3,4,$ where $8$ corresponds to the NOT gadget which starts with $v_{10}$ in Figure \ref{fig:ANDstruct};	
	\item $\phi^{o}_{\text{AND},2}(i) = v'_{10}$, $\phi^{o}_{\text{AND},2}(o') = v'_{11}$ and $\phi^{o}_{\text{AND},2}(o) = v'_{12};$
	\item  \sloppy $\phi^{i}_{\text{AND},2}(a(i,r)) = c_{5',1,r,u}$, $\phi^{i}_{\text{AND},2}(a'(i,r)) = c_{5',1,r,l}$ for $r=1,2,3,4,$ where $9$ correspond to the NOT gadget which starts with $v'_{10}$ in Figure \ref{fig:ANDstruct};
		\item  \sloppy $\phi^{o}_{\text{AND},2}(a(o',r)) = c_{5',2,r,u}$, $\phi^{o}_{\text{AND},2}(a'(o',r)) = c_{5',2,r,l}$ for $r=1,2,3,4,$ where $9$ corresponds to the NOT gadget which starts with $v_{10}$ in Figure \ref{fig:ANDstruct};
	\item  \sloppy $\phi^{o}_{\text{AND},2}(a(o,r)) = c_{5',3,r,u}$, $\phi^{o}_{\text{AND},2}(a'(o,r)) = c_{5',3,r,l}$ for $r=1,2,3,4,$ where $9$ corresponds to the NOT gadget which starts with $v'_{10}$ in Figure \ref{fig:ANDstruct}.		
	\end{itemize}
	\item State configurations are defined for each $q \in \{0,1\}$ as $s_{q}(i) = (q,1)$ and $s_{q}(o') = s_{q}(o) = (0,1)$ and the state configuration of the nodes in the clocks i.e. the ones labeled by $a$ are constant and shown in Table \ref{tab:Clockdym-signed}. The block number for each of these nodes can is the same as the original NOT gadget \ref{fig:NOT1andnot}.
	\item Context configurations are described in Tables \ref{tab:ANDdym-signed}, \ref{tab:ORdym-signed}  and \ref{tab:Clockdym-signed} as the ones related to the cycles of length $4$ connected to central path of the gadgets and nodes in the path which are not part of the glueing interface.
	\item Standard trace is defined in Tables \ref{tab:ANDdym-signed} and \ref{tab:ORdym-signed}  (which contain the information related to the dynamics of nodes $v_{1},v'_{1},v_{2},v'_{2},v_{3},v'_{3},v_{10},v'_{10},v_{11},v'_{11},v_{12},v'_{12}$) and in Table \ref{tab:Clockdym-signed} (which contains the dynamics of the nodes in the $4$-cycles).
	\item Simulation constant is $T= 3 \times 12$ as it is shown in Tables \ref{tab:ANDdym-signed},\ref{tab:ORdym-signed} and 
	\item Pseudo-orbit is given by the dynamics shown in in Tables \ref{tab:ANDdym-signed}, \ref{tab:ORdym-signed}, and \ref{tab:Clockdym-signed} where $x,y,x',y',z$ are variables.
\end{enumerate}

\end{proof}

\begin{figure}
\centering

}
\caption{Dynamics for context in AND/OR gadgets implemented on symmetric signed conjunctive networks. Notation is given by the following guidelines: $s$ represent the associated group of three nodes, second two coordinates indicate its position relative to the original gadget (there are $3$ clocks) and its position in the $4$-cycle graph (considering counter clock-wise order), and $u,l$ stands for upper or lower according to its position in the gadget.}
\label{tab:Clockdym-signed}
\end{table}

\begin{corollary}
	The family $\blocfamily{\mathcal{F}}{2}_{\text{sign-sym-conj}}$ of all signed symmetric conjunctive networks under block sequential update schemes with at most $3$ blocks is strongly universal. In particular, it is complex both dynamically and computationally complex.
	\label{cor:ANDNOTuniv}
\end{corollary}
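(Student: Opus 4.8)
The plan is to derive this corollary purely by chaining the results established immediately before it, with no new construction required. The preceding lemma has just shown that the family $\blocfamily{\mathcal{F}}{3}_{\text{sign-sym-conj}}$ of signed symmetric conjunctive networks under block sequential schedules (with three blocks) possesses coherent $\Gmontwo$-gadgets, via the explicit NOR-based AND/OR gadget constructions detailed in Figures~\ref{fig:ORstruct} and~\ref{fig:ANDstruct} and Tables~\ref{tab:ANDdym-signed}--\ref{tab:Clockdym-signed}. Since this is a CSAN family, the first step is to invoke Corollary~\ref{cor:univfrommon}, which asserts that any CSAN family with coherent $\Gmontwo$-gadgets is strongly universal. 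This yields strong universality directly. (I note in passing the apparent typo in the subscript of the statement: the relevant family is the three-block one covered by the lemma, and the result is stated for ``at most $3$ blocks''.)

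For the ``in particular'' clause, I would then feed strong universality into the two complexity outputs proved earlier. Applying Theorem~\ref{them:univ-rich-dynamics} gives that the family embeds the orbit graph of any bounded-degree automata network with linear distortion, hence contains networks with exponential periods, exponential transients, and an exponential number of disjoint short-period orbits --- i.e. it is dynamically complex. Applying Corollary~\ref{cor:universality} gives that any strongly universal family is computationally complex, so that $\PREDU$ is $\mathbf{P}$-hard, $\PREDB$ is $\mathbf{PSPACE}$-hard, and $\PREDC$ is $\mathbf{PSPACE}$-hard for this family.

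The one subtlety that deserves an explicit sentence, rather than new work, is update-mode coherence: because an arbitrary $\Gmontwo$-network admits a transition from any state to any other state at every node, the argument of the subsection on simulations and update-mode extensions guarantees that the simulation of $\Gamma(\Gmontwo)$ by $\blocfamily{\mathcal{F}}{3}_{\text{sign-sym-conj}}$ is forced to be update-mode coherent. This is what legitimizes attributing the recovered complexity to the family under a single fixed block-sequential schedule, rather than to a family that silently changes its update mode across different simulated configurations. Beyond this bookkeeping there is no obstacle whatsoever at the level of the corollary itself: all the difficulty resides in the gadget construction carried out in the preceding lemma, and the corollary is simply the statement that that construction, combined with Corollaries~\ref{cor:universality} and~\ref{cor:univfrommon} and Theorem~\ref{them:univ-rich-dynamics}, delivers both flavors of complexity.
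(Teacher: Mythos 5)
Your proposal is correct and follows essentially the same route as the paper's own proof: the preceding lemma supplies the coherent $\Gmontwo$-gadgets, Corollary~\ref{cor:univfrommon} then gives strong universality, and Theorem~\ref{them:univ-rich-dynamics} together with Corollary~\ref{cor:universality} yield the dynamical and computational complexity claims. Your additional remarks on the subscript typo and on update-mode coherence are sound observations but not needed beyond what the paper itself records.
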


\begin{proof}
	Strong universality holds from the fact that family is capable of simulating $\Gmontwo$ in linear space and constant time (See Corollary \ref{cor:univfrommon}). Family is dynamically and computationally complex as a direct consequence of strong universality (see Theorem \ref{them:univ-rich-dynamics} and Corollary \ref{cor:universality}).
\end{proof}

We are now in conditions to resume the proof of Theorem \ref{teo:localuniv}, we give again the statement of the theorem and we provide the complete proof.
\begin{center}
\begin{adjustbox}{minipage=0.85\textwidth,precode=\dbox}
There exist $c>0$ such that the family  $\clockfamily{\mathcal{F}}{c}{\text{locally-pos}}$  of all locally positive symmetric conjunctive networks under local clocks update scheme with clock parameter $c$ has coherent $\Gmontwo$-gadgets.
\end{adjustbox}
\end{center}
\begin{proof}
	We start by observing that the gadgets in Figures \ref{fig:ORstruct} and \ref{fig:ANDstruct} can be implemented in $\clockfamily{\mathcal{F}}{c}_{\text{locally-pos}}$   for some $c$. This can be easily done by adding a positive node to each node in the gadget. The main idea here is that each of these artificial positive nodes will play no role in calculations and will stay in state $1$ most of the time. In fact, it suffices that these positive neighbors reach state $1$ before critical steps of computation are performed inside the gadget.
	
	In order to illustrate this idea, let us consider two different cases and analyze why computation gadget still works in this case:
	
	\begin{enumerate}
		\item  \textbf{Nodes in $4$-cycles}: observe that these nodes have a fixed trajectory that is independent on the input that computation part is handling. Thus, it suffices to note that each node in the context effectively changes its state (they are in an attractor of period $3 \times 3$ as it is shown in Figure \ref{tab:Clockdym-signed}). As a consequence of this latter observation, we can set the local period of each positive neighbor so it is updated when its neighbor in the clock is in state $1$. More precisely, we fix the corresponding local period value to $9$ and correctly initialize them so each positive neighbor is updated exactly when their correspondent node is in state $1$.
		
		\item \textbf{Central nodes}: Observe that, in this case, we have that in the pseudo-orbit given in Table \ref{tab:ANDdym-signed} and Table \ref{tab:ORdym-signed} each node eventually reaches the state $1$ independently from the value of $x,y,z,x'$ and $y'$.  Thus, as same as the nodes that are in the $4$-cycles, it suffices to set up the local clock of each positive neighbor in order to be updated while its neighbor in the gadget is in state $1$. More precisely, it suffices to set up clocks following values in Table \ref{tab:ANDdym-signed} and Table \ref{tab:ORdym-signed} and set clocks to be updated every $18$ time-steps. Note that this work since nodes in the central part are in the first block so positive neighbors are updated at the same time as its neighbors but only when nodes in the gadget are in state $1$. 		\end{enumerate}
		Thus, gadgets  in Figures \ref{fig:ORstruct} and \ref{fig:ANDstruct} can be implemented as same as we did for general symmetric signed conjunctive networks and desired result holds.	

\end{proof}

\subsection{Symmetric min-max networks}

In this section, we study min-max networks. This is also a particular CSAN family in which local functions take the maximum or minimum value of some set of states. More precisely, a min-max network with ordered alphabet ${Q}$ is a CSAN $(G, \lambda,\rho)$ characterized by local functions $\lambda_v(q,S) = \min S$ or $\lambda_v(S) = \max S$ for every $v \in V(G)$ and trivial edge labels $\rho_e \in \{\text{Id}\}$ for each $ e \in E(G)$. Note that in the particular case in which $Q = \{0,1\}$ we have $\lambda_v(S) = \bigvee \limits_{x \in S } x$ or $\lambda_v(q,S) = \bigwedge \limits_{x \in S }s$ for every $v \in V(G).$ We call these particular CSAN family given by alphabet $Q = \{0,1\}$ the family of AND-OR networks and we write $\mathcal{F}_{\text{AND-OR}}$, $\blocfamily{\mathcal{F}}{b}_{\text{AND-OR}}$,  $\clockfamily{\mathcal{F}}{c}_{\text{AND-OR}}$ and $\periodfamily{\mathcal{F}}{p}_{\text{AND-OR}}$ to denote different update schemes as we did before. Analogously, we use the notation $\mathcal{F}_{\text{MIN-MAX}}$, $\blocfamily{\mathcal{F}}{b}_{\text{MIN-MAX}}$,  $\clockfamily{\mathcal{F}}{c}_{\text{MIN-MAX}}$ and $\periodfamily{\mathcal{F}}{p}_{\text{MIN-MAX}}$.

In the Boolean case, $\max$ and $\min$ functions are threshold functions because ${\min(x_1,\ldots,x_k)=1 \Leftrightarrow \sum x_i= k}$ and ${\max(x_1,\ldots,x_k)=0 \Leftrightarrow \sum x_i=0}$. Therefore their periodic orbits are of length at most $2$ by the results in \cite{PaperGoles}. In the non-Boolean case, they are not threshold functions. However, as shown by the following lemma the general alphabet case can be understood through multiple factorings onto the Boolean case.   
\begin{lemma}\label{lem:minmaxfactor}
	Let $n \geq 2$ and let $\mathcal{A} = (G,\lambda,\rho)$ be a min-max automata network with alphabet $Q$ such that $|V(G)| = n$. Let $F$ be a global rule for $\mathcal{A}$. There exists an AND-OR automata network $\mathcal{A}^* =   (G,\lambda,\rho)$ and a global rule $F^*: \{0,1\}^n \to \{0,1\}^n $  such that for every $\alpha \in Q$ the function $\pi^{\alpha}: Q^n \to \{0,1\}^n$ is such that $\pi^{\alpha} \circ F = F^{*} \circ \pi^{\alpha}$ where $\pi^{\alpha}(x)_i = \begin{cases}
	1 & \text{ if  } x_i \geq \alpha \\
	0 & \text{ otherwise.} 
	\end{cases}$
	%
\end{lemma}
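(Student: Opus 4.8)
The plan is to build $\mathcal{A}^*$ on the very same graph $G$ with the very same (identity) edge labels, simply replacing the local map at each node by its Boolean analogue: a node $v$ carrying $\lambda_v = \min$ becomes a node with Boolean local map $\wedge$ (i.e. $\min$ on $\{0,1\}$), and a node carrying $\lambda_v = \max$ becomes a node with Boolean local map $\vee$ ($\max$ on $\{0,1\}$). By the Remark following Definition~\ref{def:minmaxfamily}, this $\mathcal{A}^*$ is exactly an AND-OR network, and $F^*$ is its global rule. The crucial feature to stress is that this construction reads off only the \emph{type} (min or max) of each node, which does not depend on $\alpha$; hence a single $F^*$ will be shown to factor every level set $\pi^\alpha$ at once.

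The key lemma I would isolate first is that each threshold projection $\pi^\alpha$ is a lattice homomorphism from $(Q,\min,\max)$ to $(\{0,1\},\wedge,\vee)$. Concretely, for any finite nonempty $X\subseteq Q$ and any $\alpha\in Q$ one has $\pi^\alpha(\min X)=\bigwedge_{x\in X}\pi^\alpha(x)$ and $\pi^\alpha(\max X)=\bigvee_{x\in X}\pi^\alpha(x)$. Both identities are immediate from the up-set description of $\pi^\alpha$: $\min X\ge\alpha$ holds iff \emph{every} element of $X$ is $\ge\alpha$, and $\max X\ge\alpha$ holds iff \emph{some} element of $X$ is $\ge\alpha$. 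These translate verbatim into the Boolean AND and OR of the projected values.

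With this in hand the verification is node-wise and routine. Since all edge labels are the identity in both networks, Definition~\ref{def:globalcsan} gives $F(x)_v=\lambda_v\bigl(x_v,\{x_u:u\in N(v)\}\bigr)$ where $\lambda_v$ is $\min$ or $\max$ (the argument $x_v$ being ignored). Applying $\pi^\alpha$ and invoking the homomorphism identity above yields, in the min case, $\pi^\alpha(F(x))_v=\bigwedge_{u\in N(v)}\pi^\alpha(x)_u=F^*(\pi^\alpha(x))_v$, and symmetrically with $\bigvee$ in the max case, precisely because $F^*$ uses $\wedge$ resp. $\vee$ at $v$ on the same neighborhood $N(v)$. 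Ranging over all $v$ gives the global identity $\pi^\alpha\circ F=F^*\circ\pi^\alpha$.

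I do not expect a genuine obstacle here; the only points demanding care are bookkeeping ones. First, one must match the \emph{direction} of the projection (the up-set $\{q:q\ge\alpha\}$) with the correct Boolean connective, so that $\min\mapsto\wedge$ and $\max\mapsto\vee$ rather than the reverse. Second, and this is the conceptual heart making the statement useful, one should make explicit that $F^*$ is constructed once and is independent of $\alpha$, so the same Boolean AND-OR network simultaneously tracks all thresholds; this is what will later let one reduce the general-alphabet min-max dynamics to the Boolean AND-OR case treated via \cite{PaperGoles}. A minor edge case to dispatch is the nonemptiness of each $N(v)$, guaranteed by $n\ge2$ together with the standing connectedness assumption on $G$.
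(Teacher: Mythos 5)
Your proposal is correct and follows essentially the same route as the paper: build $F^*$ on the same graph by replacing $\min$ with $\wedge$ and $\max$ with $\vee$, then verify node-wise that $\pi^\alpha$ commutes with the local maps because $\min X\ge\alpha$ iff all elements of $X$ are $\ge\alpha$ and $\max X\ge\alpha$ iff some element is. Your framing of this as $\pi^\alpha$ being a lattice homomorphism is just a cleaner packaging of the same node-wise check the paper performs.
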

\begin{proof}
  Let us take $F^*$ as global function given by the same min/max labels than $F$ but considering the fact that in the alphabet $\{0,1\}$ we have $\min(x,y) = x\wedge y$ and $\max (x,y) = x \vee y$. Fix $\alpha \in Q$ and let $x \in Q^n$. Additionally, let us fix $i \in V$. Suppose that $F(x)_i = \max(x_{i_1}, \hdots, x_{i_k})$ then $F^*(\pi^{\alpha}(x))_i = \bigvee\limits_{s = 1}^{k} \pi^{\alpha}(x)_{i_s}$ where $N(i) = \{i_1,\hdots i_k\}$. \sloppy Note that $$\pi^{\alpha}(F(x))_i = \begin{cases}
    1 & \textit{if } \max(x_{i_1}, \hdots, x_{i_k}) \geq \alpha, \\
    0 & \text{ otherwise}.
  \end{cases}$$ Then, it is clear that we have  $\pi^{\alpha}(F(x))_i = 1$ if and only if $\pi^{\alpha}(x)_{i_s} = 1$ for some $s \in \{1, \hdots, k\}$ and thus if and only if $\bigvee \limits_{s = 1}^{k}  \pi^{\alpha}(x)_{i_s} = 1$ which is equivalent to $F^{*} \circ \pi (x)_{i}=1$. The case in which $F(x)_i = \min(x_{i_1}, \hdots, x_{i_k})$ is analogous since we have that $F^*(\pi^{\alpha}(x))_i = \bigwedge \limits_{s = 1}^{k}\pi^{\alpha}(x)_{i_s}.$  \end{proof}

We deduce that periodic orbits in MIN-MAX networks are of length at most $2$ and therefore the family cannot be universal.

\begin{corollary}
  Let $F$ be any MIN-MAX network over alphabet $Q$ and $x$ any limit configuration (\textit{i.e.} such that ${F^t(x)=x}$ for some $t$). Then, ${F^2(x)=x}$. Therefore the family of MIN-MAX networks cannot be universal (considered under the parallel update scheme).
\end{corollary}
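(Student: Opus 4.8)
The plan is to reduce everything to the Boolean case through the factoring provided by Lemma~\ref{lem:minmaxfactor}, and then invoke the classical period-$2$ bound for symmetric threshold networks. First I would fix a limit configuration $x$, so that $F^t(x)=x$ for some $t$, which means $x$ lies on a periodic orbit. For each threshold value $\alpha\in Q$, Lemma~\ref{lem:minmaxfactor} produces a symmetric AND-OR network $F^*$ (built on the same undirected graph $G$, hence symmetric) satisfying $\pi^\alpha\circ F = F^*\circ\pi^\alpha$. Iterating this commutation gives $\pi^\alpha\circ F^s = (F^*)^s\circ\pi^\alpha$ for every $s\geq 0$, so in particular $(F^*)^t(\pi^\alpha(x)) = \pi^\alpha(F^t(x)) = \pi^\alpha(x)$, showing that each Boolean projection $\pi^\alpha(x)$ is itself a limit configuration of $F^*$.

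The second step uses the fact, recalled in the excerpt, that in the Boolean case $\min$ and $\max$ are threshold functions, so $F^*$ is a symmetric threshold network and its periodic orbits have length at most $2$ by \cite{PaperGoles,GolesO80}. Applied to the limit configuration $\pi^\alpha(x)$ this yields $(F^*)^2(\pi^\alpha(x)) = \pi^\alpha(x)$, and therefore $\pi^\alpha(F^2(x)) = (F^*)^2(\pi^\alpha(x)) = \pi^\alpha(x)$ for every $\alpha\in Q$. The final step of the dynamical part is the reconstruction: for any $y\in Q^n$ and any node $i$, $\pi^\alpha(y)_i = 1$ exactly when $y_i\geq\alpha$, so $y_i = \max\{\alpha\in Q : \pi^\alpha(y)_i = 1\}$ since $Q$ is totally ordered and finite. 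Thus the whole family of threshold projections $(\pi^\alpha)_{\alpha\in Q}$ determines a configuration uniquely, and from $\pi^\alpha(F^2(x)) = \pi^\alpha(x)$ for all $\alpha$ I conclude $F^2(x)=x$.

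For the non-universality claim, I would argue by contradiction using Theorem~\ref{them:univ-rich-dynamics}: any universal family must contain networks exhibiting super-polynomial (indeed arbitrarily long) periodic orbits, whereas we have just shown that every periodic orbit of every MIN-MAX network has period at most $2$. This is incompatible, so the family of MIN-MAX networks under the parallel update scheme cannot be universal. I expect the only genuinely delicate point to be verifying that the factoring in Lemma~\ref{lem:minmaxfactor} preserves symmetry so that the period-$2$ theorem for threshold networks legitimately applies to $F^*$; this holds because $F^*$ is defined on the same undirected graph $G$ as the original min-max network. The reconstruction of $F^2(x)$ from its thresholds is routine given the total order on $Q$.
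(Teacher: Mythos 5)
Your proof is correct and follows essentially the same route as the paper's: both reduce to the Boolean case via Lemma~\ref{lem:minmaxfactor} and invoke the period-$2$ bound for symmetric threshold (AND-OR) networks from \cite{PaperGoles,GolesO80}. The only difference is cosmetic --- the paper fixes a node and uses just the two projections $\pi^q,\pi^{q'}$ at the maximum and minimum states of that node's trace, with a short case analysis, whereas you use all projections $\pi^\alpha$ at once and recover ${F^2(x)=x}$ from the reconstruction identity ${y_i=\max\{\alpha\in Q : \pi^\alpha(y)_i=1\}}$; both finishes are valid.
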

\begin{proof}
  We show that ${F^2(x)_i = x_i}$ for any node $i$. Let $q$ be the maximum state appearing in the sequence ${(F^t(x)_i)_{t\in\N}}$. Using Lemma~\ref{lem:minmaxfactor} with projection $\pi^q$ and the fact that periodic orbits of AND-OR networks have period at most $2$ we deduce that if ${F^t(x)_i=q}$ then ${F^{t+2}(x)_i=q}$ for some $t$. Suppose without loss of generality that $x_i=q$. If ${F(x)_i=q}$ we are done because then ${(F^t(x)_i)_{t\in\N}}$ is constant equal to $q$. Otherwise let $q'$ be the minimum state appearing in the sequence ${(F^t(x)_i)_{t\in\N}}$. Necessarily ${q'<q}$, and using again Lemma~\ref{lem:minmaxfactor} with projection $\pi^{q'}$ we deduce that if ${F^t(x)_i=q'}$ then ${F^{t+2}(x)_i=q'}$. With our assumption that ${x_i=q}$ it must be the case that ${F^{2k+1}(x)_i=q'}$ for some ${k\geq 0}$. From the previous facts and periodicity of the orbit of $x$ we deduce that ${F^t(x)_i}$ is $q$ when $t$ is even and $q'$ when $t$ is odd.    
  The claim about non-universality follows from Theorem~\ref{them:univ-rich-dynamics}.   
\end{proof}
\subsection{Block sequential update schemes}
We have shown that MIN-MAX networks are very limited under the parallel update mode. We now consider them under block sequential update schedules. We actually show that AND-OR networks under such update modes can simulate AND-NOT networks and therefore inherit the universality property. Since MIN-MAX networks on any alphabet $Q$ with $|Q|>1$  simulates Boolean AND-OR networks (by just restricting their alphabet to size $2$), we only focus on AND-OR networks.

\subsubsection{Strong universality}
As  we did for $\blocfamily{\mathcal{F}}{b}_{\text{sign-sym-conj}}$ we will show that $\blocfamily{\mathcal{F}}{b}_{\text{AND-OR}}$ is also strongly universal as a direct consequence of the fact that we can simulate   $\blocfamily{\mathcal{F}}{b}_{\text{sign-sym-conj}}$ in linear space and constant time. We accomplish this by using again the coding trick of ``double railed logic". In fact, given the interaction graph of a signed symmetric conjunctive network having $n$ nodes, we simply double each node  and thus, our simulator has $2n$ nodes. Simulation is made in real time so $T = 1$. We precise these ideas in the following lemma:

\begin{lemma}
	Let  $\periodfamily{\mathcal{F}}{p}_{\text{AND-OR}}$ be the family of AND-OR networks updated according to some arbitrary periodic update scheme of period $p$. Let $T$ be the constant function equal to $1$ and $S: \N \to \N$ be defined as $S(n) = 2n.$ Then, we have that $\periodfamily{\mathcal{F}}{p}_{\text{sign-sym-conj}} \preccurlyeq^S_T \periodfamily{\mathcal{F}}{p}_{\text{AND-OR}}.$ 
	\label{lemma:ANDORsim}
\end{lemma}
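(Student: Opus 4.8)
The plan is to use the classical ``double-railed logic'' encoding, but carried out on the periodic extensions so that the update-mode component is reproduced faithfully (otherwise we would only get a simulation between the base families, not between the extension families). First I would fix a network in $\periodfamily{\mathcal{F}}{p}_{\text{sign-sym-conj}}$, that is, the periodic extension $F'$ of some signed symmetric conjunctive network $F$ on an undirected labelled graph $G=(V,E)$ with $n$ nodes, where $\rho_e\in\{\mathrm{Id},\mathrm{Switch}\}$ and the alphabet is $\{0,1\}\times R$ with $R=\{0,\ldots,p-1\}\times 2^{\{0,\ldots,p-1\}}$. I would then split each node $i\in V$ into a \emph{positive rail} $i^+$ equipped with the $\min$ (AND) local map and a \emph{negative rail} $i^-$ equipped with the $\max$ (OR) local map, obtaining an AND-OR network $G^\ast$ on $2n$ nodes, and take $G'$ to be the periodic extension of $G^\ast$ with the same period $p$.

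Second, I would fix the wiring using De Morgan's laws. Writing $F(x)_i=\bigwedge_{j\in N(i)}\rho_{ij}(x_j)$ and $\neg F(x)_i=\bigvee_{j\in N(i)}\neg\rho_{ij}(x_j)$, and using $\mathrm{Switch}(x_j)=\neg x_j$, one reads off that $i^+$ must take the conjunction over $j^+$ along positive edges and over $j^-$ along negative edges, while $i^-$ must take the disjunction over $j^-$ along positive edges and over $j^+$ along negative edges. The crucial point is that this wiring is \emph{symmetric}: since the original labelled graph is undirected, each edge $\{i^+,j^+\}$, $\{i^-,j^-\}$, $\{i^+,j^-\}$ or $\{i^-,j^+\}$ appears together with its mirror from $j$'s viewpoint, so $G^\ast$ is a genuine undirected AND-OR CSAN (a MIN-MAX network over $\{0,1\}$).

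Third, I would set up the block embedding of Definition~\ref{def:bloc-simu} and Definition~\ref{def:simu-family} by taking $D_i=\{i^+,i^-\}$ and letting the pattern $p_{i,(q,r)}$ put state $(q,r)$ on $i^+$ and $(\neg q,r)$ on $i^-$; this is injective because the positive rail alone determines $(q,r)$. The invariant to maintain inductively is that the $Q$-component of $i^-$ is always the negation of that of $i^+$, and that both rails carry the \emph{same} $R$-component. Since the periodic extension acts on $R$ purely locally and identically on the two rails, both rails update their Boolean component exactly when node $i$ of $F'$ does; whenever they update, the De Morgan identities give $\phi\circ F'=G'\circ\phi$, so the time constant is $T=1$. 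Finally, $G^\ast$ is produced from $F$ in \DLOG{} by a local doubling of vertices, its bounded-degree representation is read off directly, and the node count is $S(n)=2n$, which yields $\periodfamily{\mathcal{F}}{p}_{\text{sign-sym-conj}} \preccurlyeq^S_T \periodfamily{\mathcal{F}}{p}_{\text{AND-OR}}$.

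I expect the main obstacle to be the bookkeeping of the update-mode extension: one must verify that coupling the two rails to share the same $R$-component forces the update decision (whether the Boolean map is applied at all) to coincide on $i^+$, on $i^-$ and on the simulated node $i$, so that the negation invariant survives the steps in which only the clock advances and no Boolean update occurs. Once this synchronization is established, the symmetry of the wiring, the correctness of the De Morgan computation on both rails, and the injectivity of $\phi$ are all routine.
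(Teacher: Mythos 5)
Your proposal is correct and follows essentially the same route as the paper's proof: the double-railed encoding with an AND rail and an OR rail per node, the De Morgan wiring that swaps rails across negative edges, the injective block embedding $D_i=\{i^+,i^-\}$, and the real-time ($T=1$), linear-space ($S(n)=2n$) simulation. Your extra care in tracking the $R$-component of the periodic extension on both rails is a slightly more explicit treatment of the update-mode synchronization than the paper gives, but it is the same argument.
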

\begin{proof}
	Let us a fix a periodic update scheme of period $p \in \N.$ Let us take the graph representation of some arbitrary network  $(G, \lambda, \rho)$ in $\periodfamily{\mathcal{F}}{p}_{\text{sign-sym-conj}}.$ We construct a network in $\periodfamily{\mathcal{F}}{p}_{\text{AND-OR}}$ capable to simulate the latter network in the following way:  first, we represent each state $q \in \{0,1\}$ by $(q, \overline{q})$ where $\overline{q} = 1-q;$ then, for each $v \in V(G)$ we consider two nodes $v', \overline{v}.$ By doing this, we  store the original state of $v$ in $v'$ and $\overline{v}$  store its complement. More precisely, we replace each node in the network by the gadget in Figure \ref{fig:ANDORsim}. As it is shown in the same figure, we define $\lambda'(v') \equiv \text{AND}$ and $\lambda' (\overline{v}) \equiv \text{OR}.$ Let us define the set $V' = \{(v',\overline{v}):v\in V\}.$ We define a set of edges in $V'$ denoted $E'$ as follows: for each edge $(u,v)$ in $E(G)$ we add the following edges depending on $\rho((u,v))$:
	\begin{itemize}
		\item if $\rho((u,v)) = \text{Id},$ we add the edges $(u',v')$ and $(\overline{u},\overline{v}).$
		\item if $\rho((u,v)) = \text{Switch},$ we add the edges $(u',\overline{v})$ and $(\overline{u},v').$
	\end{itemize}
Note that this immediately defines a local rule for the simulator network that we call $f'_{w}$ for each $w \in V'$. We show that this local rule effectively simulates  $(G, \lambda, \rho).$ More precisely, let us call $f_u$  the local rule of $(G, \lambda, \rho)$ for each $u \in V(G)$. In addition, let us call $N(u)_{+}$ to the neighborhood of $u$ in $G$ such that any $v \in  N(u)_{+}$ satisfies $\rho((u,v)) = \text{Id}$ and also let us define  $N(u)_{-}$  such that $v \in N(u)_{-}$ if and only if $\rho((u,v)) = \text{Switch}.$ Finally, for each $x \in \{0,1\}^n$ let us call $x' \in  \{0,1\}^{2n} $  the configuration defined by $x'_{u'} =  x_u$ and $x'_{\overline{u}} = \overline{x_u},$ for each $u \in V(G)$. Fix $u \in V,$ we have for $(u',\overline{u})$ the following result:

\begin{itemize}
	\item \sloppy $f'_{u'}(x'|_{N(u')}) = (\bigwedge \limits_{v' \in N_{G'}(u'): v \in N(u)_{+}} x'_{v'}) \wedge (\bigwedge \limits_{\overline{v} \in N_{G'}(u'): v \in N(u)_{-}} x'_{\overline{v}}) = (\bigwedge \limits_{v \in N(u)_{+}} x_{v}) \wedge (\bigwedge \limits_{v \in N(u)_{-}} \overline{x}_v) = f_u(x) $
	\item \sloppy $f'_{\overline{u}}(x'|_{N(\overline{u})}) = (\bigvee \limits_{\overline{v} \in N_{G'}(\overline{u}): v \in N(u)_{+}} x'_{\overline{v}}) \vee (\bigvee \limits_{v' \in N_{G'}(\overline{u}): v \in N(u)_{-}} x'_{v}) = (\bigvee \limits_{v \in N(u)_{+}} \overline{x}_v) \vee (\bigvee \limits_{v \in N(u)_{-}} x_v) = \overline{(\bigwedge \limits_{v \in N(u)_{+}} x_{v}) \wedge (\bigwedge \limits_{v \in N(u)_{-}} \overline{x}_v)} = \overline{f_u(x)} $
\end{itemize}

And thus, we have that $x'_{u'} \to f_u(x)$ and $x'_{\overline{u}} \to \overline{f_u(x)}.$ 

Finally, in order to be coherent with the given periodic update scheme, it suffices to define an update scheme with period $p$ such that the nodes $(u', \overline{u})$ are updated at each time step at which $u \in V(G)$ is.  We conclude that the network  $(G', \lambda', \rho') \in \periodfamily{\mathcal{F}}{p}_{\text{AND-OR}}$ (where $\rho(u,v) = \text{Id}$ for each $(u,v) \in E$), simulates $(G, \lambda, \rho)$ in constant time $T=1$ and linear space $S(n) = 2n$.
\end{proof}
\begin{theorem}
	There exists some $b>0$ such that the family $\blocfamily{\mathcal{F}}{b}_{\text{AND-OR}}$ of all AND-OR networks under block sequential update scheme of block parameter $b$ is strongly universal. In particular, $\blocfamily{\mathcal{F}}{b}_{\text{AND-OR}}$ is dynamically and computationally complex.
\end{theorem}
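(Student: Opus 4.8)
The plan is to derive strong universality of $\blocfamily{\mathcal{F}}{b}_{\text{AND-OR}}$ purely by transitivity of simulation, combining the two ingredients already in hand: the strong universality of the block sequential signed conjunctive family (Corollary~\ref{cor:ANDNOTuniv}) and the efficient real-time simulation of signed conjunctive networks by AND-OR networks (Lemma~\ref{lemma:ANDORsim}). First I would fix $b=3$, the block parameter for which Corollary~\ref{cor:ANDNOTuniv} guarantees that $\blocfamily{\mathcal{F}}{3}_{\text{sign-sym-conj}}$ is strongly universal; that is, for every alphabet $Q$ and every degree bound $\Delta\geq 1$ there are linear maps $T_{Q,\Delta},S_{Q,\Delta}$ with ${\mathcal{B}_{Q,\Delta}\preccurlyeq^{T_{Q,\Delta}}_{S_{Q,\Delta}}\blocfamily{\mathcal{F}}{3}_{\text{sign-sym-conj}}}$.

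Second, I would specialize Lemma~\ref{lemma:ANDORsim} to block sequential schedules, which form a particular case of periodic schedules. The key observation is that the double-railed construction of that lemma replaces each node $u$ by a pair $(u',\overline{u})$ and assigns \emph{both} members of the pair to the same block as $u$; hence the simulator is again block sequential with the same number $b=3$ of blocks, and the encoding is update mode coherent in the sense discussed in Section~\ref{sec:concretesimulationresults}. This gives ${\blocfamily{\mathcal{F}}{3}_{\text{sign-sym-conj}}\preccurlyeq^{1}_{2n}\blocfamily{\mathcal{F}}{3}_{\text{AND-OR}}}$, with constant time map equal to $1$ and linear space map $n\mapsto 2n$.

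Then I would compose the two simulations using the transitivity of the simulation relation (the remark following Definition~\ref{def:simu-family}), recalling that composition multiplies the time maps and composes the space maps. The resulting space map is $n\mapsto 2\,S_{Q,\Delta}(n)$ and the resulting time map is $n\mapsto T_{Q,\Delta}(n)\cdot 1$, both of which remain linear. Consequently ${\mathcal{B}_{Q,\Delta}\preccurlyeq\blocfamily{\mathcal{F}}{3}_{\text{AND-OR}}}$ in linear time and space for every $Q$ and $\Delta$, which is exactly the definition of strong universality. The dynamical and computational complexity claims then follow immediately from Theorem~\ref{them:univ-rich-dynamics} and Corollary~\ref{cor:universality}.

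The only delicate point, and the reason the conclusion is strong universality rather than mere universality, is the bookkeeping of the overhead maps under composition: a merely linear time overhead in Lemma~\ref{lemma:ANDORsim} would have degraded the composite time map to a quadratic one and left us only with universality. It is precisely the \emph{real-time} ($T=1$) and factor-two space character of the double-railed encoding that preserves linearity. Accordingly, the one step I would verify with care is that this real-time, block-preserving behaviour genuinely survives the passage from the periodic statement of Lemma~\ref{lemma:ANDORsim} to the block sequential setting; everything else is a mechanical invocation of transitivity and the universality corollaries.
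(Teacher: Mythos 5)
Your proposal is correct and follows essentially the same route as the paper, which likewise derives the result by composing Corollary~\ref{cor:ANDNOTuniv} with Lemma~\ref{lemma:ANDORsim} via transitivity of simulation. Your added care in checking that the double-railed encoding preserves the block structure (both copies of a node sit in the same block) and that the $T=1$, $S(n)=2n$ overheads keep the composite maps linear is exactly the implicit content of the paper's one-line proof.
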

\begin{proof}
	The result is a direct consequence of Lemma \ref{lemma:ANDORsim} and strong universality of $\blocfamily{\mathcal{F}}{b}_{\text{sign-sym-conj}}$ given by Corollary \ref{cor:ANDNOTuniv}.
\end{proof}

\begin{figure}
\centering

\tikzset{every picture/.style={line width=0.75pt}} 

\begin{tikzpicture}[x=0.75pt,y=0.75pt,yscale=-1,xscale=1]

\draw    (124.4,86.4) -- (234.8,86.33) ;
\draw  [fill={rgb, 255:red, 255; green, 255; blue, 255 }  ,fill opacity=1 ] (219.4,87.33) .. controls (219.4,78.83) and (226.29,71.93) .. (234.8,71.93) .. controls (243.31,71.93) and (250.2,78.83) .. (250.2,87.33) .. controls (250.2,95.84) and (243.31,102.73) .. (234.8,102.73) .. controls (226.29,102.73) and (219.4,95.84) .. (219.4,87.33) -- cycle ;
\draw  [fill={rgb, 255:red, 255; green, 255; blue, 255 }  ,fill opacity=1 ] (109,86.4) .. controls (109,77.89) and (115.89,71) .. (124.4,71) .. controls (132.91,71) and (139.8,77.89) .. (139.8,86.4) .. controls (139.8,94.91) and (132.91,101.8) .. (124.4,101.8) .. controls (115.89,101.8) and (109,94.91) .. (109,86.4) -- cycle ;

\draw    (409,48.47) -- (519.4,48.4) ;
\draw    (409,113.47) -- (519.4,113.4) ;
\draw  [fill={rgb, 255:red, 255; green, 255; blue, 255 }  ,fill opacity=1 ] (389,113.4) .. controls (389,104.89) and (395.89,98) .. (404.4,98) .. controls (412.91,98) and (419.8,104.89) .. (419.8,113.4) .. controls (419.8,121.91) and (412.91,128.8) .. (404.4,128.8) .. controls (395.89,128.8) and (389,121.91) .. (389,113.4) -- cycle ;

\draw  [fill={rgb, 255:red, 255; green, 255; blue, 255 }  ,fill opacity=1 ] (504,113.4) .. controls (504,104.89) and (510.89,98) .. (519.4,98) .. controls (527.91,98) and (534.8,104.89) .. (534.8,113.4) .. controls (534.8,121.91) and (527.91,128.8) .. (519.4,128.8) .. controls (510.89,128.8) and (504,121.91) .. (504,113.4) -- cycle ;

\draw  [fill={rgb, 255:red, 255; green, 255; blue, 255 }  ,fill opacity=1 ] (389,47.4) .. controls (389,38.89) and (395.89,32) .. (404.4,32) .. controls (412.91,32) and (419.8,38.89) .. (419.8,47.4) .. controls (419.8,55.91) and (412.91,62.8) .. (404.4,62.8) .. controls (395.89,62.8) and (389,55.91) .. (389,47.4) -- cycle ;

\draw  [fill={rgb, 255:red, 255; green, 255; blue, 255 }  ,fill opacity=1 ] (504,48.4) .. controls (504,39.89) and (510.89,33) .. (519.4,33) .. controls (527.91,33) and (534.8,39.89) .. (534.8,48.4) .. controls (534.8,56.91) and (527.91,63.8) .. (519.4,63.8) .. controls (510.89,63.8) and (504,56.91) .. (504,48.4) -- cycle ;

\draw   (290,69) -- (332,69) -- (332,59) -- (360,79) -- (332,99) -- (332,89) -- (290,89) -- cycle ;

\draw    (128.4,229.4) -- (238.8,229.33) ;
\draw  [fill={rgb, 255:red, 255; green, 255; blue, 255 }  ,fill opacity=1 ] (223.4,230.33) .. controls (223.4,221.83) and (230.29,214.93) .. (238.8,214.93) .. controls (247.31,214.93) and (254.2,221.83) .. (254.2,230.33) .. controls (254.2,238.84) and (247.31,245.73) .. (238.8,245.73) .. controls (230.29,245.73) and (223.4,238.84) .. (223.4,230.33) -- cycle ;
\draw  [fill={rgb, 255:red, 255; green, 255; blue, 255 }  ,fill opacity=1 ] (113,229.4) .. controls (113,220.89) and (119.89,214) .. (128.4,214) .. controls (136.91,214) and (143.8,220.89) .. (143.8,229.4) .. controls (143.8,237.91) and (136.91,244.8) .. (128.4,244.8) .. controls (119.89,244.8) and (113,237.91) .. (113,229.4) -- cycle ;
\draw    (413,191.47) -- (523.4,256.4) ;
\draw    (413,256.47) -- (523.4,191.4) ;
\draw  [fill={rgb, 255:red, 255; green, 255; blue, 255 }  ,fill opacity=1 ] (393,256.4) .. controls (393,247.89) and (399.89,241) .. (408.4,241) .. controls (416.91,241) and (423.8,247.89) .. (423.8,256.4) .. controls (423.8,264.91) and (416.91,271.8) .. (408.4,271.8) .. controls (399.89,271.8) and (393,264.91) .. (393,256.4) -- cycle ;

\draw  [fill={rgb, 255:red, 255; green, 255; blue, 255 }  ,fill opacity=1 ] (508,256.4) .. controls (508,247.89) and (514.89,241) .. (523.4,241) .. controls (531.91,241) and (538.8,247.89) .. (538.8,256.4) .. controls (538.8,264.91) and (531.91,271.8) .. (523.4,271.8) .. controls (514.89,271.8) and (508,264.91) .. (508,256.4) -- cycle ;

\draw  [fill={rgb, 255:red, 255; green, 255; blue, 255 }  ,fill opacity=1 ] (393,190.4) .. controls (393,181.89) and (399.89,175) .. (408.4,175) .. controls (416.91,175) and (423.8,181.89) .. (423.8,190.4) .. controls (423.8,198.91) and (416.91,205.8) .. (408.4,205.8) .. controls (399.89,205.8) and (393,198.91) .. (393,190.4) -- cycle ;

\draw  [fill={rgb, 255:red, 255; green, 255; blue, 255 }  ,fill opacity=1 ] (508,191.4) .. controls (508,182.89) and (514.89,176) .. (523.4,176) .. controls (531.91,176) and (538.8,182.89) .. (538.8,191.4) .. controls (538.8,199.91) and (531.91,206.8) .. (523.4,206.8) .. controls (514.89,206.8) and (508,199.91) .. (508,191.4) -- cycle ;

\draw   (294,212) -- (336,212) -- (336,202) -- (364,222) -- (336,242) -- (336,232) -- (294,232) -- cycle ;

\draw (394+3,104.4+6) node [anchor=north west][inner sep=0.75pt]    {$\lor $};
\draw (394+3,38.4+6) node [anchor=north west][inner sep=0.75pt]    {$\land $};
\draw (509+3,39.4+6) node [anchor=north west][inner sep=0.75pt]    {$\land $};
\draw (509+3,104.4+6) node [anchor=north west][inner sep=0.75pt]    {$\lor $};
\draw (115,45.4+10) node [anchor=north west][inner sep=0.75pt]    {$u$};
\draw (229,45.4+10) node [anchor=north west][inner sep=0.75pt]    {$v$};
\draw (173,56.4+10) node [anchor=north west][inner sep=0.75pt]    {$\text{Id}$};
\draw (395,3.4+10) node [anchor=north west][inner sep=0.75pt]    {$u'$};
\draw (397,72.4+10) node [anchor=north west][inner sep=0.75pt]    {$\overline{u}$};
\draw (515,4.4+10) node [anchor=north west][inner sep=0.75pt]    {$v'$};
\draw (512,72.4+10) node [anchor=north west][inner sep=0.75pt]    {$\overline{v}$};
\draw (513+3,182.4+6) node [anchor=north west][inner sep=0.75pt]    {$\land $};
\draw (398+3,181.4+6) node [anchor=north west][inner sep=0.75pt]    {$\land $};
\draw (513+3,247.4+6) node [anchor=north west][inner sep=0.75pt]    {$\lor $};
\draw (398+3,247.4+6) node [anchor=north west][inner sep=0.75pt]    {$\lor $};
\draw (516,215.4+10) node [anchor=north west][inner sep=0.75pt]    {$\overline{v}$};
\draw (519,147.4+10) node [anchor=north west][inner sep=0.75pt]    {$v'$};
\draw (401,215.4+10) node [anchor=north west][inner sep=0.75pt]    {$\overline{u}$};
\draw (399,146.4+10) node [anchor=north west][inner sep=0.75pt]    {$u'$};
\draw (161,199.4+10) node [anchor=north west][inner sep=0.75pt]    {$\text{Switch}$};
\draw (233,188.4+10) node [anchor=north west][inner sep=0.75pt]    {$v$};
\draw (119,188.4+10) node [anchor=north west][inner sep=0.75pt]    {$u$};

\end{tikzpicture}
\caption{Gadget used for simulation of an AND-NOT network with arbitrary periodic update scheme implemented over an AND-OR network with periodic update scheme.}
\label{fig:ANDORsim}
\end{figure}

\section{Perspectives}
\label{sec:perspectives}

The main results of this paper extend previous results \cite{goles2020firing,goles2016pspace,goles2014computational} on the strong interplay between local interactions and update modes, but are still limited to finite hierarchies of local interactions and update modes. A natural research direction is to try to further extend this analysis to other families of networks and update modes. For instance, we have no example of a CSAN family which is (strongly) universal under periodic update schemes, but not universal under local clocks update schemes.

Besides, our results were obtained as an application of various concepts and tools, and we believe that several research directions around them are worth being considered. We detail some of them below.

\paragraph{Glueing} We think it would be interesting to understand the properties of the glueing process itself and see what information on the result of the glueing process can be deduced from the knowledge of each network to be glued. We are particularly interested in dynamical properties. In addition, it would be very interesting to explore if latter process can be seen in the opposite way, i.e., given an automata network, determine if it is possible to decompose the network into glued blocks satisfying some particular properties as gadgets do.

\paragraph{Simulations and universality}
Two notions of universality are introduced in this paper. We see how to build families which are universal but not strongly universal by adding a somewhat artificial mechanism that slows down polynomially any useful computation made by networks in the family. However, we don't have any natural example so far. In the same spirit, we can ask how a strongly universal family can fail to have coherent $\Gmon$-gadgets (recall that Corollary~\ref{cor:univfrommon} only gives a sufficient condition to be strongly universal). We don't think that strongly universality implies coherent $\Gmon$-gadgets in all generality, but it might be true under some additional hypothesis, and possibly in natural families like $\G$-networks.

\paragraph{Families, $\G$-networks and gadgets}
Proposition \ref{prop:closuregadgets} together with theorems~\ref{theo:non-polyn-cycl} and \ref{theo:transient-nonuniversal} provide an interesting starting point to explore the link between different gate sets and the richness of their synchronous closure and the associated family of $\G$-networks. It is natural to further study the hierarchy between sets of gates and we believe that a promising direction would be to study reversible gate sets such as Toffoli or Fredkin gates.
 
\paragraph{Update schedules} As we have pointed out in the update schemes section (see Remark \ref{rem:blockpar}), it would be interesting to study other types of update schemes which are not captured by general periodic update schemes such as firing memory schemes \cite{goles2020firing,glrs20}, or other modes where the decision to apply an update at a node depends on its state (for instance we can interpret reaction-difussion systems \cite{goles1997reaction} as threshold networks with an update mode with memory). We can consider even more general ones inspired by the most permissive semantics studied in \cite{chatain2018most}. In some cases, Definition~\ref{def:asyncextension} of asynchronous extension should be adapted in order to capture latter update schemes.


%
\bibliographystyle{plain}

\bibliography{paper.bib}

\end{document}